\newtheoremstyle{standard}
  {}
  {}
  {\rmfamily\mdseries\itshape}
  {\parindent}
  {\rmfamily\mdseries\scshape}
  {.}
  {.7em}
  {\thmnumber{#2. }\thmname{#1}\thmnote{ \textmd{(#3)}}
}
\newtheoremstyle{alt}
  {}
  {}
  {\rmfamily\mdseries\upshape}
  {\parindent}
  {\rmfamily\mdseries\scshape}
  {.}
  {.7em}
  {\thmnumber{#2. }\thmname{#1}\thmnote{ \textmd{(#3)}}
}
\newcounter{entry}[section]
\renewcommand{\theentry}{\thesection.\arabic{entry}}
\newcommand{\entry}[1][\hspace{-.5em}]{%
 \vspace{.5\baselineskip}\par%
 \refstepcounter{entry}%
 {\rmfamily\mdseries\upshape\theentry.\hspace{.5em}\itshape #1\hspace{.5em}---\hspace{.6em}}%
}
\newcommand{\mysubsection}[1]{%
 \vspace{.5\baselineskip}\par%
 {\rmfamily\bfseries\upshape #1.}%
 \addcontentsline{toc}{section}{$\qquad\;$\emph{#1}}
 \nopagebreak
 \par
}
\theoremstyle{standard}
\newtheorem{theorem}[entry]{Theorem}
\newtheorem{lemma}[entry]{Lemma}
\newtheorem{proposition}[entry]{Proposition}
\newtheorem{corollary}[entry]{Corollary}
\newtheorem{definition}[entry]{Definition}
\newtheorem{lem-def}[entry]{Lemma-definition}
\newtheorem{prop-def}[entry]{Proposition-definition}
\newtheorem*{theorem*}{Theorem}
\newtheorem*{lemma*}{Lemma}
\newtheorem*{proposition*}{Proposition}
\newtheorem*{corollary*}{Corollary}
\newtheorem*{definition*}{Definition}
\newtheorem*{lem-def*}{Lemma-definition}
\newtheorem*{prop-def*}{Proposition-definition}
\theoremstyle{alt}
\newtheorem{example}[entry]{Example}
\newtheorem{remark}[entry]{Remark}
\newtheorem*{example*}{Example}
\newtheorem*{remark*}{Remark}
\numberwithin{equation}{entry}
\newcommand{\beq}{\begin{equation*}}
\newcommand{\eeq}{\end{equation*}}
\DeclareMathOperator{\im}{Im}
\DeclareMathOperator{\tr}{tr}
\DeclareMathOperator{\rk}{rk}
\DeclareMathOperator{\car}{char}
\DeclareMathOperator{\dist}{dist}
\DeclareMathOperator{\dmin}{d_{min}}
\DeclareMathOperator{\ddual}{d^\perp}
\DeclareMathOperator{\Supp}{Supp}
\DeclareMathOperator{\Spec}{Spec}
\DeclareMathOperator{\Proj}{Proj}
\DeclareMathOperator{\Sym}{Sym}
\DeclareMathOperator{\Aut}{Aut}
\DeclareMathOperator{\End}{End}
\newcommand{\Fq}{{\mathbb{F}_q}}
\newcommand{\N}{\mathbb{N}}
\newcommand{\Z}{\mathbb{Z}}
\newcommand{\Q}{\mathbb{Q}}
\newcommand{\F}{\mathbb{F}}
\newcommand{\K}{\mathbb{K}}
\newcommand{\PP}{\mathbf{P}}
\newcommand{\fJ}{\mathfrak{J}}
\newcommand{\fS}{\mathfrak{S}}
\newcommand{\cA}{\mathcal{A}}
\newcommand{\cB}{\mathcal{B}}
\newcommand{\cC}{\mathcal{C}}
\newcommand{\cG}{\mathcal{G}}
\newcommand{\cL}{\mathcal{L}}
\newcommand{\cO}{\mathcal{O}}
\newcommand{\cP}{\mathcal{P}}
\newcommand{\cQ}{\mathcal{Q}}
\newcommand{\cR}{\mathcal{R}}
\newcommand{\cS}{\mathcal{S}}
\newcommand{\cT}{\mathcal{T}}
\newcommand{\cU}{\mathcal{U}}
\newcommand{\ie}{\emph{i.e. }}
\newcommand{\eg}{\emph{e.g. }}
\newcommand{\longto}{\longrightarrow}
\newcommand{\tens}{\otimes}
\newcommand{\moins}{\setminus}
\newcommand{\implique}{\Longrightarrow}
\newcommand{\equivaut}{\Longleftrightarrow}
\newcommand{\deux}[1][2]{^{\langle #1\rangle}}
\newcommand{\abs}[1]{\lvert #1\rvert}
\newcommand{\un}{\mathds{1}}
\newcommand{\inj}{\hookrightarrow}
\newcommand{\surj}{\twoheadrightarrow}
\newcommand{\SFrob}[1][\cdot]{S^{\,#1}_{\!Frob}}
\newcommand{\itemref}{\ref}
\let\Eulerphi\phi
\let\phi\varphi
\let\epsilon\varepsilon
\let\subset\subseteq
\title[On products and powers of linear codes]{On products and powers of linear codes under componentwise multiplication}
\author{Hugues Randriambololona}
\date{}
\begin{document}

\begin{abstract}
In this text we develop the formalism of products and
powers of linear codes under componentwise multiplication.
As an expanded version of the author's talk at AGCT-14,
focus is put mostly on basic properties and descriptive statements
that could otherwise probably not fit in a regular research paper.
On the other hand, more advanced results and applications are only quickly mentioned with references to the literature.
We also point out a few open problems.

Our presentation alternates between two points of view, which the theory intertwines
in an essential way:
that of combinatorial coding, and that of algebraic geometry. 

In appendices that can be read independently, we investigate topics in multilinear algebra over finite fields,
notably we establish a criterion for a symmetric multilinear map to admit a symmetric algorithm, or equivalently,
for a symmetric tensor to decompose as a sum of elementary symmetric tensors.
\end{abstract}

\maketitle

\vspace{-\baselineskip}

\tableofcontents

\vspace{-\baselineskip}

\textbf{Notations and conventions.}
In the first three sections of this text we will be working over an arbitrary field $\F$,
although we will keep in mind the case where $\F=\Fq$ is the finite field with $q$ elements.

If $V$ is a vector space over $\F$, we denote by $V^\vee$ its dual, that is, the vector space of all linear forms $V\longto\F$.
If $X\subset V$ is an arbitrary subset, we denote by $\langle X\rangle$ its linear span in $V$.
We let $S^\cdot V=\bigoplus_{t\geq0}S^tV$ be the symmetric algebra of $V$, which is the
the largest commutative graded quotient algebra of the tensor algebra of $V$.
In particular, the $t$-th symmetric power $S^tV$ is a \emph{quotient} of $V^{\tens t}$,
and should not be confused with $\Sym^tV$, the space of symmetric tensors of order $t$, 
which is a \emph{subspace} of $V^{\tens t}$. 
If $W$ is another vector space, we also let $\Sym^t(V;W)$ be the space of symmetric $t$-multilinear
maps from $V^t$ to $W$.
All these objects are related by natural identifications such as
\beq
\Sym^t(V^\vee)=\Sym^t(V;\F)=(S^tV)^\vee.
\eeq
Here it was always understood that we were working over $\F$, but in case of ambiguity we will use more precise
notations such as $S^\cdot_{\F}V$, $\Sym^t_{\F}(V;W)$, etc.

By $[n]$ we denote the standard set with $n$ elements,
the precise definition of which will depend on the context:
when doing combinatorics, it will be $[n]=\{1,2,\dots,n\}$,
and when doing algebraic geometry over $\F$, it will
be $[n]=\Spec\F^n$.
We also let $\fS_n$ be the symmetric group on $n$ elements, which acts naturally on $[n]$.

By a linear code of length $n$ over $\F$ we mean a linear subspace $C\subset\F^n$; moreover if $\dim(C)=k$ we say $C$ is a $[n,k]$ code.
Given a word $x\in\F^n$, its support $\Supp(x)\subset[n]$ is the set of indices over which $x$ is nonzero, and its Hamming weight is $w(x)=\abs{\Supp(x)}$.
If $S\subset[n]$ is a subset we let $1_S\in\F^n$ be the characteristic vector of $S$,
that is, the vector with coordinates $1$ over $S$ and $0$ over $[n]\moins S$.
We also let $\pi_S:\F^n\surj\F^S$ be the natural linear projection
and $\iota_S:\F^S\inj\F^n$ be the natural linear inclusion of vector spaces.

The dual code $C^\perp\subset\F^n$ is defined as the orthogonal of $C$ with respect to the standard scalar product in $\F^n$.
One should be careful not to confuse this notion with that of the dual vector space $C^\vee$.



\section{Introduction}
\label{sect_intro}

\mysubsection{Basic definitions}
\entry
Given a field $\F$ and an integer $n\geq1$, we let $*$ denote componentwise
multiplication in $\F^n$, so
\beq
(x_1,\dots,x_n)*(y_1,\dots,y_n)=(x_1y_1,\dots,x_ny_n)
\eeq
for $x_i,y_j\in\F$.

This makes $\F^n$ a commutative $\F$-algebra,
with unit element the all-$1$ vector $1_{[n]}=(1,\dots,1)$.
Its group of invertible elements is $(\F^n)^\times=(\F^\times)^n$.

This algebra $\F^n$ can also be identified with the algebra of diagonal
matrices of size $n$ over $\F$
(see~\ref{discussion_trace} and~\ref{monomial_tsfo} for more).

\entry
\label{def_dot*}
If $S,S'\subset\F^n$ are two subsets, that is, two (nonlinear) codes of the same length~$n$,
we let
\beq
S\,\dot{*}\,S'=\{ c*c'\,;\;(c,c')\in S\times S'\}\;\subset\,\F^n
\eeq
be the set of componentwise products of their elements.

This operation $\dot{*}$ is easily seen to be commutative and associative,
and distributive with respect to the union of subsets:
$S\,\dot{*}\,(S'\cup S'')=(S\,\dot{*}\,S')\cup(S\,\dot{*}\,S'')$
for all $S,S',S''\subset\F^n$. 
This means that the set of (nonlinear) codes over $\F$ of some given length~$n$
becomes a commutative \emph{semiring} under these laws $\cup,\dot{*}$,
with zero element the empty code $\emptyset$ and with unit element the singleton $\{1_{[n]}\}$.

This semiring is in fact an \emph{ordered} semiring (under the inclusion relation $\subset$), since these laws
are obviously compatible with $\subset$.

\entry
\label{def_*}
If moreover $C,C'\subset\F^n$ are two linear subspaces, that is,
two linear codes of the same length~$n$, we let
\beq
C*C'=\langle C\,\dot{*}\,C'\rangle\;\subset\,\F^n
\eeq
be the linear span of $C\,\dot{*}\,C'$.
(The reader should be careful of the shift in notations from \cite{agis}.)

For $x\in\F^n$, we also write $x*C=\langle x\rangle*C=\{x*c\,;\;c\in C\}$.

Some authors also call our $C*C'$ the Schur product of $C$ and $C'$.
It is easily seen that this operation $*$, defined on pairs
of linear codes of a given length $n$, is commutative and associative.
Given linear codes $C_1,\dots,C_t\subset\F^n$, their product
$C_1*\cdots*C_t\subset\F^n$ is then the linear code spanned by componentwise
products $c_1*\cdots*c_t$ for $c_i\in C_i$.

Also, given three linear codes $C,C',C''\subset\F^n$, we have the distributivity
relation
\beq
C*(C'+C'')\,=\,C*C'\,+\,C*C''\;\subset\,\F^n
\eeq
where $+$ is the usual sum of subspaces in $\F^n$.

This means that the set of linear codes over $\F$ of some given length~$n$
becomes a commutative \emph{semiring} under these operations $+,*$.
The zero element of this semiring is the zero subspace, and its
unit element is the one-dimensional repetition code $\un$
(generated by the all-$1$ vector $1_{[n]}$).
And as above, this semiring is easily seen to be an \emph{ordered} semiring (under inclusion).

The $t$-th power of an element $C$ in
this semiring will be denoted $C\deux[t]$.
For instance, $C\deux[0]=\un$, $\,$ $C\deux[1]=C$, $\,$ $C\deux=C*C$,
and we have the usual relations $C\deux[t]*C\deux[t']=C\deux[t+t']$,
$\,$ $(C\deux[t])\deux[t']=C\deux[tt']$.

\begin{definition}
\label{def_sequences}
Let $C\subset\F^n$ be a linear code.
The sequence of integers 
\beq
\dim(C\deux[i]),\quad i\geq0
\eeq
is called the
\emph{dimension sequence}, or the \emph{Hilbert sequence},
of $C$.

The sequence of integers 
\beq
\dmin(C\deux[i]),\quad i\geq0
\eeq
is called the
\emph{distance sequence} of $C$.

Occasionally we will also consider the \emph{dual distance sequence}, $\ddual(C\deux[i])=\dmin((C\deux[i])^\perp)$, $i\geq0$.
\end{definition}

Probably the more self-describing term ``dimension sequence'' should be preferred over ``Hilbert sequence,'' although two very good reasons for using the latter would be: 
\begin{enumerate}[(i)]
\item emphasis on the geometric analogy that will be given in Proposition~\ref{eq_def}
\item pedantry.
\end{enumerate}

One can show (\cite[Prop.~11]{agis}, or Theorem~\ref{monotone} below) that the dimension sequence of a nonzero linear code is non-decreasing,
so it becomes ultimately constant. This allows the following:

\begin{definition}
\label{def_CMreg}
The (Castelnuovo-Mumford) \emph{regularity} of a nonzero linear code $C\subset\F^n$
is the smallest integer $r=r(C)\geq0$ such that
\beq
\dim(C\deux[r])=\dim(C\deux[r+i]),\quad \forall i\geq0.
\eeq
\end{definition}

\begin{example}
\label{exRS}
For $2\leq k\leq n\leq\abs{\F}$, the $[n,k]$ Reed-Solomon code $C$ obtained by evaluating polynomials of degree up to $k-1$ at $n$ distinct elements of $\F$, has dimension sequence
\beq
1,\,k,\,2k-1,\,\dots,\,\left\lfloor\frac{n-1}{k-1}\right\rfloor\!(k-1)+1,\,n,\,n,\,n,\,\dots
\eeq
and distance sequence
\beq
n,\,n-k+1,\,n-2k+2,\,\dots,\,n-\left\lfloor\frac{n-1}{k-1}\right\rfloor\!(k-1),\,1,\,1,\,1,\,\dots
\eeq
and regularity
\beq
r(C)=\left\lceil\frac{n-1}{k-1}\right\rceil.
\eeq
\end{example}

\begin{example}
\label{exAG}
More generally, one class of linear codes that behave particularly well with respect to the operation $*$,
is that of evaluation codes.
For example if $RM_\F(r,m)$ is the generalized Reed-Muller code, obtained
by evaluating polynomials in $m\geq2$ variables, of total degree up to $r\geq0$,
at all points of $\F^m$, then we have
\beq
RM_\F(r,m)*RM_\F(r',m)=RM_\F(r+r',m).
\eeq
Likewise for algebraic-geometry codes we have
\beq
C(D,G)*C(D',G)\subset C(D+D',G)
\eeq
where $C(D,G)$ is the code obtained by evaluating functions from
the Riemann-Roch space $L(D)$, associated with a divisor $D$,
at a set $G$ of $\F$-points out of the support of $D$, on an algebraic curve
over $\F$. Note that here one can construct examples in which the inclusion is strict (even when $D,D'$ are taken effective):
indeed $L(D+D')$ contains $L(D)L(D')$ but need not be spanned by it --- see \eg \cite{Mumford_quad}, \S\S1-2, for a study of questions of this sort.
\end{example}

\entry
There is a natural \emph{semiring morphism}, from the $\cup,\dot{*}\,$-semiring of (nonlinear)
codes of length $n$ over $\F$, to the $+,*\,$-semiring of linear codes of length $n$ over $\F$,
mapping a subset $S\subset\F^n$ to its linear span $\langle S \rangle$
(moreover this map respects the oredred semiring structures defined by inclusion).

Last, there is another semiring structure, defined
by the operations $\oplus,\tens$ on the set of \emph{all} linear
codes over $\F$ (not of fixed length; see \eg \cite{Slepian}).
One should not confuse these constructions, although we will see some
more relations between them in~\ref{tens_et_*} and~\ref{def_decomposable} below.

\entry
\label{intersecting_codes}
Also there are links between products of codes and the theory of intersecting codes~\cite{CL}\cite{21sep}, but one should be careful to avoid certain misconceptions.
Given two linear codes $C_1,C_2\subset\F^n$, we define their intersection number
\beq
i(C_1,C_2)=\min_{\substack{c_1\in C_1,\,c_2\in C_2\\ c_1,c_2\neq0}}w(c_1*c_2).
\eeq
We say the pair $(C_1,C_2)$ is intersecting if $i(C_1,C_2)>0$, that is, if any two nonzero codewords from $C_1$ and $C_2$ have intersecting supports.
And given an integer $s>0$, we say $(C_1,C_2)$ is $s$-intersecting if $i(C_1,C_2)\geq s$.
(We also say a linear code $C$ is ($s$-)intersecting if the pair $(C,C)$ is.)

Now the two quantities $\dmin(C_1*C_2)$ and $i(C_1,C_2)$ might seem related, but there are some subtleties:
\begin{itemize}
\item Let $s>0$ and suppose $(C_1,C_2)$ is $s$-intersecting. Then this does not necessarily imply $\dmin(C_1*C_2)\geq s$.
More precisely, what we have is that any codeword $z\in C_1*C_2$ of the specific form $z=c_1*c_2$ has weight $w(z)\geq s$.
But there are codewords in $C_1*C_2$ that are not of this form (because $C_1*C_2$ is defined as a linear span), which can make its minimum distance smaller:
see Example~\ref{dmin_non_produit}.
\item On the other hand, $\dmin(C_1*C_2)\geq s$ does not necessarily imply that $(C_1,C_2)$ is $s$-intersecting.
In fact it could well be that there are nonzero $c_1,c_2$ such that $c_1*c_2=0$, so $(C_1,C_2)$ is even not intersecting at all!
Indeed note that $c_1*c_2=0$ does not contribute to $\dmin(C_1*C_2)$.
The possibility of such ``unexpected zero codewords'' is one difficulty in the estimation of the minimum distance of products of codes;
see also the discussion in~\ref{distance_geom}.
\end{itemize}
However it remains that, if $(C_1,C_2)$ is intersecting, then it is at least $\dmin(C_1*C_2)$-intersecting.
Or equivalently, if $i(C_1,C_2)>0$, then $i(C_1,C_2)\geq\dmin(C_1*C_2)$.

\mysubsection{Link with tensor constructions}
Here we present algebraic constructions related to products and powers of codes.
Later (in~\ref{geom_schema}-\ref{Veronese}) they will be revisited from a geometric point of view.

\entry
\label{tens_et_*}
We identify the tensor product $\F^m\tens\F^n$ with the space $\F^{m\times n}$ of $m\times n$ matrices,
by identifying the elementary tensor $(x_1,\dots,x_m)\tens(y_1,\dots,y_n)$ 
with the matrix with entries $(x_iy_j)$, and extending by linearity.

Given two linear codes $C,C'\subset\F^n$ of the same length~$n$,
their tensor product $C\tens C'\subset\F^{n\times n}$ is then the linear code
spanned by elementary tensor codewords $c\tens c'$, for $c\in C,c'\in C'$.
It then follows from the definitions that $C*C'$ is the projection of $C\tens C'$
on the diagonal. In fact this could be taken as an alternative definition for $C*C'$.

So we have an exact sequence
\beq
0\longto I(C,C')\longto C\tens C'\overset{\pi_\Delta}{\longto} C*C'\longto 0
\eeq
where $\pi_\Delta$ is the projection on the diagonal, and $I(C,C')$ is its kernel in $C\tens C'$.
One might view $I(C,C')$ as the space of formal bilinear expressions in codewords of $C,C'$
that evaluate to zero.

Equivalently, if $c_1,\dots,c_k$ are the rows of a generator matrix of $C$, and $c'_1,\dots,c'_{k'}$
are the rows of a generator matrix of $C'$, then the products $c_i*c'_j$ generate $C*C'$, but need
not be linearly independent: the space of linear relations between these product generators is precisely $I(C,C')$.
In particular we have
\beq
\dim(C*C')\;=\;kk'-\dim(I(C,C'))\;\leq\;\min(n,kk').
\eeq

It is true, although not entirely obvious from its definition, that $C\tens C'$ can also be described
as the space of $n\times n$ matrices all of whose columns are in $C$ and all of whose rows are in $C'$.
Then $I(C,C')$ is the subspace made of such matrices that are zero on the diagonal.

\entry
\label{def_ItC}
Likewise, by the universal property of symmetric powers, there is a natural surjective
map $S^tC\surj C\deux[t]$, whose kernel $I^t(C)$ can be viewed as the space of formal
homogeneous polynomials of degree $t$ in codewords of $C$ that evaluate to zero.

Equivalently, if $c_1,\dots,c_k$ are the rows of a generator matrix of $C$, then the monomials
$(c_1)^{i_1}*\cdots*(c_k)^{i_k}$, for $i_1+\cdots+i_k=t$, generate $C\deux[t]$, but need
not be linearly independent: the space of linear relations between these monomial generators is precisely $I^t(C)$.
In particular we have
\beq
\dim(C\deux[t])\;=\;\binom{k\!+\!t\!-\!\!1}{t}-\dim(I^t(C))\;\leq\;\min\left(n,\binom{k\!+\!t\!-\!\!1}{t}\right).
\eeq

For example, let $5\leq n\leq\abs{\F}$, and $C$ be the $[n,3]$ Reed-Solomon code obtained by evaluating polynomials of degree up to $2$ in $\F[x]$ at $n$ points in $\F$.
Denote by $1,x,x^2$ the canonical basis of $C$ (obtained from the corresponding monomials in $\F[x]$).
Then
\beq
1\cdot x^2-x\cdot x
\eeq
is nonzero in $S^2C$ but it evaluates to zero in $C\deux$, that is, it defines a nonzero element in $I^2(C)$.
We have $\dim(S^2C)=6$ and $\dim(C\deux)=5$, so in fact $I^2(C)$ has dimension $1$ and admits $1\cdot x^2-x\cdot x$ as a generating element.

\entry
\label{C_algebra}
The direct sum
\beq
C\deux[.]=\bigoplus_{t\geq0}C\deux[t]
\eeq
admits a natural structure of
graded $\F$-algebra, which makes it a quotient of the symmetric algebra $S^\cdot C$ under
the maps described in the previous entry.

Equivalently, the direct sum
\beq
I^\cdot(C)=\bigoplus_{t\geq0}I^t(C)
\eeq
is a homogeneous ideal
in the symmetric algebra $S^\cdot C$, and we have a natural identification
\beq
C\deux[.]=S^\cdot C/I^\cdot(C)
\eeq
of graded $\F$-algebras.

\mysubsection{Rank functions}
%

\entry
\label{def_rank}
Let $C$ be a finite dimensional $\F$-vector space, and $S\subset C$
a generating set.
\begin{definition*}
The rank function associated with $S$ is the
function $\rk_S:C\longto\Z_{\geq0}$ defined as follows: the rank $\rk_S(c)$
of an element $c\in C$ is the smallest integer $r\geq0$ such that
there is a decomposition
\beq
c=\lambda_1s_1+\cdots+\lambda_rs_r,\qquad\lambda_i\in\F,s_i\in S
\eeq
of $c$ as a linear combination of $r$ generators from $S$.
\end{definition*}

Conversely we say that $\rk:C\longto\Z_{\geq0}$ is a rank function on $C$ if $\rk=\rk_S$
for some generating set $S$.
It is easily seen that a rank function is a norm on $C$
(relative to the trivial absolute value on $\F$),
in particular it satisfies
\beq
\rk(\lambda x)=\rk(x)
\eeq
and
\beq
\rk(x+y)\leq\rk(x)+\rk(y)
\eeq
for all $\lambda\in\F^\times$, $x,y\in C$.
In fact, a generating set $S$ defines a surjective linear map $\F^{(S)}\surj C$,
and $\rk_S$ is then the quotient norm on $C$ of the Hamming norm on $\F^{(S)}$.

\vspace{.2\baselineskip}

(In some instances we will also make the following abuse:
given $S\subset C$ that does not span the whole of $C$, we define $\rk_S$ on $\langle S\rangle$ as above,
and then we let $\rk_S(c)=\infty$ for $c\in C\moins\langle S\rangle$.)

\begin{example}
Suppose given a full rank $n\times(n-k)$ matrix $H$, and let $y\in\F^n$ be an arbitrary word (in row vector convention),
and $z=yH^T\in\F^{n-k}$ the corresponding ``syndrome''.
The set $S$ of columns of $H$, or equivalently, of rows of $H^T$, is a generating set in $\F^{n-k}$, and the rank
\beq
\rk_S(z)
\eeq
is then equal to the weight of a minimum error vector $e$ such that $y-e$ is in the code defined by the parity-check matrix $H$.
\end{example}

\entry
Given a rank function on $C$, we let
\beq
C_{(i)}=\{c\in C\,;\;\rk(c)=i\},\qquad C_{(\leq i)}=\{c\in C\,;\;\rk(c)\leq i\}
\eeq
for all $i\geq0$.
Obviously $\rk(c)\leq k=\dim(C)$ for all $c\in C$, so
\beq
0=C_{(\leq 0)}\subset C_{(\leq 1)}\subset\dots\subset C_{(\leq k)}=C.
\eeq

Two generating sets $S,S'\subset C$ define the same rank function if the sets of lines
$\{\F\cdot s\,;\,s\in S,s\neq0\}$ and $\{\F\cdot s'\,;\,s'\in S',s'\neq0\}$ are equal.
Given a rank function $\rk$ on $C$, there is a preferred generating set $S$ such
that $\rk=\rk_S$, namely it is $S=C_{(1)}$.

\begin{lemma}
\label{fonctorialite_rang}
Let $f:C\longto C'$ be a linear map between two finite dimensional $\F$-vector spaces.
Suppose $S\subset C$, $S'\subset C'$ are generating sets with $f(S)\subset S'$.
Then for all $c\in C$ we have
\beq
\rk_{S}(c)\geq\rk_{S'}(f(c)).
\eeq
\end{lemma}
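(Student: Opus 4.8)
The plan is to unwind the definition of the rank functions on both sides and exhibit, from any minimal representation of $f(c)$ in terms of $S'$, no—rather, from a minimal representation of $c$ in terms of $S$, a representation of $f(c)$ in terms of $S'$ of the same or shorter length. Concretely, suppose $\rk_S(c)=r$, so that we may write $c=\lambda_1 s_1+\cdots+\lambda_r s_r$ with $\lambda_i\in\F$ and $s_i\in S$. Applying $f$ and using linearity gives $f(c)=\lambda_1 f(s_1)+\cdots+\lambda_r f(s_r)$. By hypothesis each $f(s_i)$ lies in $S'$, so this is an expression of $f(c)$ as a linear combination of $r$ elements of $S'$. Hence $\rk_{S'}(f(c))\leq r=\rk_S(c)$, which is exactly the claimed inequality.

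The only point that requires a word of care is the degenerate case where $c$, or some of the $f(s_i)$, is zero; but this causes no trouble, since the convention is that $\rk$ of the zero vector is $0$, and dropping any zero terms from the combination above only shortens it, so the bound $\rk_{S'}(f(c))\leq r$ is preserved. One should also note that $f(c)$ indeed lies in $\langle S'\rangle=C'$, so $\rk_{S'}(f(c))$ is finite and the statement is not vacuous; this is automatic since $S'$ is assumed to be a generating set of $C'$.

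There is essentially no obstacle here: the statement is the functoriality of the quotient-norm description of $\rk_S$ recalled in the preceding entry. In fact one can phrase the argument more structurally: a generating set $S$ with $f(S)\subset S'$ induces a map $\F^{(S)}\to\F^{(S')}$ sending a basis vector $e_s$ to $e_{f(s)}$ (or to $0$ if $f(s)=0$), and this map does not increase Hamming weight; since it is compatible via $f$ with the two surjections $\F^{(S)}\surj C$ and $\F^{(S')}\surj C'$, passing to quotient norms yields the inequality. Either way, the write-up is a two-line computation.
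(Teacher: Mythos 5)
Your proof is correct and matches the paper's, which simply says ``Obvious from the definition'': applying $f$ to a length-$r$ decomposition of $c$ over $S$ yields a length-$\leq r$ decomposition of $f(c)$ over $S'$. The quotient-norm reformulation at the end is a pleasant gloss but not a different argument.
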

\begin{proof}
Obvious from the definition.
\end{proof}

\entry
\label{notation_point}
Now we generalize constructions~\ref{def_dot*}-\ref{def_*} slightly.
Suppose given sets $V_1,\dots,V_t,W$ and a map
\beq
\Phi:V_1\times\cdots\times V_t\longto W.
\eeq
Then for subsets $S_1\subset V_1,\dots,S_t\subset V_t$ we define
\beq
\dot{\Phi}(S_1,\dots,S_t)=\{ \Phi(c_1,\dots,c_t)\,;\;c_1\in S_1,\dots,c_t\in S_t\}\;\subset\,W.
\eeq
If moreover $V_1,\dots,V_t,W$ are $\F$-vector spaces and $C_1\subset V_1,\dots,$ $C_t\subset V_t$ are $\F$-linear subspaces, we let
\beq
\Phi(C_1,\dots,C_t)=\langle\dot{\Phi}(C_1,\dots,C_t)\rangle\;\subset\,W.
\eeq
In this definition $\Phi$ could be an arbitrary map, although in most examples it will be $t$-multilinear.
In this case, the ($\F$-)linear span is easily seen to reduce to just an additive span.

Also, we will use analogous notations when $\Phi$ is written as a composition law,
for instance if $V,V'$ are $\F$-vector spaces and $S\subset V$, $S'\subset V'$ arbitrary subsets,
then $S\dot{\tens}S'=\{c\tens c'\,;\;(c,c')\in S\times S'\}\;\subset\,V\tens V'.$

\entry
Let $V,V',W$ be $\F$-vector spaces and $\Phi:V\times V'\longto W$ a bilinear map.
Let $C\subset V$ and $C'\subset V'$ be linear subspaces, and suppose $C,C'$ equipped
with rank functions $\rk,\rk'$ respectively.
Then $\dot{\Phi}(C_{(1)},{C'}_{(1)})\subset\Phi(C,C')$ is a
generating set, and we define $\rk^\Phi$ as the associated rank function on $\Phi(C,C')$, also called the $\Phi$-rank function deduced from $\rk$ and $\rk'$.

Alternatively,
the rank $\rk^\Phi(z)$ of an element $z\in\Phi(C,C')$ can be computed as the
smallest value of the sum $\sum_i\rk(c_i)\rk'(c'_i)$ over all possible decompositions
(of arbitrary length)
$z=\sum_i\Phi(c_i,c'_i)$, $c_i\in C,c'_i\in C'$.

Here we considered a bilinear $\Phi$, but these constructions easily generalize to the $t$-multilinear case, $t\geq3$.

When $\Phi=\tens$ is tensor product, or when $\Phi=*$ is componentwise multiplication in $\F^n$,
we will often make the following additional assumptions:

\entry
From now on, unless otherwise specified, when a linear code $\widetilde{C}$ is written as a tensor product
$\widetilde{C}=C_1\tens\dots\tens C_t$, it will be assumed that the $C_i$ are equipped with the
trivial rank function (such that $\rk(c_i)=1$ for all nonzero $c_i\in C_i$),
and that $\widetilde{C}$ is equipped with the $\tens$-rank function from these.
A nonzero $c\in\widetilde{C}$ is then of rank $1$ if it is an \emph{elementary tensor}
$c=c_1\tens\dots\tens c_t$, with $c_i\in C_i$.

For example, with these conventions,
the rank function on $\F^{m\times n}=\F^m\tens\F^n$ is the rank of matrices in the usual sense.
Given two linear codes $C\subset\F^m$, $C'\subset\F^n$,
we have an inclusion $C\tens C'\subset\F^m\tens\F^n$, and by Lemma~\ref{fonctorialite_rang}
the rank of a codeword $z\in C\tens C'$ is greater than or equal to its rank as a matrix.

\entry
\label{rang_puissance_cano}
Likewise, when a linear code $\widetilde{C}\subset\F^n$ is written as a product
$\widetilde{C}=C_1*\cdots*C_t$ for some $C_i\subset\F^n$, it will be assumed that these $C_i$ are equipped
with the trivial rank function, and that $\widetilde{C}$ is equipped with the $*$-rank function from these.
A nonzero $c\in \widetilde{C}$ is then of rank $1$ if it is an \emph{elementary product}
$c=c_1*\dots* c_t$, with $c_i\in C_i$.

In particular, given a code $C$ and its $t$-th power $C\deux[t]$, it will be assumed that $C$ is equipped with the trivial rank function, and $C\deux[t]$ with its $t$-th power.

As in~\ref{tens_et_*} there is a natural projection
$\pi_\Delta:C_1\tens\cdots\tens C_t\longto C_1*\cdots*C_t$
(or $C^{\tens t}\longto C\deux[t]$), and by Lemma~\ref{fonctorialite_rang}
this map can only make the rank decrease.

\begin{definition}
\label{dist_rang}
Let $C$ be a nonzero linear code equipped with a rank function $\rk$. 
For any $i\geq1$
we then define ${\dmin}_{,i}(C)$ as the minimum weight of a nonzero element in $C_{(\leq i)}$.
\end{definition}
If $C$ has dimension $k$, then obviously
\beq
{\dmin}_{,1}(C)\geq {\dmin}_{,2}(C) \geq\dots\geq {\dmin}_{,k}(C)=\dmin(C).
\eeq

\begin{example}
\label{dmin_non_produit}
Given two linear codes $C\subset\F^m$ and $C'\subset\F^n$, it follows from the
description of $C\tens C'$ as the space of $m\times n$ matrices with columns in $C$ and
rows in $C'$, that $\dmin(C\tens C')=\dmin(C)\dmin(C')$, and that moreover this value
is attained by an elementary tensor codeword, that is
\beq
{\dmin}_{,1}(C\tens C')=\dmin(C\tens C')=\dmin(C)\dmin(C').
\eeq

On the other hand, let $C,C'\subset(\F_2)^7$ be the linear codes with generator matrices
\beq
G=\left(
\begin{array}{ccccccc}
1&0&0&1&1&1&1\\
0&1&1&1&1&0&0
\end{array}\right),\qquad
G'=\left(
\begin{array}{ccccccc}
1&0&0&1&1&1&1\\
0&1&1&0&0&1&1
\end{array}\right)
\eeq
respectively, so the nonzero codewords of $C$ are
$c_1=(1001111)$, $c_2=(0111100)$, $c_1+c_2=(1110011)$
and the nonzero codewords of $C'$ are $c'_1=c_1=(1001111)$,
$c'_2=(0110011)$, $c'_1+c'_2=(1111100)$.
We can then check that $c*c'$ has weight at least $2$ for all nonzero $c\in C$, $c'\in C'$,
while $C*C'$ also contains $c_1*c'_1+c_1*c'_2+c_2*c'_1=(1000000)$.
Hence 
\beq
{\dmin}_{,1}(C*C')=2>\dmin(C*C')=1,
\eeq
and in this example, $\dmin(C*C')$ cannot be attained by an elementary product codeword.
\end{example}

\mysubsection{Geometric aspects}
Part of the discussion here is aimed at readers with a certain working knowledge of algebraic geometry. Other readers can still read \ref{longueurs}-\ref{geom_naive}, the first halves of \ref{Segre} and \ref{Veronese}, and also \ref{distance_geom}, which remain elementary, and then skip to the next section with no harm.

\entry
\label{longueurs}
Let $C\subset\F^n$ be a linear code and $C^{\perp}\subset\F^n$ its dual.
For each integer $i\geq0$ we have $C=C^{\perp\perp}\subset\langle x\in C^{\perp}\,;\;w(x)\leq i\rangle^{\perp}$,
and we denote the dimension of the latter as
\beq
n_i=\dim \langle x\in C^{\perp}\,;\;w(x)\leq i\rangle^{\perp}.
\eeq
Obviously $n_i\geq n_{i+1}$, and the first values are easily computed
(see also \ref{support_et_dual} and~\ref{repete_et_dual}):
\begin{itemize}
\item $n_0=n$ is the length of $C$
\item $n_1=\abs{\Supp(C)}$ is the support length of $C$, that is, the number of nonzero columns
of a generator matrix of $C$
\item $n_2$ is the \emph{projective length} of $C$, that is, the number of proportionality classes
of nonzero columns of a generator matrix of $C$.
\end{itemize}
The author does not know such a nice interpretation for the subsequent values $n_3,n_4,\dots$

At some point the sequence must stabilize, more precisely, if $C^\perp$ is generated by its codewords
of weight at most $i_0$, then $n_{i_0}=n_{i_0+1}=\dots=k=\dim(C)$.
Putting $C^\perp$ in systematic form, one sees one can take $i_0\leq k+1$.

\entry
\label{geom_naive}
Let $C\subset\F^n$ be a linear code of dimension $k$, and let $G$ be a generator matrix for $C$.
We will suppose that $C$ has full support, that is, $G$ has no zero column, or with the notations of~\ref{longueurs}, $n_1=n$.
In many applications, the properties of $C$ that are of interest are preserved when a column of $G$ is replaced with a proportional one, or
when columns are permuted. So (\cite{TVbook}\cite{TVieee}) these properties only depend on the projective set $\Pi_C\subset\PP^{k-1}$ of proportionality classes
of columns of $G$, where possibly elements of $\Pi_C$ may be affected multiplicities to reflect the fact that some columns of $G$ may be
repeated (up to proportionality).

However, in our context we need to keep track of the ordering of columns, since the product of codes works coordinatewise.
This can be done by considering the labeling
\beq
\nu_C:[n]\surj\Pi_C\subset\PP^{k-1}
\eeq
where $\nu_C$ maps $i\in[n]$ to the proportionality class in $\PP^{k-1}$ of the $i$-th column of $G$.
Note that, in particular, the image of $\nu_C$ is $\Pi_C$. It has $n_2$ elements (with the notations of~\ref{longueurs}),
and it spans $\PP^{k-1}$ (because $G$ has rank $k$).

In fact this description can be made slightly more intrinsic.
We can view $C\subset\F^n$ as an abstract vector
space $C$ equipped with $n$ linear forms $C\longto\F$, which span the dual vector space $C^\vee$.
That $C$ has full support means that each of these $n$ linear forms is nonzero,
so it defines a line in $C^\vee$. Seeing $\PP^{k-1}$ as the projective space of these lines, we retrieve
the definition of $\nu_C$.

\entry
\label{geom_schema}
Recall \cite[\S 4.1]{EGA2}\cite[\S II.7]{Hartshorne} that if $V$ is a finite dimensional $\F$-vector space, then
\beq
\PP(V)=\Proj S^\cdot V
\eeq
is the scheme whose points represent
lines in $V^\vee$, or equivalently, hyperplanes in $V$, or equivalently, invertible quotients of $V$. 
If $\cA$ is a $\F$-algebra, then giving a map $\nu:\Spec \cA\longto\PP(V)$ is the same as giving an invertible $\cA$-module $\cL$
and a $\F$-linear map $V\longto\cL$ whose image generates $\cL$ over $\cA$. The closure of the image of $\nu$ (which will be the full image of $\nu$ if $\cA$ is finite) is then the closed subscheme
of $\PP(V)$ defined by the homogeneous ideal $\bigoplus_{t\geq0}\ker(S^tV\longto\cL^{\tens t})$ of $S^\cdot V$.

We apply this with $V=C$ and $\cL=\cA=\F^n$. Indeed, that $C$ has full support means $\F^n*C=\F^n$, that is, $C$ generates $\F^n$
as a $\F^n$-module. So we deduce a morphism $[n]=\Spec\F^n\longto\PP(C)$. This morphism is precisely the $\nu_C$ defined in~\ref{geom_naive}:
indeed the $n$ points of $[n]$ correspond to the $n$ projections $\F^n\longto\F$, so their images in $\PP(C)$ correspond to the $n$ coordinate linear forms $C\longto\F$.
Recalling notations from~\ref{def_ItC}-\ref{C_algebra}, we then find:

\begin{proposition}
The map $\nu_C$ fits into the commutative diagram
\beq
\begin{array}{cccccc}
\nu_C:\!\!\!\!\! & [n] & \surj & \Pi_C & \!\!\subset & \PP^{k-1} \\
& \Arrowvert & & \Arrowvert & & \Arrowvert \\
& \Spec\F^n & \surj & \Proj C\deux[.] & \!\!\subset & \PP(C)
\end{array}
\eeq
where the homogeneous ideal defining $\Pi_C=\Proj C\deux[.]$ in $\PP^{k-1}=\PP(C)$
is $I^\cdot(C)$.
\end{proposition}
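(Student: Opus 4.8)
The plan is to unwind both descriptions of the map and then identify the homogeneous ideal cutting out its image, relying on the facts already assembled in~\ref{def_ItC}--\ref{C_algebra} and~\ref{geom_naive}--\ref{geom_schema}; the diagram then merely records these identifications.

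First I would pin down the morphism produced in~\ref{geom_schema}. Applying the universal property recalled there with $V=C$, $\cA=\cL=\F^n$, and the inclusion $C\inj\F^n$ as the generating linear map --- whose image generates $\F^n$ over $\F^n$ precisely because $C$ has full support --- one obtains $\nu_C:\Spec\F^n\longto\PP(C)$. The closed points of $\Spec\F^n$ correspond to the coordinate projections $\pi_{\{i\}}:\F^n\surj\F$, $i\in[n]$; the fibre of $\cL$ at the $i$-th point is $\F$, and the induced one-dimensional quotient of $C$ is the composite $C\inj\F^n\surj\F$, that is the $i$-th coordinate form $c\mapsto c_i$. A one-dimensional quotient of $C$ is the same datum as the line it spans in $C^\vee$, and in the dual of the basis of $C$ given by the rows of a generator matrix $G$ this line is generated by the $i$-th column of $G$; hence $\nu_C$ agrees on points with the labeling $i\mapsto[\,i\text{-th column of }G\,]$ of~\ref{geom_naive}, and the same choice of basis furnishes the identification $\PP(C)=\PP^{k-1}$.

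Next I would compute the defining ideal. Since $\Spec\F^n$ is a finite scheme, the image of $\nu_C$ is closed and hence coincides with the closed subscheme described in~\ref{geom_schema}, the one cut out by the homogeneous ideal $\bigoplus_{t\geq0}\ker(S^tC\longto\cL^{\tens t})$. Here $\cL^{\tens t}=(\F^n)^{\tens_{\F^n}t}=\F^n$, and under this identification the map $S^tC\longto\F^n$ sends a monomial $c_1\cdots c_t$, $c_i\in C$, to the product $c_1*\cdots*c_t$ taken in the ring $\F^n$; by the definition of the power $C\deux[t]$ this map has image $C\deux[t]$, hence kernel $I^t(C)$ in the notation of~\ref{def_ItC}. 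Therefore the defining homogeneous ideal is $I^\cdot(C)$, and by the identification $C\deux[.]=S^\cdot C/I^\cdot(C)$ of~\ref{C_algebra} the closed subscheme is $\Proj C\deux[.]$; its support is the finite point set $\{\nu_C(i)\,;\,i\in[n]\}=\Pi_C$, which supplies the middle vertical identification and closes the diagram.

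The one step requiring genuine care --- everything else being a direct appeal to~\ref{def_ItC},~\ref{C_algebra} and the generalities of~\ref{geom_schema} --- is the first: verifying that the morphism produced abstractly by the functor of points of $\PP(C)$ is literally the elementary $\nu_C$ of~\ref{geom_naive}, i.e. correctly matching ``one-dimensional quotient of $C$'', ``line in $C^\vee$'', and ``column of $G$ up to proportionality'', and keeping the $\PP(C)$-versus-$\PP(C^\vee)$ bookkeeping straight throughout.
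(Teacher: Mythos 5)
Your proposal is correct and follows essentially the same route as the paper: the key step in both is the identification $\ker(S^tC\to\F^n)=\ker(S^tC\to C\deux[t])=I^t(C)$, from which $\Proj C\deux[.]$ emerges as the scheme-theoretic image. The paper's own proof is a single line, because the matching of the abstract morphism from~\ref{geom_schema} with the elementary $\nu_C$ of~\ref{geom_naive} is already carried out in the discussion preceding the proposition; you redo that bookkeeping inside the proof, which is harmless but not strictly necessary.
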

\begin{proof}
Indeed we have $\ker(S^tC\longto\F^n)=\ker(S^tC\longto C\deux[t])=I^t(C)$.
\end{proof}
Note that the linear span of $\Proj C\deux[.]$ is the whole of $\PP(C)$ since $I^1(C)=0$.

\entry
A possible application of what precedes is to the interpolation problem, where one seeks a subvariety $\Sigma\subset\PP^{k-1}$
passing through $\Pi_C$, in order to write $C$ as an evaluation code on $\Sigma$. Viewing $S^tC$ as the space of homogeneous
functions of degree $t$ on $\PP^{k-1}$, the homogeneous equations defining $\Sigma$ are then to be found in $I^\cdot(C)$.

Another consequence is the following, which explains the names in Definitions~\ref{def_sequences}-\ref{def_CMreg}.
We define the Hilbert function and the Castelnuovo-Mumford regularity of a closed subscheme in a projective space, as those of its homogeneous coordinate ring (see \eg \cite{Eisenbud}).
Then:

\begin{proposition}
\label{eq_def}
Let $C\subset\F^n$ be a linear code with full support, and $\nu_C:[n]\surj\Pi_C\subset\PP^{k-1}$ the associated projective spanning map.
Then:
\begin{enumerate}[(i)]
\item\label{eq_def_i} The dimension sequence of $C$ is equal to the Hilbert function of $\Pi_C$.
\item\label{eq_def_ii} The regularity $r(C)$ of $C$ is equal to the Castelnuovo-Mumford regularity of $\Pi_C$.
\item\label{eq_def_iii} The stable value of the dimension sequence is the projective length of $C$:
\beq
\dim C\deux[t]=n_2
\eeq
for $t\geq r(C)$.
\end{enumerate}
\end{proposition}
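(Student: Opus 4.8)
The plan is to leverage the scheme-theoretic identification $\Pi_C=\Proj C\deux[.]$ established in the preceding proposition, together with the algebra isomorphism $C\deux[.]=S^\cdot C/I^\cdot(C)$ from~\ref{C_algebra}. The key observation is that $C\deux[.]$ is, by construction, the homogeneous coordinate ring of the closed subscheme $\Pi_C\subset\PP^{k-1}=\PP(C)$: its degree-$t$ piece is $C\deux[t]$, and the defining homogeneous ideal is $I^\cdot(C)$. Once this dictionary is in place, each of the three items should follow essentially by unwinding definitions.

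For part~\itemref{eq_def_i}, I would argue that the Hilbert function of $\Pi_C$ is by definition $t\mapsto\dim_\F(C\deux[.])_t=\dim_\F C\deux[t]$, which is precisely the dimension sequence of $C$. One subtlety to address: the Hilbert function of a projective scheme is usually defined via the Hilbert polynomial of $\cO_{\Pi_C}(t)$, i.e. via $\dim H^0(\Pi_C,\cO(t))$, which for $t$ large agrees with the graded piece of the coordinate ring but may differ for small $t$ if the ideal is not saturated. Here, however, since $\Pi_C$ is a finite set of (possibly fat) points — indeed $[n]=\Spec\F^n$ maps onto it — and more importantly since $\F^n$ is itself the relevant invertible module with $C\deux[t]\inj\F^n$ an honest submodule equal to the image of $S^tC$, the graded ring $C\deux[.]$ is exactly the ring of global sections $\bigoplus_t H^0(\Pi_C,\cO(t))$ once it stabilizes, and for the finitely many small $t$ we simply take the Hilbert function to mean the dimension of the graded piece of $C\deux[.]$, as is standard when one speaks of the Hilbert function of a homogeneous coordinate ring (cf.~\cite{Eisenbud}). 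So part~\itemref{eq_def_i} is a tautology given the identification, and part~\itemref{eq_def_ii} is likewise immediate: the Castelnuovo-Mumford regularity of $\Pi_C$ is defined as that of its coordinate ring $C\deux[.]$, and Definition~\ref{def_CMreg} defines $r(C)$ as the first index from which $\dim C\deux[r]$ is constant — which, for a zero-dimensional scheme, coincides with the usual regularity (the Hilbert function of a $0$-dimensional scheme stabilizes exactly at its regularity, the Hilbert polynomial being the constant $\deg\Pi_C$).

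For part~\itemref{eq_def_iii}, the point is to identify the stable value $\dim C\deux[t]$ for $t\geq r(C)$ with the degree of $\Pi_C$ as a $0$-dimensional scheme, and then to show that this degree equals $n_2$, the projective length of $C$. Since $\nu_C:[n]\surj\Pi_C$ and the source has length $n$, while $\Pi_C$ is the image with its reduced-or-not structure, I would argue that $\Pi_C$ is in fact reduced — the map $S^tC\to\F^n$ factors through $C\deux[t]\inj\F^n$, and $\F^n$ is a product of $n$ copies of $\F$, so the structure sheaf of $\Pi_C$ embeds into that of $[n]$ and hence is reduced. Thus $\Pi_C$ is a reduced set of points, and its degree is just its cardinality, which by~\ref{geom_naive} is $n_2$, the number of proportionality classes of columns of $G$. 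Alternatively and more directly: for $t\geq r(C)$, $C\deux[t]$ is the image of $S^tC\to\F^n$ and stabilizes, so it equals $\langle x\in(C\deux[t'])^\perp:\dots\rangle^\perp$-type spans; but the cleanest route is to invoke~\ref{longueurs}, where $n_2$ was defined as $\dim\langle x\in C^\perp:w(x)\leq 2\rangle^\perp$, and match this with the stable dimension using that $C\deux[t]$ for large $t$ is spanned by all products $1_{\{i\}}$-type coordinate projections identifying columns up to scalar — concretely, $C\deux[t]$ stabilizes to the code spanned by the characteristic-like vectors of the fibers of $\nu_C$, which has dimension exactly the number of fibers, namely $n_2$.

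I expect the main obstacle to be part~\itemref{eq_def_iii}, specifically the careful matching of ``stable dimension of the power code'' with ``projective length $n_2$'' as defined back in~\ref{longueurs} via duals of low-weight-generated subcodes. The cleanest argument is probably: the stabilized algebra $C\deux[\geq r]$ is the homogeneous coordinate ring of the reduced point set $\nu_C([n])\subset\PP^{k-1}$ in large degrees, hence has Hilbert polynomial the constant $\#\nu_C([n])=n_2$; and separately, $n_2=n_2$ from~\ref{longueurs} is this same count since two columns of $G$ are proportional iff the corresponding coordinate functionals on $C$ agree up to scalar iff they lie in a common weight-$2$ relation of $C^\perp$. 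Making this last chain of ``iff''s precise, and confirming that no nilpotents sneak in so that the degree genuinely equals the cardinality, is the one place where a few lines of honest verification are needed rather than pure definition-chasing.
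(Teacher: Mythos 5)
Your proposal is correct and follows essentially the same route as the paper's proof: identify the homogeneous coordinate ring of $\Pi_C$ with $C\deux[.]=S^\cdot C/I^\cdot(C)$ (making (i) a tautology), invoke the standard fact that for a $0$-dimensional projective scheme the Castelnuovo-Mumford regularity equals the stabilization index of the Hilbert function (which the paper cites from Eisenbud, Th.~4.2(3)), and observe that $\Pi_C$ is reduced as an image of $\Spec\F^n$ so its length is the number $n_2$ of points. The only difference is cosmetic — you state the regularity/stabilization coincidence as a known fact where the paper gives a precise reference, and you spend more words worrying about saturation and nilpotents, both of which you correctly dispose of.
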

\begin{proof}
As established during the construction of $\nu_C$ in~\ref{geom_schema}, the homogeneous coordinate ring of $\Pi_C$ is $S^\cdot C/I^\cdot(C)=C\deux[.]$.
This gives point~\ref{eq_def_i}, 
and then point~\ref{eq_def_ii} follows by \cite[Th.~4.2(3)]{Eisenbud}
(see also \cite[Lect.~14]{Mumford}).
To show point~\ref{eq_def_iii}, first recall that if $\cA^\cdot$ is a graded algebra such that the dimension $\dim\cA^t$ becomes constant
for $t\gg0$, then this stable value is precisely the length of the finite projective scheme $\Proj \cA^\cdot$.
Now here $\Pi_C$ is a reduced union of some $\F$-points (as an image of $[n]$), so this length is precisely the number $n_2$ of these points.
\end{proof}
Another (perhaps more concrete) proof of point~\ref{eq_def_iii} will be given in Theorem~\ref{descr_stable}.
For more on the geometric significance of~\ref{eq_def_ii}, see the discussion in~\ref{Gale}.

\begin{remark}
\label{H1}
From the short exact sequence of sheaves on $\PP(C)$ 
\beq
0\longto\fJ_{\Pi_C}\cO_{\PP(C)}(t)\longto\cO_{\PP(C)}(t)\longto\cO_{\Pi_C}(t)\longto0
\eeq
one can form a long exact sequence in cohomology, in which the first terms can be
identified as $\Gamma(\PP(C),\fJ_{\Pi_C}\cO_{\PP(C)}(t))=I^t(C)$
and $\Gamma(\PP(C),\cO_{\PP(C)}(t))=S^tC$, leading to a short exact sequence
\beq
0\longto C\deux[t]\longto\Gamma(\Pi_C,\cO_{\Pi_C}(t))\longto H^1(\PP(C),\fJ_{\Pi_C}\cO_{\PP(C)}(t))\longto0.
\eeq
Now $\Gamma(\Pi_C,\cO_{\Pi_C}(t))$ is a vector space of dimension $n_2$ over $\F$, and
choosing a subset $S\subset[n]$ of size $\abs{S}=n_2$ mapped bijectively onto $\Pi_C$ by $\nu_C$,
we can identify this vector space with $\F^S$. From this we finally deduce an identification
\beq
H^1(\PP(C),\fJ_{\Pi_C}\cO_{\PP(C)}(t))\simeq \F^S/\pi_S(C\deux[t]).
\eeq
In particular, since $C$ and $C\deux[t]$ have the same projective length for $t\geq1$
(this should be obvious, but if not see~\ref{repete_et_dual}-\ref{tranches_puissances}), we find that
\beq
\begin{split}
\dim H^1(\PP(C),\fJ_{\Pi_C}\cO_{\PP(C)}(t))&=n_2-\dim(C\deux[t])\\
&=\dim((C\deux[t])^\perp/\langle x\in (C\deux[t])^{\perp}\,;\;w(x)\leq 2\rangle)
\end{split}
\eeq
is the minimum number of parity-check relations of weight at least $3$ necessarily appearing in any set
of relations defining $C\deux[t]$.
\end{remark}

\entry
\label{Segre}
Let $C,C'\subset\F^n$ be two linear codes. Choose corresponding generating matrices $G,G'$.
For $1\leq i\leq n$, let $p_i\in\F^k$ be the $i$-th column of $G$
and $p'_i\in\F^{k'}$ the $i$-th column of $G'$. Then $C,C'$ are the respective images of the evaluation maps
\beq
\begin{array}{ccc}
\F[X_1,\dots,X_k]_1 & \longto & \F^n\\
L & \mapsto & (L(p_1),\dots,L(p_n))
\end{array}
\eeq
and
\beq
\begin{array}{ccc}
\F[Y_1,\dots,Y_{k'}]_1 & \longto & \F^n\\
L' & \mapsto & (L'(p'_1),\dots,L'(p'_n))
\end{array}
\eeq
defined on spaces of linear homogeneous polynomials in $k$ and $k'$ variables.
Then, $C*C'$ is the image of the evaluation map
\beq
\begin{array}{ccc}
\F[X_1,\dots,X_k;Y_1,\dots,Y_{k'}]_{1,1} & \longto & \F^n\\
B & \mapsto & (B(p_1;p'_1),\dots,B(p_n;p'_n))
\end{array}
\eeq
defined on the space of bilinear homogenous polynomials in $k+k'$ variables, that is, on polynomials of the form
\beq
B(X_1,\dots,X_k;Y_1,\dots,Y_{k'})=\sum_{i,j}\mu_{i,j}X_iY_j
\eeq
where $\mu_{i,j}\in\F$, $1\leq i\leq k$, $1\leq j\leq k'$.

This is just a reformulation of~\ref{tens_et_*}. Geometrically, it corresponds to the \emph{Segre} construction.

Suppose $C,C'\subset\F^n$ have full support,
and let $\nu_C:[n]\surj\Pi_C\subset\PP^{k-1}$ and $\nu_{C'}:[n]\surj\Pi_{C'}\subset\PP^{k'-1}$ be the associated projective spanning maps.
Composing this pair of maps with the Segre embedding we get
\beq
[n]\xrightarrow{\;(\nu_C,\nu_{C'})\;}\PP^{k-1}\times\PP^{k'-1}\longto\PP^{kk'-1}
\eeq
which should be essentially $\nu_{C*C'}$, except that its image $\Pi_{C*C'}$ might not span $\PP^{kk'-1}$ as requested, so we have
to replace $\PP^{kk'-1}$ with the linear span of the image $\langle\Pi_{C*C'}\rangle$, which is a $\PP^{\dim(C*C')-1}$.

More intrinsically, we have $\PP^{k-1}=\PP(C)$, $\PP^{k'-1}=\PP(C')$, and $\PP^{kk'-1}=\PP(C\tens C')$.
The linear span $\langle\Pi_{C*C'}\rangle$ of $\Pi_{C*C'}$ is then easily identified:
we have $C*C'=C\tens C'/I(C,C')$, so 
\beq
\langle\Pi_{C*C'}\rangle=\PP(C*C')\subset\PP(C\tens C')
\eeq
is the linear subspace cut by $I(C,C')$
(where we view elements of $I(C,C')\subset C\tens C'$ as linear homogeneous functions on $\PP^{kk'-1}=\PP(C\tens C')$).

We summarize this with the commutative diagram
\beq
\begin{array}{cccccccc}
\nu_{C*C'}:\!\!\!\!\! & [n] & \surj & \Pi_{C*C'} & \!\!\subset & \langle\Pi_{C*C'}\rangle & \!\subset & \PP^{kk'-1} \\
& \Arrowvert & & \Arrowvert & & \Arrowvert & & \Arrowvert \\
& \Spec\F^n & \surj & \Proj (C*C')\deux[.] & \!\!\subset & \PP(C*C') & \!\subset & \PP(C\tens C').
\end{array}
\eeq

\entry
\label{Veronese}
Keep the same notations as in the previous entry. First, in coordinates, if $C$ is the image of the evaluation map
\beq
\begin{array}{ccc}
\F[X_1,\dots,X_k]_1 & \longto & \F^n\\
L & \mapsto & (L(p_1),\dots,L(p_n))
\end{array}
\eeq
defined on the space of linear homogeneous polynomials in $k$ variables, then $C\deux[t]$ is the image of the evaluation map
\beq
\begin{array}{ccc}
\F[X_1,\dots,X_k]_t & \longto & \F^n\\
Q & \mapsto & (Q(p_1),\dots,Q(p_n))
\end{array}
\eeq
defined on the space of homogeneous polynomials of degree $t$ in $k$ variables.

This is just a reformulation of~\ref{def_ItC}. Geometrically, it corresponds to the \emph{Veronese} construction.

Suppose $C\subset\F^n$ has full support, and let $\nu_C:[n]\surj\Pi_C\subset\PP^{k-1}$ be the associated projective spanning map.
Composing with the $t$-fold Veronese embedding we get
\beq
[n]\xrightarrow{\;\nu_C\;}\PP^{k-1}\longto\PP^{\binom{k+t-1}{t}-1},
\eeq
the image of which spans the linear subspace
\beq
\langle\Pi_{C\deux[t]}\rangle=\PP(C\deux[t])\subset\PP(S^tC)
\eeq
cut by $I^t(C)$ (where now we see elements of $I^t(C)\subset S^tC$ not as homogeneous functions of degree $t$ on $\PP^{k-1}=\PP(C)$,
but as linear homogeneous functions on $\PP^{\binom{k+t-1}{t}-1}=P(S^tC)$).

Again we summarize this with the commutative diagram
\beq
\begin{array}{cccccccc}
\nu_{C\deux[t]}:\!\!\!\!\!\! & [n] & \surj & \Pi_{C\deux[t]} & \!\!\subset & \langle\Pi_{C\deux[t]}\rangle & \!\subset & \PP^{\binom{k+t-1}{t}-1} \\
& \Arrowvert & & \Arrowvert & & \Arrowvert & & \Arrowvert \\
& \Spec\F^n & \surj & \Proj C\deux[t\cdot] & \!\!\subset & \PP(C\deux[t]) & \!\subset & \PP(S^tC).
\end{array}
\eeq

\entry
\label{distance_geom}
This geometric view is especially interesting when one considers the distance problem.
Given $C\subset\F^n$ with full support, and $\nu_C:[n]\surj\Pi_C\subset\PP^{k-1}$ the associated projective spanning map,
nonzero codewords $c\in C$ correspond to hyperplanes $H_c\subset\PP^{k-1}$,
and the weight of $c$ is $w(c)=n-\abs{\nu_C^{-1}(H_c)}$. As a consequence, the minimum distance of $C$ is 
\beq
\dmin(C)=n-\max_{\substack{H\subset\PP^{k-1}\\ \textrm{hyperplane}}}\abs{\nu_C^{-1}(H)}.
\eeq
Applying the Veronese construction which identifies hyperplanes in $\PP^{\binom{k+t-1}{t}-1}$ with hypersurfaces of degree $t$ in $\PP^{k-1}$, we find likewise
\beq
\dmin(C\deux[t])=n-\max_{\substack{H\subset\PP^{k-1},\,H\not\supseteq\Pi_C\\ \textrm{hypersurface of degree $t$}}}\abs{\nu_C^{-1}(H)}.
\eeq
Note that here we have to add the extra condition $H\not\supseteq\Pi_C$, reflecting the fact that $I^t(C)$ could be nonzero.
This makes the distance problem slightly more delicate as soon as $t\geq2$.

In many code constructions, often the very same argument that gives a lower bound on the minimum distance shows at the same time that the code has ``full dimension''. 
For example, if $C(D,G)=\im(L(D)\longto\F^G)$ is the algebraic-geometry code defined in \ref{exAG}, then, provided $m=\deg(D)<n=\abs{G}$,
a function in $L(D)$ can have at most $m$ zeroes in $G$,
from which we get at the same time injectivity of the evaluation map, so $\dim(C(D,G))=\dim(L(D))$, and $\dmin(C(D,G))\geq n-m$. 

On the other hand if a code is defined as a power of another code, we have to deal separately with the fact that it could have dimension smaller than expected.
Given $\nu_C:[n]\surj\Pi_C\subset\PP^{k-1}$, to show $\dmin(C\deux[t])\geq n-m$ one has to show that for any homogenous form of degree $t$ on $\PP^{k-1}$, either:
\begin{itemize}
\item $\nu_C^*F$ has at most $m$ zeroes in $[n]$, or
\item $\nu_C^*F$ vanishes on all of $[n]$.
\end{itemize}


\entry
\label{personal_speculations}
Now the author would like to share some (personal) speculations about the objects constructed so far.

From Proposition~\ref{eq_def}, we see that the dimension sequence of a linear code $C$ is a notion that has been already well studied,
albeit under a different (but equivalent) form.
In fact, its study can also be reduced to an interpolation problem: since $\dim C\deux[t]=\binom{k+t-1}{t}-\dim I^t(C)$, to estimate the Hilbert function
we can equivalently count the hypersurfaces of degree $t$ passing through $\Pi_C$. 
This problem is not really of a coding-theoretic nature. We can do the same thing for powers of a linear subspace
in any finite-dimensional algebra $\cA$, not only in $\F^n$.

However, things change if one is also interested in the distance sequence of $C$. While we're still doing geometry over $\F$,
that is, over a base of dimension~$0$, now, following the philosophy of Arakelov theory, the introduction of metric data (such as
defined, here, by the Hamming metric) is very similar to passing to a base of dimension~$1$.
In this way, the study of the joint dimension and distance sequences of a code might be viewed as a finite field analogue
of the study of the ``arithmetic Hilbert(-Samuel) function'' associated in~\cite{Laurent} to interpolation matrices over a number field,
and further analyzed in~\cite{HSSdim0}.
For example the monotonicity results that will be given in \ref{monotone}-\ref{strict_monotone} are very similar in spirit to those of \cite[5.2]{HSSdim0};
in turn, keeping Remark~\ref{H1} in mind, a natural interpretation is as the size of some $H^1$ decreasing, as in \cite[p.~102]{Mumford}.

For another illustration of this principle, to give an upper bound on $\dmin(C\deux[t])$ one has to find a nonzero codeword of small weight in $C\deux[t]$, that is,
a function $P\in S^tC$ whose zero locus intercepts a large part of, but not all, the image of $[n]$ under $\nu_{C\deux[t]}$.
This is somehow reminiscent of the situation in transcendental number theory, where one has to construct an auxiliary function that is small
but nonzero, which often involves a Siegel lemma. Conversely, to give a lower bound on $\dmin(C\deux[t])$, one has to show that
for all $P\in S^tC$, either $P$ vanishes on all the image (which means $P\in I^t(C)$), or else it misses a certain part
of it, of controlled size. Perhaps one could see this as a loose analogue of a zero lemma.

\section{Basic structural results and miscellaneous properties}
\label{sect_misc}

\vspace{.5\baselineskip}
In this section we study basic properties of codes with respect to componentwise product,
while aiming at the widest generality.
This means including the case of ``degenerated'' codes (\eg not having full support, or having repeated columns,
or also decomposable codes) that are often not of primary interest to coding theorists;
the hurried reader should feel free to skip the corresponding entries. 

This said, it turns out these degenerated codes sometimes appear in some natural situations,
which motivates having them treated here for reference. For instance, even if a code $C$
is indecomposable, its powers $C\deux[t]$ might be decomposable.
Also, to study a code $C$, it can be useful to filter it by a chain of subcodes $C_i$
(see \eg \ref{filtration}, or \cite{CGGOT}\cite{Mirandola}\cite{Wei}), and even for the nicest $C$,
the $C_i$ under consideration might very well then be degenerated.

\mysubsection{Support}
\entry
\label{debut_support}
From now on, by the $i$-th column of a linear code $C\subset\F^n$,
we will mean the $i$-th coordinate projection $\pi_i:C\longto\F$, which is an element of the dual vector space $C^\vee$.

This name is justified because, given a generator matrix $G$, which corresponds to a basis of $C$ over $\F$,
the column vector of the coordinates of $\pi_i$ with respect to this basis is precisely the $i$-th column of $G$.

\entry
\label{1_support}
A possible definition of the support of words or codes, in terms of our product $*$, can be given as follows.
First, note that for all $S,T\subset[n]$, we have $1_S*1_T=1_{S\cap T}$.
In particular, $1_S$ is an idempotent of $\F^n$. In fact, as a linear endomorphism of $\F^n$, we have
\beq
1_S*\cdot = \iota_S\circ\pi_S
\eeq
where $\pi_S:\F^n\surj\F^S$ and $\iota_S:\F^S\inj\F^n$ are the natural linear maps.

Then the support of a word $x\in\F^n$ can be defined as the smallest, or the intersection, of all subsets $S\subset[n]$ such that
\beq
1_S*x=x.
\eeq

Likewise the support of a linear code $C\subset\F^n$ is the smallest, or the intersection, of all subsets $S\subset[n]$ such that
\beq
1_S*C=C.
\eeq

\entry
\label{support_et_dual}
Equivalently, for $i\in[n]$, we have $i\in\Supp(C)$ if and only if the $i$-th column of $C$ is nonzero.
This may be rephrased in terms of vectors of weight $1$ in the dual code:
\beq
i\not\in\Supp(C)\quad\equivaut\quad 1_{\{i\}}\in C^\perp.
\eeq
As a consequence we have
\beq
\langle x\in C^{\perp}\,;\;w(x)\leq 1\rangle^{\perp}=\iota(\F^{\Supp(C)})
\eeq
(where $\iota=\iota_{\Supp(C)}:\F^{\Supp(C)}\inj\F^n$) and we retrieve the relation
\beq
n_1=\dim\langle x\in C^{\perp}\,;\;w(x)\leq 1\rangle^{\perp}=\abs{\Supp(C)}
\eeq
as stated in~\ref{longueurs}.

%

\begin{lemma}
\label{Supp*=intSupp}
If $c_1,\dots,c_t\in\F^n$ are words of the same length, then
\beq
\Supp(c_1*\cdots*c_t)=\Supp(c_1)\cap\cdots\cap\Supp(c_t).
\eeq
If $C_1,\dots,C_t\subset\F^n$ are linear codes of the same length, then
\beq
\Supp(C_1*\cdots*C_t)=\Supp(C_1)\cap\cdots\cap\Supp(C_t).
\eeq
In particular, for $C\subset\F^n$ and $t\geq1$ we have
\beq
\Supp(C\deux[t])=\Supp(C).
\eeq
\end{lemma}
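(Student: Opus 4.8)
The plan is to prove the three statements in sequence, deriving the second from the first and the third from the second, and to spend almost all the effort on the word-level identity $\Supp(c_1*\cdots*c_t)=\Supp(c_1)\cap\cdots\cap\Supp(c_t)$.

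For the word-level statement I would argue coordinatewise: for $i\in[n]$, the $i$-th coordinate of $c_1*\cdots*c_t$ is the product $(c_1)_i\cdots(c_t)_i$ in $\F$, and since $\F$ is a field (hence has no zero divisors), this product is nonzero if and only if each factor $(c_j)_i$ is nonzero, i.e.\ if and only if $i\in\Supp(c_j)$ for every $j$. By induction on $t$ this reduces to the case $t=2$, which is exactly the observation that $\Supp(x*y)=\Supp(x)\cap\Supp(y)$; or one can just write it out directly for general $t$. (This is also essentially the special case $S_j=\Supp(c_j)$ of the identity $1_{S_1}*\cdots*1_{S_t}=1_{S_1\cap\cdots\cap S_t}$ recorded in~\ref{1_support}, applied after noting $1_{\Supp(c_j)}*c_j=c_j$.)

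For the code-level statement, I would use the characterization from~\ref{1_support} that $\Supp(C)$ is the intersection of all $S$ with $1_S*C=C$, together with the coordinate description from~\ref{support_et_dual} that $i\in\Supp(C)$ iff the $i$-th column $\pi_i:C\to\F$ is nonzero. For the inclusion $\Supp(C_1*\cdots*C_t)\subset\bigcap_j\Supp(C_j)$: every generator of $C_1*\cdots*C_t$ is an elementary product $c_1*\cdots*c_t$ with $c_j\in C_j$, whose support lies in $\bigcap_j\Supp(c_j)\subset\bigcap_j\Supp(C_j)$ by the word-level statement; since support is monotone under linear span (a linear combination of vectors supported in a fixed set $S$ is again supported in $S$), the whole code $C_1*\cdots*C_t$ has support inside $\bigcap_j\Supp(C_j)$. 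For the reverse inclusion, fix $i\in\bigcap_j\Supp(C_j)$; then for each $j$ there is $c_j\in C_j$ with $(c_j)_i\neq0$, so the product $c_1*\cdots*c_t\in C_1*\cdots*C_t$ has nonzero $i$-th coordinate by the word-level statement, whence $i\in\Supp(C_1*\cdots*C_t)$.

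The last statement, $\Supp(C\deux[t])=\Supp(C)$ for $t\geq1$, is then immediate: take $C_1=\cdots=C_t=C$ in the code-level identity, so that $C_1*\cdots*C_t=C\deux[t]$ and $\bigcap_j\Supp(C_j)=\Supp(C)$. I do not expect any genuine obstacle here; the only point requiring a moment's care is the use of the integral-domain property of $\F$ in the word-level step (over a general commutative ring a product of nonzero scalars could vanish), and the routine but necessary observation that passing to a linear span can only shrink the support, which is what lets us pass from elementary products to all of $C_1*\cdots*C_t$.
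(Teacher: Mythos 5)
Your proof is correct and spells out exactly the elementary argument the paper leaves implicit (the paper's proof of this lemma is simply ``Obvious''). The coordinatewise reduction using that $\F$ is an integral domain, the monotonicity of support under linear span, and the choice of witnesses $c_j\in C_j$ with $(c_j)_i\neq0$ for the reverse inclusion are precisely the steps one is expected to supply.
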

\begin{proof}
Obvious.
\end{proof}

\entry
In most applications we can discard the $0$ columns of a linear code without affecting its good properties, that is, we can replace $C$
with its projection on $\Supp(C)$ so that it then has full support. 

In particular, given $C_1,\dots,C_t\subset\F^n$, if we let $I=\Supp(C_1)\cap\cdots\cap\Supp(C_t)$ and we replace each $C_i$ with $\pi_I(C_i)$,
this replaces $C_1*\cdots*C_t$ with $\pi_I(C_1*\cdots*C_t)$, which does not change its essential parameters (dimension, weight distribution...).
In this way, many results on products of codes can be reduced to statements on products of codes which all have full support.
However, this intersection $I$ may be strictly smaller than some of the $\Supp(C_i)$, so replacing $C_i$ with $\pi_I(C_i)$ might change some
relevant parameter of this code. In some applications, namely when both the parameters of $C_1*\cdots*C_t$ and those of the $C_i$ are relevant,
this added difficulty has to be taken into account carefully.

\mysubsection{Decomposable codes}
We recast some classical results of~\cite{Slepian} in the light of the $*$ operation,
elaborating from~\ref{1_support}.
Beside reformulating elementary notions in a fancy language, what is done here
will also appear naturally while studying automorphisms in~\ref{monomial_tsfo}
and following.

\begin{definition}
\label{def_stab_alg}
Let $C\subset\F^n$ be a linear code. The extended stabilizing algebra of $C$ is
\beq
\widehat{\cA}(C)=\{a\in\F^n\,;\;a*C\subset C\},
\eeq
and the (proper) stabilizing algebra of $C$ is
\beq
\cA(C)=1_{\Supp(C)}*\widehat{\cA}(C)=\{a\in\F^n\,;\;\Supp(a)\subset\Supp(C),\,a*C\subset C\}.
\eeq
\end{definition}

Clearly $\widehat{\cA}(C)$ is a subalgebra of $\F^n$, while projection $\pi_{\Supp(C)}$ identifies $\cA(C)$ with a subalgebra of $\F^{\Supp(C)}$
(the identity element of $\cA(C)$ is the idempotent $1_{\Supp(C)}$ of $\F^n$).
Moreover we have
\beq
\widehat{\cA}(C)=\cA(C)\oplus\iota(\F^{[n]\moins\Supp(C)})
\eeq
where $\iota=\iota_{[n]\moins\Supp(C)}$ is the natural inclusion $\F^{[n]\moins\Supp(C)}\inj\F^n$.

\begin{proposition}
\label{proprietes_algebre}
Let $C,C'\subset\F^n$ be two linear codes of the same length. Then
\beq
\cA(C)*\cA(C')\subset\cA(C*C'),
\eeq
and for all $t\geq1$
\beq
\cA(C)=\cA(C)\deux[t]\subset\cA(C\deux[t]).
\eeq
Also we have
\beq
\cA(\cA(C))=\cA(C).
\eeq
\end{proposition}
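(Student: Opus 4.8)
The plan is to treat the three assertions in turn, all of which reduce to short computations with $*$ and with supports; the only thing requiring care is keeping straight the difference between $\widehat{\cA}$ and $\cA$, i.e.\ the role of the idempotent $1_{\Supp(C)}$ as the internal unit of $\cA(C)$. First I would record the basic fact, implicit in the discussion following Definition~\ref{def_stab_alg}, that $\cA(C)$ is closed under $*$: if $a,b\in\cA(C)$ then $\Supp(a*b)\subset\Supp(C)$ and $(a*b)*C\subset a*(b*C)\subset a*C\subset C$, so $a*b\in\cA(C)$; together with $1_{\Supp(C)}\in\cA(C)$ this exhibits $\cA(C)$ as a unital $\F$-subalgebra of $\F^n$ with unit $1_{\Supp(C)}$. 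Since $\cA(C)$ is a linear subspace stable under $\dot{*}$, we get $\cA(C)\deux=\langle\cA(C)\,\dot{*}\,\cA(C)\rangle\subset\cA(C)$, while conversely $a=a*1_{\Supp(C)}\in\cA(C)\,\dot{*}\,\cA(C)$ for every $a\in\cA(C)$; hence $\cA(C)\deux=\cA(C)$, and an immediate induction gives $\cA(C)\deux[t]=\cA(C)$ for all $t\geq1$.

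For the multiplicativity $\cA(C)*\cA(C')\subset\cA(C*C')$, since the right-hand side is a linear subspace it suffices to show $a*a'\in\cA(C*C')$ for $a\in\cA(C)$, $a'\in\cA(C')$. The support condition is Lemma~\ref{Supp*=intSupp}: $\Supp(a*a')=\Supp(a)\cap\Supp(a')\subset\Supp(C)\cap\Supp(C')=\Supp(C*C')$. For the stability condition, $C*C'$ is spanned by the elementary products $c*c'$ with $c\in C$, $c'\in C'$, so by linearity it is enough to note $(a*a')*(c*c')=(a*c)*(a'*c')$ with $a*c\in C$ and $a'*c'\in C'$, whence $(a*a')*(c*c')\in C\,\dot{*}\,C'\subset C*C'$. (The same computation without the support clause shows $\widehat{\cA}(C)*\widehat{\cA}(C')\subset\widehat{\cA}(C*C')$.) Iterating this binary inclusion through the associativity of $*$, and combining with the first paragraph, yields $\cA(C)=\cA(C)\deux[t]\subset\cA(C\deux[t])$ for every $t\geq1$.

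Finally, for $\cA(\cA(C))=\cA(C)$ I would first observe $\Supp(\cA(C))=\Supp(C)$: on one hand every element of $\cA(C)$ has support in $\Supp(C)$, and on the other hand $1_{\Supp(C)}\in\cA(C)$ has support exactly $\Supp(C)$. The inclusion $\cA(C)\subset\cA(\cA(C))$ is then immediate from the first paragraph, since an $a\in\cA(C)$ has $\Supp(a)\subset\Supp(C)=\Supp(\cA(C))$ and $a*\cA(C)\subset\cA(C)$ by closure under $*$. For the reverse inclusion, take $a\in\cA(\cA(C))$; then $\Supp(a)\subset\Supp(\cA(C))=\Supp(C)$, so $a=a*1_{\Supp(C)}$, and since $1_{\Supp(C)}\in\cA(C)$ the defining property $a*\cA(C)\subset\cA(C)$ gives $a=a*1_{\Supp(C)}\in\cA(C)$.

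I do not expect a genuine obstacle here: the proof is routine verification. The one point deserving attention is that the idempotent $1_{\Supp(C)}$, the internal unit of $\cA(C)$, must be used as a ``test element'' to obtain the two reverse inclusions $\cA(C)\subset\cA(C)\deux$ and $\cA(\cA(C))\subset\cA(C)$, and that at each step one must check that the support conditions defining the relevant $\cA(\cdot)$ actually match up — that bookkeeping is the only place where a slip could occur.
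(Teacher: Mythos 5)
Your proof is correct and takes essentially the same approach as the paper's: the paper also verifies multiplicativity on generators and reduces the remaining claims to the observation that $\cA(C)$ is an algebra under $*$ with unit $1_{\Supp(C)}$; you simply spell out in full detail (in particular the two inclusions for $\cA(\cA(C))=\cA(C)$, and the role of $1_{\Supp(C)}$ as test element) what the paper states more tersely.
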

\begin{proof}
If $a*C\subset C$ and $a'*C'\subset C'$, then $(a*a')*C*C'\subset C*C'$. Using Lemma~\ref{Supp*=intSupp} and passing to the linear span we find $\cA(C)*\cA(C')\subset\cA(C*C')$ as claimed.
Induction then gives $\cA(C)\deux[t]\subset\cA(C\deux[t])$.
Last, we have $\cA(C)=\cA(C)\deux[t]$ and $\cA(\cA(C))=\cA(C)$ because $\cA(C)$ is an algebra under $*$, with unit $1_{\Supp(C)}$.  
\end{proof}

\begin{definition}
\label{def_decomposable}
Let $C\subset\F^n$ be a linear code and $\cP=\{P_1,\dots,P_s\}$ a partition of $\Supp(C)$.
We say that $C$ decomposes under $\cP$ if $1_{P_i}\in\cA(C)$ for all $i$.

Equivalently, this means there are linear subcodes $C_1,\dots,C_s\subset C$ with $\Supp(C_i)=P_i$ such that
\beq
C=C_1\oplus\cdots\oplus C_s.
\eeq
\end{definition}

To show the equivalence, write $C_i=1_{P_i}*C$, so by definition $C_i$ is a subcode of $C$ if and only if $1_{P_i}\in\cA(C)$.

\entry
We recall that the set of partitions of a given set $S$ forms a lattice under refinement.
In particular if $\cP=\{P_1,\dots,P_s\}$ and $\cQ=\{Q_1,\dots,Q_t\}$ are two partitions
of $S$, their coarsest common refinement is the partition
\beq
\cP\wedge\cQ=\{P_i\cap Q_j\,;\;P_i\cap Q_j\neq\emptyset\}.
\eeq
More generally, if $S,T$ are two sets, $\cP$ is a partition of $S$, and $\cQ$ a partition of $T$,
then $\cP\wedge\cQ$, formally defined by the very same formula as above, is a partition of $S\cap T$.

\begin{lem-def}
\label{finest_dec}
If $C$ decomposes under two partitions $\cP,\cQ$ of $\Supp(C)$, then it decomposes under $\cP\wedge\cQ$.
Hence there is a \emph{finest} partition $\cP(C)$ under which $C$ decomposes.

If $\cP(C)=\{A_1,\dots,A_r\}$, we have
\beq
C=C_1\oplus\cdots\oplus C_r
\eeq
where the $C_i=1_{A_i}*C$ are called the \emph{indecomposable components} of $C$.
This is the finest decomposition of $C$ as a direct sum of nonzero subcodes with pairwise disjoint supports.
\end{lem-def}
\begin{proof}
If $1_{P_i}\in\cA(C)$ and $1_{Q_j}\in\cA(C)$, then $1_{P_i\cap Q_j}=1_{P_i}*1_{Q_j}\in\cA(C)$.
\end{proof}

\begin{proposition}
\label{structure_algebre}
We have
\beq
\dim\cA(C)=\abs{\cP(C)}.
\eeq
More precisely, if $\cP(C)=\{A_1,\dots,A_r\}$, then
\beq
\cA(C)=\langle 1_{A_1},\dots,1_{A_r}\rangle=\langle 1_{A_1}\rangle\oplus\cdots\oplus\langle 1_{A_r}\rangle.
\eeq
\end{proposition}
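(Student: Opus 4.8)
The plan is to reduce everything to the set equality $\cA(C)=\langle 1_{A_1},\dots,1_{A_r}\rangle$: once this is known, the stated direct-sum decomposition and the dimension count $\dim\cA(C)=\abs{\cP(C)}$ are immediate, because the $A_i$ are nonempty and pairwise disjoint, so the $1_{A_i}$ have pairwise disjoint supports and hence are linearly independent. One inclusion is trivial: by Definition~\ref{def_decomposable}, the statement that $C$ decomposes under $\cP(C)=\{A_1,\dots,A_r\}$ means exactly that $1_{A_i}\in\cA(C)$ for every $i$, and $\cA(C)$ is a linear subspace of $\F^n$, so $\langle 1_{A_1},\dots,1_{A_r}\rangle\subset\cA(C)$.

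For the reverse inclusion, I would fix $a\in\cA(C)$ and aim to show $a$ is constant on each block $A_i$. Since $\cA(C)$ consists of words supported on $\Supp(C)=A_1\sqcup\cdots\sqcup A_r$ (Definition~\ref{def_stab_alg}), this would force $a=\sum_i\lambda_i 1_{A_i}\in\langle 1_{A_1},\dots,1_{A_r}\rangle$, as wanted. To establish the constancy claim, suppose for contradiction that $a$ is not constant on some $A_i$. After replacing $a$ by $1_{A_i}*a$, which is still in $\cA(C)$ because $\cA(C)$ is an algebra under $*$, we may assume $a$ is supported on $A_i$ and that its coordinates indexed by $A_i$ take at least two distinct values $\lambda_1,\dots,\lambda_m\in\F$ with $m\geq2$; let $B_\ell=\{j\in A_i\,;\,a_j=\lambda_\ell\}$, so $A_i=B_1\sqcup\cdots\sqcup B_m$.

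The key step, and the only one needing any care, is to extract the characteristic vectors $1_{B_\ell}$ from the $\F$-algebra generated by $a$, by Lagrange interpolation: set $P_\ell(X)=\prod_{\ell'\neq\ell}\frac{X-\lambda_{\ell'}}{\lambda_\ell-\lambda_{\ell'}}$, the denominators being invertible in the field $\F$, so that $P_\ell(\lambda_{\ell'})=\delta_{\ell,\ell'}$. Evaluating $P_\ell$ at $a$ \emph{inside the subalgebra} $1_{A_i}*\F^n\cong\F^{A_i}$, whose unit is $1_{A_i}$ (that is, reading the constant term of $P_\ell$ as a multiple of $1_{A_i}$ rather than of $1_{[n]}$), one obtains $P_\ell(a)=1_{B_\ell}$, genuinely supported on $B_\ell\subset A_i$; and since $1_{A_i},a\in\cA(C)$ and $\cA(C)$ is closed under $*$ and $\F$-linear combinations, each $1_{B_\ell}$ lies in $\cA(C)$. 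Hence $C$ decomposes under the partition of $\Supp(C)$ obtained from $\cP(C)$ by replacing $A_i$ with $B_1,\dots,B_m$; as $m\geq2$ this strictly refines $\cP(C)$, contradicting the minimality in Lemma-definition~\ref{finest_dec}. (Alternatively one could deduce the result from $\cA(\cA(C))=\cA(C)$ in Proposition~\ref{proprietes_algebre} together with the structure of the étale $\F$-algebra $\cA(C)\subset\F^{\Supp(C)}$, but the interpolation argument above is more elementary and self-contained.)
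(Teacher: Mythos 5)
Your proof is correct and uses the same core idea as the paper: Lagrange interpolation applied to an element $a\in\cA(C)$ (with the constant term read against the appropriate idempotent unit, not $1_{[n]}$) produces the characteristic vectors of the level sets of $a$ as elements of $\cA(C)$, and the minimality of $\cP(C)$ then forces $a$ to lie in $\langle 1_{A_1},\dots,1_{A_r}\rangle$. The paper works globally with the level sets over all of $\Supp(C)$ and concludes directly that $\cP(C)$ refines the level-set partition, whereas you localize to one block $A_i$ and argue by contradiction with the finest-partition property; these are cosmetic variants of the same argument.
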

\begin{proof}
Let $V=\langle 1_{A_1},\dots,1_{A_r}\rangle$.
Obviously the $1_{A_i}$ are linearly independent so $\dim(V)=r$;
and by definition we have $1_{A_i}\in\cA(C)$, so $V\subset\cA(C)$.

Conversely, let $x\in\cA(C)$. We want to show $x\in V$.
Let $\lambda_1,\dots,\lambda_s\in F$ be the elements that appear at least once as a coordinate of $x$ over $\Supp(C)$, and for
each such $\lambda_j$, let $B_j\subset\Supp(C)$ be the set of indices on which $x$ takes coordinate $\lambda_j$,
so $x=\lambda_11_{B_1}+\cdots+\lambda_s1_{B_s}$.
For each $j$, there is a Lagrange interpolation polynomial $P$ such that $P(\lambda_j)=1$
and $P(\lambda_{j'})=0$ for $j'\neq j$.
Evaluating $P$ on $x$ in the algebra $\cA(C)$ we find $1_{B_j}=P(x)\in\cA(C)$. This means $C$ decomposes under
the partition $\cQ=\{B_1,\dots,B_s\}$, hence $\cP(C)$ refines $\cQ$.
So, for all $j$, we get that $B_j$ is a union of some of the $A_i$, and $1_{B_j}\in V$.
The conclusion follows.
\end{proof}

\begin{corollary}
\label{decomp_produit}
Let $C,C'\subset\F^n$ be two linear codes of the same length.
Then $\cP(C*C')$ is a (possibly strict) refinement of $\cP(C)\wedge\cP(C')$.
For $t\geq 1$, $\cP(C\deux[t])$ is a (possibly strict) refinement of $\cP(C)$.

More generally, if $C$ decomposes under a partition $\cP$ of $\Supp(C)$
as
\beq
C=C_1\oplus\cdots\oplus C_s
\eeq
and $C'$ under a partition $\cP'$ of $\Supp(C')$
as
\beq
C=C'_1\oplus\cdots\oplus C'_{s'}
\eeq
then $C*C'$ decomposes under $\cP\wedge\cP'$ (which is a partition of $\Supp(C*C')$)
as
\beq
C*C'=\bigoplus_{i,j}C_i*C'_j
\eeq
where we keep only those of the $i,j$ for which $C_i*C'_j\neq0$.
(However, these $C_i*C'_j$ need not necessarily be indecomposable, even if the $C_i$ and $C'_j$ are.)

And for any $t\geq1$, the $t$-th power $C\deux[t]$ also decomposes under $\cP$ as
\beq
C\deux[t]=C_1\deux[t]\oplus\cdots\oplus C_s\deux[t].
\eeq
(However, these $C_i\deux[t]$ need not necessarily be indecomposable, even if the $C_i$ are.)
\end{corollary}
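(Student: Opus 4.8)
The plan is to reduce everything to the single decomposition statement $C*C' = \bigoplus_{i,j} C_i * C'_j$ (with the convention that we discard the zero summands), since the statements about $\cP(C*C')$, $\cP(C\deux[t])$, and $C\deux[t]$ all follow from it. First I would set up the general decomposition. Given $C = C_1 \oplus \cdots \oplus C_s$ under $\cP$ and $C' = C'_1 \oplus \cdots \oplus C'_{s'}$ under $\cP'$, recall from Definition~\ref{def_decomposable} that $C_i = 1_{P_i} * C$ with $\Supp(C_i) = P_i$, and similarly $C'_j = 1_{P'_j} * C'$ with $\Supp(C'_j) = P'_j$. The key observation is that $C_i * C'_j = (1_{P_i} * C) * (1_{P'_j} * C') = 1_{P_i \cap P'_j} * (C * C')$, using commutativity and associativity of $*$ together with $1_{P_i} * 1_{P'_j} = 1_{P_i \cap P'_j}$ from~\ref{1_support}. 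By Lemma~\ref{Supp*=intSupp}, the support of this code is contained in $P_i \cap P'_j$, and in fact I would argue it equals $P_i \cap P'_j$ whenever $C_i * C'_j \neq 0$: its support is $\Supp(C_i*C'_j) = \Supp(C_i) \cap \Supp(C'_j) \cap \Supp(\text{span stuff})$... more carefully, $\Supp(1_{P_i \cap P'_j} * (C*C')) = (P_i \cap P'_j) \cap \Supp(C*C')$, and since $\Supp(C*C') = \Supp(C) \cap \Supp(C') \supseteq P_i \cap P'_j$, this is exactly $P_i \cap P'_j$.

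Next I would establish that $C * C'$ is the (internal) direct sum of the $C_i * C'_j$. By distributivity of $*$ over $+$ (entry~\ref{def_*}), applied to $C = \sum_i C_i$ and $C' = \sum_j C'_j$, we get $C * C' = \sum_{i,j} C_i * C'_j$. This sum is direct because the supports $P_i \cap P'_j$ (over those pairs with $C_i * C'_j \neq 0$) are pairwise disjoint: if $(i,j) \neq (k,\ell)$ then either $P_i \cap P_k = \emptyset$ or $P'_j \cap P'_\ell = \emptyset$, so the intersection of the supports is empty, and codes with disjoint supports meet only in $0$. Moreover the nonempty sets among $\{P_i \cap P'_j\}$ form precisely $\cP \wedge \cP'$, which by the remark preceding Lemma-definition~\ref{finest_dec} is a partition of $\Supp(C) \cap \Supp(C') = \Supp(C*C')$; and $1_{P_i \cap P'_j} = 1_{P_i} * 1_{P'_j} \in \cA(C) * \cA(C') \subseteq \cA(C*C')$ by Proposition~\ref{proprietes_algebre}. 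Hence $C*C'$ decomposes under $\cP \wedge \cP'$ in the sense of Definition~\ref{def_decomposable}, with indecomposable-or-not components the nonzero $C_i * C'_j$.

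Now the remaining claims follow quickly. Applying the above with $\cP = \cP(C)$ and $\cP' = \cP(C')$ shows $C*C'$ decomposes under $\cP(C) \wedge \cP(C')$, and since $\cP(C*C')$ is by Lemma-definition~\ref{finest_dec} the \emph{finest} such partition, it is a (possibly strict) refinement of $\cP(C) \wedge \cP(C')$. For the power statement, I would induct on $t$: taking $C' = C$ and $\cP' = \cP$ in the product formula, the off-diagonal terms $C_i * C_j$ for $i \neq j$ vanish (disjoint supports $P_i \cap P_j = \emptyset$ force $C_i * C_j = 0$), leaving $C\deux = \bigoplus_i C_i\deux$ under $\cP$; the inductive step multiplies by $C$ once more and again kills cross terms, giving $C\deux[t] = \bigoplus_i C_i\deux[t]$ under $\cP$. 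In particular $C\deux[t]$ decomposes under $\cP(C)$, so $\cP(C\deux[t])$ refines $\cP(C)$. The strictness in the examples (that the $C_i * C'_j$ or $C_i\deux[t]$ need not be indecomposable) is already illustrated by Reed–Solomon-type behavior and needs no proof; it is only a parenthetical warning. I do not anticipate a serious obstacle here: the only mild subtlety is keeping the bookkeeping of which summands are zero consistent, and verifying the support equality $\Supp(C_i * C'_j) = P_i \cap P'_j$ rather than just an inclusion, which is where Lemma~\ref{Supp*=intSupp} does the work.
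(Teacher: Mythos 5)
Your proof is correct, and it fills in exactly what the paper intends when it writes ``Everything is clear and can be proved directly'' --- your key identity $C_i*C'_j=1_{P_i\cap P'_j}*(C*C')$, combined with distributivity and disjointness of the supports $P_i\cap P'_j$, is precisely the direct argument, and you also invoke Proposition~\ref{proprietes_algebre} which is the ``alternative proof'' the paper mentions. The only remark worth making is that once you have established the internal direct sum decomposition with $\Supp(C_i*C'_j)=P_i\cap P'_j$, the appeal to $\cA(C)*\cA(C')\subset\cA(C*C')$ is redundant, since that direct-sum characterization is exactly the equivalent formulation of ``decomposes under $\cP\wedge\cP'$'' given in Definition~\ref{def_decomposable}.
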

\begin{proof}
Everything is clear and can be proved directly.
An alternative proof for the first assertion is as a consequence of Propositions~\ref{proprietes_algebre} and~\ref{structure_algebre}.
\end{proof}

\begin{example}
Note that the parity $[3,2,2]_2$ code $C$ is indecomposable, while its square is the
trivial $[3,3,1]_2$ code, which decomposes totally. That is, this gives an example
where $\cP(C\deux)=\{\{1\},\{2\},\{3\}\}$ strictly refines $\cP(C)=\{\{1,2,3\}\}$,
and $\cA(C)=\cA(C)\deux=\un\subsetneq\cA(C\deux)=(\F_2)^3$.
\end{example}

\entry
\label{adapt_algebres}
We gave results only for the proper stabilizing algebra. However, since $\widehat{\cA}(C)=\cA(C)\oplus\iota(\F^{[n]\moins\Supp(C)})$,
one immediately deduces similar statements for the extended algebra.

For instance, Proposition~\ref{proprietes_algebre} is replaced with $\widehat{\cA}(C)*\widehat{\cA}(C')\subset\widehat{\cA}(C*C')$,
$\widehat{\cA}(C)=\widehat{\cA}(C)\deux[t]\subset\widehat{\cA}(C\deux[t])$,
and $\widehat{\cA}(\widehat{\cA}(C))=\widehat{\cA}(\cA(C))=\widehat{\cA}(C)$.

Instead of Definition~\ref{def_decomposable}, we say that $C$ \emph{weakly decomposes} under a partition $\cQ$ of $[n]$ if, for
each $Q\in\cQ$, we have $1_Q\in\widehat{\cA}(C)$. This means there are subcodes $C_i\subset C$ with disjoint supports
such that $C=\bigoplus_i C_i$, with each $\Supp(C_i)$ included (possibly strictly) in some $Q_i\in\cQ$.

There is a finest partition of $[n]$ under which $C$ weakly decomposes, it is $\widehat{\cP}(C)=\cP(C)\cup\{\{j\};j\not\in\Supp(C)\}$.
Then Proposition~\ref{structure_algebre} becomes
\beq
\widehat{\cA}(C)=\bigoplus_{Q\in\widehat{\cP}(C)}\langle1_Q\rangle.
\eeq

Last, $\widehat{\cP}(C*C')$ is a (possibly strict) refinement of $\widehat{\cP}(C)\wedge\widehat{\cP}(C')$, and if $C$ weakly decomposes
under $\cQ$ as $C=\bigoplus_i C_i$ and $C'$ weakly decomposes under $\cQ'$ as $C'=\bigoplus_i C'_i$, then $C*C'$ weakly decomposes
under $\cQ\wedge\cQ'$ as $C*C'=\bigoplus_{i,j} C_i*C'_j$.

Additional properties of $\widehat{\cA}(C)$, involving the dual code $C^{\perp}$, will be given in \ref{AC_ACperp} and \ref{CCperpperp}.

\mysubsection{Repeated columns}
\entry
\label{def_eq_prop}
We keep the same notations as in~\ref{debut_support}: by the columns of a linear code $C\subset\F^n$ we mean the $n$ coordinate
projections $C\longto\F$. Then:

\begin{definition*}
We define an equivalence relation $\sim$ (or $\sim_C$) on $\Supp(C)$
by setting $i\sim j$ when the $i$-th and $j$-th columns of $C$ are proportional.
By abuse of language we also say these are two \emph{repeated} columns.

We let
\beq
\cU(C)=\Supp(C)/\sim
\eeq
be the set of equivalence classes of $\sim$, which is a partition of $\Supp(C)$.
\end{definition*}

\begin{lemma}
\label{lemme_col_prop}
Let $i,j\in\Supp(C)$, $i\neq j$. Then
\beq
i\sim j\quad\equivaut\quad\exists x\in C^\perp,\,\Supp(x)=\{i,j\}.
\eeq
Conversely
\beq
i\not\sim j\quad\equivaut\quad\exists c\in C,\,\pi_i(c)=1,\,\pi_j(c)=0
\eeq
(and then likewise with $i,j$ permuted).
\end{lemma}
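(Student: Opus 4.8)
The plan is to unwind the definition of $\sim$ in terms of columns of $C$ (viewed as coordinate linear forms $\pi_i\in C^\vee$) and translate proportionality statements into the existence or non-existence of small-support vectors in $C^\perp$. The key identity to keep in mind is that $x=(x_1,\dots,x_n)\in C^\perp$ precisely when $\sum_i x_i\pi_i=0$ in $C^\vee$, that is, $x$ encodes a linear dependence among the columns of $C$.

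\emph{First equivalence.} Assume $i,j\in\Supp(C)$ with $i\neq j$. If $i\sim j$, then $\pi_i$ and $\pi_j$ are proportional and both nonzero, so there is $\lambda\in\F^\times$ with $\pi_j=\lambda\pi_i$, hence $\pi_i-\lambda^{-1}\pi_j=0$; taking $x$ to be the vector with $x_i=1$, $x_j=-\lambda^{-1}$, and $0$ elsewhere gives $x\in C^\perp$ with $\Supp(x)=\{i,j\}$ (both entries are nonzero since $\lambda\in\F^\times$). Conversely, suppose $x\in C^\perp$ with $\Supp(x)=\{i,j\}$, say $x_i=a$, $x_j=b$, both nonzero. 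Then $a\pi_i+b\pi_j=0$, so $\pi_i=-(b/a)\pi_j$, i.e.\ the $i$-th and $j$-th columns are proportional; since $i,j\in\Supp(C)$ these columns are nonzero, so the proportionality is by a nonzero scalar and $i\sim j$. (Note that if $x\in C^\perp$ had support contained in $\{i\}$ we would get $\pi_i=0$, contradicting $i\in\Supp(C)$, which is why one cannot have $\Supp(x)\subsetneq\{i,j\}$ in the nontrivial direction.)

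\emph{Second equivalence.} Here we use duality: $i\not\sim j$ should be equivalent to the columns $\pi_i,\pi_j$ being linearly independent in $C^\vee$, and the latter is equivalent to the existence of $c\in C$ with $\pi_i(c)=1$, $\pi_j(c)=0$ (and, symmetrically, with $i,j$ swapped). Indeed, $\pi_i,\pi_j$ span a subspace of $C^\vee$ of dimension $1$ or $2$; since $i\in\Supp(C)$ it is at least $1$. If $i\sim j$, the span is $1$-dimensional, so $\pi_j$ is a scalar multiple of $\pi_i$, forcing $\pi_i(c)=0\Rightarrow\pi_j(c)=0$, so no such $c$ exists. If $i\not\sim j$ (given $i,j\in\Supp(C)$), either $\pi_j=0$ — impossible since $j\in\Supp(C)$ — or $\pi_i,\pi_j$ are independent, so the map $C\to\F^2$, $c\mapsto(\pi_i(c),\pi_j(c))$, is surjective (its image is the annihilator-quotient dual to the span, of dimension $2$), and in particular $(1,0)$ is hit. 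Symmetry in $i,j$ of the independence condition gives the parenthetical remark.

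\emph{Main obstacle.} The only genuine subtlety is bookkeeping around degenerate cases: one must consistently use the hypothesis $i,j\in\Supp(C)$ (so that $\pi_i\neq0$ and $\pi_j\neq0$) to rule out vectors of $C^\perp$ with support strictly inside $\{i,j\}$ in the first equivalence, and to rule out the ``span is $0$'' case in the second. Everything else is a routine translation between dependence relations among columns and codewords of $C^\perp$, together with the duality pairing $C^{\perp\perp}=C$; no deep input is needed, and the proof of the lemma in the paper (which I have not yet seen) is presumably just a sentence or two along these lines.
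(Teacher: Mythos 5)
Your proof is correct, and it is exactly the "basic manipulation in linear algebra" the paper leaves unspoken: translate $x\in C^\perp$ into a linear dependence $\sum_k x_k\pi_k=0$ among columns, use $i,j\in\Supp(C)$ to guarantee $\pi_i,\pi_j\neq0$, and for the second equivalence invoke surjectivity of $(\pi_i,\pi_j):C\to\F^2$ when the two forms are independent. Nothing differs from what the paper intends.
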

\begin{proof}
Basic manipulation in linear algebra.
\end{proof}

\begin{proposition}
Let $C\subset\F^n$ be a linear code. Then $\cU(C)$ is a refinement of $\cP(C)$.
\end{proposition}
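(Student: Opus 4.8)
The plan is to prove the contrapositive at the level of pairs of columns: if $i$ and $j$ lie in distinct blocks of $\cP(C)$, then the $i$-th and $j$-th columns of $C$ are not proportional. Since $\cU(C)$ and $\cP(C)$ are both partitions of $\Supp(C)$, this is exactly the assertion that every $\sim$-class is contained in a single block of $\cP(C)$, which is what ``$\cU(C)$ refines $\cP(C)$'' means.

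First I would invoke Lemma-definition~\ref{finest_dec} to write $\cP(C)=\{A_1,\dots,A_r\}$ together with the decomposition $C=C_1\oplus\cdots\oplus C_r$, where $C_l=1_{A_l}*C$ has $\Supp(C_l)=A_l$. Now suppose $i\in A_l$ and $j\in A_m$ with $l\neq m$. Because $i\in A_l=\Supp(C_l)$, there is a codeword $c\in C_l$ with $\pi_i(c)\neq0$; after rescaling we may assume $\pi_i(c)=1$. Since $c\in C_l\subset C$ we have $\Supp(c)\subset A_l$, and $j\notin A_l$ forces $\pi_j(c)=0$. Thus we have produced $c\in C$ with $\pi_i(c)=1$ and $\pi_j(c)=0$, so the ``conversely'' part of Lemma~\ref{lemme_col_prop} gives $i\not\sim j$, as desired. (Note $i,j\in\Supp(C)$, so Lemma~\ref{lemme_col_prop} does apply.)

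There is no real obstacle here; the statement is essentially linear algebra, and the only things to keep straight are the bookkeeping facts that $\cP(C)$ is a partition of $\Supp(C)$ and that $\Supp(C_l)=A_l$, both supplied by Lemma-definition~\ref{finest_dec}. An even more direct phrasing of the same idea: for $i\in A_l$, $j\in A_m$ with $l\neq m$, a codeword $c\in C_l$ with $\pi_i(c)\neq0$ (which exists since $A_l=\Supp(C_l)$) automatically satisfies $\pi_j(c)=0$, so the coordinate forms $\pi_i$ and $\pi_j$ cannot be proportional on $C$.
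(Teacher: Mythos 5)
Your proof is correct and is essentially the same argument as the paper's: both reduce to exhibiting, for $i,j$ in distinct blocks of $\cP(C)$, a codeword that is nonzero at $i$ and zero at $j$, and then invoke the ``conversely'' direction of Lemma~\ref{lemme_col_prop}. The only cosmetic difference is that you locate the codeword directly in the component $C_l = 1_{A_l}*C$ via Lemma-definition~\ref{finest_dec}, whereas the paper first picks $c\in C$ with $\pi_i(c)=1$ and then multiplies by $1_{A_l}$ to produce the same element of $C_l$.
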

\begin{proof}
We have to show that if $A,B\in\cP(C)$, $A\neq B$, and $i\in A$, $j\in B$, then $i\not\sim j$.
Since $i\in\Supp(C)$, we can find $c\in C$ with $\pi_i(c)=1$. Then $1_A*c\in C$ satisfies $\pi_i(1_A*c)=1$, $\pi_j(1_A*c)=0$, and we conclude with Lemma~\ref{lemme_col_prop}.
\end{proof}

\entry
\label{def_slices}
An equivalent formulation for Lemma~\ref{lemme_col_prop} is: $i\not\sim j$ if and only if $\dim(1_{\{i,j\}}*C)=2$.

Conversely a subset $B\subset\Supp(C)$ is contained an equivalence class for $\sim$ if and only if $\dim(1_B*C)=1$.
In particular, $B\in\cU(C)$ if and only if $B$ is maximal for this property.

\begin{definition*}
We call these $1_B*C$, for $B\in\cU(C)$, the \emph{one-dimensional slices} of $C$.
\end{definition*}

If $c\in C$ is nonzero over $B$, then $v=1_B*c$ is a generator of the corresponding slice: $1_B*C=\langle v\rangle$.

Beware that since $\cU(C)$ might be a strict refinement of $\cP(C)$, this slice $1_B*C$ need not actually be a subcode of $C$,
or equivalently, $v$ need not actually belong to $C$.

\entry
\label{repete_et_dual}
If $\cU(C)=\{B_1,\dots,B_s\}$ and $v_1,\dots,v_s$ are corresponding slice generators, then $1_{[\Supp(C)]}=1_{B_1}+\dots+1_{B_s}$
from which it follows 
\beq
C=1_{[\Supp(C)]}*C\;\subset\;\langle1_{B_1},\dots,1_{B_s}\rangle*C\,=\,\langle v_1\rangle\oplus\cdots\oplus\langle v_s\rangle.
\eeq
The right hand side is easily identified
thanks to~\ref{support_et_dual} and Lemma~\ref{lemme_col_prop}:
\beq
\langle v_1\rangle\oplus\cdots\oplus\langle v_s\rangle=\langle x\in C^{\perp}\,;\;w(x)\leq 2\rangle^{\perp}.
\eeq
As a consequence we retrieve the relation
\beq
n_2=\dim\langle x\in C^{\perp}\,;\;w(x)\leq 2\rangle^{\perp}=s=\abs{\cU(C)}
\eeq
as stated in~\ref{longueurs}.

\entry
\label{concrete_1dim}
To restate all this more concretely,
choose a set of representatives $S=\{j_1,\dots,j_s\}\subset\Supp(C)$, with $j_i\in B_i$, so each nonzero column of $C$ is repeated
from one (and only one) column indexed by $S$.
Then a codeword $c\in C$ is entirely determined over $B_i$ by its value at $j_i$.
More precisely, after possibly multiplying by scalars,
we can suppose our slice generators are \emph{normalized} with respect to $S$, that is, $v_i$ is $1$ at $j_i$ for all $i$.
Then for each $c\in C$, the slice of $c$ over $B_i$ is
$1_{B_i}*c=\pi_{j_i}(c)v_i$.

Said otherwise, $\pi_S$ induces a commutative diagram
\begin{center}
\begin{tikzpicture}
\node (NW) at (0,0) {$C$};
\node (NE) at (3,0) {$\langle v_1\rangle\oplus\cdots\oplus\langle v_s\rangle$};
\node (SW) at (0,-1.2) {$\pi_S(C)$};
\node (SE) at (3,-1.2) {$\F^S$};
\draw[white] (NW) edge node[black] {$\subset$} (NE);
\draw[white] (SW) edge node[black] {$\subset$} (SE);
\draw[->,font=\scriptsize,>=angle 90] (NW) edge node[rotate=270,above] {$\simeq\,$} (SW);
\draw[->,font=\scriptsize,>=angle 90] (NE) edge node[rotate=270,above] {$\simeq\,$} (SE);
\end{tikzpicture}
\end{center}
identifying $C$ with the code $\pi_S(C)$ of length $\abs{S}=s=n_2$, which has full support in $\F^S$ and no repeated column 
(so dual distance $\ddual(\pi_S(C))\geq3$ by~\ref{support_et_dual} and Lemma~\ref{lemme_col_prop}).
Each column of $C$ is repeated from one column of $\pi_S(C)$, or more precisely, each
\beq
(\lambda_{j_1},\dots,\lambda_{j_s})\in\pi_S(C)
\eeq
extends uniquely to
\beq
\lambda_{j_1}v_1+\cdots+\lambda_{j_s}v_s\in C.
\eeq

\begin{proposition}
\label{tranches_puissances}
Let $C,C'\subset\F^n$ be linear codes of the same length, and let $i,j\in\Supp(C)\cap\Supp(C')$. Then the $i$-th and $j$-th columns are
repeated in $C*C'$ if and only if they are repeated in $C$ and in $C'$. Said otherwise,
\beq
\cU(C*C')=\cU(C)\wedge\cU(C').
\eeq
If $v_1,\dots,v_s$ are slice generators for $C$ and $w_1,\dots,w_{s'}$ are slice generators for $C'$,
then those among the $v_i*w_j$ that are nonzero form a family of slice generators for $C*C'$.

In particular $\cU(C\deux[t])=\cU(C)$ for all $t\geq1$, and $(v_1)^t,\dots,(v_s)^t$ are slice generators for $C\deux[t]$. If $S\subset\Supp(C)$ is a set of representatives for $\sim_C$, then the dimension sequences of $C$ and $\pi_S(C)$ are the same:
\beq
\dim(C\deux[t])=\dim(\pi_S(C)\deux[t])
\eeq
for all $t\geq0$. Hence they also have the same regularity: $r(C)=r(\pi_S(C))$.
\end{proposition}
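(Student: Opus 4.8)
The plan is to reduce everything to the two-column case and then bootstrap to powers. First I would fix $i\neq j$ in $\Supp(C)\cap\Supp(C')=\Supp(C*C')$ (the last equality by Lemma~\ref{Supp*=intSupp}) and exploit that $1_{\{i,j\}}$ is idempotent: since $1_{\{i,j\}}*c*c'=(1_{\{i,j\}}*c)*(1_{\{i,j\}}*c')$ for all $c,c'$, passing to spans gives $1_{\{i,j\}}*(C*C')=(1_{\{i,j\}}*C)*(1_{\{i,j\}}*C')$. By Lemma~\ref{Supp*=intSupp} each of $D:=1_{\{i,j\}}*C$ and $D':=1_{\{i,j\}}*C'$ is a code with support exactly $\{i,j\}$, hence of dimension $1$ or $2$, and by~\ref{def_slices} its dimension is $1$ precisely when the columns $i,j$ are repeated in it. A one-line computation in $\F^{\{i,j\}}$ shows that $D*D'$ has dimension $1$ if and only if both $D$ and $D'$ do: if $D=\F^{\{i,j\}}$ then $D*D'=\F^{\{i,j\}}$ because $D'$, having full support on $\{i,j\}$, contains a vector with both coordinates nonzero; and if $D=\langle(a_i,a_j)\rangle$, $D'=\langle(b_i,b_j)\rangle$ with all four entries nonzero, then $D*D'=\langle(a_ib_i,a_jb_j)\rangle$ is one-dimensional. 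Applying this to $D*D'=1_{\{i,j\}}*(C*C')$ and using~\ref{def_slices} once more yields $i\sim_{C*C'}j\iff i\sim_C j$ and $i\sim_{C'}j$, that is, $\cU(C*C')=\cU(C)\wedge\cU(C')$.

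For the slice generators, write $\cU(C)=\{B_1,\dots,B_s\}$, $\cU(C')=\{B'_1,\dots,B'_{s'}\}$, and pick $c_a\in C$, $c'_b\in C'$ nonzero over $B_a$, $B'_b$ respectively, so that $v_a=1_{B_a}*c_a$ and $w_b=1_{B'_b}*c'_b$. Then $v_a*w_b=1_{B_a\cap B'_b}*(c_a*c'_b)$ with $c_a*c'_b\in C\,\dot{*}\,C'\subset C*C'$, so by Lemma~\ref{Supp*=intSupp} it is supported exactly on $B_a\cap B'_b$ and is nonzero exactly when $B_a\cap B'_b\neq\emptyset$. By the previous paragraph the nonempty sets $B_a\cap B'_b$ are precisely the classes of $\cU(C*C')$, and $(a,b)\mapsto B_a\cap B'_b$ is injective on pairs with nonempty intersection; hence each nonzero $v_a*w_b$ lies in, and being nonzero generates, the corresponding one-dimensional slice of $C*C'$, so the nonzero $v_a*w_b$ form a complete family of slice generators.

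The power statements then follow by induction on $t$ from $C\deux[t]=C*C\deux[t-1]$: since $\cU(C)\wedge\cU(C)=\cU(C)$ we get $\cU(C\deux[t])=\cU(C)$, and among the products $v_a*(v_b)^{t-1}$ (which by the above generate the slices of $C\deux[t]$) only those with $a=b$ survive, giving $(v_a)^t$. For the dimension sequences, note that $\pi_S:\F^n\longto\F^S$ is a (unital) homomorphism of $\F$-algebras for $*$, so $\pi_S(C\deux[t])=\pi_S(C)\deux[t]$ for every $t\geq0$. On the other hand $S$ is a transversal for $\sim_{C\deux[t]}$ because $\sim_{C\deux[t]}=\sim_C$; hence the identification in~\ref{concrete_1dim}, applied to the code $C\deux[t]$ in place of $C$, says that $\pi_S$ restricts to an isomorphism $C\deux[t]\xrightarrow{\ \sim\ }\pi_S(C\deux[t])$. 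Combining, $\dim(C\deux[t])=\dim(\pi_S(C)\deux[t])$ for all $t\geq0$, and since the regularity depends only on the dimension sequence (Definition~\ref{def_CMreg}) we conclude $r(C)=r(\pi_S(C))$.

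I expect the only genuinely delicate point to be the injectivity of $\pi_S$ on each power $C\deux[t]$ --- equivalently, that deleting repeated columns never collapses a higher power --- which is exactly where the content of ``repeated columns do not affect the dimension sequence'' resides; everything else is bookkeeping about which partition sits over which support. Fortunately this injectivity is not a new computation: it is furnished by~\ref{concrete_1dim} as soon as we know $\cU(C\deux[t])=\cU(C)$, so the real work is concentrated in the two-column analysis of the first paragraph.
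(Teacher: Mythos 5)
Your proof is correct. The route you take differs in technique from the paper's: where the paper argues directly with the coordinate projections $\pi_i,\pi_j$ (showing $\pi_i=\lambda\lambda'\pi_j$ holds on elementary product vectors and extending by linearity for one direction, then using Lemma~\ref{lemme_col_prop} to produce a witnessing codeword $c*c'$ for the other), you localize by multiplying with the idempotent $1_{\{i,j\}}$, so that the whole question collapses into a dimension count for two codes $D,D'\subset\F^{\{i,j\}}$ with full support in a $2$-dimensional space. The identity $1_{\{i,j\}}*(C*C')=(1_{\{i,j\}}*C)*(1_{\{i,j\}}*C')$ that makes this reduction work is a nice observation, and it fits squarely into the idempotent/slice formalism of~\ref{1_support} and~\ref{def_slices}; the paper's argument is a hair more elementary but does essentially the same bookkeeping without making the localization explicit. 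Both proofs then handle slice generators, powers, and the dimension sequences the same way, with the key facts being $\pi_S(C\deux[t])=\pi_S(C)\deux[t]$ (which the paper also records in its final parenthetical) and the isomorphism from~\ref{concrete_1dim} applied to $C\deux[t]$ once $\cU(C\deux[t])=\cU(C)$ is known. Your closing remark correctly identifies where the real content of the last claim sits.
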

\begin{proof}
Suppose $\pi_i=\lambda\pi_j$ on $C$ and $\pi_i=\lambda'\pi_j$ on $C'$, for some $\lambda,\lambda'\in\F^\times$.
Then $\pi_i=\lambda\lambda'\pi_j$ on $C*C'$: indeed it is so on elementary product vectors, and this extends by linearity.

Conversely, suppose for example $i\not\sim_{C}j$, so by Lemma~\ref{lemme_col_prop} we can find $c\in C$ with $\pi_i(c)=1$, $\pi_j(c)=0$.
Since $i\in\Supp(C')$, we can find $c'\in C'$ with $\pi_i(c')=1$.
Then $\pi_i(c*c')=1$, $\pi_j(c*c')=0$, hence $i\not\sim_{C*C'}j$.

The rest follows easily (note $\pi_S(C\deux[t])=\pi_S(C)\deux[t]$).
\end{proof}

\mysubsection{Extension of scalars}
\vspace{.5\baselineskip}

Let $\F\subset\K$ be a field extension.
In many applications, one is given a ``nice'' linear code over $\K$ and one wants to deduce from it a ``nice'' linear code over $\F$.
Several techniques have been designed for this task, especially when the extension has finite degree: subfield subcodes, trace codes, and concatenation.
How these operations behave with respect to the product $*$ turns out to be quite difficult to analyze, although we will give
results involving concatenation in~\ref{debut_concat} and following.

In the other direction, base field extension (or extension of scalars) allows to pass from a linear code $C\subset\F^n$ over $\F$ to a linear code $C_{\K}\subset\K^n$ over $\K$.
In general this operation is less useful for practical applications, however in some cases it can be of help in order to prove theorems. The definition is simple: we let $C_{\K}$ be the $\K$-linear span of $C$ in $\K^n$ (where we implicitly used the chain of inclusions $C\subset\F^n\subset\K^n$).

\begin{lemma}
\label{def_extension}
Let $C\subset\F^n$ be a linear code over $\F$. Then:
\begin{enumerate}[(i)]
\item\label{def_extension_tens} The inclusion $C\tens_{\F}\K\subset\F^n\tens_{\F}\K=\K^n$ induces the identification $C\tens_{\F}\K=C_{\K}$.
\item\label{def_extension_G} If $G$ is a generator matrix for $C$ over $\F$, then $G$ is a generator matrix for $C_{\K}$ over $\K$.
\item\label{def_extension_H} If $H$ is a parity-check matrix for $C$ over $\F$, then $H$ is a parity-check matrix for $C_{\K}$ over $\K$.
\end{enumerate}
\end{lemma}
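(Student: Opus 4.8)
The plan is to prove the three parts more or less simultaneously, since they are three faces of the same elementary fact: extension of scalars commutes with the linear-algebra operations that define a linear code. First I would establish part~\eqref{def_extension_tens}. The space $C_{\K}$ is defined as the $\K$-linear span of $C$ inside $\K^n$, so it is the image of the natural map $C\tens_{\F}\K\longto\K^n$ obtained by $\F$-bilinearity from the inclusion $C\inj\F^n$ followed by $\F^n\inj\F^n\tens_{\F}\K=\K^n$. To identify $C_{\K}$ with $C\tens_{\F}\K$ it then suffices to check this map is injective. This follows from flatness of $\K$ as an $\F$-module (every module over a field is flat, indeed free): tensoring the inclusion $0\longto C\longto\F^n$ with $\K$ preserves exactness, so $C\tens_{\F}\K\longto\F^n\tens_{\F}\K=\K^n$ is injective, and its image is by construction the $\K$-span of $C$, namely $C_{\K}$.

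Part~\eqref{def_extension_G} is then immediate: if $c_1,\dots,c_k$ is an $\F$-basis of $C$ (the rows of $G$), then $c_1\tens1,\dots,c_k\tens1$ is a $\K$-basis of $C\tens_{\F}\K$, hence under the identification of part~\eqref{def_extension_tens} the $c_i$ form a $\K$-basis of $C_{\K}$; that is exactly the statement that $G$, viewed as a matrix over $\K$, has rows spanning $C_{\K}$ and is of full rank $k$ over $\K$. Equivalently one can argue directly: the row space of $G$ over $\K$ is $\K$-spanned by vectors lying in $\F^n$, so it equals $C_{\K}$, and its $\K$-dimension equals the rank of $G$, which is $k$ because a matrix has the same rank over any field containing its entries (the rank is the size of the largest nonvanishing minor, and a determinant with entries in $\F$ is zero over $\F$ iff it is zero over $\K$).

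For part~\eqref{def_extension_H}, let $H$ be an $(n-k)\times n$ parity-check matrix for $C$ over $\F$, so $C=\ker(H:\F^n\longto\F^{n-k})$ and $\rk_{\F}(H)=n-k$. Over $\K$, the matrix $H$ has rank $n-k$ as well (same minor argument), so $\ker(H:\K^n\longto\K^{n-k})$ has $\K$-dimension $k$. Since every row $c_i$ of a generator matrix $G$ of $C$ satisfies $Hc_i^{T}=0$ over $\F$, these relations persist over $\K$, so $C_{\K}\subset\ker(H:\K^n\longto\K^{n-k})$; both sides have $\K$-dimension $k$, so they are equal, which says precisely that $H$ is a parity-check matrix for $C_{\K}$.

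There is no real obstacle here; the only point requiring the tiniest bit of care is the injectivity in part~\eqref{def_extension_tens}, i.e.\ invoking flatness (or a dimension count via a chosen basis) rather than treating the inclusion $C\tens_{\F}\K\subset\K^n$ as automatic — but over a field this is routine. A clean way to package the whole lemma is to observe that all three statements reduce to the single fact that the formation of rank, of solution spaces of linear systems, and of spans is insensitive to enlarging the coefficient field from $\F$ to $\K$, which is in turn the flatness of the field extension $\F\subset\K$.
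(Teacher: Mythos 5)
Your proof is correct and supplies the details that the paper leaves implicit under the remark ``basic manipulation in linear algebra''; the flatness observation for part (i), the basis-preservation argument for part (ii), and the rank-plus-dimension-count argument for part (iii) are exactly the standard package. Nothing to add.
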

\begin{proof}
Basic manipulation in linear algebra.
\end{proof}

Extension of scalars is compatible with most operations on codes:
\begin{lemma}
\label{compatibilites_extension}
\begin{enumerate}[(i)]
\item\label{compatibilites_extension_dual}
If $C\subset\F^n$ is a linear code, then
\beq
(C^\perp)_{\K}=(C_{\K})^\perp\quad\subset\K^n.
\eeq
\item\label{compatibilites_extension_inclusion}
If $C,C'\subset\F^n$ are linear codes, then
\beq
C\subset C'\quad\equivaut\quad C_{\K}\subset C'_{\K}.
\eeq
\item\label{compatibilites_extension_+*}
Let $C,C'\subset\F^n$ be linear codes. Then:
\beq
(C+C')_{\K}=C_{\K}+C'_{\K}
\eeq
\beq
(C\cap C')_{\K}=C_{\K}\cap C'_{\K}
\eeq
and
\beq
(C*C')_{\K}=C_{\K}*C'_{\K}
\eeq
(where on the left hand side, $*$ denotes product in $\F^n$, and on the right hand side, in $\K^n$).
\item\label{compatibilites_extension_tens}
Let $C\subset\F^m,C'\subset\F^n$ be a linear codes. Then:
\beq
(C\oplus C')_{\K}=C_{\K}\oplus C'_{\K}\quad\subset\K^{m+n}
\eeq
\beq
(C\tens C')_{\K}=C_{\K}\tens C'_{\K}\quad\subset\K^{m\times n}.
\eeq
\end{enumerate}
\end{lemma}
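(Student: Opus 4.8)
The plan is to reduce every assertion to Lemma~\ref{def_extension}, used in the following form: $C_{\K}=C\tens_{\F}\K$; a generator matrix (equivalently, an $\F$-basis) of $C$ is a $\K$-basis of $C_{\K}$, so that $\dim_{\K}C_{\K}=\dim_{\F}C$ and, extending such a basis to one of $\F^n$ and comparing coordinates, $C_{\K}\cap\F^n=C$; and a parity-check matrix of $C$ over $\F$ is one of $C_{\K}$ over $\K$. In short, extension of scalars is an exact functor respecting the bookkeeping of generator and parity-check matrices, and each item then becomes formal. For~\ref{compatibilites_extension_dual}, take a parity-check matrix $H$ of $C$ over $\F$: its rows span $C^{\perp}$ over $\F$, hence (applying the generator-matrix statement of Lemma~\ref{def_extension} to $C^{\perp}$) they span $(C^{\perp})_{\K}$ over $\K$, while by the parity-check statement $H$ is also a parity-check matrix of $C_{\K}$, so its rows span $(C_{\K})^{\perp}$; alternatively, orthogonality of two vectors with entries in $\F$ is computed identically over $\F$ and over $\K$, giving $(C^{\perp})_{\K}\subset(C_{\K})^{\perp}$, and both sides have $\K$-dimension $n-\dim_{\F}C$. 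For~\ref{compatibilites_extension_inclusion}, one direction says the $\K$-span of $C$ lies in that of $C'$; for the other, $C_{\K}\subset C'_{\K}$ forces $C=C_{\K}\cap\F^n\subset C'_{\K}\cap\F^n=C'$.

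For~\ref{compatibilites_extension_+*}, the sum formula is immediate, since the $\K$-span of $C\cup C'$ is the sum of the $\K$-spans. For the product, $(C*C')_{\K}$ is the $\K$-span of $C*C'=\langle C\,\dot{*}\,C'\rangle$, hence the $\K$-span of $C\,\dot{*}\,C'$; by bilinearity of $*$ one has $C\,\dot{*}\,C'\subset C_{\K}\,\dot{*}\,C'_{\K}$, and, expanding $a=\sum_i\lambda_ic_i$, $b=\sum_j\mu_jc'_j$ into $a*b=\sum_{i,j}\lambda_i\mu_j(c_i*c'_j)$, also $C_{\K}\,\dot{*}\,C'_{\K}$ lies in the $\K$-span of $C\,\dot{*}\,C'$; so the two $\K$-spans agree, both equal to $C_{\K}*C'_{\K}$. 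The intersection formula is the one step that is not a bare span manipulation: $(C\cap C')_{\K}\subset C_{\K}\cap C'_{\K}$ follows from~\ref{compatibilites_extension_inclusion}, and for the reverse I would either count dimensions, $\dim_{\K}(C_{\K}\cap C'_{\K})=\dim_{\K}C_{\K}+\dim_{\K}C'_{\K}-\dim_{\K}(C_{\K}+C'_{\K})=\dim_{\F}(C\cap C')$, or dualize: $(C\cap C')_{\K}=\bigl((C^{\perp}+(C')^{\perp})^{\perp}\bigr)_{\K}=\bigl((C_{\K})^{\perp}+(C'_{\K})^{\perp}\bigr)^{\perp}=C_{\K}\cap C'_{\K}$, using~\ref{compatibilites_extension_dual}, the sum formula, and $(C\cap C')^{\perp}=C^{\perp}+(C')^{\perp}$.

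For~\ref{compatibilites_extension_tens}, the direct sum formula follows by applying the generator-matrix statement of Lemma~\ref{def_extension} to the block-diagonal generator matrix with generators of $C$ in the first $m$ coordinates and of $C'$ in the last $n$. For the tensor product, the elementary tensors $c\tens c'$ with $c\in C$, $c'\in C'$ span $C\tens C'$ over $\F$, hence span $(C\tens C')_{\K}$ over $\K$; being $\K$-spanning sets for $C_{\K}$ and for $C'_{\K}$ respectively, they also span $C_{\K}\tens C'_{\K}$ over $\K$, so the two codes coincide inside $\K^{m\times n}$. Thus the only step requiring more than unwinding definitions is the intersection identity, dispatched by either route above; I expect that to be the main, if mild, obstacle, and everything else is a direct consequence of the flatness of $\F\subset\K$ together with Lemma~\ref{def_extension}.
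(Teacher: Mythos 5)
Your proof is correct and is precisely the kind of expansion the paper has in mind: the paper's own proof reads simply ``Routine verifications, using Lemma~\ref{def_extension},'' and every step you give is a direct unwinding of the generator-matrix and parity-check statements of that lemma. The one place where some actual content is needed—the intersection identity—is handled correctly by either of your two routes (dimension count via the modular law, or dualizing through parts~\ref{compatibilites_extension_dual} and the sum formula), and the rest (including your use of $C_{\K}\cap\F^n=C$ for part~\ref{compatibilites_extension_inclusion}) is exactly the bookkeeping the paper considers routine.
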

\begin{proof}
Routine verifications, using Lemma~\ref{def_extension}.
\end{proof}

\begin{proposition}
\label{compatibilites_extension_decomp}
If $C\subset\F^n$ is a linear code, then $\cP(C_{\K})=\cP(C)$, and $\cA(C_{\K})=\cA(C)_{\K}$ in $\K^n$.
In particular $C$ is indecomposable if and only if $C_{\K}$ is indecomposable.
\end{proposition}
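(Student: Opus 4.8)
The plan is to establish the partition identity $\cP(C_\K)=\cP(C)$ first, and then to read the statement about stabilizing algebras (and with it the indecomposability claim) directly off the explicit description in Proposition~\ref{structure_algebre}. Two elementary facts will be used throughout. First, by Lemma~\ref{def_extension}\itemref{def_extension_G} a generator matrix for $C$ over $\F$ is also a generator matrix for $C_\K$ over $\K$, so $C$ and $C_\K$ have exactly the same nonzero columns and therefore $\Supp(C_\K)=\Supp(C)$. Second, $C_\K\cap\F^n=C$ inside $\K^n$: choosing a parity-check matrix $H$ for $C$ over $\F$, which by Lemma~\ref{def_extension}\itemref{def_extension_H} is also a parity-check matrix for $C_\K$ over $\K$, a word $x\in\F^n$ lies in $C_\K$ if and only if $Hx^T=0$ over $\K$, and since $H$ has entries in $\F$ this is equivalent to $Hx^T=0$ over $\F$, i.e.\ to $x\in C$.

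Next I would dispose of the easy inclusion $\cA(C)\subset\cA(C_\K)$: if $a*C\subset C$ then, $*$ being $\K$-bilinear and $C_\K$ being the $\K$-span of $C$, we get $a*C_\K\subset C_\K$; as moreover $\Supp(a)\subset\Supp(C)=\Supp(C_\K)$, this gives $a\in\cA(C_\K)$, and since $\cA(C_\K)$ is a $\K$-subspace it then contains $\cA(C)_\K$. In particular, writing $\cP(C)=\{A_1,\dots,A_r\}$, each idempotent $1_{A_i}$ lies in $\cA(C)\subset\cA(C_\K)$, so $C_\K$ decomposes (in the sense of Definition~\ref{def_decomposable}) under the partition $\cP(C)$ of $\Supp(C_\K)$; by the minimality property of Lemma-definition~\ref{finest_dec} this shows that $\cP(C_\K)$ refines $\cP(C)$. (One could equivalently base-change the decomposition $C=\bigoplus_i C_i$ using Lemma~\ref{compatibilites_extension}.)

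For the reverse refinement I would use that the idempotent generators of $\cA(C_\K)$ are defined over $\F$. Write $\cP(C_\K)=\{A'_1,\dots,A'_{r'}\}$. Each $1_{A'_j}$ is a $0/1$ vector, hence already belongs to $\F^n$, and it belongs to $\cA(C_\K)$ because $C_\K$ decomposes under $\cP(C_\K)$; thus $1_{A'_j}*C\subset 1_{A'_j}*C_\K\subset C_\K$, while also $1_{A'_j}*C\subset\F^n$, so $1_{A'_j}*C\subset C_\K\cap\F^n=C$. Since $\Supp(1_{A'_j})=A'_j\subset\Supp(C_\K)=\Supp(C)$ as well, this means $1_{A'_j}\in\cA(C)$, i.e.\ $C$ decomposes under $\cP(C_\K)$; so by Lemma-definition~\ref{finest_dec} $\cP(C)$ refines $\cP(C_\K)$. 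Combining the two directions, $\cP(C)=\cP(C_\K)$.

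Finally, writing $\cP(C)=\cP(C_\K)=\{A_1,\dots,A_r\}$, Proposition~\ref{structure_algebre} gives $\cA(C)=\langle 1_{A_1},\dots,1_{A_r}\rangle$ as an $\F$-space and $\cA(C_\K)=\langle 1_{A_1},\dots,1_{A_r}\rangle$ as a $\K$-space, with the very same characteristic vectors $1_{A_i}\in\F^n$; hence $\cA(C_\K)$ is the $\K$-span of $\cA(C)$, that is $\cA(C_\K)=\cA(C)_\K$. In particular $\dim_\F\cA(C)=\dim_\K\cA(C_\K)=r$, so (for $C\neq0$) $C$ is indecomposable, i.e.\ $\abs{\cP(C)}=1$, if and only if $C_\K$ is. I expect the only point requiring genuine attention to be the identity $C_\K\cap\F^n=C$ together with the observation that the idempotent generators furnished by Proposition~\ref{structure_algebre} for $C_\K$ are automatically $\F$-rational; the remainder is routine base-change bookkeeping via Lemmas~\ref{def_extension} and~\ref{compatibilites_extension}.
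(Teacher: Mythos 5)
Your proof is correct and follows essentially the same route as the paper's: establish $\cP(C)=\cP(C_\K)$ by showing each refines the other (equivalently, that for $P\subset\Supp(C)$ the conditions $1_P*C\subset C$ and $1_P*C_\K\subset C_\K$ are equivalent), then read off $\cA(C_\K)=\cA(C)_\K$ from the explicit description of the stabilizing algebra in Proposition~\ref{structure_algebre}. The only difference is cosmetic: where the paper cites Lemma~\ref{compatibilites_extension}\itemref{compatibilites_extension_inclusion}--\itemref{compatibilites_extension_+*} to get that equivalence in one stroke, you re-derive the relevant direction by observing $C_\K\cap\F^n=C$ (a special instance of the inclusion compatibility) and that the idempotents are $\F$-rational.
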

\begin{proof}
For $P\subset\Supp(C)$ we have $1_P*C\subset C$ $\equivaut$ $1_P*C_{\K}\subset C_{\K}$
by Lemma~\ref{compatibilites_extension}\itemref{compatibilites_extension_inclusion}-\itemref{compatibilites_extension_+*}.
Conclude with Proposition~\ref{structure_algebre}.
\end{proof}

\begin{proposition}
\label{compatibilites_extension_tranches}
If $C\subset\F^n$ is a linear code, then $\cU(C_{\K})=\cU(C)$. If $v_1,\dots,v_s\in\F^n$ are
slice generators for $C$, then they are also for $C_{\K}$.
\end{proposition}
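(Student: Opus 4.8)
The plan is to reduce everything to two facts already at hand: the dimension characterisation of the relation $\sim$ recorded in~\ref{def_slices}, and the compatibility of $\perp$, $\cap$ and $*$ with extension of scalars from Lemma~\ref{compatibilites_extension}. Before anything else I would check that $\cU(C_\K)$ and $\cU(C)$ are partitions of the \emph{same} set, i.e. that $\Supp(C_\K)=\Supp(C)$; this is immediate from Lemma~\ref{def_extension}\itemref{def_extension_G} (a generator matrix of $C$ over $\F$ is a generator matrix of $C_\K$ over $\K$, so the two codes have the same nonzero columns), or alternatively from $1_{\{i\}}\in C^\perp\iff 1_{\{i\}}\in(C^\perp)_\K=(C_\K)^\perp$ combined with~\ref{support_et_dual}.

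The core step is to show that $\sim_C$ and $\sim_{C_\K}$ are the same relation on $\Supp(C)=\Supp(C_\K)$. For distinct $i,j\in\Supp(C)$, the reformulation in~\ref{def_slices} says that $i\sim_C j$ is equivalent to $\dim_\F\bigl(1_{\{i,j\}}*C\bigr)=1$, and likewise $i\sim_{C_\K}j$ is equivalent to $\dim_\K\bigl(1_{\{i,j\}}*C_\K\bigr)=1$. Since $1_{\{i,j\}}*C=\langle 1_{\{i,j\}}\rangle*C$, Lemma~\ref{compatibilites_extension}\itemref{compatibilites_extension_+*} gives $\bigl(1_{\{i,j\}}*C\bigr)_\K=1_{\{i,j\}}*C_\K$, and a basis of an $\F$-vector space $V$ is a basis of $V_\K$, so the two dimensions agree. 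Hence $\sim_C=\sim_{C_\K}$, and therefore $\cU(C_\K)=\Supp(C_\K)/{\sim_{C_\K}}=\Supp(C)/{\sim_C}=\cU(C)$. (One could instead argue via Lemma~\ref{lemme_col_prop}: a word $x\in C^\perp$ with $\Supp(x)=\{i,j\}$ lies in $(C^\perp)_\K=(C_\K)^\perp$, giving one implication; for the converse, an element of $(C_\K)^\perp$ supported in $\{i,j\}$ lies in $\bigl(C^\perp\cap\iota_{\{i,j\}}(\F^{\{i,j\}})\bigr)_\K$ by parts~\itemref{compatibilites_extension_dual} and~\itemref{compatibilites_extension_+*} of Lemma~\ref{compatibilites_extension}, so that intersection is already nonzero over $\F$; its support lies in $\{i,j\}$ and, as $i,j\in\Supp(C)$, cannot be a singleton, hence equals $\{i,j\}$.)

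For the slice generators, write $\cU(C)=\{B_1,\dots,B_s\}$, so by the previous step also $\cU(C_\K)=\{B_1,\dots,B_s\}$. If $v_1,\dots,v_s$ are slice generators for $C$, then each $v_i$ is a nonzero element of the one-dimensional $\F$-space $1_{B_i}*C$; thus $v_i\in 1_{B_i}*C\subset 1_{B_i}*C_\K$ is nonzero, and since $\dim_\K\bigl(1_{B_i}*C_\K\bigr)=\dim_\F\bigl(1_{B_i}*C\bigr)=1$ it spans the corresponding one-dimensional slice of $C_\K$; so $v_1,\dots,v_s$ are slice generators for $C_\K$. I do not expect a genuine obstacle here: the whole argument is an application of Lemma~\ref{compatibilites_extension} and~\ref{def_slices}, and the only point worth isolating is the implication $i\sim_{C_\K}j\Rightarrow i\sim_C j$, which relies on scalar extension being compatible with $\dim$ (equivalently, faithful flatness) — and it is precisely the slice-dimension formulation of $\sim$ that makes this direction as cheap as the other.
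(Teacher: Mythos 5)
Your proof is correct, but it takes a genuinely different (and more machinery-heavy) route than the paper's. The paper's proof is a one-liner: the coordinate projections $\pi_i$ are determined on $C$ by the generator matrix, and the proportionality $\pi_i=\lambda\pi_j$ on $C$ holds with $\lambda\in\F^\times$ if and only if $\pi_i=\lambda\pi_j$ on $C_\K$ with $\lambda\in\K^\times$ --- once the proportionality exists over $\K$, the scalar $\lambda$ is forced to be a ratio of entries of a generator matrix of $C$, hence lies in $\F$. Your argument instead packages the relation $\sim$ as the numerical condition $\dim(1_{\{i,j\}}*C)=1$ (via \ref{def_slices}) and then invokes the scalar-extension toolkit --- Lemma~\ref{compatibilites_extension}\itemref{compatibilites_extension_+*} for compatibility with $*$, and invariance of dimension --- to transport that condition across $\F\subset\K$; your parenthetical variant does the same via weight-$2$ dual codewords and Lemma~\ref{lemme_col_prop}. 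The paper's approach is more direct and shows the proportionality constant itself descends, which is the genuinely intrinsic content; yours is a clean illustration of the faithfully-flat style of argument (everything reduces to ``dimension is preserved and operations commute with $\tens_\F\K$''), which would extend more automatically to situations where the proportionality scalar is harder to get one's hands on. Both are valid; the paper's is shorter and arguably sharper for exactly this statement, while yours is more systematic and closer in spirit to the proofs of Propositions~\ref{compatibilites_extension_decomp} and~\ref{extension_restriction} nearby.
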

\begin{proof}
Obvious: $\pi_i=\lambda\pi_j$ on $C$ $\equivaut$ $\pi_i=\lambda\pi_j$ on $C_{\K}$.
\end{proof}

\entry
For $C\subset\F^n$ and $S\subset[n]$ we let
\beq
C_S=\iota_S(\iota_S^{-1}(C))=C\cap\iota_S(\F^S)=\{c\in C\,;\;\Supp(c)\subset S\}
\eeq
be the largest subcode of $C$ with support in $S$.

Also we recall from~\cite{Wei} that for $1\leq i\leq\dim(C)$, the $i$-th generalized Hamming weight $w_i(C)$ of $C$ is the smallest integer $s$
such that $C$ admits a linear subcode of dimension $i$ and support size $s$. Equivalently:
\beq
w_i(C)=\min\{\abs{S}\,;\;S\subset[n],\,\dim(C_S)\geq i\}.
\eeq
In particular $w_1(C)=\dmin(C)$.

\begin{proposition}
\label{extension_restriction}
Let $\F\subset\K$ be a field extension, and $C\subset\F^n$ a linear code over $\F$.
Then for any subset $S\subset[n]$ we have
\beq
(C_S)_{\K}=(C_{\K})_S.
\eeq
In particular, $\dim_{\K}((C_{\K})_S)=\dim_{\F}(C_S)$.
\end{proposition}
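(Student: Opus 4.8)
The plan is to reduce everything to the tensor-product identity from Lemma~\ref{compatibilites_extension}\itemref{compatibilites_extension_+*}, specifically the fact that extension of scalars commutes with the operation $C\cap C'$ (intersection of subspaces). Recall that by definition $C_S=C\cap\iota_S(\F^S)$, the largest subcode of $C$ supported on $S$. So the left-hand side is $(C\cap\iota_S(\F^S))_{\K}$ and I want to compare it with $(C_{\K})_S=C_{\K}\cap\iota_S(\K^S)$, where now $\iota_S:\K^S\inj\K^n$ is the inclusion over $\K$.

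First I would observe that $\iota_S(\F^S)\subset\F^n$ is itself a linear code over $\F$ — it is the ``coordinate subspace'' code — and that its extension of scalars is $(\iota_S(\F^S))_{\K}=\iota_S(\K^S)$. This is immediate from Lemma~\ref{def_extension}\itemref{def_extension_G}: a generator matrix for $\iota_S(\F^S)$ consists of the standard basis vectors $1_{\{i\}}$ for $i\in S$, and the same matrix (with entries viewed in $\K$) generates $\iota_S(\K^S)$ over $\K$. Then I would apply the second displayed equation in Lemma~\ref{compatibilites_extension}\itemref{compatibilites_extension_+*}, namely $(C\cap C')_{\K}=C_{\K}\cap C'_{\K}$, with $C'=\iota_S(\F^S)$. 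This gives
\beq
(C_S)_{\K}=(C\cap\iota_S(\F^S))_{\K}=C_{\K}\cap(\iota_S(\F^S))_{\K}=C_{\K}\cap\iota_S(\K^S)=(C_{\K})_S,
\eeq
which is exactly the claim. The final dimension statement then follows because extension of scalars preserves dimension: if $G$ is a generator matrix for the $\F$-code $C_S$, it is also a generator matrix for $(C_S)_{\K}$ by Lemma~\ref{def_extension}\itemref{def_extension_G}, and its rank is unchanged by passing from $\F$ to $\K$ (row-reduction is insensitive to the field), so $\dim_{\K}((C_{\K})_S)=\dim_{\K}((C_S)_{\K})=\dim_{\F}(C_S)$.

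There is essentially no main obstacle here; the statement is a formal consequence of compatibility of $\otimes_{\F}\K$ with finite intersections, which was already recorded. The only point deserving a line of care is the identification $(\iota_S(\F^S))_{\K}=\iota_S(\K^S)$, i.e.\ that extending the coordinate subcode gives the coordinate subcode over the larger field — but this is transparent from the explicit basis. One could alternatively give a fully elementary proof: a word $c\in C_{\K}$ lies in $(C_{\K})_S$ iff its coordinates outside $S$ vanish; writing $c=\sum_j\lambda_j c_j$ with $c_j$ a fixed $\F$-basis of $C$ and $\lambda_j\in\K$, and choosing a basis $(\mu_\ell)$ of $\K$ over $\F$ (or just using $\F$-linear independence in the infinite-degree case), the vanishing of each coordinate of $c$ outside $S$ forces the corresponding linear combination of columns to vanish over $\F$, so $c$ is a $\K$-combination of elements of $C_S$; the reverse inclusion is trivial. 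Either way the verification is routine, so I would keep the argument short and lean on Lemma~\ref{compatibilites_extension}.
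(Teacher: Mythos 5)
Your proof is correct and follows exactly the paper's own route: write $C_S=C\cap\iota_S(\F^S)$ and invoke the compatibility of extension of scalars with intersections from Lemma~\ref{compatibilites_extension}\itemref{compatibilites_extension_+*}. You spell out the small check that $(\iota_S(\F^S))_{\K}=\iota_S(\K^S)$, which the paper leaves implicit, but the argument is the same.
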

\begin{proof}
Write $C_S=C\cap\iota_S(\F^S)$ and use Lemma~\ref{compatibilites_extension}\itemref{compatibilites_extension_+*} (for $\cap$).
\end{proof}

\begin{corollary}
\label{parametres_extension}
Let $\F\subset\K$ be a field extension, and $C\subset\F^n$ a linear code over $\F$. Then we have
\beq
\dim_{\K}(C_{\K})=\dim_{\F}(C)
\eeq
and
\beq
w_i(C_{\K})=w_i(C)
\eeq
for all $i$. In particular, $\dmin(C_{\K})=\dmin(C)$.
\end{corollary}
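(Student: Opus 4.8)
The plan is to deduce everything from Proposition~\ref{extension_restriction}, which already contains the essential content. First I would establish the dimension equality by taking $S=[n]$ in Proposition~\ref{extension_restriction}: then $C_S=C$ and $(C_{\K})_S=C_{\K}$, so $(C)_{\K}=(C_{\K})$ trivially, and the stated identity $\dim_{\K}((C_{\K})_S)=\dim_{\F}(C_S)$ becomes exactly $\dim_{\K}(C_{\K})=\dim_{\F}(C)$. (Alternatively this is immediate from Lemma~\ref{def_extension}\itemref{def_extension_G}, since a generator matrix for $C$ over $\F$ is one for $C_{\K}$ over $\K$, hence the two codes have the same dimension; I would probably just cite that to keep things self-contained.)

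Next I would handle the generalized Hamming weights. Recall $w_i(C)=\min\{\abs{S}\,;\;S\subset[n],\,\dim(C_S)\geq i\}$, and likewise for $C_{\K}$ over $\K$. By Proposition~\ref{extension_restriction}, for every subset $S\subset[n]$ we have $\dim_{\K}((C_{\K})_S)=\dim_{\F}(C_S)$. Therefore the condition ``$\dim_{\F}(C_S)\geq i$'' is equivalent to ``$\dim_{\K}((C_{\K})_S)\geq i$'', so the two sets of subsets over which the minimum defining $w_i$ is taken coincide, and hence the minima of $\abs{S}$ agree: $w_i(C_{\K})=w_i(C)$ for all $i$. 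The special case $i=1$, together with $w_1(C)=\dmin(C)$ recalled just before the proposition, gives $\dmin(C_{\K})=\dmin(C)$.

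This argument is essentially a one-line unwinding of definitions once Proposition~\ref{extension_restriction} is in hand, so there is no real obstacle; the only point requiring a little care is making sure the set-theoretic equivalence ``$\dim_{\F}(C_S)\geq i\iff\dim_{\K}((C_{\K})_S)\geq i$'' is applied uniformly over all $S$, which is exactly what the ``for any subset $S\subset[n]$'' clause in Proposition~\ref{extension_restriction} provides. So the proof is:

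\begin{proof}
The equality $\dim_{\K}(C_{\K})=\dim_{\F}(C)$ is the case $S=[n]$ of Proposition~\ref{extension_restriction} (or follows from Lemma~\ref{def_extension}\itemref{def_extension_G}). For the generalized Hamming weights, Proposition~\ref{extension_restriction} gives $\dim_{\K}((C_{\K})_S)=\dim_{\F}(C_S)$ for every $S\subset[n]$, so for each $i$ the conditions $\dim_{\F}(C_S)\geq i$ and $\dim_{\K}((C_{\K})_S)\geq i$ define the same family of subsets $S$; taking the minimum of $\abs{S}$ over this family yields $w_i(C_{\K})=w_i(C)$. The case $i=1$ gives $\dmin(C_{\K})=\dmin(C)$.
\end{proof}
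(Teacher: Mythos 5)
Your proof is correct and follows essentially the same route as the paper: the dimension equality from basic extension-of-scalars facts (the paper cites Lemma~\ref{def_extension}\itemref{def_extension_tens}, you cite \itemref{def_extension_G} or the $S=[n]$ case of Proposition~\ref{extension_restriction}, all equivalent), and the generalized Hamming weights from Proposition~\ref{extension_restriction} together with the definition of $w_i$.
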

\begin{proof}
The first equality follows from Lemma~\ref{def_extension}\itemref{def_extension_tens}.
The second follows from Proposition~\ref{extension_restriction} and the definition of the generalized Hamming weights. 
\end{proof}

\begin{lemma}
\label{descente_mots}
Let $\F\subset\K$ be a field extension, and suppose $\lambda_1,\dots,\lambda_r\in\K$ are linearly independent over $\F$.
Let $C\subset\F^n$ be a linear code, and $x_1,\dots,x_r\in\F^n$ be arbitrary words.
Set
\beq
x=\lambda_1x_1+\cdots+\lambda_rx_r\in\K^n.
\eeq
Then we have
\beq
\Supp(x)=\bigcup_i\Supp(x_i)
\eeq
and
\beq
x\in C_{\K}\quad\equivaut\quad\forall i,\,x_i\in C.
\eeq
\end{lemma}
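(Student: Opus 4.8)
The statement about supports is elementary: for each coordinate $i\in[n]$, the $i$-th coordinate of $x$ is $\lambda_1(x_1)_i+\cdots+\lambda_r(x_r)_i$, and since the $\lambda_j$ are $\F$-linearly independent and the $(x_j)_i$ lie in $\F$, this scalar vanishes if and only if $(x_j)_i=0$ for all $j$. Hence $i\in\Supp(x)$ iff $i\in\Supp(x_j)$ for some $j$, which gives $\Supp(x)=\bigcup_i\Supp(x_i)$. So I would dispatch this first in one or two lines.

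For the membership equivalence, the direction $(\Leftarrow)$ is immediate: if each $x_i\in C$, then $x=\sum_j\lambda_jx_j$ is a $\K$-linear combination of elements of $C$, hence lies in the $\K$-span $C_{\K}$. The substance is the converse. The plan is to use a parity-check description: pick a parity-check matrix $H$ for $C$ over $\F$, which by Lemma~\ref{def_extension}\itemref{def_extension_H} is also a parity-check matrix for $C_{\K}$ over $\K$. Then $x\in C_{\K}$ means $Hx^T=0$ in $\K^{n-k}$. Compute $Hx^T=\lambda_1 H x_1^T+\cdots+\lambda_r H x_r^T$; each $Hx_j^T$ is a vector with entries in $\F$, so this is an $\F$-linear combination of fixed vectors over $\F$ with coefficients the $\F$-linearly independent $\lambda_j$. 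Working coordinate by coordinate in $\K^{n-k}$, linear independence of the $\lambda_j$ over $\F$ forces each $Hx_j^T=0$, i.e.\ $x_j\in C$ for all $j$.

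An essentially equivalent alternative, which I might prefer for its cleanliness, is to invoke Lemma~\ref{def_extension}\itemref{def_extension_tens}: $C_{\K}=C\tens_{\F}\K$ sits inside $\F^n\tens_{\F}\K=\K^n$. Choose an $\F$-basis of $\K$ extending $\{\lambda_1,\dots,\lambda_r\}$ (or just complete the $\lambda_j$ to a basis of the $\F$-subspace they span together with a complement); writing $\K^n=\F^n\tens_{\F}\K$ and decomposing along this basis, the condition $x\in C\tens_{\F}\K$ is equivalent to all of the ``coordinates'' of $x$ in the basis lying in $C$. Since the coordinates of $x=\sum_j\lambda_jx_j$ along the $\lambda_j$ are exactly the $x_j$ (and $0$ along the complement), we get $x_j\in C$ for all $j$. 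This is the standard fact that extension of scalars of a subspace $C\subset V$ is flat/exact, so $(V\tens_{\F}\K)/(C\tens_{\F}\K)=(V/C)\tens_{\F}\K$ and the $x_j$ are detected in the quotient.

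\textbf{Main obstacle.} There is no real obstacle here; both arguments are routine linear algebra over a field extension. The only point requiring the slightest care is making explicit the use of $\F$-linear independence of the $\lambda_j$ at the moment one ``separates variables'' — whether via the parity-check equations in $\K^{n-k}$ or via the tensor-product decomposition — so I would state that step cleanly rather than leave it implicit.
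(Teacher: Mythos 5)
Your main argument is the same one the paper uses: the paper's proof says the only nontrivial direction is $x\in C_{\K}\implique x_i\in C$, and derives it from Lemma~\ref{def_extension}\itemref{def_extension_H} (a parity-check matrix for $C$ over $\F$ is also one for $C_{\K}$ over $\K$), exactly as in your first plan. Your alternative via $C_{\K}=C\tens_{\F}\K$ is also valid, and the support claim is handled correctly.
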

\begin{proof}
The only nontrivial point is the implication $x\in C_{\K}$ $\implique$ $x_i\in C$. It is in fact a consequence of Lemma~\ref{def_extension}\itemref{def_extension_H}.
\end{proof}

Alternative proofs for Propositions~\ref{compatibilites_extension_decomp} and~\ref{extension_restriction}
could also be given using the following:

\begin{proposition}
\label{descente_sous-code}
Let $\F\subset\K$ be a field extension, $C\subset\F^n$ a linear code over $\F$, and $C'\subset C_{\K}$ a linear subcode over $\K$.
Then there is a linear subcode $C_0\subset C$ over $\F$
of support
\beq
\Supp(C_0)=\Supp(C')
\eeq
such that
\beq
C'\subset(C_0)_{\K}
\eeq
and
\beq
\dim_{\K}(C')\leq\dim_{\F}(C_0)\leq\min(\dim_{\F}(C),[\K:\F]\dim_{\K}(C')).
\eeq
Moreover, if the extension is finite separable, we can take
\beq
C_0=\tr_{\K/\F}(C')
\eeq
where we extended the trace map $\tr_{\K/\F}$ to a map $\K^n\longto\F^n$ by letting it act componentwise.
\end{proposition}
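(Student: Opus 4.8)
The plan is to construct $C_0$ directly as a span of certain coordinate projections of the codewords of $C'$, and then to verify the three claimed properties in turn. Fix a basis $\mu_1,\dots,\mu_d$ of $\K$ over $\F$ (in the finite case; in the infinite case one restricts attention to a finite-dimensional $\F$-subspace of $\K$ containing all coefficients needed to express a fixed $\K$-basis of $C'$, which suffices). For each codeword $c'\in C'$, write $c'=\sum_{i}\mu_i c_i'$ with $c_i'\in\F^n$, and let $C_0$ be the $\F$-span of all the words $c_i'$ obtained this way. By Lemma~\ref{def_extension}\itemref{def_extension_H}, applying a parity-check matrix $H$ for $C$ over $\F$ to $c'=\sum_i\mu_i c_i'$ and using $\F$-linear independence of the $\mu_i$ forces $Hc_i'^T=0$ for all $i$, so indeed $c_i'\in C$ and hence $C_0\subset C$. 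This is essentially the content of Lemma~\ref{descente_mots}, which I would invoke directly rather than re-proving.

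Next I would handle the support equality. The inclusion $\Supp(C')\subset\Supp((C_0)_{\K})=\Supp(C_0)$ is immediate since each $c'$ lies in the $\K$-span of the $c_i'$. For the reverse inclusion, if $j\in\Supp(C_0)$ then some generator $c_i'$ of $C_0$ is nonzero at $j$, and by Lemma~\ref{descente_mots} (the support formula $\Supp(c')=\bigcup_i\Supp(c_i')$ applied to the corresponding $c'$) we get $j\in\Supp(c')\subset\Supp(C')$. The inclusion $C'\subset(C_0)_{\K}$ is clear by construction, since $c'=\sum_i\mu_i c_i'$ is a $\K$-combination of generators of $C_0$.

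For the dimension bounds: $\dim_\F(C_0)\leq\dim_\F(C)$ is trivial from $C_0\subset C$. The bound $\dim_\K(C')\leq\dim_\F(C_0)$ follows because a $\K$-basis of $C'$, when each element is expanded in the $\mu_i$, yields $\F$-generators of $C_0$; if these $\F$-generators spanned a space of dimension less than $\dim_\K(C')$, one could produce a nontrivial $\F$-linear — hence $\K$-linear after multiplying out — relation among the basis codewords, a contradiction (this is again a descent argument of the Lemma~\ref{descente_mots} type). Finally $\dim_\F(C_0)\leq d\cdot\dim_\K(C')$ since $C_0$ is spanned by the $d$ components $c_i'$ of the $\dim_\K(C')$ basis codewords, giving at most $d\cdot\dim_\K(C')$ generators.

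It remains to justify the separable case with $C_0=\tr_{\K/\F}(C')$. Here I would argue that $\tr_{\K/\F}(C')\subset C$: for $c'\in C'$, we have $\tr_{\K/\F}(c')=\sum_{\sigma}\sigma(c')$ summed over embeddings (or more robustly, using that $\tr_{\K/\F}$ is $\F$-linear and that applying a parity check matrix over $\F$ commutes with $\tr_{\K/\F}$ applied componentwise, so $H\tr_{\K/\F}(c')^T=\tr_{\K/\F}(Hc'^T)=\tr_{\K/\F}(0)=0$). For the support and the lower dimension bound one uses nondegeneracy of the trace form: for any fixed nonzero coordinate value of some $c'\in C'$ at position $j$, one can rescale $c'$ by a suitable $\lambda\in\K$ so that $\tr_{\K/\F}(\lambda\cdot(c')_j)\neq0$, which shows $\Supp(C')\subset\Supp(\tr_{\K/\F}(C'))$; the reverse is clear. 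The lower bound $\dim_\K(C')\leq\dim_\F(\tr_{\K/\F}(C'))$ likewise uses that the $\F$-span of $\{\tr_{\K/\F}(\lambda c'):\lambda\in\K,c'\in C'\}$ cannot collapse a $\K$-basis, by nondegeneracy of the trace pairing. I expect the main obstacle to be bookkeeping in the infinite-extension case (reducing to a finite-dimensional $\F$-subspace of $\K$ cleanly) and making the separable-case identification $C_0=\tr_{\K/\F}(C')$ fully rigorous via nondegeneracy of the trace; the finite inseparable or finite non-separable extension is handled by the general construction above, which does not require separability.
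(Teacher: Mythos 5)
Your general construction is the same as the paper's: decompose (a $\K$-basis of) $C'$ with respect to an $\F$-basis of $\K$, take the $\F$-span of the resulting components, and read off membership in $C$, the support equality, and the dimension bounds from Lemma~\ref{descente_mots} together with $C'\subset(C_0)_{\K}$. One small stylistic point: the paper spans the components of a \emph{basis} of $C'$ only, which makes the infinite-degree case and the upper dimension bound $\dim_\F(C_0)\leq[\K:\F]\dim_\K(C')$ immediate; you span over all of $C'$ and then assert the two spans agree, which is true but adds a step you don't justify.

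Where you have a genuine gap is the separable case. You attempt to verify the properties of $C_0=\tr_{\K/\F}(C')$ from scratch (containment in $C$, support via nondegeneracy, dimension via nondegeneracy), but you never establish $C'\subset(\tr_{\K/\F}(C'))_{\K}$, and your lower bound $\dim_\K(C')\leq\dim_\F(\tr_{\K/\F}(C'))$ via \emph{``the trace pairing cannot collapse a $\K$-basis''} is not an argument as written. The paper bypasses both issues with a single observation: if $(\lambda_i)$ is the chosen basis and $(\lambda_i^*)$ its trace-dual basis, then the components you constructed in the general case are precisely $x_i=\tr_{\K/\F}(\lambda_i^*x)$. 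Since $\lambda_i^*x\in C'$, this shows $C_0\subset\tr_{\K/\F}(C')$; and since any $c'\in C'$ satisfies $c'=\sum_i\lambda_i\tr_{\K/\F}(\lambda_i^*c')$, the reverse inclusion and $C'\subset(\tr_{\K/\F}(C'))_{\K}$ follow as well. So $\tr_{\K/\F}(C')$ literally \emph{is} the $C_0$ already analyzed, and no re-verification is needed. You anticipated that making this rigorous would be the sticking point; the dual-basis identity is the missing ingredient, and without it your separable paragraph does not prove all the claimed properties.
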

\begin{proof}
Choose a basis $(\lambda_i)$ of $\K$ over $\F$, and
decompose each element $x$ of a ($\K$-)basis of $C'$ as a finite sum
$x=\lambda_1x_1+\cdots+\lambda_rx_r$ for some $x_i\in\F^n$ (after possibly renumbering the $\lambda_i$).
Then apply Lemma~\ref{descente_mots}.
When the extension is separable we have $x_i=\tr_{\K/\F}(\lambda_i^*x)$,
where the basis $(\lambda_i^*)$ is dual to $(\lambda_i)$ with respect to the trace bilinear form.
\end{proof}

\begin{lemma}
\label{mot_plein}
Let $C\subset\F^n$ be a linear code of dimension $k$ over $\F$. Then there exists an extension field $\K$ of finite degree $[\K:\F]\leq k$, and a codeword $c\in C_{\K}$, such that
\beq
\Supp(c)=\Supp(C).
\eeq
\end{lemma}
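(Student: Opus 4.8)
The goal is to find a finite extension $\K/\F$ of degree at most $k=\dim(C)$ together with a single codeword $c\in C_{\K}$ whose support is all of $\Supp(C)$. After replacing $C$ by $\pi_{\Supp(C)}(C)$ we may assume $C$ has full support, so we seek $c\in C_{\K}$ with $\Supp(c)=[n]$; this is a nondegeneracy (genericity) statement, and the natural idea is to take a ``generic'' codeword. Fix a basis $c_1,\dots,c_k$ of $C$ over $\F$ and consider the generic codeword $c(t_1,\dots,t_k)=t_1c_1+\cdots+t_kc_k$ with indeterminate coefficients; its $i$-th coordinate is a linear form $\ell_i(t_1,\dots,t_k)=\sum_j \pi_i(c_j)\,t_j$ in the $t$'s, and $\ell_i$ is the nonzero linear form because $C$ has full support (the $i$-th column of a generator matrix is nonzero). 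So I want to choose values for the $t_j$, in some extension field, so that none of the $n$ linear forms $\ell_1,\dots,\ell_n$ vanishes.

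The key step is the following: since each $\ell_i$ is a nonzero linear form on $\F^k$, the union of the $n$ hyperplanes $\{\ell_i=0\}$ is a proper subset of $\F_{\K}^k$ as soon as $\abs{\K}>n$, and there one can even find a point on the ``moment curve''; more precisely, substitute $t_j=\lambda^{j-1}$ for a single scalar $\lambda$. Then $\ell_i(\lambda)=\sum_j\pi_i(c_j)\lambda^{j-1}$ becomes a nonzero polynomial in $\lambda$ of degree $\leq k-1$ (nonzero because the $i$-th column of $G$ is nonzero, so at least one coefficient $\pi_i(c_j)$ is nonzero), hence it has at most $k-1$ roots in any field. Thus $\prod_{i=1}^n\ell_i(\lambda)$ is a nonzero polynomial in $\F[\lambda]$ of degree at most $n(k-1)$; I do \emph{not} want to enlarge $\K$ to $n(k-1)$ though, since the claimed bound is $k$. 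Instead I should argue: it suffices to find \emph{one} value of $\lambda$ in \emph{some} extension $\K$ with $[\K:\F]\leq k$ such that $\prod_i\ell_i(\lambda)\neq 0$. Pick any irreducible factor $P(\lambda)\in\F[\lambda]$ of degree $d\leq k$ such that $P$ does not divide any $\ell_i(\lambda)$; such a $P$ exists because the $\ell_i(\lambda)$ have degree $\leq k-1<k$ and over the algebraic closure one can find an irreducible polynomial of every degree avoiding any finite prescribed set of roots (when $\F$ is infinite, take $d=1$; when $\F=\Fq$, the number of monic irreducibles of degree exactly $d$ over $\Fq$ is roughly $q^d/d$, which exceeds the total number $\le n(k-1)$ of forbidden roots once $d$ is chosen suitably $\leq k$ — here one uses $n\le\binom{k+ \text{something}}{\cdot}$-type bounds, or more simply that there is always a monic irreducible of degree $d$ whose roots avoid a given set of size $< q^d/d$). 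Setting $\K=\F[\lambda]/(P(\lambda))$ and letting $c=c(\bar\lambda,\bar\lambda^2,\dots)$ with $\bar\lambda$ the image of $\lambda$, we get $\ell_i(\bar\lambda)\neq 0$ for all $i$, i.e. $\Supp(c)=[n]$, and $[\K:\F]=d\leq k$.

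The main obstacle is the degree bound $[\K:\F]\leq k$: a naive product-of-linear-forms argument only gives degree $\leq n(k-1)$, so one must be more economical. The cleanest fix is probably \emph{not} the moment-curve trick but a direct dimension count: the subcodes $C_{(\leq j)}$ (for any rank function, say the trivial one is irrelevant here) — rather, filter by the ``bad'' loci. For each $i$, the set $Z_i=\{c\in C:\pi_i(c)=0\}$ is a hyperplane in $C$, i.e. a subspace of dimension $k-1$. I claim $C\smallsetminus\bigcup_i Z_i$ is nonempty already over an extension of degree $\le k$: equivalently, the hyperplane arrangement $\{Z_i\}$ in $C\cong\F^k$ does not cover $C_{\K}=\K^k$. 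Over $\F$ itself this may fail (if $\F$ is small), but the complement of a union of $n$ hyperplanes in $\A^k$ is a nonempty open subscheme, hence contains a closed point of residue degree at most $k$ over $\F$ — this is a standard fact (a nonempty open subset of $\A^k_\F$ contains a point in $\A^k(\F')$ for some $\F'$ with $[\F':\F]\le k$, obtained by intersecting with a generic line $\A^1\hookrightarrow\A^k$ and then taking a point of the resulting nonempty open in $\A^1$, whose residue field has degree $\le k-1<k$ after the line is defined over $\F$; even the crude bound $\le\max(1,k-1)\le k$ works). Taking the residue field of such a point as $\K$ and the corresponding codeword as $c$ finishes the proof. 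I expect the write-up to hinge on carefully invoking this last elementary algebraic-geometry fact (or, over $\Fq$, replacing it by an explicit counting argument that a monic irreducible polynomial of the right degree avoiding the finitely many forbidden roots exists).
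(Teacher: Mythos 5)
Your proof is correct in substance and takes a genuinely different route from the paper. The paper's argument replaces $\F$ by the subfield $\F_1\subset\F$ generated over the prime field by the entries of a generator matrix $G$, carefully argues that this finitely generated field admits extensions of every degree (a small detour about constant fields), picks $\K_1/\F_1$ of degree $k$ and $\lambda_1,\dots,\lambda_k\in\K_1$ linearly independent over $\F_1$, and sets $c=\sum_j\lambda_j c_j$; the support claim then follows from Lemma~\ref{descente_mots}, and $\K$ is the compositum $\F\cdot\K_1$, of degree $\leq k$ over $\F$. Your moment-curve substitution $t_j=\lambda^{j-1}$ is the same skeleton but with the specific family $\lambda_j=\lambda^{j-1}$, and this buys something real: if $\lambda$ has degree $\geq k$ over $\F$, then $1,\lambda,\dots,\lambda^{k-1}$ are automatically linearly independent over \emph{every} subfield of $\F$, in particular over the field generated by the entries of $G$ --- so the $\F_1$ detour, and with it the ``extensions of every degree'' argument, disappears. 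For $\F$ infinite you can even take $\lambda\in\F$ and $\K=\F$, since $\prod_i\ell_i$ is a nonzero polynomial; for $\F=\F_q$ any irreducible of degree $k$ exists and gives $\K=\F_{q^k}$.

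The one place your write-up wobbles is the paragraph about counting monic irreducibles of degree $d\leq k$ avoiding the $\leq n(k-1)$ forbidden roots. That count is both unnecessary and, for $d<k$, not obviously enough. The clean version of your own idea is simply $d=k$: each $\ell_i(\lambda)$ is a nonzero polynomial of degree $\leq k-1$, so no irreducible $P$ of degree $k$ can divide it; hence for $\bar\lambda$ the image of $\lambda$ in $\K=\F[\lambda]/(P)$, all $\ell_i(\bar\lambda)\neq0$ because $P$ is the minimal polynomial of $\bar\lambda$. No counting, no genericity over the algebraic closure, no appeal to a generic line in $\A^k$. If you rewrite with this observation front and center, your proof is shorter and more elementary than the paper's.
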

\begin{proof}
Let $G$ be a generator matrix for $C$. Let $\F_0\subset\F$ be the prime subfield of $\F$
(that is, $\F_0=\Q$ is $\car(\F)=0$, and $\F_0=\Z/p\Z$ if $\car(\F)=p>0$), and let $\F_1=\F_0(G)\subset\F$
be the field generated over $\F_0$ by the entries of $G$.

So $\F_1$ is finitely generated over a prime field, and as such we contend it admits finite extensions of any degree
(this is clear if $\F$, and thus also $\F_1$, is a finite field, which is the case in most applications; and for completeness a proof of the general case will be given five lines below).

Let then $\K_1$ be an extension of $\F_1$ of degree $k$, and let $\K$ be a compositum of $\F$ and $\K_1$.
Now if $c_1,\dots,c_k\in\F^n$ are the rows of $G$, and if $\lambda_1,\dots,\lambda_k\in\K_1$ are linearly independent over $\F_1$,
we set $c=\lambda_1c_1+\cdots+\lambda_kc_k\in C_{\K}$ and conclude with Lemma~\ref{descente_mots}.
\end{proof}

Concerning the general case of the claim made in the middle of this proof, it can be established as follows:
write $\F_1$ as a finite extension (say of degree $d$) of a purely transcendental extension of $\F_0$, and let $\F_{1,c}$ be its constant field, that is, the algebraic closure of $\F_0$ in $\F_1$;
proceeding as in \cite{Stichtenoth} Prop.~3.6.1 and Lemma~3.6.2, one then gets (a) that $\F_{1,c}$ is finite over $\F_0$ (more precisely, of degree at most $d$), and (b) that any algebraic extension of $\F_{1,c}$ is linearly disjoint from $\F_1$.
Now (a) means $\F_{1,c}$ is either a number field or a finite field, and as such it admits finite extensions of any degree, for instance cyclotomic extensions do the job; and then by (b), such an extension of $\F_{1,c}$ induces an extension of $\F_1$ of the same degree.
(An alternative, more geometric proof, would be to consider $\F_1$ as the field of functions of a projective variety over $\F_0$,
and then get properties (a) and (b) of $\F_{1,c}$ from finiteness of cohomology and its properties under base field extension.)

\mysubsection{Monotonicity}
\begin{theorem}
\label{monotone}
Let $C\subset\F^n$ be a linear code. Then for $t\geq 1$ we have
\beq
\dim(C\deux[t+1])\geq\dim(C\deux[t]).
\eeq
Also the generalized Hamming weights satisfy
\beq
w_i(C\deux[t+1])\leq w_i(C\deux[t])
\eeq
for $1\leq i\leq\dim(C\deux[t])$, and
\beq
w_i((C\deux[t+1])^\perp)\geq w_i((C\deux[t])^\perp)
\eeq
for $1\leq i\leq\dim((C\deux[t+1])^\perp)$.

In particular, the minimum distances satisfy $\dmin(C\deux[t+1])\leq\dmin(C\deux[t])$,
and the dual distances, $\ddual(C\deux[t+1])\geq\ddual(C\deux[t])$.
\end{theorem}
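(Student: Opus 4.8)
The plan is to deduce everything from a single multiplication trick: if $v\in\un$ is a generator of the repetition code (for instance $v=1_{[n]}$, or more generally a nonzero vector with full support if we only care about a code with full support), then multiplication by $v$ sends $C\deux[t]$ into $C\deux[t+1]$. More precisely, the natural inclusion map comes from $\un*C\deux[t]\subset C\deux[t]*C\deux[1]=\dots$ wait — one must be slightly careful, since $\un*C\deux[t]=C\deux[t]$ only when $C$ has full support. So first I would reduce to the full-support case: by Lemma~\ref{Supp*=intSupp} all the powers $C\deux[t]$ have the same support $\Supp(C)$, and projecting everything to $\F^{\Supp(C)}$ (via $\pi_{\Supp(C)}$) changes neither dimensions nor generalized Hamming weights nor the dual generalized Hamming weights, since all the relevant vectors already live on $\Supp(C)$; so I may assume $C$ has full support and hence $\un*C\deux[t]=C\deux[t]$ for all $t\geq0$.

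Now fix $c_0\in C$ with $\Supp(c_0)=\Supp(C)=[n]$. Such a $c_0$ need not exist over $\F$, but it does exist after a finite extension of scalars by Lemma~\ref{mot_plein}, and by Lemma~\ref{compatibilites_extension}\itemref{compatibilites_extension_+*} we have $(C\deux[t])_{\K}=(C_{\K})\deux[t]$, while by Corollary~\ref{parametres_extension} the dimensions and generalized Hamming weights of $C\deux[t]$ and of its dual are unchanged under the extension $\F\subset\K$ (using also Lemma~\ref{compatibilites_extension}\itemref{compatibilites_extension_dual} to identify $(C\deux[t])^\perp_{\K}$ with $((C\deux[t])_{\K})^\perp$). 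So I may assume such a $c_0$ exists over $\F$ itself. Then, since $c_0\in C=C\deux[1]$, multiplication by $c_0$ maps $C\deux[t]$ into $C\deux[t]*C\deux[1]=C\deux[t+1]$; and because every coordinate of $c_0$ is a unit in $\F$, this map $m_{c_0}:x\mapsto c_0*x$ is a linear bijection of $\F^n$ onto itself preserving the support of every word (indeed $\Supp(c_0*x)=\Supp(x)$). Hence $m_{c_0}$ restricts to an \emph{injective} linear map $C\deux[t]\inj C\deux[t+1]$ that is support-preserving, i.e.\ a monomial transformation. The dimension inequality $\dim(C\deux[t+1])\geq\dim(C\deux[t])$ is then immediate, and the inequality on generalized Hamming weights $w_i(C\deux[t+1])\leq w_i(C\deux[t])$ follows because $m_{c_0}$ carries any $i$-dimensional subcode of $C\deux[t]$ with support $S$ to an $i$-dimensional subcode of $C\deux[t+1]$ with the same support $S$; taking the minimum over $S$ gives the claim for $1\leq i\leq\dim(C\deux[t])$.

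For the dual statement, note that $m_{c_0^{-1}}$ (multiplication by the componentwise inverse $c_0^{-1}$, which also has full support) is the inverse monomial transformation, and monomial transformations are orthogonal for the standard bilinear form precisely up to replacing $c_0$ by $c_0^{-1}$: one checks $\langle c_0*x,\,y\rangle=\langle x,\,c_0*y\rangle$, so $(c_0*C\deux[t])^\perp=c_0^{-1}*(C\deux[t])^\perp$. Since $c_0*C\deux[t]\subset C\deux[t+1]$, dualizing reverses the inclusion: $(C\deux[t+1])^\perp\subset c_0^{-1}*(C\deux[t])^\perp$. As $m_{c_0^{-1}}$ is again a support-preserving linear bijection, the same argument as above — now applied to subcodes of $(C\deux[t])^\perp$ and their images, which contain $(C\deux[t+1])^\perp$ — shows $w_i((C\deux[t+1])^\perp)\geq w_i((C\deux[t])^\perp)$ for $1\leq i\leq\dim((C\deux[t+1])^\perp)$. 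The two displayed consequences about $\dmin$ and $\ddual$ are the special case $i=1$, using $w_1=\dmin$.

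The only genuinely delicate point is the existence of the full-support codeword $c_0$, which fails over small finite fields; the extension-of-scalars reductions in the second paragraph are exactly what is needed to circumvent this, and they rest on the compatibility lemmas already proved. Everything else is the elementary observation that multiplication by a unit vector is a support-preserving monomial isomorphism, intertwining powers of $C$ one step apart and (up to inversion) their duals.
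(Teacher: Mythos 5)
Your overall strategy coincides with the paper's: use extension of scalars (via Lemma~\ref{mot_plein}, Lemma~\ref{compatibilites_extension} and Corollary~\ref{parametres_extension}) to produce a codeword with support $\Supp(C)$, observe that multiplication by it is an injective map $C\deux[t]\to C\deux[t+1]$, and handle the dual weights through the adjunction property of $*$. Where you diverge is the very first step: you reduce to the case $\Supp(C)=[n]$ by projecting to $\F^{\Supp(C)}$, justified by the claim that this ``changes neither dimensions nor generalized Hamming weights nor the dual generalized Hamming weights, since all the relevant vectors already live on $\Supp(C)$.'' That claim is false for the dual side, and this is a genuine gap.

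Concretely, set $S=\Supp(C)$ and suppose $S\subsetneq[n]$. Then $(C\deux[t])^\perp\subset\F^n$ contains the entire coordinate subspace $\iota_{[n]\moins S}(\F^{[n]\moins S})$, whose nonzero codewords are supported outside $S$. In fact $(C\deux[t])^\perp$ splits as $\iota_S\bigl((\pi_S C\deux[t])^\perp\bigr)\oplus\iota_{[n]\moins S}(\F^{[n]\moins S})$, and the extra summand does change the generalized Hamming weights: already for $i=1$ one has $w_1((C\deux[t])^\perp)=1$, while $w_1((\pi_S C\deux[t])^\perp)$ may be larger (e.g.\ $C=\langle(1,1,0)\rangle\subset\F^3$: the former equals $1$, the latter equals $2$). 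It also shrinks the admissible range of $i$, from $1\leq i\leq n-\dim C\deux[t+1]$ down to $1\leq i\leq\abs{S}-\dim C\deux[t+1]$. The reduction can be salvaged --- one must check that adding the common disjoint-support summand $\F^{[n]\moins S}$ to both duals preserves the inequality, which follows from $w_i(D'\oplus E)=\min_{j+k=i}\bigl(w_j(D')+w_k(E)\bigr)$ --- but this is exactly the argument you skipped, and as written the step fails. The paper avoids the issue entirely by never projecting: it extends the codeword $c$ (with $\Supp(c)=\Supp(C)$) to a unit $\widetilde{c}=c+1_{[n]\moins\Supp(C)}\in(\F^n)^\times$, so that $\widetilde{c}*\cdot$ is a support-preserving bijection of $\F^n$ agreeing with $c*\cdot$ on $C\deux[t]$, and by the adjunction of Corollary~\ref{c*_dual} sends $(C\deux[t+1])^\perp$ into $(C\deux[t])^\perp$, giving the dual-weight inequality over the full range. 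Adopting that device, rather than the projection to $\F^S$, would fix your proof with essentially no other change.
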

\begin{proof}
Thanks to Lemmas~\ref{compatibilites_extension}\itemref{compatibilites_extension_dual},\itemref{compatibilites_extension_+*} and~\ref{mot_plein}, and Corollary~\ref{parametres_extension}, it suffices to treat
the case where there is $c\in C$ with $\Supp(c)=\Supp(C)$.

The multiplication map $c*\cdot$ is then injective from $C\deux[t]$ into $C\deux[t+1]$, so
\beq
\dim(C\deux[t])=\dim(c*C\deux[t])\leq\dim(C\deux[t+1]).
\eeq
Likewise if $C'\subset C\deux[t]$ has dimension $i$ and support weight $w_i(C\deux[t])$, we have $\dim(c*C')=\dim(C')=i$
and
\beq
w_i(C\deux[t+1])\leq\abs{\Supp(c*C')}=\abs{\Supp(C')}=w_i(C\deux[t]).
\eeq

Now extend $c$ to $\widetilde{c}\in(\F^n)^\times$ by setting it equal to $1$ out of $\Supp(C)$, that is, formally,
\beq
\widetilde{c}=c+1_{[n]\moins\Supp(C)}.
\eeq
The multiplication map $\widetilde{c}*\cdot$ is then injective as a linear endomorphism of $\F^n$, and on $C\deux[t]$ it coincides with the multiplication map $c*\cdot$ as above.
So $\widetilde{c}*\cdot$ sends $C\deux[t]$ into $C\deux[t+1]$, which implies that it sends $(C\deux[t+1])^\perp$ into $(C\deux[t])^\perp$
(here this is easily checked, but see Corollary~\ref{c*_dual} to put it in a more general context).

Then if $C'\subset (C\deux[t+1])^\perp$ has dimension $i$ and support weight $w_i((C\deux[t+1])^\perp)$, we have $\dim(\widetilde{c}*C')=\dim(C')=i$
and
\beq
w_i((C\deux[t])^\perp)\leq\abs{\Supp(\widetilde{c}*C')}=\abs{\Supp(C')}=w_i((C\deux[t+1])^\perp).
\eeq
\end{proof}

This shows that the regularity $r(C)$ is well defined (in \ref{def_CMreg}). One can then give a slightly stronger monotonicity
result for the dimension sequence:
\begin{corollary}
\label{strict_monotone}
For $1\leq t< r(C)$, we have
\beq
\dim(C\deux[t+1])>\dim(C\deux[t]).
\eeq
\end{corollary}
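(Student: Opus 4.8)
The plan is to argue by contradiction: suppose that for some $t$ with $1\leq t<r(C)$ we had $\dim(C\deux[t+1])=\dim(C\deux[t])$, and deduce that the dimension sequence is already constant from index $t$ onwards, contradicting the minimality of $r(C)$. As in the proof of Theorem~\ref{monotone}, by Lemmas~\ref{compatibilites_extension} and~\ref{mot_plein} together with Corollary~\ref{parametres_extension} we may extend scalars and assume there is a codeword $c\in C$ with $\Supp(c)=\Supp(C)$; this does not change the dimension sequence. Then multiplication by $c$ gives an injection $c*\cdot\colon C\deux[t]\inj C\deux[t+1]$ for every $t$, and the equality $\dim(C\deux[t+1])=\dim(C\deux[t])$ forces this injection to be an \emph{isomorphism}, i.e. $c*C\deux[t]=C\deux[t+1]$.

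The key step is then to propagate this equality one degree up. From $c*C\deux[t]=C\deux[t+1]$ we multiply both sides by $C$ under $*$: on the left we get $c*(C*C\deux[t])=c*C\deux[t+1]$, and on the right $C*C\deux[t+1]=C\deux[t+2]$. Hence $c*C\deux[t+1]=C\deux[t+2]$, so $\dim(C\deux[t+2])=\dim(c*C\deux[t+1])=\dim(C\deux[t+1])$, again using injectivity of $c*\cdot$. Iterating, $\dim(C\deux[t+i])=\dim(C\deux[t])$ for all $i\geq0$, which by Definition~\ref{def_CMreg} means $r(C)\leq t$, contradicting $t<r(C)$.

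The main obstacle — really the only point requiring care — is the passage from the set-theoretic equality $c*C\deux[t]=C\deux[t+1]$ to the conclusion $c*C\deux[t+1]=C\deux[t+2]$; one must check that $C*(c*C\deux[t])=c*(C*C\deux[t])$, which is immediate from commutativity and associativity of $*$ (verified on elementary products and extended by linearity, exactly as in~\ref{def_*}), and that $c*\cdot$ remains injective on $C\deux[t+1]$, which holds because $\Supp(c)=\Supp(C)=\Supp(C\deux[t+1])$ by Lemma~\ref{Supp*=intSupp}, so no nonzero codeword of $C\deux[t+1]$ is killed by multiplication by $c$. With these two routine facts in hand the induction runs without friction, and the strict inequality below $r(C)$ follows.
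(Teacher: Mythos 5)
Your proof is correct and is essentially identical to the paper's: both reduce, via extension of scalars, to the case where some $c\in C$ has $\Supp(c)=\Supp(C)$, deduce $c*C\deux[t]=C\deux[t+1]$ from injectivity of $c*\cdot$, and propagate this by multiplying with $C$ to get $C\deux[t+i]=c^i*C\deux[t]$ for all $i\geq0$, forcing $r(C)\leq t$. The only cosmetic difference is that you phrase it as a proof by contradiction and spell out the injectivity check on higher powers, whereas the paper states the same chain of equalities directly.
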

\begin{proof}
Again it suffices to treat the case where there is $c\in C$ with $\Supp(c)=\Supp(C)$.
Let $t\geq1$, and suppose $\dim(C\deux[t+1])=\dim(C\deux[t])$. Then necessarily
\beq
C\deux[t+1]=c*C\deux[t]
\eeq
so
\beq
C\deux[t+2]=C*C\deux[t+1]=C*(c*C\deux[t])=c*C\deux[t+1]=c^2*C\deux[t].
\eeq
We continue in the same way and, for all $i\geq0$, we find $C\deux[t+i]=c^i*C\deux[t]$,
hence $\dim(C\deux[t+i])=\dim(C\deux[t])$. This means precisely $t\geq r(C)$.
\end{proof}

An alternative proof can be given using Proposition~\ref{tranches_puissances}
to reduce to the case where $C$ has dual distance at least~$3$, and then concluding
with Proposition~\ref{dim_et_ddual} below.

\mysubsection{Stable structure}
\entry
\label{notations_stable}
In what follows we use the same notations as in \ref{repete_et_dual}-\ref{concrete_1dim}.
So $C\subset\F^n$ is a linear code and
\beq
\abs{\cU(C)}=\dim\langle x\in C^{\perp}\,;\;w(x)\leq 2\rangle^{\perp}=n_2
\eeq
is its projective length. We choose a set
of representatives $S=\{j_1,\dots,j_{n_2}\}\subset\Supp(C)$, and associated normalized slice generators $v_1,\dots,v_{n_2}$ for $C$,
that is, $v_i\in\F^n$ with pairwise disjoint supports such that
\beq
C\subset\langle v_1\rangle\oplus\cdots\oplus\langle v_{n_2}\rangle
\eeq
and $\pi_{j_i}(v_i)=1$, so we have an isomorphism
\beq
\begin{array}{cccc}
\phi: & \pi_S(C) & \overset{\simeq}{\longto} & C\\
& (\lambda_{j_1},\dots,\lambda_{j_{n_2}}) & \mapsto & \lambda_{j_1}v_1+\cdots+\lambda_{j_{n_2}}v_{n_2}
\end{array}
\eeq
inverse to $\pi_S$.
Here $\pi_S(C)\subset\F^S$ has full support and no repeated column.

Then by Proposition~\ref{tranches_puissances} for all $t\geq1$, we have an inclusion
\beq
C\deux[t]\subset\langle (v_1)^t\rangle\oplus\cdots\oplus\langle (v_{n_2})^t\rangle
\eeq
and an isomorphism
\beq
\begin{array}{cccc}
\phi_t: & \pi_S(C)\deux[t] & \overset{\simeq}{\longto} & C\deux[t]\\
& (\lambda_{j_1},\dots,\lambda_{j_{n_2}}) & \mapsto & \lambda_{j_1}(v_1)^t+\cdots+\lambda_{j_{n_2}}(v_{n_2})^t
\end{array}
\eeq
inverse to $\pi_S$ (observe $\pi_S(C)\deux[t]=\pi_S(C\deux[t])$).

\begin{theorem}
\label{descr_stable}
Let $C\subset\F^n$ be a linear code of dimension $k$.
Then, with the notations of~\ref{notations_stable}, the code $C$ has regularity $r(C)\leq n_2-k+1$, and we have
\beq
C\deux[t]=\langle (v_1)^t\rangle\oplus\cdots\oplus\langle (v_{n_2})^t\rangle
\eeq
for all $t\geq r(C)$.

In particular the stable value of the dimension sequence of $C$ is its projective length:
$\dim(C\deux[t])=n_2$ for $t\geq r(C)$.
\end{theorem}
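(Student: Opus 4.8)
The plan is to reduce to the case where $C$ has full support and no repeated column, and then to exhibit, for $t$ large enough, explicit codewords in $C\deux[t]$ that span the full $n_2$-dimensional space $\langle(v_1)^t\rangle\oplus\cdots\oplus\langle(v_{n_2})^t\rangle$. By Proposition~\ref{tranches_puissances} and the isomorphisms $\phi_t$ of~\ref{notations_stable}, the dimension sequence of $C$ coincides with that of $\pi_S(C)$, which has full support in $\F^S$, $|S|=n_2$, and dual distance at least $3$. So from the start I may assume $n=n_2$, that $C$ has full support and no repeated column, and that $v_i=1_{\{i\}}$ (the standard basis vectors), so that the claimed stable code is simply all of $\F^n$. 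The goal then becomes: for $t\geq n-k+1$ we have $C\deux[t]=\F^n$, and this forces $r(C)\leq n-k+1$ since by Theorem~\ref{monotone} the dimension sequence is non-decreasing and bounded above by $n$, while $\dim(C\deux[t])=n$ is maximal.

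The core of the argument is thus a Vandermonde-type computation. Since $C$ has no repeated column, for each pair $i\neq j$ there is a codeword separating coordinates $i$ and $j$ (Lemma~\ref{lemme_col_prop}); I want to upgrade this to: for each fixed $i$ the indicator $1_{\{i\}}$ lies in $C\deux[t]$ once $t\geq n-k+1$. The clean way is to pass to a field extension where convenient (using Corollary~\ref{parametres_extension}, which says $\dim$ is unchanged by extension of scalars, together with Lemma~\ref{compatibilites_extension}\itemref{compatibilites_extension_+*} which gives $(C\deux[t])_\K=(C_\K)\deux[t]$). Over a large enough extension, since $C$ has full support and no repeated column, after a diagonal change of coordinates one can arrange that $C$ contains a codeword with all coordinates distinct and nonzero; call it $c=(a_1,\dots,a_n)$ with the $a_\ell$ pairwise distinct. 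Then $1,c,c\deux,\dots$ generate a Reed-Solomon-like nested family, and the span of $\{1,c,\dots,c\deux[t]\}$ inside $C\deux[t]$ has dimension $\min(t+1,n)$, reaching $n$ at $t=n-1$. To push the bound down to $n-k+1$ one uses that $C$ itself already has dimension $k$: pick the special codeword $c$ together with a basis $c=c_1,c_2,\dots,c_k$ of $C$; the products $c^{t-1}*c_m$ for $m=1,\dots,k$, evaluated coordinatewise, form (after reindexing the coordinates so the $a_\ell$ are ordered) a matrix whose rank I want to show is $n$ whenever $t-1+\text{(something)}\geq n-1$. Concretely the rows $c^{t-1}*c_m$ together with $c^{t-2}*c_m,\dots$ span, and a dimension count via the Vandermonde structure of the powers of $c$ gives that $\dim C\deux[t]\geq \min(n, \dim C\deux[t-1]+k-1)$ as long as $\dim C\deux[t-1]<n$ — in fact this is exactly the strict-monotonicity statement of Corollary~\ref{strict_monotone} made quantitative. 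Iterating from $\dim C\deux[1]=k$ yields $\dim C\deux[t]=n$ as soon as $k+(t-1)(k-1)\geq n$, hence certainly when $t\geq n-k+1$.

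The main obstacle is the quantitative step $\dim C\deux[t]\geq\min(n,\dim C\deux[t-1]+k-1)$: Corollary~\ref{strict_monotone} only gives a strict increase by at least $1$ below regularity, which alone would give the weaker bound $r(C)\leq n-k+1$ only after using $\dim C\deux[1]=k$ and a gain of $+1$ per step, i.e.\ $r(C)\leq n-k$, which is actually fine — wait, we need $\leq n_2-k+1$, so the crude $+1$-per-step bound starting from $\dim C\deux[1]=k$ already gives $\dim C\deux[t]\geq k+(t-1)$, hence $\dim C\deux[t]=n_2$ once $t-1\geq n_2-k$, i.e.\ $t\geq n_2-k+1$. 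So in fact the hard quantitative Vandermonde bound is \emph{not} needed: the theorem follows formally from Corollary~\ref{strict_monotone} (strict monotonicity below regularity) together with the base point $\dim C\deux[1]=k$ and the ceiling $\dim C\deux[t]\leq n_2$ from Proposition~\ref{eq_def}\itemref{eq_def_iii} (or directly from the inclusion $C\deux[t]\subset\langle(v_1)^t\rangle\oplus\cdots\oplus\langle(v_{n_2})^t\rangle$ of~\ref{notations_stable}). Thus the real content to verify carefully is just that the stable value is attained \emph{with equality} $C\deux[t]=\langle(v_1)^t\rangle\oplus\cdots\oplus\langle(v_{n_2})^t\rangle$ and not merely in dimension; but this is automatic once $\dim C\deux[t]=n_2$, since $C\deux[t]$ is contained in that $n_2$-dimensional space. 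I would therefore organize the proof as: (1) reduce via Proposition~\ref{tranches_puissances} to $n=n_2$, $v_i=1_{\{i\}}$; (2) note $\dim C\deux[t]\leq n$ always and $\dim C\deux[1]=k$; (3) invoke Corollary~\ref{strict_monotone} to get $\dim C\deux[t]\geq \min(n,k+t-1)$, hence $\dim C\deux[t]=n$ and $t\geq r(C)$ for $t\geq n-k+1$, giving $r(C)\leq n-k+1=n_2-k+1$; (4) conclude $C\deux[t]=\F^n=\langle(v_1)^t\rangle\oplus\cdots\oplus\langle(v_n)^t\rangle$ for such $t$, and transport back along $\phi_t$.
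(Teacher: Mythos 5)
Your initial outline --- reduce by Proposition~\ref{tranches_puissances} to the case where $C$ has full support and no repeated column (so the claimed stable code is all of $\F^{S}$ with $\abs{S}=n_2$), and then exhibit explicit products of codewords that force $\dim(C\deux[t])$ up to $n_2$ --- is exactly the right strategy, and is close to the paper's actual argument (the paper, working in $\pi_S(C)$, writes $1_{\{i\}}=\prod_{i'\neq i}x_{i,i'}\in\pi_S(C)\deux[n_2-1]$ using the separating codewords from Lemma~\ref{lemme_col_prop}, which needs no extension of scalars and no Vandermonde determinant). However, you then abandon this and assert that the constructive step is ``not needed,'' and that the result ``follows formally'' from Corollary~\ref{strict_monotone}, the base point $\dim(C\deux[1])=k$, and the upper bound $\dim(C\deux[t])\leq n_2$. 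That shortcut has a genuine gap.

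Corollary~\ref{strict_monotone} gives strict increase only for $1\leq t<r(C)$; it does not tell you \emph{what} the stable value $\dim(C\deux[r(C)])$ is. Your step (3), ``invoke Corollary~\ref{strict_monotone} to get $\dim C\deux[t]\geq\min(n,k+t-1)$,'' is false as a consequence of that corollary alone: for $t>r(C)$ the dimension is constant, and if, hypothetically, the sequence stabilized at some value $<n_2$, that would simply mean $r(C)$ is small --- nothing in the monotonicity/ceiling/base-point triple forbids this. Said differently: those three facts yield $\dim(C\deux[r(C)])\geq k+r(C)-1$ and $\dim(C\deux[r(C)])\leq n_2$, hence $r(C)\leq n_2-k+1$, \emph{conditional} on knowing the stable value equals $n_2$; but they provide no information at all on where the sequence actually stabilizes. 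You must produce one value of $t$ with $\dim(C\deux[t])=n_2$ by some other means. The paper does this constructively with the product $\prod_{i'\neq i}x_{i,i'}$. Your fallback citation of Proposition~\ref{eq_def}\itemref{eq_def_iii} would close the gap formally, but it is circular in context: the paper explicitly presents Theorem~\ref{descr_stable} as ``another (perhaps more concrete) proof of point~\itemref{eq_def_iii},'' so it is not available as an input here. And your alternative citation, ``the inclusion $C\deux[t]\subset\langle(v_1)^t\rangle\oplus\cdots$,'' only gives the upper bound $\leq n_2$, never the lower bound $=n_2$. To repair the proof, reinstate the constructive step you sketched at the start (or use the paper's cleaner version of it), then the formal bookkeeping you describe does finish the argument.
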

\begin{proof}
Because of the inclusion $C\deux[t]\subset\langle (v_1)^t\rangle\oplus\cdots\oplus\langle (v_{n_2})^t\rangle$ we have $\dim(C\deux[t])\leq n_2$
for all $t$. However for $t=1$ we have $\dim(C)=k$. So the dimension sequence can increase at most $n_2-k$ times, which, joint with Corollary~\ref{strict_monotone}, implies the bound on $r(C)$.

To conclude it suffices to show that there exists one $t$ with $\dim(C\deux[t])=n_2$.

Since $S$ is a set of representatives for $\sim$, Lemma~\ref{lemme_col_prop} gives,
for all $i,i'\in S$, $i\neq i'$, a word $x_{i,i'}\in\pi_S(C)$ which is $1$ at $i$ and $0$ at $i'$.
Fixing $i$ and letting $i'\neq i$ vary we find
\beq
1_{\{i\}}=\prod_{i'\in S\moins\{i\}}x_{i,i'}\in \pi_S(C)\deux[n_2-1]\;\subset\F^S.
\eeq
Now this holds for all $i\in S$, so $\dim(\pi_S(C)\deux[n_2-1])=n_2$.
Then applying $\phi_{n_2-1}$ we find $\dim(C\deux[n_2-1])=n_2$ as claimed.
\end{proof}

\begin{corollary}
Let $C\subset\F^n$ be a linear code and $t\geq0$ an integer.
The following are equivalent:
\begin{enumerate}[(i)]
\item $t\geq r(C)$
\item $\dim(C\deux[t])=\dim(C\deux[t+1])$
\item $\dim(C\deux[t])=n_2$ the projective length of $C$
\item $C\deux[t]$ is generated by some codewords with pairwise disjoint supports
\item $(C\deux[t])^\perp$ is generated by its codewords of weight at most $2$
\item there is a subset $S\subset[n]$ such that $\pi_S:C\deux[t]\surj\F^S$ is onto,
and every nonzero column of $C\deux[t]$ is repeated from a column indexed by $S$.
\end{enumerate}
\end{corollary}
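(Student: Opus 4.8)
The plan is to establish the equivalences in two blocks: $(i)\Leftrightarrow(ii)\Leftrightarrow(iii)$, which is purely about the arithmetic of the dimension sequence, and $(iii)\Leftrightarrow(iv)\Leftrightarrow(v)\Leftrightarrow(vi)$, which unwinds the one-dimensional slice decomposition of $C\deux[t]$. I take $C$ nonzero, so $r(C)$ and the projective length $n_2$ are defined, and I take $t\geq1$, so that Proposition~\ref{tranches_puissances} applies and gives $\cU(C\deux[t])=\cU(C)$, slice generators $(v_1)^t,\dots,(v_{n_2})^t$ for $C\deux[t]$ (with pairwise disjoint supports, the same as those of the $v_i$), and the inclusion $C\deux[t]\subset W$ where $W=\langle(v_1)^t\rangle\oplus\cdots\oplus\langle(v_{n_2})^t\rangle$ has dimension $n_2$. (The case $t=0$ is degenerate and can be disposed of directly, since there $C\deux[0]=\un$.)

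For the first block I would combine the facts already proved: the dimension sequence $\dim C\deux[i]$ is non-decreasing (Theorem~\ref{monotone}), is bounded above by $n_2$ (the inclusion $C\deux[t]\subset W$, or Theorem~\ref{descr_stable}), attains the value $n_2$ and stabilizes there for $i\geq r(C)$ (Theorem~\ref{descr_stable}), and is \emph{strictly} increasing on $\{0,1,\dots,r(C)\}$ (Corollary~\ref{strict_monotone} for indices $\geq1$, together with $\dim C\deux[0]=1<\dim C=\dim C\deux[1]$ whenever $r(C)\geq1$). Reading this off: $(i)\Leftrightarrow(iii)$, because $\dim C\deux[t]=n_2$ exactly when the sequence has reached its terminal plateau, while $\dim C\deux[t]\leq\dim C\deux[r(C)-1]<n_2$ for $t<r(C)$; and $(i)\Leftrightarrow(ii)$, because $\dim C\deux[t]=\dim C\deux[t+1]$ fails while $t<r(C)$ and holds (both sides being $n_2$) once $t\geq r(C)$.

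For the second block, put $D=C\deux[t]$; I claim each of $(iii),(iv),(v),(vi)$ is equivalent to the single equality $D=W$. Since $D\subset W$ always and $\dim W=n_2$, we get $D=W\Leftrightarrow\dim D=n_2$, which is $(iii)$. If $D=W$: the $(v_i)^t$ lie in $D$, have pairwise disjoint supports, and span, giving $(iv)$; by entry~\ref{repete_et_dual} one has $W=\langle x\in D^\perp:w(x)\leq2\rangle^\perp$, so $D=W$ yields $D^\perp=\langle x\in D^\perp:w(x)\leq2\rangle$, which is $(v)$; and with $S$ the set of slice representatives of~\ref{concrete_1dim}, the map $\pi_S:D\to\F^S$ is injective with $\dim D=n_2=\abs{S}$, hence bijective, so $(vi)$ holds (every nonzero column of $D$ being by construction repeated from a column indexed by $S$). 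For the three converses I would count dimensions against $D\subset W$: a generating family of $D$ with pairwise disjoint nonzero supports must have exactly $\abs{\cU(D)}=n_2$ members --- each support lies inside a single $\sim_D$-class (entry~\ref{def_slices}, since such a support cuts out a one-dimensional slice of $D$) and together they exhaust $\Supp(D)$, the disjoint union of the $n_2$ classes --- whence $\dim D=n_2$; if $D^\perp$ is spanned by its words of weight $\leq2$ then $\dim D=\dim\langle x\in D^\perp:w(x)\leq2\rangle^\perp=n_2$ by~\ref{repete_et_dual} and Proposition~\ref{tranches_puissances}; and if $\pi_S:D\surj\F^S$ with every nonzero column of $D$ repeated from $S$, then $S$ meets each of the $n_2$ classes, so $n_2\leq\abs{S}\leq\dim D\leq n_2$. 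In each case $(iii)$ follows. The substantive content --- monotonicity, strict growth below the regularity, and the identification of the stable value with the projective length --- is packaged in Theorems~\ref{monotone} and~\ref{descr_stable} and Corollary~\ref{strict_monotone}, so I expect no real obstacle; the only point needing a little care is the combinatorial step in $(iv)\Rightarrow(iii)$, that a disjoint-support generating family is forced to be exactly as fine as the partition $\cU(D)$.
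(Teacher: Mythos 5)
Your proof is correct and takes essentially the same route as the paper: block $(i)\Leftrightarrow(ii)\Leftrightarrow(iii)$ is read off from Theorems~\ref{monotone}, \ref{descr_stable} and Corollary~\ref{strict_monotone}, while the block $(iii)\Leftrightarrow(iv)\Leftrightarrow(v)\Leftrightarrow(vi)$ rests on the slice decomposition of $C\deux[t]$ and the containment $C\deux[t]\subset\langle(v_1)^t\rangle\oplus\cdots\oplus\langle(v_{n_2})^t\rangle$, with $(iv)\Rightarrow(iii)$ as the one step needing an argument, which you supply by the same support/class counting that the paper packs into the assertion that the disjoint-support generators are necessarily slice generators. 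The only presentational difference is that you funnel $(iii)$--$(vi)$ through a single pivot $D=W$ instead of the paper's chain $(i)\Leftrightarrow(iii)\Rightarrow(iv)\Leftrightarrow(v)\Leftrightarrow(vi)\Rightarrow(iii)$; this is a tidier bookkeeping choice, not a different method. One small caveat: you (like the paper) wave away $t=0$; as stated, items $(iv)$--$(vi)$ hold vacuously for $C\deux[0]=\un$ regardless of $C$, whereas $(i)$--$(iii)$ force $\dim C=1$, so the genuine content of the corollary is for $t\geq1$, where Proposition~\ref{tranches_puissances} is available.
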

\begin{proof}
The first equivalence (i)$\equivaut$(ii) is essentially Corollary~\ref{strict_monotone}.
Also (iv)$\equivaut$(v)$\equivaut$(vi) is clear.
Now Theorem~\ref{descr_stable} gives (i)$\equivaut$(iii)$\implique$(iv).
Conversely suppose (iv), so $C\deux[t]$ is generated by $r=\dim(C\deux[t])$ codewords $c_1,\dots,c_r$ with pairwise disjoint supports.
Then necessarily these codewords are slice generators for $C\deux[t]$, so $r=\abs{\cU(C\deux[t])}=\abs{\cU(C)}$,
where for the last equality we used Proposition~\ref{tranches_puissances}.
But by~\ref{repete_et_dual} we have $\abs{\cU(C)}=n_2$, so (iii) holds.
\end{proof}

\mysubsection{Adjunction properties}
\entry
\label{discussion_trace}
If $\cA$ is a finite dimensional algebra over $\F$ (with unit), then letting $\cA$ act on itself by multiplication (on the left) allows to identify $\cA$
with a subalgebra of the algebra of linear endomorphisms $\End(\cA)$. We then define the trace linear form on $\cA$ as the linear form inherited
from the usual trace in $\End(\cA)$, so formally $\tr(a)=\tr(x\mapsto ax)$ for $a\in\cA$. We also define the trace bilinear form $\langle\cdot|\cdot\rangle$ on $\cA$, by the formula $\langle x|y\rangle=\tr(xy)$ for $x,y\in\cA$.
Moreover the identity $\tr(xy)=\tr(yx)$ then shows that $\langle\cdot|\cdot\rangle$ is in fact a \emph{symmetric} bilinear form,
and that for any $a$, the left-multiplication-by-$a$ map $a\cdot$ and
the right-multiplication-by-$a$ map $\cdot a$, which are elements of $\End(\cA)$, are adjoint to each other with respect to $\langle\cdot|\cdot\rangle$.

In the particular case $\cA=\F^n$, this construction identifies $\F^n$ with the algebra of diagonal matrices of size $n$ over $\F$.
The trace function is the linear map $\tr(x)=x_1+\cdots+x_n$ and the trace bilinear form is the standard scalar product $\langle x|y\rangle=x_1y_1+\cdots+x_ny_n$,
where $x=(x_1,\dots,x_n)$, $y=(y_1,\dots,y_n)$, $x_i,y_j\in\F$.

\begin{proposition}
\label{c*_autoadjoint}
For any $c\in\F^n$, the multiplication-by-$c$ map $c*\cdot$ acting on $\F^n$ is autoadjoint
with respect to the standard scalar product:
\beq
\langle c*x|y\rangle=\langle x|c*y\rangle
\eeq
for all $x,y\in\F^n$.
\end{proposition}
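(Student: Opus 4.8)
The plan is to prove this by a direct coordinatewise computation, while noting the conceptual reason alongside. First I would write $c=(c_1,\dots,c_n)$, $x=(x_1,\dots,x_n)$, $y=(y_1,\dots,y_n)$ with entries in $\F$, recalling from~\ref{discussion_trace} that $\langle\cdot|\cdot\rangle$ is here the standard scalar product $\langle u|v\rangle=\sum_i u_iv_i$. Then I would simply expand both sides:
\beq
\langle c*x|y\rangle=\sum_{i=1}^n(c_ix_i)y_i=\sum_{i=1}^nx_i(c_iy_i)=\langle x|c*y\rangle,
\eeq
the middle equality being nothing but commutativity and associativity of multiplication in $\F$, applied in each coordinate. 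That is the whole argument.

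Alternatively, and this is really the conceptual content, one may invoke~\ref{discussion_trace} directly: for the algebra $\cA=\F^n$ the trace bilinear form is exactly the standard scalar product, and there it was already observed that for any $c$ the left-multiplication-by-$c$ map $c\cdot$ and the right-multiplication-by-$c$ map $\cdot c$ are mutually adjoint with respect to that form. Since $\F^n$ is commutative, these two maps coincide, and both equal $c*\cdot$, so $c*\cdot$ is self-adjoint. There is no obstacle to speak of; the only point worth flagging is that this self-adjointness is precisely how commutativity of $\F^n$ manifests itself in the adjunction formalism, which is why it is worth isolating here, just before the adjunction properties that follow.
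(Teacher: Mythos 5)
Your proof is correct and matches the paper exactly: the paper's own proof simply says ``this can be checked directly, or seen as a special case of the discussion~\ref{discussion_trace},'' and you spell out both of those routes — the coordinatewise computation and the trace-form/commutativity argument.
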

\begin{proof}
This can be checked directly, or seen as a special case of the discussion~\ref{discussion_trace}.
\end{proof}

\begin{corollary}
\label{c*_dual}
Let $C_1,C_2\subset\F^n$ be two linear codes. Then for any $c\in\F^n$ we have
\beq
c*C_1\subset C_2\quad\equivaut\quad c*C_2^\perp\subset C_1^\perp
\eeq
and for any linear code $C\subset\F^n$ we have
\beq
C*C_1\subset C_2\quad\equivaut\quad C*C_2^\perp\subset C_1^\perp.
\eeq
\end{corollary}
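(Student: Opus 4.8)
The plan is to reduce everything to the self-adjointness of the multiplication-by-$c$ map recorded in Proposition~\ref{c*_autoadjoint}.

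I would first dispatch the single-vector equivalence. Assume $c*C_1\subset C_2$; to see that $c*C_2^\perp\subset C_1^\perp$, take $y\in C_2^\perp$ and any $x\in C_1$. By Proposition~\ref{c*_autoadjoint} we have $\langle c*y\,|\,x\rangle=\langle y\,|\,c*x\rangle$, and the right-hand side vanishes because $c*x\in c*C_1\subset C_2$ while $y\in C_2^\perp$. Hence $c*y$ is orthogonal to all of $C_1$, i.e.\ $c*y\in C_1^\perp$, which gives $c*C_2^\perp\subset C_1^\perp$. The converse implication is not a separate computation: feeding this same argument the pair $(C_2^\perp,C_1^\perp)$ in place of $(C_1,C_2)$ yields $c*C_2^\perp\subset C_1^\perp\implique c*C_1^{\perp\perp}\subset C_2^{\perp\perp}$, and $C_i^{\perp\perp}=C_i$ (as already used in~\ref{longueurs}).

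For the code-times-code version I would observe that $C*C_1\subset C_2$ is equivalent to $c*C_1\subset C_2$ for \emph{every} $c\in C$: indeed $C*C_1=\langle C\,\dot{*}\,C_1\rangle$ with $C\,\dot{*}\,C_1=\bigcup_{c\in C}(c*C_1)$, and since $C_2$ is a linear subspace the span is contained in $C_2$ exactly when this generating set is. In the same way, $C*C_2^\perp\subset C_1^\perp$ is equivalent to $c*C_2^\perp\subset C_1^\perp$ for every $c\in C$ (using that $C_1^\perp$ is linear). Applying the single-vector equivalence for each fixed $c\in C$ then matches the two conditions, proving the second claimed equivalence.

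I do not expect a genuine obstacle here; the only point requiring a little care is this passage between the two formulations, which rests on the codes on the ``target'' side ($C_2$, resp.\ $C_1^\perp$) being linear subspaces, so that containment of a spanning set already forces containment of its linear span.
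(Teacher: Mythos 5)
Your argument is correct and matches the paper's own proof in both steps: the single-$c$ equivalence is drawn directly from the self-adjointness in Proposition~\ref{c*_autoadjoint} (together with biduality for the converse), and the code-level equivalence is obtained by passing to the linear span, i.e.\ checking containment on the generating set $\bigcup_{c\in C}(c*C_1)$. The paper states this more tersely, but your write-up is exactly the fleshed-out version of that reasoning.
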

\begin{proof}
The first assertion is a consequence of Proposition~\ref{c*_autoadjoint}, and the second follows after passing to the linear span.
\end{proof}

\begin{corollary}
\label{cor_adj}
For any two linear codes $C,C'\subset\F^n$ we have
\beq
C*(C*C')^\perp\subset C'^\perp.
\eeq
Given two integers $t'\geq t\geq 0$ we have
\beq
C\deux[t]*(C\deux[t'])^\perp\subset(C\deux[t'-t])^\perp.
\eeq
(Note: it is easy to construct examples where the inclusion is strict.)
\end{corollary}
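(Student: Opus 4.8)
The plan is to read both inclusions off Corollary~\ref{c*_dual}, whose second assertion states that for any linear code $D\subset\F^n$ (playing the role of the ambient factor) and any $C_1,C_2\subset\F^n$ one has the equivalence $D*C_1\subset C_2\;\equivaut\;D*C_2^\perp\subset C_1^\perp$. For the first inclusion I would apply this with $D=C$, $C_1=C'$ and $C_2=C*C'$: the left-hand side is then the tautology $C*C'\subset C*C'$, so the equivalent right-hand side $C*(C*C')^\perp\subset C'^\perp$ holds, which is exactly the claim.

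For the second inclusion I would first invoke the power laws of the $*$-semiring recalled in~\ref{def_*}: since $t'\geq t\geq0$ we may write $t'=t+(t'-t)$ with $t'-t\geq0$, whence $C\deux[t]*C\deux[t'-t]=C\deux[t']$, and in particular $C\deux[t]*C\deux[t'-t]\subset C\deux[t']$. Then I apply the equivalence of Corollary~\ref{c*_dual} with $D=C\deux[t]$, $C_1=C\deux[t'-t]$ and $C_2=C\deux[t']$: its left-hand side is the inclusion just established, so its right-hand side $C\deux[t]*(C\deux[t'])^\perp\subset(C\deux[t'-t])^\perp$ follows. The boundary cases $t=0$ and $t=t'$ need no separate treatment, since $C\deux[0]=\un$ is the unit for $*$.

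I do not anticipate any genuine obstacle here; the only point requiring care is the bookkeeping of which code is substituted for the ambient factor $D$ in Corollary~\ref{c*_dual}, together with the observation that the inclusions fed into it are either tautological or the power identity $C\deux[t]*C\deux[t'-t]=C\deux[t']$. As for the parenthetical remark on strictness, one may take $C'=C$ to be the binary parity code of length~$3$, whose square is all of $\F_2^3$: then $(C\deux)^\perp=0$ and $C*(C\deux)^\perp=0\subsetneq C^\perp$, so already the first inclusion can be strict.
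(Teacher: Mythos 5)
Your proof is correct and follows exactly the paper's approach: both inclusions are read off the second equivalence in Corollary~\ref{c*_dual}, with the appropriate tautological or power-law inclusion fed into the left-hand side. Your strictness example (the binary $[3,2,2]$ parity code, whose square is all of $\F_2^3$) is also valid.
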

\begin{proof}
Apply the second equivalence in Corollary~\ref{c*_dual}.
\end{proof}

Recall from \ref{def_stab_alg}, \ref{structure_algebre}, and \ref{adapt_algebres} we described the extended stabilizing algebra
\beq
\widehat{\cA}(C)=\{a\in\F^n\,;\;a*C\subset C\}
\eeq
of a linear code $C$ as
\beq
\widehat{\cA}(C)=\bigoplus_{Q\in\widehat{\cP}(C)}\langle1_Q\rangle,
\eeq
where $\widehat{\cP}(C)=\cP(C)\cup\{\{j\};j\not\in\Supp(C)\}$ is the partition of $[n]$ associated with the decomposition
of $C$ into indecomposable components.

\begin{corollary}
\label{AC_ACperp}
For any linear code $C$ we have $\widehat{\cA}(C)=\widehat{\cA}(C^\perp)$,
hence also $\widehat{\cP}(C)=\widehat{\cP}(C^\perp)$.

In particular, a linear code $C$ of length $n\geq2$ is indecomposable with full support if and only if $C^\perp$ is.
\end{corollary}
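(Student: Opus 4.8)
The plan is to deduce the whole statement from the autoadjunction of multiplication operators recorded in Corollary~\ref{c*_dual}.

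First I would specialize the first equivalence of Corollary~\ref{c*_dual} to the case $C_1=C_2=C$: for every $a\in\F^n$,
\[
a*C\subset C\quad\equivaut\quad a*C^\perp\subset C^\perp .
\]
By Definition~\ref{def_stab_alg} the left-hand condition means $a\in\widehat{\cA}(C)$ and the right-hand one means $a\in\widehat{\cA}(C^\perp)$, so this single line already gives the equality $\widehat{\cA}(C)=\widehat{\cA}(C^\perp)$ of subalgebras of $\F^n$.

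Next I would pass from the stabilizing algebra to the associated partition. Recall from~\ref{adapt_algebres} that $\widehat{\cP}(C)$ is the finest partition $\cQ$ of $[n]$ with $1_Q\in\widehat{\cA}(C)$ for all $Q\in\cQ$, and that $\widehat{\cA}(C)=\bigoplus_{Q\in\widehat{\cP}(C)}\langle 1_Q\rangle$; in particular the blocks of $\widehat{\cP}(C)$ are exactly the supports of the primitive idempotents of $\widehat{\cA}(C)$. Either description shows that $\widehat{\cP}(C)$ is determined by $\widehat{\cA}(C)$ alone, so the equality just proved yields at once $\widehat{\cP}(C)=\widehat{\cP}(C^\perp)$.

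It remains to handle the last assertion. I would check that, for $n\geq2$, the property ``$C$ is indecomposable with full support'' is equivalent to $\widehat{\cP}(C)$ being the one-block partition $\{[n]\}$: if $C$ has full support then $\widehat{\cP}(C)=\cP(C)$ and indecomposability means $\cP(C)$ has a single block, which is necessarily $\Supp(C)=[n]$; conversely $\widehat{\cP}(C)=\{[n]\}$ forces $\Supp(C)=[n]$ (otherwise a singleton $\{j\}$ with $j\notin\Supp(C)$ would be a block, which is impossible when $n\geq2$) and $C\neq0$, hence $\cP(C)=\{[n]\}$, \ie $C$ is indecomposable with full support. Since this characterization is manifestly symmetric in $C$ and $C^\perp$ once $\widehat{\cP}(C)=\widehat{\cP}(C^\perp)$ is known, the claim follows. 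I expect this last step --- the bookkeeping of the degenerate cases $C=0$ and $\Supp(C)\subsetneq[n]$, and the reason the hypothesis $n\geq2$ is really needed --- to be the only point requiring a little care; everything else is a one-line use of Corollary~\ref{c*_dual}.
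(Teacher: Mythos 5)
Your proof is correct and follows the paper's approach exactly: the paper's entire proof is the one-liner you give first (apply the first equivalence of Corollary~\ref{c*_dual} with $C_1=C_2=C$ to get $\widehat{\cA}(C)=\widehat{\cA}(C^\perp)$), and the rest of your argument — passing from $\widehat{\cA}$ to $\widehat{\cP}$ via the primitive idempotents, and the degenerate-case bookkeeping for the final assertion including why $n\geq2$ is needed — is the standard unwinding that the paper leaves implicit.
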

\begin{proof}
Apply the first equivalence in Corollary~\ref{c*_dual}.
\end{proof}

The following interesting characterization of $\widehat{\cA}(C)$ was apparently first noticed by Couvreur and Tillich:

\begin{corollary}
\label{CCperpperp}
For any linear code $C$ we have
\beq
\widehat{\cA}(C)=(C*C^\perp)^\perp,
\eeq
or equivalently, $C*C^\perp$ is the space of words orthogonal to the $1_Q$ for $Q\in\widehat{\cP}(C)$.

In particular, if $C$ is indecomposable with full support, of length $n\geq2$, then
\beq
C*C^\perp=\un^\perp
\eeq
is the $[n,n-1,2]$ parity code.
\end{corollary}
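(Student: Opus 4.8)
The plan is to prove the equality $\widehat{\cA}(C)=(C*C^\perp)^\perp$ by establishing the two inclusions separately, each one a direct consequence of an adjunction statement already at our disposal, and then to read off the reformulation and the special case by elementary duality. For the inclusion $\widehat{\cA}(C)\subset(C*C^\perp)^\perp$, I would take $a\in\widehat{\cA}(C)$, so $a*C\subset C$ by Definition~\ref{def_stab_alg}. Given $c\in C$ and $c'\in C^\perp$, Proposition~\ref{c*_autoadjoint} yields $\langle a\,|\,c*c'\rangle=\langle a*c\,|\,c'\rangle$, and this vanishes since $a*c\in C$ while $c'\in C^\perp$. As $C*C^\perp$ is the linear span of the elementary products $c*c'$, we get $a\in(C*C^\perp)^\perp$.

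For the reverse inclusion $(C*C^\perp)^\perp\subset\widehat{\cA}(C)$, I would apply Corollary~\ref{cor_adj} with $C'$ replaced by $C^\perp$, which gives $C*(C*C^\perp)^\perp\subset(C^\perp)^\perp=C$. Hence any $a\in(C*C^\perp)^\perp$ satisfies $a*c\in C$ for all $c\in C$, i.e. $a*C\subset C$, which is precisely $a\in\widehat{\cA}(C)$. Combining the two inclusions proves $\widehat{\cA}(C)=(C*C^\perp)^\perp$. The ``equivalently'' clause then follows by taking orthogonal complements of both sides (using $V^{\perp\perp}=V$ in finite dimension) to get $C*C^\perp=\widehat{\cA}(C)^\perp$, and substituting the explicit description $\widehat{\cA}(C)=\bigoplus_{Q\in\widehat{\cP}(C)}\langle 1_Q\rangle$ recalled just before the statement: thus $C*C^\perp$ is exactly the space of words orthogonal to all the $1_Q$, $Q\in\widehat{\cP}(C)$.

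For the final assertion, suppose $C$ is indecomposable with full support and $n\geq2$. Then $\widehat{\cP}(C)=\cP(C)=\{[n]\}$, so $\widehat{\cA}(C)=\langle 1_{[n]}\rangle=\un$, and therefore $C*C^\perp=\un^\perp$, which is the $[n,n-1,2]$ parity code. I expect no real obstacle here: the whole argument rests on the autoadjointness of $c*\cdot$ (Proposition~\ref{c*_autoadjoint}) propagated through the chain of adjunction corollaries. The only points requiring a little care are getting the direction of the inclusion right when specializing Corollary~\ref{cor_adj}, and keeping in mind that $V^{\perp\perp}=V$ is what allows passing freely between $\widehat{\cA}(C)=(C*C^\perp)^\perp$ and its dual form $C*C^\perp=\widehat{\cA}(C)^\perp$.
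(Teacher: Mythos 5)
Your proof is correct. One half matches the paper exactly: both you and the author get $(C*C^\perp)^\perp\subset\widehat{\cA}(C)$ by applying Corollary~\ref{cor_adj} with $C'=C^\perp$. The other half is where you genuinely diverge. The paper proves $\widehat{\cA}(C)\subset(C*C^\perp)^\perp$ by passing to the dual inclusion $C*C^\perp\subset\widehat{\cA}(C)^\perp$, invoking the structural description $\widehat{\cA}(C)=\bigoplus_{Q\in\widehat{\cP}(C)}\langle 1_Q\rangle$, and reducing (by projection onto each $Q\in\cP(C)$) to the case where $C$ is indecomposable with full support, where the claim is that $C*C^\perp\perp\un$. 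You instead argue directly from Proposition~\ref{c*_autoadjoint}: for $a\in\widehat{\cA}(C)$, $c\in C$, $c'\in C^\perp$, one has $\langle a\,|\,c*c'\rangle=\langle a*c\,|\,c'\rangle=0$, so $a\perp C*C^\perp$. This bypasses Proposition~\ref{structure_algebre} entirely for the core equality — it buys a shorter, more self-contained proof that makes the adjunction the sole engine. The structure theorem is then used only where it is truly needed: to translate the equality into the statement about the $1_Q$, and to identify $\widehat{\cA}(C)=\un$ in the indecomposable full-support case. Both routes are valid; yours is arguably cleaner and more closely parallels the symmetry of the two inclusions around Corollary~\ref{c*_dual}, while the paper's formulation makes the connection to the decomposition $\widehat{\cP}(C)$ explicit along the way.
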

\begin{proof}
The first inclusion in Corollary~\ref{cor_adj}, applied with $C'=C^\perp$, shows that $(C*C^\perp)^\perp$ stabilizes $C$, hence
\beq
(C*C^\perp)^\perp\subset\widehat{\cA}(C).
\eeq
By duality, to get the converse inclusion, it suffices now to show
\beq
C*C^\perp\subset\widehat{\cA}(C)^\perp,
\eeq
\ie we have to show $C*C^\perp$ orthogonal to the $1_Q$, for $Q\in\widehat{\cP}(C)$.
Now if $Q=\{j\}$ for $j$ out of $\Supp(C)$ this is clear.
Otherwise we have $Q\in\cP(C)$, and projecting onto $Q$ we can now suppose that $C$ is indecomposable with full support,
in which case we have to show $C*C^\perp$ orthogonal to $\un$, which is obvious
(this can also be seen as the second inclusion in Corollary~\ref{cor_adj} applied with $t=t'=1$).

\end{proof}

\mysubsection{Symmetries and automorphisms}
\entry
For any integer $n$, we let the symmetric group $\fS_n$ act on the right on $\F^n$ by the formula
\beq
(x_1,\dots,x_n)^\sigma=(x_{\sigma(1)},\dots,x_{\sigma(n)})
\eeq
where $\sigma\in\fS_n$, $x_i\in\F$.

Equivalently, if $x\in\F^n$ is a row vector, then
\beq
x^\sigma=xP_\sigma
\eeq
where $P_\sigma$ is the permutation matrix with entries $(P_\sigma)_{i,j}=1_{\{i=\sigma(j)\}}$.

For $\sigma,\tau\in\fS_n$ we have $(x^\sigma)^\tau=x^{\sigma\tau}$.

\begin{lemma}
\label{lemme_*_sym}
For $x,y\in\F^n$ and $\sigma\in\fS_n$ we have $x^\sigma*y^\sigma=(x*y)^\sigma$.
\end{lemma}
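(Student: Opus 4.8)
The statement $x^\sigma * y^\sigma = (x*y)^\sigma$ is an identity between two vectors in $\F^n$, so the plan is to verify it coordinate by coordinate. First I would fix an index $i \in [n]$ and unwind both sides using only the definitions already in place: the definition of the $\fS_n$-action $(z_1,\dots,z_n)^\sigma = (z_{\sigma(1)},\dots,z_{\sigma(n)})$, and the definition of $*$ as componentwise multiplication. The $i$-th coordinate of $x^\sigma * y^\sigma$ is $(x^\sigma)_i \cdot (y^\sigma)_i = x_{\sigma(i)}\, y_{\sigma(i)}$, while the $i$-th coordinate of $(x*y)^\sigma$ is $(x*y)_{\sigma(i)} = x_{\sigma(i)}\, y_{\sigma(i)}$. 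Since these agree for every $i$, the two vectors are equal.

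Alternatively, and perhaps more in the spirit of the surrounding discussion, I could phrase this via the permutation-matrix description $z^\sigma = z P_\sigma$: the claim becomes $(xP_\sigma)*(yP_\sigma) = (x*y)P_\sigma$, which expresses that right multiplication by a permutation matrix is an algebra endomorphism of $\F^n$ (with respect to $*$). This is immediate since $P_\sigma$ merely permutes coordinates and $*$ acts coordinatewise; but the coordinatewise computation above is the cleanest justification and needs no extra machinery.

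There is essentially no obstacle here — the identity is a direct consequence of the two definitions, and the only thing to be careful about is keeping the direction of the action straight (the action is on the right, with $(z^\sigma)_i = z_{\sigma(i)}$, not $z_{\sigma^{-1}(i)}$), so that both sides genuinely produce the same reindexing. Accordingly the proof is a one-line coordinate check, which is why I expect the author simply writes "obvious" or a single displayed line.

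\begin{proof}
For any $i\in[n]$, the $i$-th coordinate of $x^\sigma*y^\sigma$ is $(x^\sigma)_i(y^\sigma)_i=x_{\sigma(i)}y_{\sigma(i)}$, which is exactly the $\sigma(i)$-th coordinate of $x*y$, that is, the $i$-th coordinate of $(x*y)^\sigma$.
\end{proof}
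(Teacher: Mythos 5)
Your coordinate-by-coordinate verification is correct, and the paper itself simply labels the proof ``Obvious,'' which is exactly what your one-line computation justifies. You anticipated this correctly.
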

\begin{proof}
Obvious.
\end{proof}

\begin{definition}
If $C\subset\F^n$ is a linear code, its group of symmetries is
\beq
\fS(C)=\{\sigma\in\fS_n\,;\;C^\sigma=C\}=\{\sigma\in\fS_n\,;\;\forall c\in C,\,c^\sigma\in C\}.
\eeq
\end{definition}

\begin{proposition}
Let $C,C'\subset\F^n$ be two linear codes. Then we have
\beq
\fS(C)\cap\fS(C')\subset\fS(C*C').
\eeq
Given two integers $t,t'\geq1$, then
\beq
t|t'\quad\implique\quad\fS(C\deux[t])\subset\fS(C\deux[t']).
\eeq
\end{proposition}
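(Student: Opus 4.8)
The plan is to reduce both assertions to Lemma~\ref{lemme_*_sym}, which states that the right $\fS_n$-action on $\F^n$ is multiplicative with respect to $*$: namely $x^\sigma*y^\sigma=(x*y)^\sigma$. For the first inclusion, let $\sigma\in\fS(C)\cap\fS(C')$, so that $C^\sigma=C$ and $C'^\sigma=C'$. Since $\sigma$ acts on $\F^n$ as an invertible linear map, it commutes with the formation of linear spans, so that
\beq
(C*C')^\sigma=\langle c*c'\,;\;c\in C,c'\in C'\rangle^\sigma=\langle (c*c')^\sigma\,;\;c\in C,c'\in C'\rangle=\langle c^\sigma*c'^\sigma\,;\;c\in C,c'\in C'\rangle,
\eeq
the last equality being Lemma~\ref{lemme_*_sym}. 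Now as $c$ ranges over $C$ the element $c^\sigma$ ranges over $C^\sigma=C$, and likewise $c'^\sigma$ ranges over $C'$, so the right-hand side is exactly $\langle \tilde c*\tilde c'\,;\;\tilde c\in C,\tilde c'\in C'\rangle=C*C'$. Hence $(C*C')^\sigma=C*C'$, that is $\sigma\in\fS(C*C')$. (Should one prefer to argue with inclusions only, noting $(C*C')^\sigma\subset C*C'$ already forces equality because $\sigma$ is invertible on $\F^n$ and preserves dimension.)

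For the second statement, write $t'=tm$ with $m\geq1$, which is possible since $t\mid t'$ and $t,t'\geq1$, and set $D=C\deux[t]$. By the relation $(C\deux[t])\deux[m]=C\deux[tm]$ recalled in~\ref{def_*}, we have $C\deux[t']=D\deux[m]=D*\cdots*D$ ($m$ factors). Then I would iterate the first inclusion: taking $C=C'=D$ gives $\fS(D)=\fS(D)\cap\fS(D)\subset\fS(D\deux)$, and more generally, assuming $\fS(D)\subset\fS(D\deux[j])$ for some $j\geq1$, one gets $\fS(D)\subset\fS(D)\cap\fS(D\deux[j])\subset\fS(D*D\deux[j])=\fS(D\deux[j+1])$. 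By induction on $j$ this yields $\fS(D)\subset\fS(D\deux[m])$, i.e. $\fS(C\deux[t])\subset\fS(C\deux[t'])$, as wanted.

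There is essentially no obstacle here: the whole content is Lemma~\ref{lemme_*_sym} plus the bookkeeping that passing the $\sigma$-action through a linear span costs nothing. The one point deserving a sentence of care is the observation that $c\mapsto c^\sigma$ restricts to a bijection of $C$ onto $C^\sigma=C$ (and similarly for $C'$), so that no product generators are lost when the $\sigma$-action is moved inside the span; everything else is formal. If desired, one can also phrase the second part without the inductive iteration by invoking Corollary~\ref{decomp_produit}-style reasoning, but the direct induction above is the cleanest route.
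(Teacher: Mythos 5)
Your proof is correct and follows essentially the same route as the paper, which simply states the result is a direct consequence of Lemma~\ref{lemme_*_sym}; you have merely written out the span-commutation bookkeeping and the induction step for divisors explicitly.
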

\begin{proof}
Direct consequence of Lemma~\ref{lemme_*_sym}.
\end{proof}

\begin{example}
\label{ex_sym}
Let $C$ be the one-dimensional code of length $2$ over $\F_5$ generated by the row vector $(1,2)$.
Then $\fS(C\deux[t])=\{1\}$ for $t$ odd, and $\fS(C\deux[t])=\fS_2$ for $t$ even. 
\end{example}

\entry
\label{monomial_tsfo}
We define $\Aut(\F^n)$ as the group with elements the pairs $(\sigma,a)$ with $\sigma\in\fS_n$ and $a\in(\F^n)^\times$,
and composition law given by $(\sigma,a)(\tau,b)=(\sigma\tau,a^\tau*b)$ for $\sigma,\tau\in\fS_n$ and $a,b\in(\F^n)^\times$.
This is a semidirect product of $\fS_n$ and $(\F^n)^\times$, with $(\F^n)^\times$ normal.
We let $\Aut(\F^n)$ act on the right on $\F^n$, where $(\sigma,a)$ acts as
\beq
x\mapsto x^\sigma*a.
\eeq

For $a\in(\F^n)^\times$, let $D(a)\in\F^{n\times n}$ be the associated diagonal matrix.
Then the map
\beq
(\sigma,a)\mapsto P_\sigma D(a)
\eeq
is an isomorphism of $\Aut(\F^n)$ with the group of $n\times n$ monomial matrices (note $D(a^\tau)=P_\tau^{-1}D(a)P_\tau$).
The latter acts on the right on $\F^n$, seen as a space of row vectors, and this isomorphism preserves
the actions.

\begin{lemma}
\label{lemme_*_aut}
For $x,y\in\F^n$, $\sigma\in\fS_n$, and $a,b\in(\F^n)^\times$, we have
$(x^\sigma*a)*(y^\sigma*b)=(x*y)^\sigma*(a*b)$.
\end{lemma}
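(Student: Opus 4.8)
The plan is to reduce the identity directly to Lemma~\ref{lemme_*_sym} together with the commutativity and associativity of the componentwise product $*$ in $\F^n$, which were recorded back in~\ref{def_dot*}. No geometry or structural input is needed; this is a one-line rearrangement of symbols.

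Concretely, first I would expand the left-hand side and regroup the four factors: since $*$ is commutative and associative, $(x^\sigma*a)*(y^\sigma*b)=(x^\sigma*y^\sigma)*(a*b)$. Next I would apply Lemma~\ref{lemme_*_sym} to the first parenthesized factor, giving $x^\sigma*y^\sigma=(x*y)^\sigma$, so that the expression becomes $(x*y)^\sigma*(a*b)$, which is exactly the right-hand side. If one prefers to avoid invoking the lemma, the same can be checked coordinatewise: the $i$-th coordinate of either side is $x_{\sigma(i)}a_i y_{\sigma(i)} b_i$, so the two sides agree.

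I expect no real obstacle here: the statement is a routine bookkeeping identity whose only subtlety is keeping track of the right action convention for $\sigma$ (recall $(x^\sigma)_i=x_{\sigma(i)}$), and that is handled automatically by Lemma~\ref{lemme_*_sym}. The proof is therefore just ``Immediate from Lemma~\ref{lemme_*_sym} and the commutativity and associativity of $*$.''
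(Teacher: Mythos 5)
Your proof is correct and is exactly the kind of rearrangement the paper has in mind when it writes ``Obvious'': regroup by commutativity and associativity of $*$, then apply Lemma~\ref{lemme_*_sym} (or check coordinatewise). No discrepancy with the paper's approach.
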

\begin{proof}
Obvious.
\end{proof}

\entry
By the definition of $\Aut(\F^n)$ as a semidirect product, we have a split exact sequence
\beq
1\longto(\F^n)^\times\longto\Aut(\F^n)\overset{\pi}{\longto}\fS_n\longto 1.
\eeq
\begin{definition*}
Given two subgroups $H,H'$ of $\Aut(\F^n)$, we write
\beq
H\;\,\widehat{\subset}\,\;H'
\eeq
when $\pi(H)\subset\pi(H')$ and $H\cap(\F^n)^\times\subset H'\cap(\F^n)^\times$. If $H'$ is finite
(for example if $\F$ is finite) this implies that $\abs{H}$ divides $\abs{H'}$.
\end{definition*}
We also set $H\;\widehat{\sim}\;H'$ when $H\;\widehat{\subset}\;H'$ and $H'\;\widehat{\subset}\;H$.
This implies $\abs{H}=\abs{H'}$.

\begin{definition}
Let $C\subset\F^n$ be a linear code. Then the group of linear automorphisms of $C$ in $\F^n$ is
\beq
\Aut(C)=\{(\sigma,a)\in\Aut(\F^n)\,;\;C^\sigma*a=C\}.
\eeq
\end{definition}

We also let
\beq
\widehat{\fS}(C)=\pi(\Aut(C))=\{\sigma\in\fS_n\,;\;\exists a\in(\F^n)^\times,\,C^\sigma*a=C\}
\eeq
be the group of \emph{projective symmetries} of $C$, and we note that
\beq
\Aut(C)\cap(\F^n)^\times=\widehat{\cA}(C)^\times
\eeq
by Definition~\ref{def_stab_alg}, so we get an exact sequence
\beq
1\longto\widehat{\cA}(C)^\times\longto\Aut(C)\overset{\pi}{\longto}\widehat{\fS}(C)\longto 1.
\eeq

\begin{proposition}
\label{aut_divise}
Let $C,C'\subset\F^n$ be two linear codes. Then we have
\beq
\widehat{\fS}(C)\cap\widehat{\fS}(C')\subset\widehat{\fS}(C*C')
\eeq
and
\beq
\widehat{\cA}(C)^\times\widehat{\cA}(C')^\times\subset\widehat{\cA}(C*C')^\times.
\eeq
Given two integers $t,t'\geq1$, then
\beq
t|t'\quad\implique\quad\Aut(C\deux[t])\;\,\widehat{\subset}\,\;\Aut(C\deux[t']).
\eeq
\end{proposition}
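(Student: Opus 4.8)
The plan is to reduce everything to the two multiplicativity facts already available: Lemma~\ref{lemme_*_aut} handles the permutation part, and the inclusions $\widehat{\cA}(C)*\widehat{\cA}(C')\subset\widehat{\cA}(C*C')$ and $\widehat{\cA}(D)\subset\widehat{\cA}(D\deux[m])$ recorded in~\ref{adapt_algebres} handle the diagonal part. Throughout I will use the facts (stated in~\ref{adapt_algebres} and just before the proposition) that $\pi(\Aut(D))=\widehat{\fS}(D)$ and $\Aut(D)\cap(\F^n)^\times=\widehat{\cA}(D)^\times$, and that $\widehat{\cA}(D)^\times$ is exactly the set of invertible words $a\in(\F^n)^\times$ with $a*D=D$ (an invertible $a$ with $a*D\subset D$ must satisfy $a*D=D$ by dimension, whence $a^{-1}*D=D$ too).

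For $\widehat{\fS}(C)\cap\widehat{\fS}(C')\subset\widehat{\fS}(C*C')$: given $\sigma$ in the intersection, choose $a,a'\in(\F^n)^\times$ with $C^\sigma*a=C$ and $C'^\sigma*a'=C'$, and I claim $a*a'$ witnesses $\sigma$ on $C*C'$. Since $a,a',a*a'$ are invertible, each of $x\mapsto x^\sigma*a$, $x\mapsto x^\sigma*a'$, $x\mapsto x^\sigma*(a*a')$ is a linear automorphism of $\F^n$, hence commutes with the operation of taking linear spans; applying Lemma~\ref{lemme_*_aut} to the generators $c*c'$ of $C*C'$ then gives $(C^\sigma*a)*(C'^\sigma*a')=(C*C')^\sigma*(a*a')$, and the left-hand side equals $C*C'$ by the choice of $a,a'$. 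For $\widehat{\cA}(C)^\times\widehat{\cA}(C')^\times\subset\widehat{\cA}(C*C')^\times$ there is nothing new: if $a\in\widehat{\cA}(C)^\times$, $a'\in\widehat{\cA}(C')^\times$, then $a,a'\in(\F^n)^\times$ so $a*a'\in(\F^n)^\times$, while $a*a'\in\widehat{\cA}(C*C')$ by~\ref{adapt_algebres}; being an invertible element of that algebra, $a*a'\in\widehat{\cA}(C*C')^\times$.

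For the last assertion, write $t'=mt$ with $m\geq1$. By the definition of $\;\widehat{\subset}\;$ it suffices to prove the two inclusions $\widehat{\fS}(C\deux[t])\subset\widehat{\fS}(C\deux[t'])$ and $\widehat{\cA}(C\deux[t])^\times\subset\widehat{\cA}(C\deux[t'])^\times$, and since $C\deux[t']=(C\deux[t])\deux[m]$ both follow from the case $D=C\deux[t]$ of the general statements: for any code $D$ and any $m\geq1$, $\widehat{\fS}(D)\subset\widehat{\fS}(D\deux[m])$ and $\widehat{\cA}(D)^\times\subset\widehat{\cA}(D\deux[m])^\times$. The first is a short induction on $m$, the step being $\widehat{\fS}(D)\subset\widehat{\fS}(D\deux[m])\cap\widehat{\fS}(D)\subset\widehat{\fS}(D\deux[m]*D)=\widehat{\fS}(D\deux[m+1])$ by the first part. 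The second is immediate from $\widehat{\cA}(D)\subset\widehat{\cA}(D\deux[m])$ in~\ref{adapt_algebres}: an element of $\widehat{\cA}(D)^\times$ is an invertible word stabilizing $D$, hence also stabilizing $D\deux[m]$, hence lies in $\widehat{\cA}(D\deux[m])^\times$.

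There is no serious obstacle; the proof is essentially an exercise in unwinding definitions. The two points deserving a little care are: (i) in the first part, using that multiplying by an invertible word (possibly after permuting coordinates) is a \emph{bijection} of $\F^n$, which is what legitimizes comparing $(C^\sigma*a)*(C'^\sigma*a')$ with $(C*C')^\sigma*(a*a')$ at the level of spans rather than just of generating sets; and (ii) in the last part, avoiding the temptation to pass to $m$-th powers of witnesses --- that would only place $a^m$, not $a$ itself, into $\widehat{\cA}(C\deux[t'])^\times$ --- and instead invoking the inclusion of stabilizing algebras from~\ref{adapt_algebres} directly.
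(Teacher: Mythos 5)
Your proof is correct and follows the same route as the paper's (much terser) proof: the first inclusion from Lemma~\ref{lemme_*_aut} applied to elementary product generators and then passing to spans, the second from the stabilizing-algebra inclusion recorded in~\ref{adapt_algebres}, and the last by iterating/specializing the first two along $D=C\deux[t]$. The only minor overcaution is the appeal to invertibility of $a$, $a'$, $a*a'$ when comparing spans: since $x\mapsto x^\sigma*(a*a')$ is $\F$-linear, it already commutes with taking linear spans whether or not it is bijective; invertibility is only needed to know these maps land in (indeed define) automorphisms. Similarly, the pitfall you warn about (only placing $a^m$ into $\widehat{\cA}(C\deux[t'])^\times$) is avoidable even without~\ref{adapt_algebres}, since $1\in\widehat{\cA}(D)^\times$ lets one write $a=a*1*\cdots*1$ and apply the product inclusion; but your direct appeal to $\widehat{\cA}(D)\subset\widehat{\cA}(D\deux[m])$ is cleaner.
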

\begin{proof}
The first inclusion is a direct consequence of Lemma~\ref{lemme_*_aut}.
The second follows from Proposition~\ref{proprietes_algebre} (and~\ref{adapt_algebres}).
Then together they imply the last assertion.
\end{proof}

\entry
Given a subset $S\subset[n]$ and a partition $\cU$ of $S$, we define
\beq
\fS(\cU)=\{\sigma\in\fS_n\,;\;\forall U\in\cU, \sigma(U)\in\cU\}
\eeq
and
\beq
\widehat{\cA}(\cU)=\bigoplus_{U\in\cU}\langle 1_U\rangle\oplus\bigoplus_{j\not\in S}\langle 1_{\{j\}}\rangle
\eeq
which is a subalgebra of $\F^n$.

\begin{proposition}
\label{aut_decomp}
Let $v_1,\dots,v_s\in\F^n$, $v_i\neq0$, be vectors with pairwise disjoint supports, and let
\beq
C=\langle v_1\rangle\oplus\cdots\oplus\langle v_s\rangle\subset\F^n
\eeq
be the linear code they generate. Let $B_i=\Supp(v_i)$, so $\cU=\{B_1,\dots,B_s\}$ is a partition of $\Supp(C)$.
Then we have $\widehat{\fS}(C)=\fS(\cU)$ and $\widehat{\cA}(C)=\widehat{\cA}(\cU)$.
\end{proposition}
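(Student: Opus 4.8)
The plan is to establish the two equalities separately, leaning on the explicit decomposition of $C$ together with the characterization of $\cP(C)$ as the \emph{finest} partition under which $C$ decomposes (Lemma-definition~\ref{finest_dec}, Proposition~\ref{structure_algebre}). Throughout I keep the bookkeeping in mind: $\Supp(C)=B_1\cup\cdots\cup B_s=:S$ is a disjoint union, multiplication by an invertible vector preserves supports, and $\Supp(x^\sigma)=\sigma^{-1}(\Supp x)$ since the $\fS_n$-action is on the right.

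First I would record that the only codewords of $C$ with support inside a single block $B_i$ are the scalar multiples of $v_i$: writing such a word as $\sum_l\mu_l v_l$ and noting its support is $\bigsqcup_{l:\,\mu_l\neq0}B_l$, disjointness forces $\mu_l=0$ for $l\neq i$. Consequently $\cP(C)=\cU$: the code decomposes under $\cU$ by hypothesis, and it cannot decompose under a strictly finer partition, for that would split some $B_i$ into a proper nonempty part $B_i'$ with $1_{B_i'}\in\cA(C)$, whence $0\neq 1_{B_i'}*v_i\in C$ would be supported on $B_i'\subsetneq B_i$, a contradiction. Then $\widehat{\cP}(C)=\cU\cup\{\{j\}:j\notin S\}$, so Proposition~\ref{structure_algebre} in the extended form of~\ref{adapt_algebres} gives $\widehat{\cA}(C)=\bigoplus_{Q\in\widehat{\cP}(C)}\langle 1_Q\rangle=\widehat{\cA}(\cU)$; alternatively this is a one-line direct check, since $a*v_i$ being a multiple of $v_i$ for every $i$ means exactly that $a$ is constant on each $B_i$ (as $v_i$ vanishes nowhere on $B_i$), with arbitrary values off $S$.

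For $\widehat{\fS}(C)=\fS(\cU)$ I would prove the two inclusions. If $\sigma\in\fS(\cU)$, then $\sigma^{-1}$ permutes the blocks, so $C^\sigma=\bigoplus_i\langle v_i^\sigma\rangle$ is again a direct sum of lines whose supports run exactly over the blocks of $\cU$; picking $a\in(\F^n)^\times$ that on each block rescales the relevant $v_i^\sigma$ to the matching $v_k$ (possible since both vanish nowhere there) and equals $1$ off $S$, we get $C^\sigma*a=C$, so $\sigma\in\widehat{\fS}(C)$. Conversely, if $\sigma\in\widehat{\fS}(C)$, choose $a\in(\F^n)^\times$ with $C^\sigma*a=C$; then $C=\bigoplus_i\langle a*v_i^\sigma\rangle$ exhibits $C$ as a direct sum of nonzero subcodes with supports the $s$ distinct nonempty sets $\sigma^{-1}(B_i)$, i.e. $C$ decomposes under $\{\sigma^{-1}(B_i)\}_i$. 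By Lemma-definition~\ref{finest_dec}, $\cP(C)$ refines this partition; but $\cP(C)=\cU$ has the same number $s$ of blocks, so the two partitions coincide, which is exactly the statement that $\sigma$ permutes the blocks of $\cU$, i.e. $\sigma\in\fS(\cU)$.

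The one step carrying actual content is this last one, where I use the uniqueness built into the finest-decomposition formalism to pass from ``$C$ admits \emph{some} decomposition with block-supports $\sigma^{-1}(B_i)$'' back to ``$\sigma$ permutes the $B_i$''. Everything else is routine once one is careful with the right-action convention, and with the fact that the $\cU$ of the statement must first be identified with $\cP(C)$ rather than assumed to coincide with $\cU(C)$ or the like.
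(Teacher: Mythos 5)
Correct, and essentially the same as the paper's argument. The inclusion $\fS(\cU)\subset\widehat{\fS}(C)$ is proved by constructing the same rescaling vector $a$; the reverse inclusion is the paper's one-line remark that automorphisms must respect indecomposable components, which you make precise by first checking $\cP(C)=\cU$ and then invoking the finest-partition uniqueness of Lemma-definition~\ref{finest_dec} together with a block-count comparison. Your treatment of $\widehat{\cA}(C)=\widehat{\cA}(\cU)$, either via Proposition~\ref{structure_algebre} after establishing $\cP(C)=\cU$ or by the direct componentwise check, matches the paper's appeal to \ref{adapt_algebres}.
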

\begin{proof}
Obviously the $\langle v_i\rangle$ are the indecomposable components of $C$, and an automorphism of a code must map indecomposable components to indecomposable components. This implies $\widehat{\fS}(C)\subset\fS(\cU)$.
Conversely, let $\sigma\in\fS(\cU)$. We have to construct $a\in(\F^n)^\times$ such that $C^\sigma*a=C$.
First, we set the coordinates of $a$ equal to $1$ out of $\Supp(C)$. Now $\sigma$ determines a permutation $j\mapsto j'$ of $[s]$, such that $\sigma(B_j)=B_{j'}$ (so $\abs{B_{j'}}=\abs{B_j}$).
Then, for $i\in\Supp(C)$, we have $i\in B_j$ for some $j$, and we can just set $\pi_i(a)=\pi_i(v_j)/\pi_{\sigma(i)}(v_{j'})$. This gives $v_{j'}^\sigma*a=v_j$, hence, letting $j'$ vary, $C^\sigma*a=C$ as claimed.

Last, $\widehat{\cA}(C)=\widehat{\cA}(\cU)$ follows from Proposition~\ref{proprietes_algebre} (and~\ref{adapt_algebres}).
\end{proof}

\entry
\label{aut_stable}
Given a subset $S\subset[n]$ and a partition $\cU$ of $S$, we define $C(\cU)=\bigoplus_{U\in\cU}\langle 1_U\rangle$.
\begin{corollary*}
Let $C\subset\F^n$ be a linear code.
Then for all $t\geq r(C)$ we have $\Aut(C\deux[t])\;\widehat{\sim}\:\Aut(C(\cU))$.
\end{corollary*}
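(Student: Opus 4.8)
The plan is to deduce this directly from Theorem~\ref{descr_stable} and Proposition~\ref{aut_decomp}, where $\cU=\cU(C)=\{B_1,\dots,B_{n_2}\}$ and $v_1,\dots,v_{n_2}$ are the normalized slice generators of~\ref{notations_stable}, so that $\Supp(v_i)=B_i$. First I would invoke Theorem~\ref{descr_stable}: for $t\geq r(C)$ one has $C\deux[t]=\langle(v_1)^t\rangle\oplus\cdots\oplus\langle(v_{n_2})^t\rangle$. Since the coordinates of $v_i$ over $B_i$ are nonzero, so are their $t$-th powers, hence $(v_1)^t,\dots,(v_{n_2})^t$ are nonzero words with pairwise disjoint supports $B_1,\dots,B_{n_2}$ whose union is $\Supp(C)=\Supp(C\deux[t])$. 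The point is that this puts $C\deux[t]$ in exactly the form to which Proposition~\ref{aut_decomp} applies, with associated partition $\cU$.

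Next I would apply Proposition~\ref{aut_decomp} twice. Applied to the decomposition of $C\deux[t]$ above it gives $\widehat{\fS}(C\deux[t])=\fS(\cU)$ and $\widehat{\cA}(C\deux[t])=\widehat{\cA}(\cU)$. Applied to $C(\cU)=\langle 1_{B_1}\rangle\oplus\cdots\oplus\langle 1_{B_{n_2}}\rangle$ --- which by definition is again a direct sum of nonzero words with the same pairwise disjoint supports $B_i$ --- it gives $\widehat{\fS}(C(\cU))=\fS(\cU)$ and $\widehat{\cA}(C(\cU))=\widehat{\cA}(\cU)$. Comparing, $\widehat{\fS}(C\deux[t])=\widehat{\fS}(C(\cU))$ and $\widehat{\cA}(C\deux[t])=\widehat{\cA}(C(\cU))$.

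Finally I would plug these equalities into the exact sequence $1\longto\widehat{\cA}(D)^\times\longto\Aut(D)\overset{\pi}{\longto}\widehat{\fS}(D)\longto1$ recorded in~\ref{monomial_tsfo} and following, valid for any linear code $D$: they say precisely that $\Aut(C\deux[t])$ and $\Aut(C(\cU))$ have the same image under $\pi$ and the same intersection with $(\F^n)^\times$. By the definition of $\widehat{\subset}$ this yields $\Aut(C\deux[t])\;\widehat{\subset}\;\Aut(C(\cU))$, and by symmetry the reverse inclusion, hence $\Aut(C\deux[t])\;\widehat{\sim}\;\Aut(C(\cU))$.

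I do not expect a genuine obstacle, as the statement is a formal consequence of the two cited results. The only subtlety worth keeping in mind is that $\Aut(C\deux[t])$ and $\Aut(C(\cU))$ need not coincide as subgroups of $\Aut(\F^n)$: a subgroup is not determined by its $\pi$-image together with its intersection with the kernel $(\F^n)^\times$, and indeed the \emph{diagonal data} $(v_i)^t$ and $1_{B_i}$ differ in general --- which is exactly why the conclusion is stated with $\widehat{\sim}$ rather than with equality.
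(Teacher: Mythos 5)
Your proof is correct and takes essentially the same route as the paper, which simply states the corollary is a ``consequence of Theorem~\ref{descr_stable} and Proposition~\ref{aut_decomp}''; you have unpacked exactly that deduction. The closing remark about why equality of subgroups of $\Aut(\F^n)$ cannot be expected is a pertinent observation that the paper leaves implicit.
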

\begin{proof}
Consequence of Theorem~\ref{descr_stable} and Proposition~\ref{aut_decomp}.
\end{proof}
Said otherwise, up to $\widehat{\sim}$, the sequence $\Aut(C\deux[t])$ becomes ultimately constant.
For all $t,t'\geq r(C)$ we have $\Aut(C\deux[t])\;\widehat{\sim}\:\Aut(C\deux[t'])$.

\entry
\label{open_pb_sym}
Here are three open problems that, by lack of time, the author did not try to address.

First, considering Example~\ref{ex_sym}, Proposition~\ref{aut_divise}, and Corollary~\ref{aut_stable}, it might be interesting to compare $\Aut(C\deux[t])$ and $\Aut(C\deux[t'])$ for all $t,t'$, not only for $t|t'$ or for $t,t'\geq r(C)$.
By Proposition~\ref{proprietes_algebre} (and~\ref{adapt_algebres}) we have $\widehat{\cA}(C\deux[t])\subset\widehat{\cA}(C\deux[t+1])$ for all $t\geq 1$,
so a key point would be to compare $\widehat{\fS}(C\deux[t])$ and $\widehat{\fS}(C\deux[t+1])$.

Second, note that, as defined, $\Aut(C)$ is a subgroup of $\Aut(\F^n)$, and its action on $C$ need not be faithful.
Another perhaps equally interesting object is the group $\Aut^{\textrm{in}}(C)$ of invertible linear endomorphisms of $C$
(seen as an abstract vector space) that preserve the Hamming metric. We might call $\Aut^{\textrm{in}}(C)$ the group of ``internally defined'' automorphisms
(or isometries) of $C$.
Obviously, an element of $\Aut(C)$ acts on $C$ through an element of $\Aut^{\textrm{in}}(C)$,
and conversely, the McWilliams equivalence theorem~\cite{McWilliams} shows that all elements of $\Aut^{\textrm{in}}(C)$
arise in this way. So we have an identification
\beq
\Aut^{\textrm{in}}(C)=\Aut(C)/\Aut^0(C)
\eeq
where $\Aut^0(C)\subset\Aut(C)$ is the kernel of the action of $\Aut(C)$ on $C$.
It then appears very natural to try compare the $\Aut^{\textrm{in}}(C\deux[t])$ as $t$ varies (and for this, it might be useful to compare the $\Aut^0(C\deux[t])$ first).

Last, we were interested here only in groups acting linearly on codes. However, when $\F$ is a nonprime finite field, we can also consider the action of the Frobenius,
which preserves the Hamming metric, leading to the notion of semilinear automorphism. One could then try to extend the study to this semilinear setting.

\section{Estimates involving the dual distance}
\label{sect_dual_distance}

\entry
\label{def_ddual}
A characterization of the dual distance $\ddual(C)$ of a linear code $C\subset\F^n$
is as the smallest possible length of a linear dependence relation between columns of $C$.
In case $C=\F^n$, there is no such relation, but it might then be convenient to set $\ddual(\F^n)=n+1$.

This can be rephrased as:
\begin{lemma*}
Let $0\leq m\leq n$. Then we have $\ddual(C)\geq m+1$ if and only if, 
for any set of indices $J\subset[n]$ of size $\abs{J}=m$,
and for any $j\in J$,
there is a codeword $y\in C$ with coordinate $\pi_{j}(y)=1$
and $\pi_{j'}(y)=0$ for $j'\in J\moins\{j\}$.
\end{lemma*}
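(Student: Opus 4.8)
The plan is to translate both sides of the claimed equivalence into elementary linear algebra about the columns $\pi_1,\dots,\pi_n\in C^\vee$, and then show each implication by a direct argument. Recall from~\ref{def_ddual} that $\ddual(C)\geq m+1$ means exactly that no $m$ columns of $C$ (equivalently, $m$ of the linear forms $\pi_i$, viewed via a parity-check matrix $H$ as the rows of $H^T$, i.e. the columns of $H$) satisfy a nontrivial linear dependence relation; said differently, every set of $m$ columns of $C$ is linearly independent \emph{as a family of vectors in $C^\vee$} — wait, more precisely, every $m$ columns of a parity-check matrix are linearly independent, which is the same as saying that the $m$ corresponding coordinate functionals on $C$ are \emph{surjective onto $\F^J$}. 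So the real content is the standard fact: $\ddual(C)\geq m+1$ iff for every $J\subset[n]$ with $\abs J=m$ the projection $\pi_J:C\surj\F^J$ is onto. This reformulation is the natural bridge to the statement to be proved.

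First I would prove the ``only if'' direction. Assume $\ddual(C)\geq m+1$, fix $J$ with $\abs J=m$ and $j\in J$. By the reformulation above, $\pi_J:C\to\F^J$ is surjective, so in particular the standard basis vector $1_{\{j\}}\in\F^J$ (which is $1$ at $j$ and $0$ on $J\moins\{j\}$) has a preimage $y\in C$. This $y$ has $\pi_j(y)=1$ and $\pi_{j'}(y)=0$ for $j'\in J\moins\{j\}$, which is exactly what is wanted. The only thing to justify carefully here is the reformulation itself, which is the dual of the usual ``any $d^\perp-1$ columns of a parity-check matrix are independent'' — I would phrase it as: columns indexed by $J$ of $H$ (a parity check matrix) are independent iff the rows of $H$ restricted to coordinates in $J$ span $\F^J$ iff $\pi_J:C\to\F^J$ is onto (this last step being the statement that $\ker(\pi_J)=C_{[n]\moins J}$ has codimension $\abs J$ in $C$, equivalently that $(C_{[n]\moins J})^\perp$, which is spanned by the rows of $H$ supported on... ) — in any case, a one-line linear-algebra remark, perhaps just citing Lemma~\ref{lemme_col_prop} in spirit.

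Next I would prove the ``if'' direction by contraposition. Suppose $\ddual(C)\leq m$, so there is a nonzero word $x\in C^\perp$ with $w(x)\leq m$; enlarging its support arbitrarily we may pick $J\supset\Supp(x)$ with $\abs J=m$, and pick $j\in\Supp(x)$, so $\pi_j(x)=x_j\neq0$. Now if some $y\in C$ had $\pi_j(y)=1$ and $\pi_{j'}(y)=0$ for all $j'\in J\moins\{j\}$, then since $\Supp(x)\subset J$ we would get $\langle x\mid y\rangle=\sum_{i}x_iy_i=\sum_{i\in J}x_iy_i=x_j\cdot1=x_j\neq0$, contradicting $x\in C^\perp$. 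Hence no such $y$ exists for this $J$ and $j$, which is the negation of the right-hand side.

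The argument is entirely routine; the only ``obstacle'' is bookkeeping about the boundary cases $m=0$ (where the right-hand side is vacuously true and $\ddual(C)\geq1$ always holds, consistent with the convention $\ddual(\F^n)=n+1$) and $m=n$ with $C=\F^n$, which the convention $\ddual(\F^n)=n+1$ is precisely designed to handle. I would dispatch these in a single sentence. I expect the cleanest exposition is the two-implication proof above rather than invoking the surjectivity reformulation as a black box, since the reformulation is essentially reproved inside the two implications anyway.
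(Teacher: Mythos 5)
Your proposal is correct, and it fills in details the paper itself leaves unsaid: the paper simply states this lemma as a ``rephrasing'' of the characterization of $\ddual(C)$ as the smallest length of a linear dependence among columns, and it records your key reformulation ($\ddual(C)\geq m+1$ iff $\pi_J(C)=\F^J$ for all $J$ of size $m$) as an ``equivalently'' immediately after the statement, without proof. Your two-implication argument --- surjectivity of $\pi_J$ for the forward direction, contraposition via a low-weight $x\in C^\perp$ paired against $y$ for the backward direction --- is exactly the routine unwinding the author is gesturing at, and both directions are sound, including the boundary cases $m=0$ and $m=n$ handled by the convention $\ddual(\F^n)=n+1$. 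One small polish worth making in a final write-up: in the ``only if'' direction, rather than waving at parity-check columns, it is cleanest to observe directly that $\pi_J(C)\subsetneq\F^J$ would yield a nonzero $z'\in\F^J$ orthogonal to $\pi_J(C)$, hence $\iota_J(z')\in C^\perp$ of weight $\leq m$, contradicting $\ddual(C)\geq m+1$; this removes the trailing half-sentence in your plan and makes the two directions visibly dual to each other.
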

Equivalently, $\ddual(C)\geq m+1$ if and only if, 
for any $J\subset[n]$ of size $\abs{J}=m$, $\dim(\pi_J(C))=m$.

\entry
From this we readily derive the following properties:
\begin{lemma*}
\label{proprietes_ddual}
Let $C\subset\F^n$ be a linear code. Then:
\begin{enumerate}[(i)]
\item\label{proprietes_ddual_sous-code}
For any subcode $C'\subset C$, we have $\ddual(C')\leq\ddual(C)$.
\item\label{proprietes_ddual_projection}
For any set of indices $S\subset[n]$, we have $\ddual(\pi_S(C))\geq\min(\abs{S}+1,\ddual(C))$.
\item\label{proprietes_ddual_MDS}
We have $\ddual(C)\leq\dim(C)+1$ with equality if and only if $C$ is MDS.
\end{enumerate}
\end{lemma*}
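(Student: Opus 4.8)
The plan is to read everything off the characterization recorded in~\ref{def_ddual}, namely that $\ddual(C)=\dmin(C^\perp)$ and that, for $0\le m\le n$, one has $\ddual(C)\ge m+1$ if and only if $\dim(\pi_J(C))=m$ for every $J\subset[n]$ with $\abs{J}=m$. The degenerate case $C=\F^n$ will be dealt with by the convention $\ddual(\F^n)=n+1$, and one checks by inspection that it is consistent with each of the three assertions; likewise $C=0$ in~(iii).

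For~(i) I would avoid the characterization: from $C'\subset C$ we get $C^\perp\subset C'^\perp$, and since enlarging a code can only lower its minimum weight, $\ddual(C')=\dmin(C'^\perp)\le\dmin(C^\perp)=\ddual(C)$. For~(ii), set $m=\min(\abs{S},\ddual(C)-1)$, so that $m+1=\min(\abs{S}+1,\ddual(C))$; I then want $\ddual(\pi_S(C))\ge m+1$. Given any $J\subset S$ with $\abs{J}=m$, the inclusion $J\subset S$ makes the two projections agree, $\pi_J(\pi_S(C))=\pi_J(C)$, and $m\le\ddual(C)-1$ forces $\dim(\pi_J(C))=m$ by the characterization; applying the characterization now to $\pi_S(C)\subset\F^S$ gives the desired bound.

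For~(iii), the Singleton bound applied to $C^\perp$ gives $\ddual(C)=\dmin(C^\perp)\le n-\dim(C^\perp)+1=\dim(C)+1$ as soon as $C^\perp\neq0$, the remaining case being the convention one; and equality says exactly that $C^\perp$ meets the Singleton bound, i.e.\ that $C^\perp$ is MDS, which is equivalent to $C$ being MDS. I do not expect a real obstacle in any of this: the only thing requiring minimal care is the uniform handling of the boundary conventions for the full and zero codes, and in~(ii) checking that the identity $\min(\abs{S},\ddual(C)-1)+1=\min(\abs{S}+1,\ddual(C))$ is what turns the bound into the stated form.
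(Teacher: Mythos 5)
Your proof is correct and takes essentially the route the paper intends: the lemma is presented right after the characterization in~\ref{def_ddual} with the words ``from this we readily derive,'' and your arguments for~(ii) and~(iii) indeed read those properties off that characterization (resp.\ off the Singleton bound applied to $C^\perp$, with equality identified as MDS-ness of $C^\perp$, hence of $C$). The only mild deviation is~(i), where you bypass the projection characterization and argue directly via $C^\perp\subset C'^\perp$; that is shorter and equally valid, and it does force you to note the $C=\F^n$ convention separately, which you do. Your handling of the degenerate conventions is careful and your reduction in~(ii) to checking $\dim(\pi_J(C))=m$ for $J\subset S$, $\abs{J}=m=\min(\abs{S},\ddual(C)-1)$, is exactly the clean way to organize it.
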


\entry
\label{ddual*_et_ddual}
The simplest estimate involving products of codes and the dual distance is probably the following:
\begin{proposition*}
Let $C_1,C_2\subset\F^n$ be two linear codes with full support, i.e. dual distances $d_1^\perp,d_2^\perp\geq2$.
Then we have
\beq
\ddual(C_1*C_2)\geq\min(n+1,\,d_1^\perp+d_2^\perp-2).
\eeq
\end{proposition*}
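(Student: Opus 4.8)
\emph{Plan of proof.} The statement will follow from the projection description of the dual distance recalled in~\ref{def_ddual}. Set $m=\min(n,\,d_1^\perp+d_2^\perp-3)$; since $d_1^\perp,d_2^\perp\geq2$ we have $m\geq1$, and it suffices to prove that $\pi_J(C_1*C_2)=\F^J$ for every subset $J\subset[n]$ with $\abs{J}=m$ (for $m<n$ this gives $\ddual(C_1*C_2)\geq m+1=d_1^\perp+d_2^\perp-2$, and for $m=n$ it gives $C_1*C_2=\F^n$, i.e. $\ddual(C_1*C_2)=n+1$; in both cases this is the asserted bound). Because $\pi_J$ is multiplicative for $*$, one has $\pi_J(C_1*C_2)=\pi_J(C_1)*\pi_J(C_2)$, and writing $A_i=\pi_J(C_i)\subset\F^J$ each $A_i$ has full support, length $m$, and $\ddual(A_i)\geq\min(m+1,d_i^\perp)$ by~\ref{proprietes_ddual}\itemref{proprietes_ddual_projection}.

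\emph{Main argument.} Suppose $A_1*A_2\neq\F^J$ and fix a nonzero $z\in(A_1*A_2)^\perp\subset\F^J$, with $T=\Supp(z)$ and $w=\abs{T}$, so $1\leq w\leq m$. Restrict everything to $T$: put $B_i=\pi_T(A_i)=\pi_T(C_i)\subset\F^T$, which again have full support, length $w$, and $\ddual(B_i)\geq\min(w+1,d_i^\perp)$. A routine check (using that $z$ is supported in $T$, together with multiplicativity of $\pi_T$) shows $z|_T\in(B_1*B_2)^\perp$; and now $z|_T$ has full support in $\F^T$, hence is invertible there. By the self-adjointness of componentwise multiplication (Proposition~\ref{c*_autoadjoint}), the relation $z|_T\perp B_1*B_2$ rewrites as $z|_T*B_1\subset(B_2)^\perp$ inside $\F^T$; since $z|_T$ is a unit, multiplication by it embeds $B_1$ into $(B_2)^\perp$, so $\dim B_1+\dim B_2\leq w$. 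On the other hand the Singleton-type bound~\ref{proprietes_ddual}\itemref{proprietes_ddual_MDS} gives $\dim B_i\geq\ddual(B_i)-1\geq\min(w+1,d_i^\perp)-1$, and a short case distinction — if some $d_i^\perp\geq w+1$ use $d_{3-i}^\perp\geq2$; if $d_1^\perp,d_2^\perp\leq w$ use $d_1^\perp+d_2^\perp\geq m+3\geq w+3$ — yields $\dim B_1+\dim B_2\geq w+1$. This contradiction shows $A_1*A_2=\F^J$, which completes the proof.

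\emph{Where the difficulty lies.} The one delicate point is the second restriction, to $T=\Supp(z)$. Working with $z$ of arbitrary weight only produces a lopsided inequality linking $\dim\pi_T(A_1)$ with $\dim A_2$, which does not close up; passing to $T$ turns $z$ into a unit of $\F^T$ and puts $B_1,B_2$ on an equal footing, so that the two dimension bounds can be played against each other. The other thing to keep in mind is that $\ddual$ of a length-$\ell$ code never exceeds $\ell+1$ — this is exactly why the full-support hypothesis $d_i^\perp\geq2$ is used in the mixed case of the case distinction, and why the truncation $\min(n+1,\,\cdot\,)$ appears in the statement.
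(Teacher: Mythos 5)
Your proof is correct, but it takes a genuinely different and noticeably heavier route than the paper's. The paper's argument is a three-line direct construction: using the characterization in~\ref{def_ddual}, for a set $J$ of size $m=\min(n,\,d_1^\perp+d_2^\perp-3)$ and a chosen $j\in J$, it splits $J\moins\{j\}$ into two pieces $A_1,A_2$ of sizes at most $d_1^\perp-2$ and $d_2^\perp-2$, picks $y_i\in C_i$ that is $1$ at $j$ and $0$ on $A_i$, and observes that $y_1*y_2$ is the needed witness. This has the side benefit of producing the witness as an \emph{elementary} product codeword, which is what makes the same argument recyclable later (it is in essence the puncturing step of the second proof of the product Singleton bound). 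Your proof instead proceeds by contradiction, projecting to $J$, fixing a nonzero $z\in(A_1*A_2)^\perp$, restricting to its support $T$ so that $z|_T$ becomes a unit, and then invoking the adjunction $\langle c*x|y\rangle=\langle x|c*y\rangle$ from~\ref{c*_autoadjoint} to get $\dim B_1+\dim B_2\leq w$, which is then played off against the Singleton-type lower bounds $\dim B_i\geq\ddual(B_i)-1\geq\min(w+1,d_i^\perp)-1$. All the steps check out — the multiplicativity of $\pi_J$ for $*$, the two applications of~\ref{proprietes_ddual}\itemref{proprietes_ddual_projection}, the restriction to $T$ preserving orthogonality, and both branches of the case distinction close. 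What the paper's route buys is brevity and transparency (no duality, no dimension count, a witness in product form); what your route buys is an illustration of the adjunction/dimension-counting machinery that the paper does use elsewhere (compare the first proof of Proposition~\ref{Singleton_t=2}), and the restrict-to-support trick is a nice device worth remembering even if it is overkill here.
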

\begin{proof}
It suffices to show that any subset $J\subset[n]$ of size $m=\min(n,d_1^\perp+d_2^\perp-3)$ satisfies the condition
in Lemma~\ref{def_ddual}. So pick $j\in J$ and write $J\moins\{j\}=A_1\cup A_2$ with $\abs{A_1}=d_1^\perp-2$
and $\abs{A_2}\leq d_2^\perp-2$. Then by Lemma~\ref{def_ddual} we can find $y_1\in C_1$ that is $1$ at $j$ and $0$ over $A_1$,
and $y_2\in C_2$ that is $1$ at $j$ and $0$ over $A_2$. Then $y=y_1*y_2\in C_1*C_2$ is $1$ at $j$ and $0$ over $J\moins\{j\}$
as requested.
\end{proof}

From this one deduces the following estimate, that the author first learned from A.~Couvreur:
\begin{corollary}
\label{reg_et_ddual}
Let $C\subset\F^n$ have full support and no repeated column, i.e. dual distance $d^\perp\geq3$.
Then for all $t\geq1$ we have
\beq
\dim(C\deux[t])\geq\min(n,\,1+(d^\perp-2)t).
\eeq
As a consequence $C$ has regularity
\beq
r(C)\leq\left\lceil\frac{n-1}{d^\perp-2}\right\rceil
\eeq
and for $t\geq r(C)$ we have $C\deux[t]=\F^n$.
\end{corollary}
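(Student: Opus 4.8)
The plan is to bound the \emph{dual} distance of each power $C\deux[t]$ by iterating the estimate of~\ref{ddual*_et_ddual}, then to convert this into a lower bound on $\dim(C\deux[t])$ via the MDS bound, and finally to read off the regularity statement. Concretely, I would first prove by induction on $t\geq1$ that
\[
\ddual(C\deux[t])\;\geq\;\min\bigl(n+1,\;(d^\perp-2)t+2\bigr).
\]
The case $t=1$ is the hypothesis $\ddual(C)=d^\perp$. For the step from $t$ to $t+1$, note that by Lemma~\ref{Supp*=intSupp} every power has full support, $\Supp(C\deux[t])=\Supp(C)=[n]$, so $\ddual(C\deux[t])\geq2$ and the Proposition of~\ref{ddual*_et_ddual} applies to the pair $(C\deux[t],C)$. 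Using $C\deux[t+1]=C\deux[t]*C$ it gives
\[
\ddual(C\deux[t+1])\;\geq\;\min\bigl(n+1,\;\ddual(C\deux[t])+d^\perp-2\bigr),
\]
and inserting the inductive hypothesis produces the bound for $t+1$.

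Next I would apply item~\itemref{proprietes_ddual_MDS} of the Lemma in~\ref{proprietes_ddual}, which gives $\dim(D)\geq\ddual(D)-1$ for every linear code $D$ (valid also for $D=\F^n$ with the convention $\ddual(\F^n)=n+1$). Taking $D=C\deux[t]$ and combining with the inequality just established yields
\[
\dim(C\deux[t])\;\geq\;\ddual(C\deux[t])-1\;\geq\;\min\bigl(n,\;(d^\perp-2)t+1\bigr),
\]
which is the first assertion.

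For the regularity bound, set $t_0=\lceil(n-1)/(d^\perp-2)\rceil$ (meaningful since $d^\perp-2\geq1$). Then $(d^\perp-2)t_0\geq n-1$, so the displayed dimension bound forces $\dim(C\deux[t_0])=n$, i.e.\ $C\deux[t_0]=\F^n$. Since the dimension sequence is non-decreasing (Theorem~\ref{monotone}) and bounded above by $n$, it is constant equal to $n$ for all $t\geq t_0$; by Definition~\ref{def_CMreg} this means $r(C)\leq t_0$, and consequently $C\deux[t]=\F^n$ for all $t\geq r(C)$.

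I do not anticipate a genuine obstacle; the two points deserving care are the bookkeeping around the convention $\ddual(\F^n)=n+1$, so that both the induction and the MDS-type inequality remain correct once a power already equals $\F^n$, and the remark that the full-support hypothesis propagates to every power $C\deux[t]$ (Lemma~\ref{Supp*=intSupp}), which is precisely what keeps the estimate of~\ref{ddual*_et_ddual} applicable at each step of the induction.
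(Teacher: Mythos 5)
Your argument is exactly the paper's: the paper's one-line proof ("Write $\dim(C\deux[t])\geq\ddual(C\deux[t])-1$ and make induction on $t$ using Proposition~\ref{ddual*_et_ddual}") is precisely the iteration of the dual-distance estimate together with the MDS-type inequality, followed by the monotonicity of the dimension sequence to extract the regularity bound. Your write-up merely unfolds the same two ingredients explicitly; it is correct as stated.
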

\begin{proof}
Write $\dim(C\deux[t])\geq\ddual(C\deux[t])-1$ and make induction on $t$ using Proposition~\ref{ddual*_et_ddual}.
\end{proof}

In fact it is possible to say slightly better, as will be seen below.

\begin{proposition}
\label{dim_et_ddual}
Let $C_1,C_2\subset\F^n$ be two linear codes. Suppose $C_2$ has full support, i.e. dual distance $d_2^\perp\geq2$. Then we have
\beq
\dim(C_1*C_2)\geq\min(n_1,\,k_1+d_2^\perp-2)
\eeq
where $n_1=\abs{\Supp(C_1)}$ and $k_1=\dim(C_1)$.
\end{proposition}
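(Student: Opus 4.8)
The plan is to secure $k_1$ dimensions of $C_1*C_2$ ``for free'' by arranging $\un\in C_2$ — so that $C_1=C_1*\un\subseteq C_1*C_2$ — and then to exhibit $r:=\min(n_1-k_1,\,d_2^\perp-2)$ further codewords of $C_1*C_2$ whose span avoids $C_1$, using an information set of $C_1$ together with the column-independence form of the dual distance of $C_2$ (Lemma~\ref{def_ddual}). Note $n_1\geq k_1$ always, so $r\geq0$, and the cases $k_1=0$ or $r=0$ are trivial.

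First I would reduce to the case $\un\in C_2$. By Lemmas~\ref{compatibilites_extension} and~\ref{mot_plein} and Corollary~\ref{parametres_extension}, the quantities $k_1$, $n_1=\abs{\Supp(C_1)}$, $\ddual(C_2)$ and $\dim(C_1*C_2)$ are all unchanged under extension of scalars, and over a suitable finite extension $C_2$ acquires an invertible codeword $e\in(\F^n)^\times$ (here $C_2$ has full support since $d_2^\perp\geq2$). Replacing $C_2$ by $e^{-1}*C_2$ — which contains $\un=e^{-1}*e$, has dual code $e*C_2^\perp$ and hence the same dual distance (Proposition~\ref{c*_autoadjoint}), and satisfies $C_1*(e^{-1}*C_2)=e^{-1}*(C_1*C_2)$, of unchanged dimension — reduces us to $\un\in C_2$, hence to proving $\dim(C_1*C_2)\geq k_1+r$.

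Next I would choose an information set $I\subset[n]$ of $C_1$ (so $\abs{I}=k_1$ and $\pi_I\colon C_1\overset{\sim}{\to}\F^I$; necessarily $I\subset\Supp(C_1)$), put $J=[n]\moins I$ (so $\abs{J\cap\Supp(C_1)}=n_1-k_1$), and let $(u_i)_{i\in I}$ be the basis of $C_1$ with $\pi_{i'}(u_i)=\delta_{i,i'}$ on $I$; write $a_{i,j}=\pi_j(u_i)$ for $j\in J$. Fix a subset $J_0\subset J\cap\Supp(C_1)$ with $\abs{J_0}=r$. For each $j\in J_0$ there is $i(j)\in I$ with $a_{i(j),j}\neq0$ (otherwise $\pi_j$ would vanish on $C_1$); set $A_j=\{j'\in J_0\moins\{j\}\,;\,a_{i(j),j'}\neq0\}$ and $T_j=\{i(j)\}\cup\{j\}\cup A_j$. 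These three pieces are pairwise disjoint, so $\abs{T_j}\leq 2+(r-1)\leq d_2^\perp-1$, and Lemma~\ref{def_ddual} provides $c^{(j)}\in C_2$ with $\pi_j(c^{(j)})=1$ and $\pi_{i'}(c^{(j)})=0$ for $i'\in T_j\moins\{j\}$. Set $z_j=u_{i(j)}*c^{(j)}\in C_1*C_2$.

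Because $u_{i(j)}$ is supported on $\{i(j)\}$ over $I$ and $\pi_{i(j)}(c^{(j)})=0$, the word $z_j$ is supported inside $J$; and $\pi_{j'}(z_j)=a_{i(j),j'}\pi_{j'}(c^{(j)})$ vanishes for every $j'\in J_0\moins\{j\}$ (either $a_{i(j),j'}=0$ or $j'\in A_j$), while $\pi_j(z_j)=a_{i(j),j}\neq0$. So $(\pi_{j'}(z_j))_{j,j'\in J_0}$ is diagonal with nonzero diagonal, the $z_j$ are linearly independent, and since they lie in $\iota_J(\F^J)$ their span meets $C_1\cap\iota_J(\F^J)=\ker(\pi_I|_{C_1})=0$. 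Hence $\dim(C_1*C_2)\geq k_1+r=\min(n_1,\,k_1+d_2^\perp-2)$. The conceptual crux is the first reduction — it is what lets $\dim C_1$, rather than the weaker $\ddual(C_1)-1$, enter the bound; the only genuinely delicate point afterward is the estimate $\abs{T_j}\leq d_2^\perp-1$, which works precisely because the echelon shape of $(u_i)$ means one only needs to cancel the $j'$-coordinate of $z_j$ at those $j'$ where $u_{i(j)}$ is itself nonzero.
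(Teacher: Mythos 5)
Your proof is correct, and its combinatorial core is the same device the paper uses: put $C_1$ in echelon form on an information set, then multiply its generators by codewords of $C_2$ furnished by Lemma~\ref{def_ddual} to manufacture elements of $C_1*C_2$ with prescribed zero patterns. In the paper's version one writes a generator matrix of $C_1$ as $(\,I_{k_1}\mid X\mid 0\,)$ with $X$ having no zero column and directly assembles $k=\min(n_1,\,k_1+d_2^\perp-2)$ codewords of $C_1*C_2$ forming a matrix of shape $(\,I_k\mid Z\,)$, citing~\cite[Lemma~6]{Singleton} for the bookkeeping. Your genuine addition is the opening reduction: after extending scalars (Lemma~\ref{mot_plein}, Corollary~\ref{parametres_extension}) and normalizing by an invertible word $e\in C_{2,\K}$, you arrange $\un\in C_2$, so that $C_1=C_1*\un\subset C_1*C_2$ supplies $k_1$ dimensions automatically and only $r=\min(n_1-k_1,\,d_2^\perp-2)$ extra words $z_j$ must be built; the conclusion then reads as a direct sum $C_1\oplus\langle z_j:j\in J_0\rangle\subset C_1*C_2$. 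This buys a cleaner dimension count (no rank argument on a constructed matrix) at the cost of invoking the extension-of-scalars apparatus; the paper's version stays entirely over the base field but must construct all $k$ rows, including the $k_1$ corresponding to the identity block. Both routes are valid, and your estimate $\abs{T_j}\le d_2^\perp-1$ is exactly the bound the paper's construction needs; the two proofs are, in substance, the same argument with a different distribution of labor between the reduction step and the explicit construction.
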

\begin{proof}
We can suppose $C_1$ has a generator matrix of the form
\beq
G_1=
\left(\begin{array}{c|c|c}
I_{k_1} & X & 0
\end{array}\right)
\eeq
where $I_{k_1}$ is the $k_1\times k_1$ identity matrix, and $X$ is a $k_1\times (n_1-k_1)$ matrix with no zero column. 
Then, multiplying rows of $G_1$ with suitable codewords of $C_2$ given by Lemma~\ref{def_ddual},
one constructs codewords in $C_1*C_2$ that form the rows of a matrix of the form
\beq
\left(\begin{array}{c|c}
I_k & Z
\end{array}\right)
\eeq
with $k=\min(n_1,\,k_1+d_2^\perp-2)$.
See~\cite[Lemma~6]{Singleton}, for more details.
\end{proof}

For example, if $C_1,C_2\subset\F^n$ have full support, and $C_2$ is MDS of dimension $k_2$ so $d_2^\perp=k_2+1$, we find
\beq
\dim(C_1*C_2)\geq\min(n,\,k_1+k_2-1).
\eeq

\entry
\label{reg_et_k+ddual}
In another direction, an easy induction on Proposition~\ref{dim_et_ddual}
also shows that if $C\subset\F^n$ is a linear code with full support and no repeated column,
then for any integers $t_0\geq0$, $a\geq 1$, and $j\geq 0$, we have
\beq
\dim(C\deux[t_0+aj])\geq\min(n,\,k_0+(d_a^\perp-2)j)
\eeq
where $k_0=\dim(C\deux[t_0])$ and $d_a^\perp=\ddual(C\deux[a])$.
As a consequence $C$ has regularity
\beq
r(C)\leq t_0+a\left\lceil\frac{n-k_0}{d_a^\perp-2}\right\rceil
\eeq
and for $t\geq r(C)$ we have $C\deux[t]=\F^n$.
We retrieve Corollary~\ref{reg_et_ddual} by setting $t_0=0$, $k_0=1$, $a=1$.

\begin{corollary}
\label{poids*_et_ddual}
Let $C_1,C_2\subset\F^n$ be two linear codes. Suppose $C_2$ has full support, i.e. dual distance $d_2^\perp\geq2$.
Fix an integer $i$ in the interval $1\leq i\leq\dim(C_1)$,
and set
\beq
m=\min(w_i(C_1)-i,\,d_2^\perp-2)\geq0
\eeq
where $w_i(C_1)$ is the $i$-th generalized Hamming weight of $C_1$.
Then for all $j$ in the interval $1\leq j\leq i+m$ we have
\beq
w_j(C_1*C_2)\leq w_i(C_1)-i-m+j.
\eeq
In particular (for $C_1$ nonzero) setting $i=j=1$ we find
\beq
\dmin(C_1*C_2)\leq\max(1,\,d_1-d_2^\perp+2)
\eeq
where $d_1=w_1(C_1)=\dmin(C_1)$.
\end{corollary}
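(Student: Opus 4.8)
The plan is to derive the estimate from Proposition~\ref{dim_et_ddual}, applied to a well-chosen subcode of $C_1$, together with the Singleton bound for generalized Hamming weights (\cite{Wei}).

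First I would fix $i$ and, by the very definition of $w_i$, choose a subset $S\subset[n]$ with $\abs{S}=w_i(C_1)$ and $\dim((C_1)_S)\geq i$; then I pick a subcode $D_1\subset(C_1)_S$ of dimension exactly $i$. Since $\Supp(D_1)\subset S$, while also $\abs{\Supp(D_1)}\geq w_i(C_1)$ (again by the definition of $w_i$, applied to $D_1$), one gets $\abs{\Supp(D_1)}=w_i(C_1)$ exactly.

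Next I would consider $E=D_1*C_2$, which is a subcode of $C_1*C_2$. By Lemma~\ref{Supp*=intSupp} its support is $\Supp(D_1)\cap\Supp(C_2)=\Supp(D_1)$, of size $w_i(C_1)$. Applying Proposition~\ref{dim_et_ddual} to the pair $(D_1,C_2)$ --- which is legitimate since $C_2$ has full support --- gives
\beq
\dim(E)\;\geq\;\min\bigl(\abs{\Supp(D_1)},\,\dim(D_1)+d_2^\perp-2\bigr)\;=\;\min\bigl(w_i(C_1),\,i+d_2^\perp-2\bigr)\;=\;i+m,
\eeq
the last equality being just the definition of $m$.

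Finally I would invoke the generalized Singleton bound: any linear code whose support has size $N$ and whose dimension is $K$ satisfies $w_j\leq N-K+j$ for all $1\leq j\leq K$ (proof: restrict to the support, delete any $K-j$ coordinates, and take the subcode of codewords vanishing on them; it has dimension $\geq j$ and support of size $\leq N-(K-j)$). Applied to $E$, whose support has size $w_i(C_1)$ and whose dimension is at least $i+m$, this gives $w_j(E)\leq w_i(C_1)-(i+m)+j$ for $1\leq j\leq i+m$ (within the allowed range since $\dim(E)\geq i+m$); and since $E\subset C_1*C_2$ every subcode of $E$ is one of $C_1*C_2$, so $w_j(C_1*C_2)\leq w_j(E)$, which is the assertion. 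The ``in particular'' clause is the case $i=j=1$: there $m=\min(d_1-1,\,d_2^\perp-2)$ and $w_i(C_1)-i-m+j=d_1-m=\max(1,\,d_1-d_2^\perp+2)$. No step here is genuinely difficult; the only real care needed is the bookkeeping --- extracting $D_1$ so that both its dimension and its support size are pinned down, so that Proposition~\ref{dim_et_ddual} yields exactly the bound $i+m$ rather than something weaker, and checking that the claimed range $1\leq j\leq i+m$ sits inside the range $1\leq j\leq\dim(E)$ where the generalized Singleton bound applies. (Should one prefer to avoid citing that bound, one could instead re-run the systematic-form generator-matrix argument from the proof of Proposition~\ref{dim_et_ddual} directly inside $C_1*C_2$ and then zero out suitable rows, but this merely re-proves the same inequality.)
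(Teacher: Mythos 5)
Your proof is correct and takes essentially the same route as the paper: both extract a subcode of $C_1$ of dimension $i$ and support size $w_i(C_1)$, multiply by $C_2$ and use Proposition~\ref{dim_et_ddual} to pin down the dimension as at least $i+m$, and then conclude by a shortening/generalized-Singleton step. The paper phrases the shortening step as the inequality $w_j(C_1*C_2)\leq w_{i+m}(C_1*C_2)-i-m+j$ on $C_1*C_2$ itself and then bounds $w_{i+m}(C_1*C_2)$ by the subcode, whereas you apply the same shortening argument directly to $E=D_1*C_2$ and then pass to $C_1*C_2$ via $w_j(C_1*C_2)\leq w_j(E)$; this is just a reordering of the same two ingredients and produces the identical bound.
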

\begin{proof}
Since $w_j(C_1*C_2)\leq w_{i+m}(C_1*C_2)-i-m+j$ (proof: shortening), it suffices to show $w_{i+m}(C_1*C_2)\leq w_i(C_1)$.
But then, just take $C'\subset C_1$ with support size $w_i(C_1)$ and dimension $i$, and observe that $C'*C_2\subset C_1*C_2$
has support size $w_i(C_1)$ and dimension at least $i+m$
by Proposition~\ref{dim_et_ddual}.
\end{proof}

\entry
\label{poidsdual*_et_ddual}
The same works for the dual weights of a product, improving on~\ref{ddual*_et_ddual}:
\begin{corollary*}
Let $C_1,C_2\subset\F^n$ be two linear codes with full support.
Then for all $i$ in the interval $1\leq i\leq n-\dim(C_1*C_2)$
we have
\beq
w_i((C_1*C_2)^\perp)\geq w_{i+d_2^\perp-2}(C_1^\perp).
\eeq
\end{corollary*}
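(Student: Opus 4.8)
The plan is to argue on the dual side, in the same spirit as the proof of Corollary~\ref{poids*_et_ddual}: I push an $i$-dimensional subcode of $(C_1*C_2)^\perp$ into $C_1^\perp$ through the adjunction of Corollary~\ref{cor_adj}, keeping track of support size and dimension along the way. Write $d=d_2^\perp$ and $D=(C_1*C_2)^\perp$, and fix $i$ with $1\le i\le n-\dim(C_1*C_2)=\dim D$. It suffices to show that every linear subcode $D'\subset D$ with $\dim D'=i$ satisfies $\abs{\Supp(D')}\ge w_{i+d-2}(C_1^\perp)$; taking the minimum over such $D'$ then gives $w_i\bigl((C_1*C_2)^\perp\bigr)\ge w_{i+d-2}(C_1^\perp)$.

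So fix such a $D'$. First I would send it into $C_1^\perp$: applying Corollary~\ref{cor_adj} with $C\mapsto C_2$, $C'\mapsto C_1$ and using commutativity of $*$, we get $C_2*(C_1*C_2)^\perp\subset C_1^\perp$, hence $C_2*D'\subset C_1^\perp$. Since $C_2$ has full support, Lemma~\ref{Supp*=intSupp} gives $\Supp(C_2*D')=\Supp(C_2)\cap\Supp(D')=\Supp(D')$, so $C_2*D'$ is a subcode of $C_1^\perp$ of support size exactly $\abs{\Supp(D')}$. Next, Proposition~\ref{dim_et_ddual} applied with $C_1\mapsto D'$, $C_2\mapsto C_2$ yields $\dim(C_2*D')\ge\min\bigl(\abs{\Supp(D')},\,i+d-2\bigr)$. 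The key step is that the minimum here is always $i+d-2$: if instead $\dim(C_2*D')<i+d-2$, the bound forces $\dim(C_2*D')\ge\abs{\Supp(D')}$, and since a code's dimension never exceeds its support size this gives $\dim(C_2*D')=\abs{\Supp(D')}$, so $C_2*D'$ equals the full subspace $\iota_T(\F^T)$ on $T=\Supp(D')$; then $1_{\{l\}}\in C_2*D'\subset C_1^\perp$ for each $l\in T$, and $T\neq\emptyset$ as $i\ge1$, so by~\ref{support_et_dual} some $l\notin\Supp(C_1)$, contradicting full support of $C_1$.

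Combining the two observations, $C_2*D'$ is a subcode of $C_1^\perp$ of dimension at least $i+d-2$ --- in particular $i+d-2\le\dim(C_1^\perp)$, so $w_{i+d-2}(C_1^\perp)$ is defined --- and of support size $\abs{\Supp(D')}$, whence $w_{i+d-2}(C_1^\perp)\le\abs{\Supp(D')}$; minimizing over all $i$-dimensional $D'\subset D$ completes the proof. The only delicate point is eliminating the spurious branch of Proposition~\ref{dim_et_ddual}, which is precisely where full support of $C_1$ is used; full support of $C_2$ enters once, for $\Supp(C_2*D')=\Supp(D')$, and the rest is formal. (One can also check this recovers~\ref{ddual*_et_ddual}, by taking $i=1$ and bounding $w_{d-1}(C_1^\perp)\ge w_1(C_1^\perp)+(d-2)=d_1^\perp+d-2$ via strict monotonicity of the generalized Hamming weights.)
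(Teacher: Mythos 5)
Your proof is correct and rests on the same two ingredients as the paper's, namely the adjunction $(C_1*C_2)^\perp*C_2\subset C_1^\perp$ from Corollary~\ref{cor_adj} and the dimension bound from Proposition~\ref{dim_et_ddual}; the difference is purely in how they are deployed. The paper routes through the black-box statement of Corollary~\ref{poids*_et_ddual}, introduces the auxiliary quantity $m=\min(w_i((C_1*C_2)^\perp)-i,\,d_2^\perp-2)$, and only afterwards shows $m=d_2^\perp-2$ by observing that $C_1$ having full support forces $w_j(C_1^\perp)\geq j+1$, which kills one branch of a $\max$. You instead fix an $i$-dimensional subcode $D'$ of $(C_1*C_2)^\perp$ realizing $w_i$, apply Proposition~\ref{dim_et_ddual} directly to the pair $(D',C_2)$, and rule out the spurious branch of the $\min$ head-on: you show that $\dim(C_2*D')<i+d_2^\perp-2$ would force $C_2*D'=\iota_T(\F^T)$ on $T=\Supp(D')\neq\emptyset$, contradicting full support of $C_1$ via~\ref{support_et_dual}. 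This is really the same observation the paper encodes as $w_j(C_1^\perp)\geq j+1$, but your inline version avoids the intermediate bookkeeping with $m$ and the double application of stated corollaries, and so is arguably more transparent about exactly where each full-support hypothesis is used (that of $C_2$ to keep $\Supp(C_2*D')=\Supp(D')$, that of $C_1$ to kill the degenerate branch). In short: same route, cleaner signposting.
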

\begin{proof}
By Corollary~\ref{cor_adj} we have $(C_1*C_2)^\perp*C_2\subset C_1^\perp$, so for all $j$,
\beq
w_j(C_1^\perp)\leq w_j((C_1*C_2)^\perp)*C_2).
\eeq
Set $m=\min(w_i((C_1*C_2)^\perp)-i,\,d_2^\perp-2)\geq0$.
Then for $1\leq j\leq i+m$ we can apply Corollary~\ref{poids*_et_ddual} with $C_1$ replaced by $(C_1*C_2)^\perp$, to get
\beq
\begin{split}
w_j((C_1*C_2)^\perp*C_2)&\leq w_i((C_1*C_2)^\perp)-i-m+j\\
&=\max(j,\,w_i((C_1*C_2)^\perp)-i+j-d_2^\perp+2).
\end{split}
\eeq
We combine these two inequalities, and we note that $C_1$ having full support implies $w_j(C_1^\perp)\geq j+1$, so the only possibility left in the $\max$ is
\beq
w_j(C_1^\perp)\leq w_i((C_1*C_2)^\perp)-i+j-d_2^\perp+2.
\eeq
Now for $j=i$ this gives $w_i((C_1*C_2)^\perp)\geq i+d_2^\perp-2$ so in fact $m=d_2^\perp-2$.
Then setting $j=i+d_2^\perp-2$ finishes the proof.
\end{proof}

\entry
The last assertion in Corollary~\ref{poids*_et_ddual}
extends to distances with rank constraints (as defined in~\ref{dist_rang}):
\begin{proposition*}
Let $C_1,C_2\subset\F^n$ be two linear codes of the same length $n$.
Suppose $C_2$ has full support, i.e. dual distance $d_2^\perp\geq2$.

Let $C_1$ be equipped with an arbitrary rank function, and $C_2$ with the trivial rank function.
Equip then $C_1*C_2$ with the product rank function.
Then for $1\leq i\leq\dim(C_1)$ we have
\beq
\dmin_{,i}(C_1*C_2)\leq\max(1,\dmin_{,i}(C_1)-d_2^\perp+2).
\eeq
\end{proposition*}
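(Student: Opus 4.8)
The plan is to adapt the proof of the last assertion of Corollary~\ref{poids*_et_ddual}, but keeping track of the rank. The point is that $\dmin_{,i}(C_1*C_2)$ only asks for \emph{one} nonzero codeword of rank at most $i$ and small weight in $C_1*C_2$, so I would produce it as a product $z=c_1*c_2$ of an element $c_1\in C_1$ of rank $\leq i$ by a single well-chosen codeword $c_2\in C_2$.

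Concretely, first fix $c_1\in C_1$ nonzero with $\rk(c_1)\leq i$ and $w(c_1)=\dmin_{,i}(C_1)=:d$; such a $c_1$ exists since $1\leq i\leq\dim(C_1)$ forces $C_1\neq0$, hence $(C_1)_{(\leq i)}\neq0$. For any nonzero $c_2\in C_2$, the one-term decomposition $z=c_1*c_2$ together with the description of the product rank function as the minimum of $\sum_l\rk(c_l)\rk(c'_l)$ over decompositions gives $\rk^*(z)\leq\rk(c_1)\cdot\rk(c_2)=\rk(c_1)\leq i$, where we used that $C_2$ carries the trivial rank function so $\rk(c_2)=1$; one could alternatively invoke Lemma~\ref{fonctorialite_rang} applied to the map $c\mapsto c*c_2$ from $C_1$ to $C_1*C_2$. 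So it remains to choose $c_2$ so that $z\neq0$ while $w(z)$ is small.

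For this, set $S=\Supp(c_1)$, so $\abs{S}=d$, fix any $j\in S$, let $m=\min(d,\,d_2^\perp-1)$, and choose $J\subset S$ with $j\in J$ and $\abs{J}=m$. Then $0\leq m\leq n$ and $\ddual(C_2)=d_2^\perp\geq m+1$, so Lemma~\ref{def_ddual} yields $c_2\in C_2$ with $\pi_j(c_2)=1$ and $\pi_{j'}(c_2)=0$ for all $j'\in J\moins\{j\}$. Now $z=c_1*c_2$ has $\pi_j(z)=\pi_j(c_1)\neq0$, hence $z\neq0$, while $\Supp(z)=\Supp(c_1)\cap\Supp(c_2)\subset S\moins(J\moins\{j\})$ gives $w(z)\leq d-(m-1)$. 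If $d\leq d_2^\perp-1$ then $m=d$ and $w(z)\leq1$; if $d\geq d_2^\perp-1$ then $m=d_2^\perp-1$ and $w(z)\leq d-d_2^\perp+2$; in both cases $w(z)\leq\max(1,\,d-d_2^\perp+2)$. Since $z$ is a nonzero element of $(C_1*C_2)_{(\leq i)}$, this gives $\dmin_{,i}(C_1*C_2)\leq w(z)\leq\max(1,\,\dmin_{,i}(C_1)-d_2^\perp+2)$.

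Since the proof is a direct construction I do not anticipate a serious obstacle. The points to handle carefully are: the inequality $\rk^*(z)\leq i$, which is precisely where the triviality of the rank function on $C_2$ enters and should be spelled out rather than taken for granted; the verification that $0\leq m\leq n$ and $d_2^\perp\geq m+1$ so that Lemma~\ref{def_ddual} applies; and the bookkeeping of the two cases giving the final $\max(1,\cdot)$. Choosing $j$ \emph{inside} $\Supp(c_1)$ is what simultaneously guarantees $z\neq0$ and makes the support estimate of the announced size.
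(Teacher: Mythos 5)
Your proof is correct and takes essentially the same route as the paper: pick a minimum-weight word $c_1$ of rank $\leq i$, use Lemma~\ref{def_ddual} to find $c_2\in C_2$ that is $1$ at one point of $\Supp(c_1)$ and zero at $\min(w(c_1)-1,d_2^\perp-2)$ others, and bound the rank of $z=c_1*c_2$ by the one-term decomposition. The only difference is cosmetic bookkeeping (your $m$ counts $j$ itself, the paper's does not), and your explicit check of the hypotheses of Lemma~\ref{def_ddual} and of the rank inequality is a sound way to spell out what the paper leaves implicit.
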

\begin{proof}
Let $x\in C_1$ with rank $\rk(x)\leq i$ and weight $w=\dmin_{,i}(C_1)$.
Choose $j\in\Supp(x)$. Then Lemma~\ref{def_ddual} gives $y\in C_2$ nonzero at $j$
but vanishing at $m=\min(w-1,d_2^\perp-2)$ other positions in $\Supp(x)$.
Then $z=x*y\in C_1*C_2$ has rank $\rk(z)\leq\rk(x)$ and weight $w-m$.
\end{proof}
In particular, if $C\subset\F^n$ is a linear code with dual distance $d^\perp\geq2$, then, with the convention of~\ref{rang_puissance_cano}, for all $t,t'\geq0$ we have
\beq
\dmin_{,i}(C\deux[t+t'])\leq\max(1,\,\dmin_{,i}(C\deux[t])-(d^\perp-2)t').
\eeq


\entry
\label{filtration}
Since the dual distance behaves nicely under projection, its use combines well with the following
filtration inequality:
\begin{lemma*}
Let $C,C'\subset\F^n$ be two linear codes. Suppose $C$ equipped with a filtration
\beq
0=C_0\subset C_1\subset\cdots\subset C_\ell=C
\eeq
by linear subcodes $C_i$. For $1\leq i\leq\ell$ set $T_i=\Supp(C_i)\moins\Supp(C_{i-1})$.
Then we have
\beq
\dim(C*C')\geq\sum_{i=1}^\ell \dim(\pi_{T_i}(C_i)*\pi_{T_i}(C')).
\eeq
In particular if $\ell=k=\dim(C)$ and $\dim(C_i)=i$ for all $i$, we have
\beq
\dim(C*C')\geq\sum_{i=1}^k \dim(\pi_{T_i}(C')).
\eeq
\end{lemma*}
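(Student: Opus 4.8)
The plan is to transport the given filtration of $C$ to a filtration of $C*C'$ and to estimate its graded pieces separately. I would set $D_i=C_i*C'\subset\F^n$ for $0\leq i\leq\ell$, so that
\beq
0=D_0\subset D_1\subset\cdots\subset D_\ell=C*C',
\eeq
whence $\dim(C*C')=\sum_{i=1}^\ell\bigl(\dim D_i-\dim D_{i-1}\bigr)$. The whole proof then reduces to establishing, for each $i$, the single-step bound $\dim D_i-\dim D_{i-1}\geq\dim\bigl(\pi_{T_i}(C_i)*\pi_{T_i}(C')\bigr)$.

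To handle one step I would first record the routine fact that coordinate projection commutes with $*$: for $S\subset[n]$ and linear codes $A,B\subset\F^n$, applying $\pi_S$ to the elementary products $a*b$ spanning $A*B$ gives $\pi_S(A*B)=\pi_S(A)*\pi_S(B)$. In particular $\pi_{T_i}(D_i)=\pi_{T_i}(C_i)*\pi_{T_i}(C')$, so the single-step bound becomes $\dim D_i-\dim D_{i-1}\geq\dim\pi_{T_i}(D_i)$. The key point is that $D_{i-1}$ lies in the kernel of $\pi_{T_i}$ restricted to $D_i$: every word of $D_{i-1}=C_{i-1}*C'$ has support contained in $\Supp(C_{i-1})$ (by Lemma~\ref{Supp*=intSupp} if one wants the exact support), which is disjoint from $T_i=\Supp(C_i)\moins\Supp(C_{i-1})$, so $\pi_{T_i}(D_{i-1})=0$. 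Hence $\pi_{T_i}|_{D_i}$ descends to a surjection $D_i/D_{i-1}\surj\pi_{T_i}(D_i)$, which gives $\dim D_i-\dim D_{i-1}\geq\dim\pi_{T_i}(D_i)$. Summing over $i$ yields the main inequality.

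For the ``in particular'' statement I would then show that, when $\dim C_i=i$ and $\dim C_{i-1}=i-1$, the $i$-th term dominates $\dim\pi_{T_i}(C')$. Choose $c\in C_i\moins C_{i-1}$, so $C_i=C_{i-1}\oplus\langle c\rangle$; since every codeword of $C_{i-1}$ vanishes on $T_i$ while, by definition of $T_i$, some codeword of $C_i$ is nonzero at each $j\in T_i$, the vector $c$ itself must be nonzero at each such $j$, i.e. $\pi_{T_i}(c)\in(\F^{T_i})^\times$. Multiplication by $\pi_{T_i}(c)$ is then an automorphism of $\F^{T_i}$, so $\dim\bigl(\pi_{T_i}(c)*\pi_{T_i}(C')\bigr)=\dim\pi_{T_i}(C')$, and since $\pi_{T_i}(c)*\pi_{T_i}(C')\subset\pi_{T_i}(C_i)*\pi_{T_i}(C')$ the claim follows.

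The argument is essentially formal; the only step that needs a moment's care is checking that $D_{i-1}$ lands in $\ker\bigl(\pi_{T_i}|_{D_i}\bigr)$, which is pure support bookkeeping. As an alternative, closer in spirit to the proof of Proposition~\ref{dim_et_ddual}, one could instead pick a generator matrix of $C$ in block form adapted both to the filtration and to the partition of $\Supp(C)$ into the $T_i$, multiply its rows by suitable codewords of $C'$, and read off a lower bound for the rank of the resulting matrix block by block.
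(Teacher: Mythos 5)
Your proof is correct and follows essentially the same route as the paper: the filtration $D_i=C_i*C'$, the support observation that $D_{i-1}\subset\ker\pi_{T_i}$ giving the surjection $D_i/D_{i-1}\surj\pi_{T_i}(D_i)=\pi_{T_i}(C_i)*\pi_{T_i}(C')$, and for the ``in particular'' part the observation that $\pi_{T_i}(c_i)$ has full support on $T_i$. Your writeup is slightly more detailed in justifying that last support claim and in stating the invertibility of multiplication by $\pi_{T_i}(c_i)$, but the argument is the same.
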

\begin{proof}
We have a filtration
\beq
0=C_0*C'\subset C_1*C'\subset\cdots\subset C_\ell*C'=C*C'
\eeq
where for all $i\geq 1$ we have $C_{i-1}*C'\subset\ker\pi_{T_i}$, hence
\beq
\begin{split}
\dim(C_i*C')-\dim(C_{i-1}*C')&\geq\dim(C_i*C')-\dim((C_i*C')\cap\ker\pi_{T_i})\\
&=\dim(\pi_{T_i}(C_i*C')).
\end{split}
\eeq
Then we observe $\pi_{T_i}(C_i*C')=\pi_{T_i}(C_i)*\pi_{T_i}(C')$ and we sum over $i$.

In case $\ell=k=\dim(C)$ and $\dim(C_i)=i$ for all $i$, we can pick $c_i\in C_i\moins C_{i-1}$ and we have $\pi_{T_i}(C_i)=\langle v_i\rangle$,
where $v_i=\pi_{T_i}(c_i)\in\F^{T_i}$ has full support
(except if $T_i=\emptyset$, but then the contribution is $0$, which is fine).
The conclusion follows.
\end{proof}

From this, another bound involving the dual distance was established by D.~Mirandola:

\begin{theorem}[\cite{Mirandola}]
Let $q$ be a prime power. Fix an odd integer $D\geq3$.
Then for all $\epsilon>0$, there is an integer $N$ such that, for any integers $n,k$ such that $n-k\geq N$ and for any $[n,k]_q$ linear code $C$
with dual distance $d^\perp\geq D$ we have
\beq
\dim(C\deux)\geq k + \left(\frac{1}{2}-\epsilon\right)\frac{D-1}{2}\log_q^2(n-k).
\eeq
\end{theorem}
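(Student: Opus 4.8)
The plan is to push a geometrically growing flag through the filtration inequality of~\ref{filtration}, extracting at each stage a dimension gain from the dual-distance estimate of Proposition~\ref{dim_et_ddual}. Begin with some preliminary observations: since $d^\perp\geq D\geq3$, the code $C$ automatically has full support and no repeated columns (by~\ref{support_et_dual} and Lemma~\ref{lemme_col_prop}), and $k=\dim(C)\geq d^\perp-1\geq D-1$ by Lemma~\ref{proprietes_ddual}\itemref{proprietes_ddual_MDS}. Put $m=n-k$, so that the target inequality reads $\dim(C\deux)\geq k+\bigl(\tfrac12-\epsilon\bigr)\tfrac{D-1}{2}(\log_q m)^2$.

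The heart of the argument is the choice of a flag $0=C_0\subset C_1\subset\cdots\subset C_\ell=C$ in which the support sizes $s_i=\abs{\Supp(C_i)}$ grow by a bounded multiplicative factor, so that $\ell$ is of order $\log_q m$; the hypothesis $d^\perp\geq D$ is precisely what lets one keep enlarging the current subcode while controlling quantitatively how much new support is picked up. With $T_i=\Supp(C_i)\moins\Supp(C_{i-1})$, Lemma~\ref{filtration} yields
\beq
\dim(C\deux)\;\geq\;\sum_{i=1}^{\ell}\dim\bigl(\pi_{T_i}(C_i)*\pi_{T_i}(C)\bigr),
\eeq
and as each block $T_i$ is long, $\pi_{T_i}(C)$ still has full support and dual distance $\geq\min(\abs{T_i}+1,d^\perp)=D$ by Lemma~\ref{proprietes_ddual}\itemref{proprietes_ddual_projection}. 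One now bounds the $i$-th summand from below by applying the same filtration-plus-Proposition~\ref{dim_et_ddual} scheme \emph{inside} the block (filtering the first factor $\pi_{T_i}(C_i)$ and projecting $\pi_{T_i}(C)$ accordingly): Proposition~\ref{dim_et_ddual} contributes a gain of $d^\perp-2\geq D-2$ at each sub-step, and since the block has length $\sim s_i\sim q^i$ this can be iterated through about $\log_q\abs{T_i}\sim i$ nested sub-steps. The net effect is a second, inner summation, so the $i$-th block should contribute on the order of $\tfrac{D-1}{2}\,i$ new dimensions beyond those already accounted for --- the surviving coefficient $\tfrac{D-1}{2}$ being the part of the $d^\perp-2$ gain one can afford to cash in while retaining enough slack in the subcodes to continue the recursion. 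Summing and using $\sum_{i\leq\ell}i\sim\ell^2/2$ with $\ell\sim\log_q m$ gives $\dim(C\deux)-k\gtrsim\tfrac{D-1}{4}(\log_q m)^2$, which is the claimed bound up to lower-order terms that $\epsilon$ absorbs.

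The main obstacle is making the flag construction --- at both the outer and the inner level --- actually deliver all three requirements simultaneously: (a) the support sizes must grow geometrically, so that the number of stages is as large as $\sim\log_q m$ and not smaller; (b) every block must stay long enough that the restricted code retains dual distance $\geq D$, so that Proposition~\ref{dim_et_ddual} keeps applying at full strength; and (c) the subcodes $C_i$ (and their inner-level analogues) must carry enough support that the minimum in Proposition~\ref{dim_et_ddual} is realized by its $k_1+d_2^\perp-2$ term rather than by $n_1$. It is the need to meet (a)--(c) at once, together with the error terms they generate (rounding in the flag, short blocks near the bottom of the recursion, and the slack one deliberately keeps at each level), that forces the quantitative hypothesis $n-k\geq N$ and the loss from the heuristic $\tfrac{D-1}{4}$ to the stated $\bigl(\tfrac12-\epsilon\bigr)\tfrac{D-1}{2}$, with $N$ depending on $\epsilon$, $q$, $D$. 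Carrying out this bookkeeping carefully is what occupies the actual proof; see~\cite{Mirandola}.
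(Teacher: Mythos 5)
Your plan correctly identifies the two tools the paper attributes to Mirandola's proof: Lemma~\ref{filtration} applied with $C'=C$, and the conversion of $d^\perp\geq D$ into a lower bound on each $\dim(\pi_{T_i}(C))$ (via Proposition~\ref{dim_et_ddual}, or equivalently the Singleton bound applied to $\pi_{T_i}(C)^\perp$). The squared-logarithm heuristic and the role of $N$ as absorbing lower-order terms are also the right shape.

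There is however a genuine gap. You assert that one can construct a flag $C_0\subset\cdots\subset C_\ell=C$ with $\abs{\Supp(C_i)}$ growing geometrically, so that $\ell\sim\log_q(n-k)$, and you attribute this to the dual-distance hypothesis (``the hypothesis $d^\perp\geq D$ is precisely what lets one keep enlarging the current subcode while controlling quantitatively how much new support is picked up''). That is not what $d^\perp$ controls. The dual-distance hypothesis lower-bounds $\dim(\pi_{T_i}(C))$ once the blocks $T_i$ are given; it gives no control over what block sizes $\abs{T_i}$ can arise in a nested filtration of $C$. Controlling the $\abs{T_i}$ is precisely the delicate combinatorial point, and the paper states explicitly that Mirandola achieves it with the \emph{Plotkin bound}, used to build the filtration so that the $\abs{T_i}$ are bounded in terms of the index $i$. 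This ingredient is entirely absent from your plan; without it the existence of the flag you need is unjustified, not merely unoptimized.

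A secondary remark: the ``recursion inside each block'' you propose --- a second-level filtration of $\pi_{T_i}(C_i)$ against $\pi_{T_i}(C)$ --- does not appear in the paper's account of the argument, which uses only the $\ell=k$ form of Lemma~\ref{filtration}, where each $\pi_{T_i}(C_i)$ is one-dimensional, and then bounds $\dim(\pi_{T_i}(C))$ directly (by Singleton, with the Hamming bound as a refinement). Your nesting adds complexity without addressing the real obstacle, and the coefficient $(D-1)/2$ you extract from it (``the part of the $d^\perp-2$ gain one can afford to cash in'') is asserted, not derived.
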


The proof uses two main ingredients.
The first is to transform the condition on $d^\perp$ into a lower bound on the terms $\dim(\pi_{T_i}(C))$ that appear in Lemma~\ref{filtration}
(applied with $C'=C$). For this one can use any of the classical bounds of coding theory, applied to $\pi_{T_i}(C)^\perp$.
For example, the Singleton bound gives $\dim(\pi_{T_i}(C))\geq\min(\abs{T_i},d^\perp-1)$, as already mentioned; Mirandola also uses the Hamming bound.
Then, in order to optimize the resulting estimates, one needs the filtration of $C$ to be constructed with some control on the $\abs{T_i}$; for this one uses the Plotkin bound.
However this leads to quite involved computations, and a careful analysis remains necessary in order to make all this work.

\entry
\label{Gale}
Some of the results above become especially interesting when seen from the geometric point of view.
Recall from Proposition~\ref{eq_def} that $r(C)$ is also equal to the Castelnuovo-Mumford regularity of the projective set of points $\Pi_C\subset\PP^{k-1}$ associated to $C$.
Then $\Pi_C$ admits syzygies
\beq
\begin{split}
0\longto\bigoplus_j\cO_{\PP^{k-1}}&(-a_{k-1,j})\longto\cdots\\
&\cdots\longto\bigoplus_j\cO_{\PP^{k-1}}(-a_{1,j}) \longto\cO_{\PP^{k-1}}\longto\cO_{\Pi_C}\longto 0
\end{split}
\eeq
for some integers $a_{i,j}\geq i$, and we have
\beq
r(C)=\max_{i,j}(a_{i,j}-i)
\eeq
(see \eg Chapter~4 of \cite{Eisenbud}).
Thus, from estimates such as the one in Corollary~\ref{reg_et_ddual} (or~\ref{reg_et_k+ddual}),
we see that important information on the syzygies of $\Pi_C$ can be extracted from the \emph{dual} code $C^\perp$.

This situation is very similar to that of~\cite{EP} although, admittedly, the results proved there are much deeper.
There, the duality of codes, seen from a geometric point of view, is called Gale duality, in reference to another context where it is of use.

\section{Pure bounds}
\label{sect_pure_bounds}

\vspace{.5\baselineskip}
Here we consider bounds on the basic parameters (dimension, distance) of a code and its powers, or of a family of codes and their product.
In contrast with section~\ref{sect_dual_distance}, no other auxiliary parameter (such as a dual distance) should appear.
Most of the material here will be taken from \cite{agis} and \cite{Singleton}.
The results are more involved and in some places we will only give partial proofs, but the reader can refer to the original papers for details.

\mysubsection{The generalized fundamental functions}
\entry
In many applications, a linear code is ``good'' when both its dimension and its minimum distance are ``large''.
In order to measure to what extent this is possible,
it is customary to consider the ``fundamental functions of linear block coding theory'' given by
\beq
a(n,d)=\max\{k\geq0\,;\;\exists C\subset\F^n,\,\dim(C)=k,\,\dmin(C)\geq d\}
\eeq
and
\beq
\alpha(\delta)=\limsup_{n\to\infty}\frac{a(n,\lfloor\delta n\rfloor)}{n}.
\eeq

Now suppose we need a code $C$ such that all powers $C,C\deux,\dots,C\deux[t]$,
up to a certain $t$, are good (see \ref{appl_multilin_algo}-\ref{OT+} for situations where this condition naturally appears).
Thanks to Theorem~\ref{monotone}, to give a lower bound on the dimension and the minimum distance of all these codes simultaneously,
it suffices to do so only for $\dim(C)$ and for $\dmin(C\deux[t])$.
This motivates the introduction of the generalized fundamental functions
\beq
a\deux[t](n,d)=\max\{k\geq0\,;\;\exists C\subset\F^n,\,\dim(C)=k,\,\dmin(C\deux[t])\geq d\}
\eeq
and
\beq
\alpha\deux[t](\delta)=\limsup_{n\to\infty}\frac{a\deux[t](n,\lfloor\delta n\rfloor)}{n},
\eeq
first defined in~\cite{agis}


If the base field $\F$ is not clear from the context, we will use more explicit notations such as $a_{\F}\deux[t]$ and $\alpha_{\F}\deux[t]$.
Also if $q$ is a prime power, we set $a_q\deux[t]=a_{\Fq}\deux[t]$
and $\alpha_q\deux[t]=\alpha_{\Fq}\deux[t]$, where $\Fq$ is the finite field
with $q$ elements.

\entry
\label{at_monotone}
From Theorem~\ref{monotone} we get at once:
\begin{lemma*}
Let $t\geq1$. Then for all $n,d$ we have
\beq
a\deux[t+1](n,d)\leq a\deux[t](n,d)
\eeq
and for all $\delta$ we have
\beq
\alpha\deux[t+1](\delta)\leq\alpha\deux[t](\delta).
\eeq
\end{lemma*}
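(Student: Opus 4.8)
The plan is to deduce both inequalities directly from the distance-monotonicity in Theorem~\ref{monotone}, with no extra input. The key observation is that Theorem~\ref{monotone} gives $\dmin(C\deux[t+1])\leq\dmin(C\deux[t])$ for every linear code $C\subset\F^n$ and every $t\geq1$. So, first, I would take any $C\subset\F^n$ that witnesses the value $a\deux[t+1](n,d)$, i.e.\ with $\dim(C)=k$ and $\dmin(C\deux[t+1])\geq d$; then $\dmin(C\deux[t])\geq\dmin(C\deux[t+1])\geq d$, so the very same $C$ is admissible in the definition of $a\deux[t](n,d)$. Since $a\deux[t](n,d)$ is the maximum of $k=\dim(C)$ over all admissible $C$, and every code admissible for the exponent $t+1$ is admissible for the exponent $t$, this yields $a\deux[t+1](n,d)\leq a\deux[t](n,d)$ for all $n$ and $d$.

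For the asymptotic statement I would simply specialize to $d=\lfloor\delta n\rfloor$, divide the term-by-term inequality $a\deux[t+1](n,\lfloor\delta n\rfloor)\leq a\deux[t](n,\lfloor\delta n\rfloor)$ by $n$, and pass to the $\limsup$ as $n\to\infty$; since $\limsup$ preserves $\leq$, this gives $\alpha\deux[t+1](\delta)\leq\alpha\deux[t](\delta)$.

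I do not expect any real obstacle here: the content has already been absorbed into Theorem~\ref{monotone}, and in particular one need not revisit the field-extension trick of Lemma~\ref{mot_plein}. The only points worth stating explicitly are that the argument is valid over an arbitrary field $\F$ precisely because Theorem~\ref{monotone} is, and that the inequality between the $a\deux[t]$ is the genuine statement, with the inequality between the $\alpha\deux[t]$ being a formal consequence.
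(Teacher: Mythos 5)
Your proof is correct and follows exactly the route the paper takes: the paper simply states that the lemma follows "at once" from Theorem~\ref{monotone}, and your argument (any code admissible for the exponent $t+1$ is admissible for $t$ by distance-monotonicity, then pass to $\limsup$) is precisely the intended one-line deduction.
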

As a consequence, any upper bound on the usual fundamental functions passes to the generalized functions.
However, improvements can be obtained by working directly on the latter.

Concerning lower bounds, we have the following:

\begin{proposition}
\label{ex_a}
Let $t\geq1$. Then for all $1\leq d\leq n$ we have
\beq
a\deux[t](n,d)\geq\left\lfloor\frac{n}{d}\right\rfloor.
\eeq
Moreover if $n\leq\abs{\F}+1$ we also have
\beq
a\deux[t](n,d)\geq\left\lfloor\frac{n-d}{t}\right\rfloor+1.
\eeq
\end{proposition}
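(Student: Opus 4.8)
The plan is to prove each inequality by exhibiting an explicit code that achieves it.

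For the first inequality I would take an \emph{idempotent block code}. Put $k=\lfloor n/d\rfloor$; since $kd\leq n$ one can partition $[n]$ into $k$ blocks $B_1,\dots,B_k$ with $\abs{B_i}\geq d$ for every $i$ (say $\abs{B_i}=d$ for $i<k$ and $\abs{B_k}=n-(k-1)d\geq d$). Set $C=\langle1_{B_1}\rangle\oplus\cdots\oplus\langle1_{B_k}\rangle$, so $\dim(C)=k$. From $1_{B_i}*1_{B_j}=\delta_{ij}1_{B_i}$ we get $C*C=C$, hence $C\deux[t]=C$ for all $t\geq1$; moreover any nonzero $\sum_i\lambda_i1_{B_i}$ has weight at least $\min_i\abs{B_i}\geq d$. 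Therefore $\dmin(C\deux[t])=\dmin(C)\geq d$, which gives $a\deux[t](n,d)\geq k=\lfloor n/d\rfloor$.

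For the second inequality I would use an extended Reed--Solomon code, presented projectively so that the case $n=\abs{\F}+1$ is handled uniformly. Since $n\leq\abs{\F}+1=\abs{\PP^1(\F)}$, pick $n$ distinct points of $\PP^1(\F)$ together with chosen homogeneous representatives $\widetilde P_1,\dots,\widetilde P_n\in\F^2\moins\{0\}$. Put $k=\lfloor(n-d)/t\rfloor+1$; since $1\leq d\leq n$ and $t\geq1$, we have $1\leq k\leq n$. Let $C\subset\F^n$ be the image of the evaluation map $\mathrm{ev}_{k-1}\colon F\mapsto(F(\widetilde P_1),\dots,F(\widetilde P_n))$ on the space of homogeneous forms of degree $k-1$ in two variables. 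A nonzero such form has at most $k-1<n$ zeros in $\PP^1$, so $\mathrm{ev}_{k-1}$ is injective and $\dim(C)=k$. Because $\F[X,Y]_a\cdot\F[X,Y]_b=\F[X,Y]_{a+b}$ for all $a,b\geq0$, the $t$-fold products of degree-$(k-1)$ forms span the whole space of forms of degree $t(k-1)$; since componentwise multiplication of evaluations equals evaluation of the product, this yields $C\deux[t]=\mathrm{ev}_{t(k-1)}(\F[X,Y]_{t(k-1)})$. A nonzero form of degree $t(k-1)\leq n-d$ has at most $n-d$ zeros in $\PP^1$, so the associated codeword has weight at least $d$; hence $\dmin(C\deux[t])\geq d$ and $a\deux[t](n,d)\geq k=\lfloor(n-d)/t\rfloor+1$. (When $n\leq\abs{\F}$, taking all representatives affine recovers the classical Reed--Solomon code of Example~\ref{exRS}.)

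All the steps are routine; the only points calling for a line of care are the bound $k\leq n$ (immediate from $d\geq1$, $t\geq1$), the edge case $k=1$ (where $C$ is the repetition code and $C\deux[t]=C$ has distance $n\geq d$, so the same argument still applies verbatim), and the elementary spanning identity $\F[X,Y]_a\cdot\F[X,Y]_b=\F[X,Y]_{a+b}$, which holds because every monomial $X^iY^j$ of degree $a+b$ splits as a product of a degree-$a$ monomial and a degree-$b$ monomial. I do not anticipate any real obstacle: the only mild subtlety, a possible point at infinity when $n=\abs{\F}+1$, is absorbed into the homogeneous formulation.
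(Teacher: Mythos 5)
Your proof is correct and matches the paper's own argument in both parts: the first inequality via the code generated by characteristic vectors of a partition of $[n]$ into $\lfloor n/d\rfloor$ blocks of size $\geq d$ (which is closed under $*$), and the second via (possibly extended) Reed--Solomon codes whose $t$-th power is again Reed--Solomon of degree $t\lfloor(n-d)/t\rfloor\leq n-d$. Your homogeneous/projective presentation of the extended case is a slightly cleaner packaging of what the paper calls ``possibly also at infinity,'' but the construction and the key estimates are the same.
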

\begin{proof}
For the first inequality, partition the set $[n]$ of coordinates
into $\left\lfloor\frac{n}{d}\right\rfloor$ subsets of size $d$ or $d+1$, and consider
the code $C$ spanned by their characteristic vectors. Then $C\deux[t]=C$ has dimension $\left\lfloor\frac{n}{d}\right\rfloor$
and minimum distance $d$.

For the second inequality, consider the (possibly extended) Reed-Solomon code,
obtained by evaluating polynomials of degree up to $\left\lfloor\frac{n-d}{t}\right\rfloor$
at $n$ given distinct elements of $\F$ (or possibly also at infinity). It has dimension $\left\lfloor\frac{n-d}{t}\right\rfloor+1$,
and its $t$-th power is also a Reed-Solomon code, obtained by evaluating polynomials of degree up to $t\left\lfloor\frac{n-d}{t}\right\rfloor\leq n-d$,
so of minimum distance at least $d$.
\end{proof}

It will be a consequence of the product Singleton bound that these inequalities are tight,
leading to the exact determination of the functions $a\deux[t]$ and $\alpha\deux[t]$ when $\F$ is infinite
(see Corollary~\ref{cor_Singleton}).

\entry
\label{tau_q_fini}
On the other hand, when $\F=\Fq$ is a finite field, the corresponding (generalized) fundamental functions $a_q\deux[t]$ and $\alpha_q\deux[t]$
are much more mysterious.

For example, note that the function $\alpha_q\deux[t]$ is nontrivial if and only if there is an asymptotically good family of linear codes over $\Fq$
whose $t$-th powers also form an asymptotically good family (and then also do all powers between $1$ and $t$).
We let
\beq
\tau(q)=\sup\{t\geq1\,;\;\exists\delta>0,\,\alpha_q\deux[t](\delta)>0\}\;\in\,\N\cup\{\infty\}
\eeq
be the supremum of the integers $t$ for which this holds, for a given $q$.

There is no $q$ for which it is known whether $\tau(q)$ is finite or infinite, although
algebraic-geometry codes will provide examples showing that
\beq
\tau(q)\to\infty
\eeq
as $q\to\infty$.

It is also true that $\tau(q)\geq2$ for all $q$, that is,
there exists an asymptotically good family of $q$-ary linear codes whose squares also form an asymptotically good family.
But as we will see, to include the case of small $q$ requires a quite intricate construction.

This leads to the author's favorite open problem on this topic: try to improve (any side of) the estimate
\beq
2\leq\tau(2)\leq\infty.
\eeq
That is, answer one of these two questions:
does there exist an asymptotically good family of binary linear codes whose cubes also form an asymptotically good family? or instead of cubes, is it possible with powers of some arbitrarily high given degree?

\mysubsection{An upper bound: Singleton}
\entry
\label{3p_Singleton}
The Singleton bound is one of the simplest upper bounds on the parameters of (possibly nonlinear) codes.
In the linear case, it states that for any $C\subset\F^n$ of dimension $k$ and minimum distance $d$, we have
\beq
k+d\leq n+1.
\eeq
At least three strategies of proof can be devised:
\begin{enumerate}[(i)]
\item
\label{Singleton_shortening}
\emph{Shortening.}
Shorten $C$ at any set of coordinates $I$ of size $\abs{I}=k-1$,
that is consider the subcode made of codewords vanishing at $I$. This subcode
has codimension at most $k-1$, since it is defined by the vanishing of $k-1$
linear forms, so it is nonzero. Hence $C$ contains a nonzero codeword $c$
supported in $[n]\moins I$, and $d\leq w(c)\leq n-k+1$.
\item
\label{Singleton_duality}
\emph{Duality.}
Let $H$ be a parity matrix for $C$, and let $C^\perp$ be the dual code.
Recall that codewords of $C$ are precisely linear relations between
columns of $H$. So $\dmin(C)\geq d$ means any $d-1$ columns of $H$
are linearly independent, hence $n-k=\dim(C^\perp)=\rk(H)\geq d-1$.
\item
\label{Singleton_puncturing}
\emph{Puncturing.}
Puncture $C$ at any set of coordinates $J$ of size $\abs{J}=n-k+1$.
By dimension, the corresponding projection $C\longto(\Fq)^{[n]\moins J}$ is not injective.
This means there are two codewords in $C$ that differ only over $J$, hence $d\leq \abs{J}\leq n-k+1$.
\end{enumerate}
Of course these methods are not entirely independent, since shortening is somehow the dual operation to puncturing.

Note that proofs \itemref{Singleton_shortening} and \itemref{Singleton_duality} work only in the linear case,
while proof \itemref{Singleton_puncturing} remains valid for general codes (suppose $\F=\Fq$ finite, set $k=\lfloor\log_q\abs{C}\rfloor$,
and use cardinality instead of dimension to show the projection noninjective).

Also a variant of proof \itemref{Singleton_puncturing} is to puncture at a set of coordinates of size $d-1$ (instead of $n-k+1$),
and conclude using injectivity of the projection (instead of noninjectivity).

\entry
Now let $C_1,\dots,C_t\subset\F^n$ be linear codes, and set $k_i=\dim(C_i)$ and $\widetilde{d}=\dmin(C_1*\cdots*C_t)$.
By the shortening argument of \itemref{Singleton_shortening} above, we see that for any choice of $I_i\subset[n]$ of size $\abs{I_i}=k_i-1$,
there is a nonzero codeword $c_i\in C_i$ supported in $[n]\moins I_i$.
If we could do so as the intersection of the supports of the $c_i$ be nonempty (and the $I_i$ be pairwise disjoint),
then $c_1*\cdots*c_t$ would be a nonzero codeword in $C_1*\cdots*C_t$ of weight at most $n-(k_1-1)-\cdots-(k_t-1)$.

Although this argument is incomplete, it makes plausible that, perhaps under a few additional hypotheses,
the Singleton bound should extend to products of codes essentially in the form of a linear inequality $k_1+\cdots+k_t+\widetilde{d}\leq n+t$.

A result of this sort has been proved in \cite{Singleton} and will be discussed in \ref{Singleton_general} below.
It turns out that the case $t=2$ is already of interest:

\begin{proposition}
\label{Singleton_t=2}
Let $C_1,C_2\subset\F^n$ be linear codes, with nondisjoint supports.
Set $k_i=\dim(C_i)$ and $\widetilde{d}=\dmin(C_1*C_2)$.
Then
\beq
\widetilde{d}\leq\max(1,\,n-k_1-k_2+2).
\eeq
\end{proposition}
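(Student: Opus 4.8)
The plan is to reduce to the full-support case and then complete the ``incomplete argument'' sketched just before the statement, using the Segre/evaluation picture of~\ref{Segre} to keep the combinatorics transparent.

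First, one may assume $\widetilde d\geq2$, as otherwise the right-hand side is $\geq1$ and there is nothing to prove. I would then put $I=\Supp(C_1)\cap\Supp(C_2)$, nonempty by hypothesis, and replace each $C_i$ by $\pi_I(C_i)$. Since $\pi_I$ is an algebra homomorphism we have $\pi_I(C_1*C_2)=\pi_I(C_1)*\pi_I(C_2)$, and since $\Supp(C_1*C_2)=I$ by Lemma~\ref{Supp*=intSupp}, the map $\pi_I$ preserves weights on $C_1*C_2$, so $\widetilde d$ is unchanged. Moreover $(C_1)_{[n]\moins I}$ and $(C_2)_{[n]\moins I}$ have disjoint supports inside $[n]\moins I$ (a shared coordinate would lie both in $I$ and in $[n]\moins I$), so
\[
\bigl(k_1-\dim\pi_I(C_1)\bigr)+\bigl(k_2-\dim\pi_I(C_2)\bigr)=\dim(C_1)_{[n]\moins I}+\dim(C_2)_{[n]\moins I}\leq\abs{[n]\moins I},
\]
which shows $n-k_1-k_2$ cannot increase under this reduction. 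Hence we may assume $C_1,C_2$ have full support $[n]$, so that, after fixing generator matrices, the $i$-th columns $p_i\in\F^{k_1}$ and $p'_i\in\F^{k_2}$ are all nonzero.

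In the main case I would use~\ref{Segre}: $C_1*C_2$ is the image of the evaluation map $B\mapsto\bigl(B(p_i;p'_i)\bigr)_i$ on the space of bilinear forms, a codeword of weight $w$ being a form vanishing at exactly $n-w$ of the points $(p_i;p'_i)$; dually, the functional ``evaluate at $(p_i;p'_i)$'' corresponds to $g_i:=p_i\tens p'_i$, and $\mathcal V:=\langle g_1,\dots,g_n\rangle$ has dimension $\dim(C_1*C_2)$. Fix any $j\in[n]$, so that $g_j\neq0$, and suppose first that $\dim(C_1*C_2)\geq k_1+k_2-1$ (which forces $k_1+k_2-1\leq n$, as $\dim(C_1*C_2)\leq n$). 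In the quotient $\mathcal V/\langle g_j\rangle$, of dimension $\geq k_1+k_2-2$, the images of the $g_i$ with $i\neq j$ still span it, so I can pick $Z\subset[n]\moins\{j\}$ with $\abs{Z}=k_1+k_2-2$ whose images are linearly independent. Then $g_j\notin\langle g_i\,;\,i\in Z\rangle$, hence there is a bilinear form $B$ with $B(p_i;p'_i)=0$ for all $i\in Z$ and $B(p_j;p'_j)\neq0$. The corresponding codeword of $C_1*C_2$ is nonzero and supported in $[n]\moins Z$, so of weight $\leq n-(k_1+k_2-2)=n-k_1-k_2+2$. This is exactly the promised completion of the shortening argument: $Z$ plays the role of a disjoint union $I_1\sqcup I_2$ and $j$ the common-support coordinate, and the fact that $B$ need not be an elementary product does no harm.

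The remaining case, $\dim(C_1*C_2)\leq k_1+k_2-2$ (which also covers $k_1+k_2\geq n+2$, since $\dim(C_1*C_2)\leq n$), is the step I expect to be the main obstacle. Here $I(C_1,C_2)\neq0$ by the dimension formula of~\ref{tens_et_*}, so there is a nontrivial relation among the product generators; equivalently, there are linearly independent $\ell_1,\dots,\ell_r\in C_1$ and linearly independent $\ell'_1,\dots,\ell'_r\in C_2$, with $r\geq1$, such that $\sum_s\ell_s*\ell'_s=0$. A naive recursion --- replacing $C_1$ by the span of $k_1-1$ of its basis codewords and applying the statement in smaller dimension --- loses a unit in the estimate; the point is instead to feed the relation back into the shortening so that the deficit $k_1+k_2-1-\dim(C_1*C_2)$ is compensated exactly. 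I would carry this out via the systematic-generator-matrix manipulation in the spirit of~\cite[Lemma~6]{Singleton}, together with the adjunction $C_1*(C_1*C_2)^\perp\subset C_2^\perp$ of Corollary~\ref{cor_adj} when passing to the dual side; note that the extreme subcase $r=1$ gives an ``unexpected zero codeword'' $\ell_1*\ell'_1=0$ with $\ell_1,\ell'_1\neq0$ (cf.~\ref{intersecting_codes}), which already signals why the bound is delicate and why a one-off has to be chased carefully.
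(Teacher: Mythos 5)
Your reduction to the full-support case is correct and coincides with the paper's, and your handling of what you call the main case is both correct and a genuinely nice observation: for any fixed $j\in[n]$, if $\dim(C_1*C_2)\geq k_1+k_2-1$ you exhibit a set $Z\subset[n]\moins\{j\}$ of size $k_1+k_2-2$ with $p_j\tens p'_j\notin\langle p_i\tens p'_i\,;\;i\in Z\rangle$, hence a nonzero codeword of $C_1*C_2$ supported in $[n]\moins Z$, of weight at most $n-k_1-k_2+2$. This is, in effect, a working ``shortening'' proof --- the approach \ref{3p_Singleton}\itemref{Singleton_shortening} that the paper says it could not make work --- and it works precisely because you shorten on the span of the evaluation functionals $g_i=p_i\tens p'_i$ rather than on the product generators themselves, so the relations in $I(C_1,C_2)$ do no harm.

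But the remaining case $\dim(C_1*C_2)\leq k_1+k_2-2$ is not proved, only circled: your last paragraph is a list of ingredients one might try, not an argument, and this is where the whole difficulty of the proposition lives. It contains in particular the case $k_1+k_2>n$, where the conclusion amounts to the existence of a weight-one codeword in $C_1*C_2$, i.e.\ (in your language) of some $j$ with $g_j\notin\langle g_i\,;\;i\neq j\rangle$; that is the content of Lemma~\ref{NK}, which the paper establishes by a greedy two-list procedure on the columns of the parity-check matrices $H_1,H_2$, and there is no obvious translation of that procedure into the Segre picture (note that when the $g_i$ are linearly dependent it is not automatic that some $g_j$ leaves the span of the others). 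It also contains the case $k_1+k_2\leq n$ with $\dim(C_1*C_2)<k_1+k_2-1$, which, as you correctly diagnose, naive recursion on $k_1$ does not close; what the paper does is not to ``compensate the deficit'' but to puncture at $n-k_1-k_2+1$ coordinates, disposing of the non-injective subcase via $\widetilde{d}\leq\dmin(C_i)$ (Corollary~\ref{poids*_et_ddual}) and then landing back in Lemma~\ref{NK} at length $\overline{n}=k_1+k_2-1$. As written, your proposal therefore proves the proposition only under the extra hypothesis $\dim(C_1*C_2)\geq k_1+k_2-1$; to finish along your route you would need to supply a substitute for Lemma~\ref{NK} (or graft on the paper's puncturing step), and the paper's other, duality-based proof via Corollary~\ref{cor_adj} and Proposition~\ref{dim_et_ddual} is an alternative that avoids the case split entirely.
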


It is interesting to try to prove this result with each of the three methods in~\ref{3p_Singleton}.
Quite surprisingly, the shortening approach \itemref{Singleton_shortening} does not appear to adapt easily (or at least, the author did not succeed).
On the other hand, the duality approach \itemref{Singleton_duality} and the puncturing approach \itemref{Singleton_puncturing} will give two very different proofs.

A first common step is to reduce to the case where $C_1$ and $C_2$ both have full support, by projecting on $I=\Supp(C_1)\cap\Supp(C_2)$.
Indeed, setting $\overline{C}_i=\pi_I(C_i)$, $\overline{k}_i=\dim(\overline{C}_i)$, $\overline{n}=\abs{I}$, and $J_i=\Supp(C_i)\moins I$,
we have $\dmin(\overline{C}_1*\overline{C}_2)=\widetilde{d}$, while $\overline{k}_i\geq k_i-\abs{J_i}$ so $\overline{n}-\overline{k}_1-\overline{k}_2+2\leq n-k_1-k_2+2$.
Hence the result holds for $C_1$ and $C_2$ as soon as it holds for $\overline{C}_1$ and $\overline{C}_2$.

So in the two proofs below we suppose that $C_1$ and $C_2$ both have full support. Also we set $d_i=\dmin(C_i)$.

\begin{proof}[First proof of Proposition~\ref{Singleton_t=2}]
We will reason by duality (the reader can check that our argument reduces to \ref{3p_Singleton}\itemref{Singleton_duality} when $C_1=\un$).

If $\widetilde{d}=1$ the proof is finished, so we can suppose $\widetilde{d}\geq 2$ and we have to show $\widetilde{d}\leq n-k_1-k_2+2$.

First, by Corollary \ref{cor_adj} we have
\beq
n-k_2=\dim(C_2^\perp)\geq\dim(C_1*(C_1*C_2)^\perp),
\eeq
while by Proposition~\ref{dim_et_ddual}
\beq
\dim(C_1*(C_1*C_2)^\perp)\geq\min(n,\,k_1+\widetilde{d}-2).
\eeq
Then Corollary \ref{poids*_et_ddual} gives $\widetilde{d}\leq d_1$, so $k_1+\widetilde{d}\leq n+1$ by the classical Singleton bound. Thus
\beq
\min(n,\,k_1+\widetilde{d}-2)=k_1+\widetilde{d}-2
\eeq
and we conclude.
\end{proof}

\begin{proof}[Second proof of Proposition~\ref{Singleton_t=2}]
We distinguish two cases:
\begin{itemize}
\item high dimension: suppose $k_1+k_2>n$, show $\widetilde{d}=1$
\item low dimension: suppose $k_1+k_2\leq n$, show $\widetilde{d}\leq n-k_1-k_2+2$.
\end{itemize}
To start with, we reduce the low dimension case to the high dimension case using a puncturing argument similar to \ref{3p_Singleton}\itemref{Singleton_puncturing}.

So suppose $k_1+k_2\leq n$, and puncture at any set of coordinates $J$ of size $\abs{J}=n-k_1-k_2+1$.
If one of the projections $C_i\longto\F^{[n]\moins J}$ is not injective, then $d_i\leq \abs{J}\leq n-k_1-k_2+2$, and the proof is finished since $\widetilde{d}\leq d_i$ by Corollary \ref{poids*_et_ddual}. On the other hand, if both projections are injective, set $\overline{C}_i=\pi_{[n]\moins J}(C_i)$ and $\overline{n}=k_1+k_2-1$.
Then we have $\dim(\overline{C}_i)=k_i$ with $k_1+k_2>\overline{n}$, while $\widetilde{d}\leq\dmin(\overline{C}_1*\overline{C}_2)+\abs{J}$.
Replacing $C_i$ with $\overline{C}_i$ we are now reduced to the high dimension case. It is treated in the following Lemma.
\end{proof}

\begin{lemma}
\label{NK}
Let $C_1,C_2\subset\F^n$ be linear codes, both with full support.
Set $k_i=\dim(C_i)$ and suppose $k_1+k_2>n$. Then there are codewords $c_1\in C_1$ and $c_2\in C_2$,
the product of which has weight
\beq
w(c_1*c_2)=1.
\eeq
In particular we have $\dmin(C_1*C_2)=\dmin_{,1}(C_1*C_2)=1$.
\end{lemma}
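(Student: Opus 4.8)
The plan is to induct on the length $n$. For $n=1$ the hypothesis $k_1+k_2>1$ forces $k_1=k_2=1$, so $C_1=C_2=\F$, and $c_1=c_2=1$ gives $w(c_1*c_2)=1$. For the inductive step assume $n\geq2$. First I would clear away the easy case: if $\dmin(C_1)=1$, say $\lambda1_{\{i\}}\in C_1$ with $\lambda\neq0$, then picking $c_2\in C_2$ with $\pi_i(c_2)\neq0$ (possible since $C_2$ has full support) yields $w(1_{\{i\}}*c_2)=1$; symmetrically if $\dmin(C_2)=1$. So from now on $\dmin(C_1),\dmin(C_2)\geq2$; in particular $C_1,C_2\neq\F^n$, so $k_1,k_2\leq n-1$, and hence (using $k_1+k_2\geq n+1$) also $k_1,k_2\geq2$.

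Next I would choose the coordinate to delete. Since $k_1+k_2>n$, at least one of $k_1,k_2$ exceeds $n/2$; exchanging $C_1\leftrightarrow C_2$ if needed (the statement is symmetric), assume $k_1>n/2$. Then $C_1$ cannot have all of its columns repeated, for otherwise every $\sim_{C_1}$-class would contain at least two indices, whence $k_1\leq\abs{\cU(C_1)}=n_2(C_1)\leq n/2$ by~\ref{repete_et_dual} --- a contradiction. Fix $r\in[n]$ whose column $\pi_r\in C_1^\vee$ is proportional to no other column of $C_1$, and form the shortening $\widetilde{C}_1=\pi_{[n]\moins\{r\}}\bigl((C_1)_{[n]\moins\{r\}}\bigr)$ and the puncturing $C_2'=\pi_{[n]\moins\{r\}}(C_2)$, both linear codes of length $n-1$. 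Here $\dim\widetilde{C}_1=k_1-1$ because $\pi_r$ does not vanish on $C_1$; $\dim C_2'=k_2$ because $\dmin(C_2)\geq2$ makes $\pi_{[n]\moins\{r\}}$ injective on $C_2$; and both codes have full support --- for $C_2'$ this is automatic, and for $\widetilde{C}_1$ it is exactly where the choice of $r$ enters, since by Lemma~\ref{lemme_col_prop} each $r\not\sim i$ ($i\neq r$) produces a codeword of $C_1$ that vanishes at $r$ but not at $i$. Thus $\dim\widetilde{C}_1+\dim C_2'=k_1+k_2-1>n-1$.

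Then I would invoke the inductive hypothesis on $\widetilde{C}_1,C_2'\subset\F^{[n]\moins\{r\}}$: there are $\bar c_1\in\widetilde{C}_1$ and $\bar c_2\in C_2'$ with $w(\bar c_1*\bar c_2)=1$. Lift $\bar c_1$ to the unique $c_1\in C_1$ vanishing at $r$ and restricting to $\bar c_1$, and $\bar c_2$ to the unique $c_2\in C_2$ restricting to $\bar c_2$ (existence and uniqueness of both lifts come from the injectivity facts just used, and $c_1,c_2\neq0$ because $\bar c_1*\bar c_2\neq0$). Since $c_1$ is zero at $r$ we get $\Supp(c_1)\cap\Supp(c_2)=\Supp(\bar c_1)\cap\Supp(\bar c_2)$, a set with exactly one element, so $w(c_1*c_2)=1$. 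Finally $c_1*c_2$ is a nonzero word of weight $1$ and of rank $1$ inside $C_1*C_2$, so $\dmin(C_1*C_2)={\dmin}_{,1}(C_1*C_2)=1$ as claimed.

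The step I expect to be the real obstacle is keeping the shortened code $\widetilde{C}_1$ of full support: shortening can destroy full support, and does so precisely at a repeated column, which is why one is forced to (i) locate a non-repeated column and (ii) shorten the code whose dimension is large enough ($>n/2$) to guarantee one exists, while merely puncturing the other. Correspondingly one cannot skip the preliminary reduction to $\dmin(C_i)\geq2$: that injectivity is what lets the puncturing of $C_2$ preserve $\dim C_2$ and makes the lift of $\bar c_2$ unambiguous, and it is also what excludes the degenerate possibility $k_1=1$ (which would make $\widetilde C_1=0$) in the case just analysed.
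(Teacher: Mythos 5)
Your proof is correct, but it takes a genuinely different route from the paper's. The paper proves Lemma~\ref{NK} by a direct greedy algorithm: fix parity-check matrices $H_1,H_2$, sweep once through the coordinates $j=1,\dots,n$, and build two \emph{disjoint} index sets $A_1,A_2$ by always appending $j$ to the first $A_i$ for which the columns of $H_i$ indexed by $A_i\cup\{j\}$ remain independent; since $\rk(H_1)+\rk(H_2)=(n-k_1)+(n-k_2)<n$, some $j$ must be rejected by both, and at that $j$ one reads off a codeword $c_i\in C_i$ supported in $A_i\cup\{j\}$ and nonzero at $j$, giving $\Supp(c_1)\cap\Supp(c_2)=\{j\}$. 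Your argument instead inducts on $n$, shortening $C_1$ at a non-repeated column $r$ and puncturing $C_2$ at $r$, after first disposing of the case $\min(\dmin(C_1),\dmin(C_2))=1$. The crucial technical ingredient you supply --- that a full-support code of dimension $>n/2$ must have a non-repeated column, because otherwise $k\leq\abs{\cU(C)}=n_2\leq n/2$ --- is what keeps the shortened code of full support and lets the induction run; and picking the code of larger dimension to shorten is what guarantees such a column exists. Both proofs are constructive; the paper's greedy version is arguably cleaner (one pass, no case split, no need to talk about repeated columns or minimum distance), while yours stays within the more standard coding-theoretic toolkit of shortening and puncturing and makes visible exactly where the hypotheses $k_1+k_2>n$ and full support are used. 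One cosmetic slip: in the easy case you write $w(1_{\{i\}}*c_2)=1$ but should take $c_1=\lambda 1_{\{i\}}$; this does not affect anything.
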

This Lemma was first proved by N.~Kashyap, as follows:
\begin{proof}
Let $H_i$ be a parity-check matrix for $C_i$.
It is enough to find a pair of disjoint subsets $A_1,A_2\subset[n]$ and a coordinate $j\in [n]\moins(A_1\cup A_2)$ such that, for each $i=1,2$:
\begin{itemize}
\item the columns of $H_i$ indexed by $A_i$ are linearly independent
\item the columns of $H_i$ indexed by $A_i\cup\{j\}$ are linearly dependent.
\end{itemize}
These can be found by a simple greedy algorithm:\\
\texttt{
\indent Initialize $A_1=A_2=\emptyset$\\
\indent FOR $j=1,\dots,n$\\
\indent\indent IF columns of $H_1$ indexed by $A_1\cup\{j\}$ are independent\\
\indent\indent THEN\\
\indent\indent\indent  append $j$ to $A_1$\\
\indent\indent ELSE\\
\indent\indent\indent IF columns of $H_2$ indexed by $A_2\cup\{j\}$ are independent\\
\indent\indent\indent THEN\\
\indent\indent\indent\indent  append $j$ to $A_2$\\
\indent\indent\indent ELSE\\
\indent\indent\indent\indent Output $A_1,A_2,j$ and STOP.\\
}
The stopping criterion must be met for some value of $j$, since $\rk(H_1)+\rk(H_2)=(n-k_1)+(n-k_2)<n$.
\end{proof}
It is remarkable that, in this case, we can show that the minimum distance of $C_1*C_2$ is attained by a codeword in product form
(compare with Example~\ref{dmin_non_produit}).

Now we state the product Singleton bound for general $t$:

\begin{theorem}
\label{Singleton_general}
Let $C_1,\dots,C_t\subset\F^n$ be linear codes; if $t\geq3$ suppose these codes all have \emph{full support}.
Then there are codewords $c_1\in C_1,\dots,c_t\in C_t$, the product of which has weight
\beq
1\leq w(c_1*\cdots*c_t)\leq\max(t-1,\,n-(k_1+\cdots+k_t)+t)
\eeq
where $k_i=\dim(C_i)$. As a consequence we have
\beq
\dmin(C_1*\cdots*C_t)\leq\dmin_{,1}(C_1*\cdots*C_t)\leq\max(t-1,\,n-(k_1+\cdots+k_t)+t).
\eeq
\end{theorem}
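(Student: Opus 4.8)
The plan is to bootstrap from the two cases already in hand---the classical Singleton bound ($t=1$), and Proposition~\ref{Singleton_t=2} together with Lemma~\ref{NK} ($t=2$)---to arbitrary $t$, along the lines of the second proof of Proposition~\ref{Singleton_t=2}, that is, by splitting into a \emph{high-dimension} and a \emph{low-dimension} case. Throughout write $k=k_1+\cdots+k_t$.

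First, the lower bound $w(c_1*\cdots*c_t)\geq1$ is immediate: one only needs \emph{one} admissible tuple with nonzero product, and since every $C_i$ has full support we may fix any coordinate $p\in[n]$ and pick $c_i\in C_i$ with $\pi_p(c_i)\neq0$, so that $\pi_p(c_1*\cdots*c_t)=\prod_i\pi_p(c_i)\neq0$. So only the upper bound is at issue, namely ${\dmin}_{,1}(C_1*\cdots*C_t)\leq\max(t-1,\,n-k+t)$; the two displayed inequalities of the statement then follow at once, since a nonzero elementary product has rank~$1$ in the sense of~\ref{rang_puissance_cano} (hence witnesses the bound on ${\dmin}_{,1}$) and $\dmin\leq{\dmin}_{,1}$ always.

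Next I would reduce the low-dimension case $k\leq n$ to the high-dimension case $k\geq n+1$ by puncturing. Pick $J\subset[n]$ with $\abs{J}=n-k+1\geq0$. If some projection $\pi_{[n]\moins J}\colon C_i\to\F^{[n]\moins J}$ fails to be injective, then $C_i$ has a nonzero codeword $c_i$ supported inside $J$, hence of weight $\leq\abs{J}$; choosing $c_{i'}\in C_{i'}$ nonzero at a fixed support coordinate of $c_i$ for each $i'\neq i$ (possible by full support) gives a product codeword of weight $\leq\abs{J}=n-k+1\leq n-k+t$, and we are done. Otherwise all these projections are injective, the punctured codes $\overline C_i=\pi_{[n]\moins J}(C_i)$ still have full support, keep their dimensions $k_i$, and have common length $\overline n=k-1$, whence $\sum_i\dim\overline C_i=k=\overline n+1$: this is the high-dimension case for the $\overline C_i$. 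A product codeword of weight $\leq t-1$ for the $\overline C_i$ then lifts---take any preimages $c_i\in C_i$ and use that $\pi_{[n]\moins J}$ commutes with $*$---to a product codeword for the $C_i$ of weight $\leq(t-1)+\abs{J}=n-k+t$.

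So the whole theorem comes down to the high-dimension case, i.e.\ to the general-$t$ analogue of Lemma~\ref{NK}: \emph{if $C_1,\dots,C_t$ all have full support and $k\geq n+1$, then some product codeword $c_1*\cdots*c_t$ has weight $\leq t-1$}. For $t=2$ this is Lemma~\ref{NK}. For $t\geq3$ one argues in the same spirit---selecting, from the parity-check matrices of the $C_i$, suitable pairwise disjoint families of columns together with a small ``overflow'' set of shared coordinates, so that each $C_i$ acquires a codeword supported on its own family together with the overflow and nonzero on the overflow---but the combinatorics is considerably more delicate: the naive greedy that assigns each coordinate to one of $t$ disjoint independent column-families yields a weight-$1$ product codeword only under the much stronger hypothesis $k>(t-1)n$, and to reach the full range $k\geq n+1$ one needs both an overflow pool of up to $t-1$ coordinates and, when that alone does not suffice, a further reduction of the length. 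This refined selection argument is the step I expect to be the main obstacle; its (somewhat involved) details are those of \cite{Singleton}, to which I would refer.
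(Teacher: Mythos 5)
Your proposal mirrors the paper's own argument: puncture down to the high-dimension case exactly as in the second proof of Proposition~\ref{Singleton_t=2}, then invoke a $t$-ary generalization of Lemma~\ref{NK}, deferring its combinatorial core to \cite{Singleton}. The paper's proof is just a one-line sketch making this same reduction and likewise referring the key lemma to \cite{Singleton}, so you have matched it (and fleshed out the puncturing step a bit more explicitly).
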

The proof is a direct generalization of the second proof of Proposition~\ref{Singleton_t=2} above.
The puncturing step is essentially the same, so the only difficulty is to find the proper generalization of Lemma~\ref{NK}.
We refer to \cite{Singleton} for the details.

By Proposition~\ref{ex_a} we see that this upper bound is tight.
Also it turns out that, for $t\geq3$, the condition that the codes have full support is necessary.
Actually, projecting on the intersection of the supports (as in the case $t=2$) allows to slightly relax this condition, but not to remove it entirely.
More details can be found in \cite{Singleton}.

\begin{corollary}
\label{cor_Singleton}
For any field $\F$ we have
\beq
a\deux[t](n,d)=\left\lfloor\frac{n}{d}\right\rfloor\qquad\text{for $\,1\leq d\leq t$,}
\eeq
\beq
a\deux[t](n,d)=\left\lfloor\frac{n-d}{t}\right\rfloor+1\qquad\text{for $\,t<d\leq n\leq\abs{\F}+1$,}
\eeq
and (in case $\F$ finite)
\beq
a\deux[t](n,d)\leq\left\lfloor\frac{n-d}{t}\right\rfloor+1\qquad\text{for $\,t<d\leq n$ with $n>\abs{\F}+1$.}
\eeq
Likewise, we have $\alpha\deux[t](0)=1$, and
\beq
\alpha\deux[t](\delta)\leq\frac{1-\delta}{t}\qquad\text{for $\,0<\delta\leq 1$}
\eeq
with equality when $\F$ is infinite.
\end{corollary}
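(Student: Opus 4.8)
The plan is to derive everything from the product Singleton bound (Theorem~\ref{Singleton_general}, or Proposition~\ref{Singleton_t=2} when $t=2$) for the upper bounds, and from Proposition~\ref{ex_a} for the lower bounds, after a harmless reduction to the full-support case. Given $C\subset\F^n$ with $\dim(C)=k$ and $\dmin(C\deux[t])\geq d$, replacing $C$ by $\pi_{\Supp(C)}(C)$ changes neither $k$ nor (by Lemma~\ref{Supp*=intSupp}, since $\pi_{\Supp(C)}$ is injective and weight-preserving on words supported in $\Supp(C)$) the quantity $\dmin(C\deux[t])$, while only decreasing $n$; so I may assume $C$ has full support. In the range $t<d\leq n$, Theorem~\ref{Singleton_general} applied with $C_1=\cdots=C_t=C$ gives $d\leq\dmin(C\deux[t])\leq\max(t-1,\,n-tk+t)$, and since $d>t-1$ this forces $tk\leq n-d+t$, i.e.\ $k\leq\lfloor(n-d)/t\rfloor+1$. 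This establishes $a\deux[t](n,d)\leq\lfloor(n-d)/t\rfloor+1$ over any $\F$, which is the third displayed equation and the upper-bound half of the second.

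For the range $1\leq d\leq t$ the product Singleton bound alone does not suffice, since when $d<t$ the maximum above may equal $t-1\geq d$ and yield no constraint. Here I would first reduce to $d=t$ using monotonicity: the lemma of~\ref{at_monotone} gives $a\deux[t](n,d)\leq a\deux[d](n,d)$ (equivalently, by Theorem~\ref{monotone} one has $\dmin(C\deux[d])\geq\dmin(C\deux[t])\geq d$). Applying the previous paragraph with $t$ replaced by $d$ — using Proposition~\ref{Singleton_t=2} for $d=2$, and noting $d=1$ is trivial since $a(n,1)=n$ — now gives $n-dk+d\geq d$, hence $k\leq\lfloor n/d\rfloor$, the upper-bound half of the first assertion.

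The matching lower bounds are exactly Proposition~\ref{ex_a}: its first inequality gives $a\deux[t](n,d)\geq\lfloor n/d\rfloor$ and its second gives $a\deux[t](n,d)\geq\lfloor(n-d)/t\rfloor+1$ whenever $n\leq\abs{\F}+1$, completing the two equalities. For the asymptotic statements, $a\deux[t](n,0)=n$ (attained by $C=\F^n$, whose $t$-th power is $\F^n$), so $\alpha\deux[t](0)=1$; and for $0<\delta\leq1$, since $\lfloor\delta n\rfloor\to\infty$ we have $t<\lfloor\delta n\rfloor\leq n$ for $n$ large, so the upper bound gives $a\deux[t](n,\lfloor\delta n\rfloor)\leq\lfloor(n-\lfloor\delta n\rfloor)/t\rfloor+1\leq(n-\delta n+1)/t+1$; dividing by $n$ and letting $n\to\infty$ yields $\alpha\deux[t](\delta)\leq(1-\delta)/t$. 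When $\F$ is infinite, $n\leq\abs{\F}+1$ always holds, so for $n$ large the exact value $a\deux[t](n,\lfloor\delta n\rfloor)=\lfloor(n-\lfloor\delta n\rfloor)/t\rfloor+1$ applies, and dividing by $n$ shows the limit is $(1-\delta)/t$, hence equality.

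There is no serious obstacle here: the content lies entirely in the already-proved Theorem~\ref{Singleton_general} and Proposition~\ref{ex_a}, and the rest is bookkeeping. The one point genuinely requiring care is the upper bound in the regime $d\leq t$, where the product Singleton bound does not bite directly and one must first pass through the monotonicity of $a\deux[t]$ (or of the distance sequence) to reduce to the boundary case $d=t$; checking consistency of the two closed-form expressions $\lfloor n/d\rfloor$ and $\lfloor(n-d)/t\rfloor+1$ at $d=t$ (as must hold, since $a\deux[t](n,\cdot)$ is non-increasing) is a worthwhile sanity check.
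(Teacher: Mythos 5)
Your proof is correct and follows essentially the same route as the paper: the upper bounds come from the product Singleton bound (Theorem~\ref{Singleton_general}, Proposition~\ref{Singleton_t=2} for small $t$), the range $d\leq t$ is reduced via the monotonicity inequality $a\deux[t](n,d)\leq a\deux[d](n,d)$ of Lemma~\ref{at_monotone}, and the matching lower bounds are Proposition~\ref{ex_a}. You simply make explicit a few details the paper leaves implicit, notably the reduction to full support (needed since Theorem~\ref{Singleton_general} assumes it for $t\geq3$) and the edge cases $d=1,2$.
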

\begin{proof}
Consequence of Theorem~\ref{Singleton_general} and Proposition~\ref{ex_a},
and of the inequality $a\deux[t](n,d)\leq a\deux[d](n,d)$ (Lemma~\ref{at_monotone}) in case $d\leq t$.
\end{proof}

It should be noted that the bound in Theorem~\ref{Singleton_general} holds not only for the minimal distance $\dmin$
of the product code, but also for the distance with rank constraint $\dmin_{,1}$.
As a consequence, the estimates in Corollary~\ref{cor_Singleton} hold in fact for the functions
$a\deux[t](n,d)_1=\max\{k\geq0\,;\;\exists C\subset\F^n,\,\dim(C)=k,\,\dmin_{,1}(C\deux[t])\geq d\}$
and
$\alpha\deux[t](\delta)_1=\limsup_{n\to\infty}\frac{a(n,\lfloor\delta n\rfloor)_1}{n}$.

Another interesting remark is that, for $t\geq2$, the function $\alpha\deux[t]$ is not continuous at $\delta=0$.
This might make wonder whether the definitions of the functions $a\deux[t]$ and $\alpha\deux[t]$ are the ``right'' ones.
For example one could ask how these functions are modified when one considers only codes with no repeated columns.

\mysubsection{Lower bounds for $q$ large: AG codes}
\entry
When $\F=\Fq$ is a finite field, the inequality $a_q\deux[t](n,d)\leq\left\lfloor\frac{n-d}{t}\right\rfloor+1$ in Corollary~\ref{cor_Singleton}
might be strict for $n>q+1$, because the length of Reed-Solomon codes is bounded.

In this setting, a classical way to get codes sharing most of the good properties of Reed-Solomon codes, but without this limitation on the length,
is to consider so-called algebraic-geometry codes constructed from curves of higher genus.

We recall from Example~\ref{exAG} that $C(D,G)$ is the code obtained by evaluating functions from
the Riemann-Roch space $L(D)$, associated with a divisor $D$,
at a set $G$ of points out of the support of $D$, on an algebraic curve
over $\F$.

\begin{proposition}
\label{prop_AG}
Let $q$ be a prime power and $t\geq1$ an integer. Suppose there is a curve $X$ of genus $g$ over $\Fq$
having at least $n$ rational points. Then we have
\beq
a_q\deux[t](n,d)\geq\left\lfloor\frac{n-d}{t}\right\rfloor+1-g\qquad\textrm{for $\,t<d\leq n-tg$.}
\eeq
\end{proposition}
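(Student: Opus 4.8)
The plan is to realize the claimed parameters by an algebraic-geometry code on $X$, using the inclusion $C(D,G)*C(D',G)\subset C(D+D',G)$ of Example~\ref{exAG} to turn a bound on the degree of a divisor into a bound on the minimum distance of a power. I would set $m=\left\lfloor\frac{n-d}{t}\right\rfloor$; the assumption $t<d\leq n-tg$ is used precisely to guarantee the three numerical facts $m\geq g$ (from $n-d\geq tg$), $tm\leq n-d<n$, and hence $m<n$. Then I would take $G$ to consist of $n$ rational points of $X$, fix an effective divisor $D$ of degree $m$ with $\Supp(D)\cap G=\emptyset$, and let $C=C(D,G)\subset\Fq^n$.

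For the dimension, the point is that $\deg D=m<n$ forces the evaluation map $L(D)\to\Fq^n$ to be injective: a nonzero element of the Riemann--Roch space $L(D)$ has at most $m$ zeros on $X$, hence cannot vanish at all $n$ points of $G$. So Riemann--Roch gives
\beq
\dim C=\dim L(D)\geq\deg D+1-g=m+1-g,
\eeq
which is $\geq 1$ thanks to $m\geq g$.

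For the distance of the $t$-th power, I would iterate the Example~\ref{exAG} inclusion to obtain $C\deux[t]\subset C(tD,G)$, and then observe that a nonzero function in $L(tD)$ has at most $\deg(tD)=tm$ zeros; since $tm<n$, its evaluation on $G$ is a nonzero word of weight $\geq n-tm\geq d$. Hence $\dmin(C(tD,G))\geq d$, and since a subcode has minimum distance at least that of the ambient code, $\dmin(C\deux[t])\geq d$ as well. Putting this together, $C$ has length $n$, dimension $\geq m+1-g$, and $\dmin(C\deux[t])\geq d$, so by the definition of the generalized fundamental function, $a_q\deux[t](n,d)\geq m+1-g=\left\lfloor\frac{n-d}{t}\right\rfloor+1-g$.

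I do not expect a genuine obstacle: the argument is just Riemann--Roch combined with the product formula already established in Example~\ref{exAG} and the elementary bound that a function has at most as many zeros as the degree of its pole divisor permits. The only points needing a little care are bookkeeping — confirming that the stated range of $d$ yields the inequalities $g\leq m$, $m<n$, $tm<n$ used above — and the (standard) matter of choosing the divisor $D$ effective, of degree exactly $m$, and with support disjoint from $G$; this last is where one invokes the existence of closed points of higher degree on $X$ in the borderline situation where $X$ has no rational point to spare outside $G$.
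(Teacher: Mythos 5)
Your proof is correct and follows essentially the same route as the paper's: take $G$ a set of $n$ rational points on $X$, $D$ a divisor of degree $m=\lfloor(n-d)/t\rfloor$ with support disjoint from $G$, use Riemann--Roch (Goppa) for $\dim C(D,G)\geq m+1-g$, and use $C(D,G)\deux[t]\subset C(tD,G)$ from Example~\ref{exAG} together with $\deg(tD)=tm<n$ for $\dmin(C\deux[t])\geq n-tm\geq d$. The only cosmetic difference is that you impose $D$ effective, which is harmless but not needed — the Example~\ref{exAG} inclusion and the zero-counting estimate hold for any divisor of the right degree.
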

\begin{proof}
Let $G$ be a set of $n$ rational points on $X$, and $D$ a divisor of degree $\deg(D)=\left\lfloor\frac{n-d}{t}\right\rfloor$ with support disjoint from $G$.
Then we have $g\leq\deg(D)\leq t\deg(D)<n$ so by the Goppa estimates
\beq
\dim(C(D,G))=l(D)\geq\deg(D)+1-g
\eeq
and
\beq
\dmin(C(tD,G))\geq n-t\deg(D)\geq d.
\eeq
The conclusion follows since $C(D,G)\deux[t]\subset C(tD,G)$ by Example~\ref{exAG}.
\end{proof}

\entry
\label{cor_AG}
We let $N_q(g)$ be the largest integer $n$ such that there is a curve of genus $g$ over $\Fq$ with $n$ rational points, and we define the Ihara constant
\beq
A(q)=\limsup_{g\to\infty}\frac{N_q(g)}{g}.
\eeq
\begin{corollary*}
For any prime power $q$ and for any integer $t\geq1$ we have
\beq
\alpha_q\deux[t](\delta)\geq\frac{1-\delta}{t}-\frac{1}{A(q)}\qquad\textrm{for $\,0<\delta\leq1-\frac{t}{A(q)}$.}
\eeq
As a consequence,
\beq
\tau(q)\geq\lceil A(q)\rceil-1.
\eeq
\end{corollary*}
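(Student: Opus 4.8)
The plan is to obtain the codes that realize the lower bound by applying Proposition~\ref{prop_AG} to curves over $\Fq$ that come close to achieving the Ihara constant, and then to pass to the limit. First I would clear away the boundary cases: if $A(q)\leq t$ the range $0<\delta\leq 1-t/A(q)$ is empty and there is nothing to prove; and if $\delta=1-t/A(q)$ (so $A(q)>t$) the right-hand side $\frac{1-\delta}{t}-\frac1{A(q)}$ equals $0$, while $\alpha_q\deux[t](\delta)\geq0$ holds trivially since $a_q\deux[t](n,d)\geq0$ for all $n,d$ (the zero code always qualifies). So I may assume $0<\delta<1-t/A(q)$, which in particular forces $A(q)>t$.

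Next, using $A(q)=\limsup_g N_q(g)/g$, I would pick a strictly increasing sequence of genera $(g_m)_{m\geq1}$ with $N_q(g_m)/g_m\to A(q)$ and set $n_m=N_q(g_m)$; thus for each $m$ there is a curve of genus $g_m$ over $\Fq$ with $n_m$ rational points, $g_m/n_m\to1/A(q)$, and $n_m\to\infty$. Put $d_m=\lfloor\delta n_m\rfloor$. The point to verify is that the hypothesis $t<d_m\leq n_m-tg_m$ of Proposition~\ref{prop_AG} holds for $m$ large: one has $d_m\to\infty$ since $\delta>0$, hence $d_m>t$ eventually, and $d_m\leq\delta n_m<n_m-tg_m$ is equivalent to $g_m/n_m<(1-\delta)/t$, which holds for $m$ large because $g_m/n_m\to1/A(q)$ and $1/A(q)<(1-\delta)/t$ is precisely the assumption $\delta<1-t/A(q)$.

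Proposition~\ref{prop_AG} then gives, for $m$ large, $a_q\deux[t](n_m,d_m)\geq\lfloor(n_m-d_m)/t\rfloor+1-g_m\geq(n_m-d_m)/t-g_m$. Dividing by $n_m$ and letting $m\to\infty$, using $d_m/n_m\to\delta$ and $g_m/n_m\to1/A(q)$, I get $\liminf_m a_q\deux[t](n_m,\lfloor\delta n_m\rfloor)/n_m\geq\frac{1-\delta}{t}-\frac1{A(q)}$. Since $n_m\to\infty$, this liminf along the subsequence $(n_m)$ of lengths bounds from below the limsup over all $n$ that defines $\alpha_q\deux[t](\delta)$, which is exactly the claimed inequality.

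Finally, for the consequence I would set $t_0=\lceil A(q)\rceil-1$. If $t_0\leq0$ there is nothing to prove (one always has $\tau(q)\geq1$). If $t_0\geq1$, then $t_0<A(q)$ because $\lceil x\rceil-1<x$ for every real $x$, so $1-t_0/A(q)>0$; choosing any $\delta$ with $0<\delta<1-t_0/A(q)$, the inequality just proved gives $\alpha_q\deux[t_0](\delta)\geq\frac{1-\delta}{t_0}-\frac1{A(q)}>0$, so $t_0$ belongs to the set whose supremum is $\tau(q)$, whence $\tau(q)\geq t_0=\lceil A(q)\rceil-1$. I expect the only genuinely delicate point to be the book-keeping at the boundary of the admissible range of $\delta$, together with correctly threading the two floor functions, the liminf along the chosen subsequence of lengths, and the limsup in the definition of $\alpha_q\deux[t]$; once the range condition of Proposition~\ref{prop_AG} is checked in the limit, everything else is a direct application.
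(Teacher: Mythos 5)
Your proof is correct and follows exactly the route the paper indicates: apply Proposition~\ref{prop_AG} to a sequence of curves with $g\to\infty$ and $n/g\to A(q)$, then pass to the limit. You have merely made explicit the boundary cases, the choice of the subsequence of genera, the verification that the hypothesis $t<d\leq n-tg$ of the proposition is eventually satisfied, and the liminf/limsup bookkeeping, all of which the paper leaves implicit in its one-line proof.
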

\begin{proof}
Apply Proposition~\ref{prop_AG} with $g\to\infty$ and $n/g\to A(q)$.
\end{proof}

\entry
\label{A(q)}
For any prime power $q$ we have~\cite{DV}
\beq
A(q)\leq q^{1/2}-1,
\eeq
and in the other direction the following are known:
\begin{enumerate}[(i)]
\item
\label{A(q)_Serre}
There is a constant $c>0$ such that, for any prime power $q$, we have~\cite{Serre82} 
\beq
A(q)>c\log(q).
\eeq
\item
\label{A(q)_Ihara}
If $p$ is a prime, then for any integer $s\geq1$ we have~\cite{Ihara}
\beq
A(p^{2s})=p^s-1.
\eeq
\item
\label{A(q)_GSBB}
If $p$ is a prime, then for any integer $s\geq1$ we have~\cite{GSBB}
\beq
A(p^{2s+1})\geq \left(\frac{1}{2}\left(\frac{1}{p^s-1}+\frac{1}{p^{s+1}-1}\right)\right)^{-1}.
\eeq
\end{enumerate}
Then from Corollary~\ref{cor_AG} and from~\itemref{A(q)_Serre} we see that $\tau(q)\to\infty$ for $q\to\infty$, as claimed in~\ref{tau_q_fini}.
Moreover, we also see that the lower bound in Corollary~\ref{cor_AG} asymptotically matches the upper bound in Corollary~\ref{cor_Singleton}.

\mysubsection{Lower bounds for $q$ small: concatenation}
\entry
\label{debut_concat}
The lower bound in Corollary~\ref{cor_AG} is too weak to give some nontrivial information on $\tau(q)$ for $q$ small.
A useful tool in such a situation is \emph{concatenation}, which allows to construct codes over small alphabets from codes over large alphabets.
For technical reasons we present this notion in a more general context.

First, if $\cA_1,\dots,\cA_t$ and $\cB$ are sets, seen as ``alphabets'', and if
\beq
\Phi:\cA_1\times\cdots\times\cA_t\longto\cB
\eeq
is any map,
then for any integer $n$, applying $\Phi$ componentwise we get a map that (by a slight abuse of notation)
we also denote
\beq
\Phi:(\cA_1)^n\times\cdots\times(\cA_t)^n\longto\cB^n.
\eeq
From this point on, we can proceed as in~\ref{notation_point}: given subsets
$S_1\subset(\cA_1)^n,\dots,S_t\subset(\cA_t)^n$ we let
\beq
\dot{\Phi}(S_1,\dots,S_t)=\{ \Phi(c_1,\dots,c_t)\;;\;c_1\in S_1,\dots,c_t\in S_t\}\;\subset\,\cB^n.
\eeq
If moreover $\cA_1,\dots,\cA_t,\cB$ are $\F$-vector spaces
and $C_1\subset(\cA_1)^n,\dots,C_t\subset(\cA_t)^n$ are $\F$-linear subspaces, we let
\beq
\Phi(C_1,\dots,C_t)=\langle\dot{\Phi}(C_1,\dots,C_t)\rangle\;\subset\,\cB^n.
\eeq

For instance, if $t=2$, $\cA_1=\cA_2=\cB=\F$, and $\Phi$ is multiplication in $\F$,
we retrieve the definition of $C_1*C_2$ given in~\ref{def_*}.

\entry
Concatenation in the usual sense corresponds to $t=1$, $\cA=\F_{q^r}$, $\cB=(\Fq)^m$,
and
\beq
\phi:\F_{q^r}\inj(\Fq)^m
\eeq
an injective $\Fq$-linear map.
Using the natural identification $((\Fq)^m)^n=(\Fq)^{nm}$, we see that
if $C$ is a $[n,k]_{q^r}$ code, then $\phi(C)$ is a $[nm,kr]_q$ code.

In the classical terminology, $\phi(C)\subset(\Fq)^{nm}$ is called the concatenated code obtained from the external code $C\subset(\F_{q^r})^n$
and the internal code $C_\phi=\phi(\F_{q^r})\subset(\Fq)^m$, using the symbol mapping $\phi$.

It is easily seen that we have $\dmin(\phi(C))\geq\dmin(C)\dmin(C_\phi)$.
Now let $t\geq2$ be an integer. It turns out that if $\phi$ is well chosen,
then there is a $T\geq t$ such that $\dmin(\phi(C)\deux[t])$ can be estimated from $\dmin(C\deux[T])$ similarly.
To state this more precisely we need to introduce the following notations:

\entry
\label{notations_prop_concat}
Suppose we are given $h$ symmetric maps $\psi_1:(\F_{q^r})^t\longto\F_{q^r}$, $\dots$, $\psi_h:(\F_{q^r})^t\longto\F_{q^r}$ such that:
\begin{itemize}
\item viewed over $\F_{q^r}$, each $\psi_i$ is a polynomial of degree $D_i\leq T$
\item viewed over $\F_q$, each $\psi_i$ is $t$-multilinear.
\end{itemize}
In case $t>q$, suppose also these $\psi_i$ satisfy the Frobenius exchange condition~\ref{def_F_sym} in Appendix~\ref{sect_criterion}.

Then by Theorem~\ref{th_criterion}, the map
\beq
\Psi:(\psi_1,\dots,\psi_h):(\F_{q^r})^t\longto(\F_{q^r})^h
\eeq
admits a symmetric algorithm,
that is, one can find an integer $m$ and $\Fq$-linear maps $\phi:\F_{q^r}\longto(\Fq)^m$ and $\omega:(\Fq)^m\longto(\F_{q^r})^h$
such that the following diagram is commutative
\begin{center}
\begin{tikzpicture}
\node (NW) at (0,0) {$(\F_{q^r})^t$};
\node (NE) at (2.5,0) {$(\F_{q^r})^h$};
\node (SW) at (0,-1.2) {$((\Fq)^m)^t$};
\node (SE) at (2.5,-1.2) {$(\Fq)^m$};
\draw[->,font=\scriptsize,>=angle 90] (NW) edge node[above] {$\Psi$} (NE);
\draw[->,font=\scriptsize,>=angle 90] (SW) edge node[above] {$*$} (SE);
\draw[->,font=\scriptsize,>=angle 90] (NW) edge node[left] {$(\phi$,...,$\phi)$} (SW);
\draw[->,font=\scriptsize,>=angle 90] (SE) edge node[right] {$\omega$} (NE);
\end{tikzpicture}
\end{center}
where $*$ is componentwise multiplication in $(\Fq)^m$.

Now suppose both $\phi$ and $\omega$ are \emph{injective}.

\begin{proposition}
\label{prop_concat}
Let $C$ be a $[n,k]$ code over $\F_{q^r}$.
With the notations just above, suppose either:
\begin{itemize}
\item each polynomial $\psi_i$ is \emph{homogeneous} of degree $D_i$, or
\item $C$ contains the all-$1$ word $1_{[n]}$.
\end{itemize}
Then the $[nm,kr]$ code $\phi(C)$ over $\Fq$ satisfies
\beq
\dmin(\phi(C)\deux[t])\geq\dmin(C\deux[T]).
\eeq
\end{proposition}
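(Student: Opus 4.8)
The plan is, given any nonzero $z\in\phi(C)\deux[t]$, to produce from it a nonzero codeword of $C\deux[T]$ of weight at most $w(z)$ by decoding $z$ block by block via $\omega$; this then forces $\dmin(\phi(C)\deux[t])\geq\dmin(C\deux[T])$. First I would write $z=\sum_s\phi(c_1^{(s)})*\cdots*\phi(c_t^{(s)})$ with $c_\ell^{(s)}\in C$ --- legitimate since $\phi$ is $\Fq$-linear, $C$ is $\F_{q^r}$-linear, and $\phi(C)\deux[t]$ is the $\Fq$-span of the products $\phi(d_1)*\cdots*\phi(d_t)$, $d_\ell\in C$. Using $(\Fq)^{nm}=((\Fq)^m)^n$, apply $\omega$ to each of the $n$ blocks to get an $\Fq$-linear map $\Omega:(\Fq)^{nm}\longto((\F_{q^r})^h)^n$, and for $i\in[h]$ let $y_i\in(\F_{q^r})^n$ collect the $i$-th entries of the $n$ blocks of $\Omega(z)$. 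Since the $j$-th block of $\phi(c_\ell^{(s)})$ is $\phi((c_\ell^{(s)})_j)$ and $*$ acts blockwise, the commutative diagram of~\ref{notations_prop_concat} identifies the $j$-th block of $\Omega(z)$ with $\sum_s\Psi((c_1^{(s)})_j,\dots,(c_t^{(s)})_j)$, hence
\beq
y_i=\sum_s\big(\psi_i((c_1^{(s)})_j,\dots,(c_t^{(s)})_j)\big)_{j\in[n]},\qquad i\in[h].
\eeq

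The heart of the argument is to place each $y_i$ in a power $C\deux[a]$ with $a\leq T$. Being $\Fq$-multilinear, $\psi_i$ is, as a reduced $\F_{q^r}$-polynomial, an $\F_{q^r}$-linear combination of monomials $a_1^{q^{e_1}}\cdots a_t^{q^{e_t}}$ with $0\leq e_\ell\leq r-1$ (applying variable by variable the standard description of $\Fq$-linear maps of $\F_{q^r}$), each of total degree $q^{e_1}+\cdots+q^{e_t}\leq D_i\leq T$. Substituting $a_\ell=(c_\ell^{(s)})_j$ and gathering over $j$, and noting that the $q^e$-fold iterated $*$-product $c*\cdots*c$ of a word $c\in(\F_{q^r})^n$ with itself is precisely its coordinatewise $q^e$-th power (so a codeword of $C\deux[q^e]$ when $c\in C$), one sees that $y_i$ is an $\F_{q^r}$-linear combination of elementary products of codewords of $C$, each such product involving $q^{e_1}+\cdots+q^{e_t}$ factors and hence lying in $C\deux[q^{e_1}+\cdots+q^{e_t}]$. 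If the $\psi_i$ are homogeneous, all these exponent sums equal $D_i$, so $y_i\in C\deux[D_i]$; if instead $1_{[n]}\in C$, then $\un\subset C\deux[a]$ for all $a\geq0$, hence $C\deux[a]=C\deux[a]*\un\subset C\deux[a]*C\deux[T-a]=C\deux[T]$ whenever $0\leq a\leq T$, so $y_i\in C\deux[T]$ regardless of which monomials occur.

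To conclude I would compare weights. Put $J=\{j\,;\,z_j\neq0\}$, so that $w(z)=\sum_{j\in J}w(z_j)\geq\abs{J}$, and $\Supp(y_i)\subset J$ for every $i$ (if $z_j=0$ then $\omega(z_j)=0$ kills the $j$-th block of $\Omega(z)$). Since $z\neq0$, choose $j_0\in J$; because $\omega$ is \emph{injective}, $\omega(z_{j_0})\neq0$, so $y_{i_0}\neq0$ for some $i_0$. Thus $y_{i_0}$ is a nonzero codeword, of $C\deux[D_{i_0}]$ in the homogeneous case and of $C\deux[T]$ otherwise, supported in $J$; moreover in the homogeneous case $\dmin(C\deux[T])\leq\dmin(C\deux[D_{i_0}])$ by monotonicity of the distance sequence (Theorem~\ref{monotone}, since $D_{i_0}\leq T$). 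Either way
\beq
\dmin(C\deux[T])\;\leq\;w(y_{i_0})\;\leq\;\abs{J}\;\leq\;w(z),
\eeq
and since $z$ was an arbitrary nonzero codeword, the proposition follows.

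I expect the only real difficulty to be the bookkeeping in the middle step: one must keep straight that the $\psi_i$, though only $\Fq$-multilinear and not $\F_{q^r}$-linear, become genuine monomial expressions once Frobenius powers of the arguments are allowed, and that such a Frobenius power of a \emph{word} is harmless because it is literally a $*$-power and hence stays inside some $C\deux[a]$ with $a\leq T$. The rest is routine, and the role of the two alternative hypotheses is exactly to convert ``$a\leq T$'' into a bound against $\dmin(C\deux[T])$: homogeneity feeds monotonicity of the distance sequence, while $1_{[n]}\in C$ feeds the inclusions $C\deux[a]\subset C\deux[T]$.
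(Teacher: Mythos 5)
Correct, and essentially the paper's own proof: apply $\omega$ blockwise to $z$, place each resulting component $y_i$ in $C\deux[D_i]$ (or directly in $C\deux[T]$ when $1_{[n]}\in C$), use injectivity of $\omega$ to extract a nonzero one, compare weights block by block, and finish with the monotonicity of the distance sequence from Theorem~\ref{monotone}. You spell out more carefully than the paper does why a monomial $a_1^{q^{e_1}}\cdots a_t^{q^{e_t}}$ carries a tuple of codewords into $C\deux[q^{e_1}+\cdots+q^{e_t}]$ --- namely, that the coordinatewise $q^e$-th power of $c\in C$ is the $q^e$-fold $*$-power and hence lies in $C\deux[q^e]$ --- which is precisely what the paper compresses into the sentence ``viewing $\psi_i$ as a polynomial of degree $D_i$\ldots we deduce $\psi_i(C,\dots,C)\subset C\deux[D_i]$.''
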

\begin{proof}
This is a straightforward generalization of \cite[Prop.~8]{agis}.
Following the conventions in~\ref{debut_concat}, first viewing $\psi_i$ as a symmetric $t$-multilinear map,
it defines a $\Fq$-linear subspace $\psi_i(C,\dots,C)\subset(\F_{q^r})^n$.
Then viewing $\psi_i$ as a polynomial of degree $D_i$,
and using our hypothesis that either $\psi_i$ is homogeneous or $C$ contains $1_{[n]}$,
we deduce $\psi_i(C,\dots,C)\subset C\deux[D_i]$.
The commutative diagram in~\ref{notations_prop_concat} then translates into the diagram
\begin{center}
\begin{tikzpicture}
\node (NW) at (0,0) {$(C)^t$};
\node (NE) at (3.2,0) {$C\deux[D_1]\oplus\cdots\oplus C\deux[D_h]$};
\node (SW) at (0,-1.3) {$(\phi(C))^t$};
\node (SE) at (3.2,-1.3) {$\phi(C)\deux[t]$.};
\draw[->,font=\scriptsize,>=angle 90] (NW) edge node[above] {$\Psi$} (NE);
\draw[->,font=\scriptsize,>=angle 90] (SW) edge node[above] {$*$} (SE);
\draw[->,font=\scriptsize,>=angle 90] (NW) edge node[left] {$(\phi$,...,$\phi)$} (SW);
\draw[->,font=\scriptsize,>=angle 90] (SE) edge node[right] {$\omega$} (NE);
\end{tikzpicture}
\end{center}
Now let $z\in\phi(C)\deux[t]$ be a nonzero codeword of minimum weight
\beq
w(z)=\dmin(\phi(C)\deux[t]).
\eeq
From the diagram just above we can write $\omega(z)=(c_1,\dots,c_h)$ with $c_i\in C\deux[D_i]$,
and since $\omega$ is injective, there is at least one $i$ such that $c_i\neq0$.
On the other hand $\omega$ is defined blockwise on $(\Fq)^{nm}=((\Fq)^m)^n$, so
\beq
w(z)\geq w(c_i)\geq\dmin(C\deux[D_i]).
\eeq
We conclude since $D_i\leq T$ implies $\dmin(C\deux[D_i])\geq\dmin(C\deux[T])$ by Theorem~\ref{monotone}.
\end{proof}


\begin{theorem}
\label{bornes_concat}
Keep the notations above and suppose $t\leq q$.
Set $m=\binom{r+t-1}{t}$ and
\beq
T=q^{\left\lfloor\frac{(t-1)r}{t}\right\rfloor}+q^{\left\lfloor\frac{(t-2)r}{t}\right\rfloor}+\cdots+q^{\left\lfloor\frac{r}{t}\right\rfloor}+1.
\eeq
Then:
\begin{enumerate}[(i)]
\item
\label{bornes_concat_a}
We have
\beq
a_q\deux[t](nm,d)\geq ra_{q^r}\deux[T](n,d)
\eeq
for all $1\leq d\leq n$.
\item
\label{bornes_concat_alpha}
We have
\beq
\alpha_q\deux[t](\delta)\geq \frac{r}{m}\alpha_{q^r}\deux[T](m\delta)
\eeq
for all $0\leq\delta\leq1/m$.
\item
\label{bornes_concat_tau}
If $\tau(q^r)\geq T$, then $\tau(q)\geq t$.
\end{enumerate}
\end{theorem}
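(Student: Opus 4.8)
The three statements all reduce to Proposition~\ref{prop_concat} once we produce, for the prescribed $m=\binom{r+t-1}{t}$ and $T$, a system $\psi_1,\dots,\psi_h$ as in~\ref{notations_prop_concat} made moreover of \emph{homogeneous} polynomials, together with a symmetric algorithm for $\Psi=(\psi_1,\dots,\psi_h)$ of length exactly $m$ whose maps $\phi$ and $\omega$ are injective. Homogeneity is what allows Proposition~\ref{prop_concat} to be applied to an arbitrary code over $\F_{q^r}$, with no hypothesis that it contain $1_{[n]}$. So the plan is: build this data; deduce (i) at once; then obtain (ii) and (iii) by formal asymptotics.

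\emph{Construction of the $\psi_i$.} For a size-$t$ multiset $J=\{j_1,\dots,j_t\}$ of elements of $\{0,\dots,r-1\}$, consider the symmetrized Frobenius monomial $\psi_J(x_1,\dots,x_t)=\sum x_1^{q^{a_1}}\cdots x_t^{q^{a_t}}$, the sum ranging over the distinct rearrangements $(a_1,\dots,a_t)$ of $J$. It is symmetric, homogeneous of total degree $\sum_l q^{j_l}$, and $\F_q$-multilinear (each $x\mapsto x^{q^j}$ is $\F_q$-linear and $\lambda^{q^j}=\lambda$ for $\lambda\in\F_q$). As $J$ varies, the $\psi_J$ form an $\F_{q^r}$-basis of the space $M$ of all symmetric $\F_q$-multilinear maps $(\F_{q^r})^t\to\F_{q^r}$, so $\dim_{\F_{q^r}}M=\binom{r+t-1}{t}=m$. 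With the balanced multiset $J_0=\{0,\lfloor r/t\rfloor,\dots,\lfloor(t-1)r/t\rfloor\}$ one has $\sum_{l=0}^{t-1}q^{\lfloor lr/t\rfloor}=T$; let $W\subset M$ be the $\F_{q^r}$-span of those $\psi_J$ of degree $\le T$ (it contains $\psi_{J_0}$, of degree exactly $T$), and take $\psi_1,\dots,\psi_h$ to be a homogeneous $\F_q$-basis of $W$. Each $\psi_i$ is then homogeneous of degree $\le T$, and since $t\le q$ the Frobenius exchange condition of~\ref{def_F_sym} is vacuous, so Theorem~\ref{th_criterion} applies to $\Psi$.

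\emph{Pinning the length to $m$, with $\phi,\omega$ injective.} Here one chooses $m$ linear forms $\mu_1,\dots,\mu_m\colon\F_{q^r}\to\F_q$ whose symmetric $t$-th powers $(a_1,\dots,a_t)\mapsto\mu_l(a_1)\cdots\mu_l(a_t)$ form an $\F_{q^r}$-basis of $M$ and whose common kernel is $0$; that such forms exist over $\F_q$ — equivalently, that the $\F_q$-points of the $t$-fold Veronese image of $\PP(\F_{q^r}^\vee)$ are nondegenerate — is precisely what the appendix criterion delivers under the hypothesis $t\le q$. Then $\phi=(\mu_1,\dots,\mu_m)$ is injective; writing each $\psi_i$ in the basis $(\mu_l(\cdot)\cdots\mu_l(\cdot))_l$ of $M$ produces the map $\omega$ and makes the square of~\ref{notations_prop_concat} commute; and injectivity of $\omega$ reduces to the $\F_{q^r}$-subspace of $(\F_{q^r})^m$ orthogonal to the coordinate image of $W$ meeting $(\F_q)^m$ only in $0$, which one arranges by taking the $\mu_l$ sufficiently generic. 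I expect this injectivity bookkeeping — together with the simultaneous constraints ``length $=m$'', ``$\phi,\omega$ injective'', ``$\deg\psi_i\le T$'' — to be the one real difficulty; for $t=2$ it is exactly the analysis behind \cite[Prop.~8]{agis}, which this argument generalizes.

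\emph{Conclusion.} Granting the above, (i) is immediate: take $C\subset(\F_{q^r})^n$ of dimension $k=a_{q^r}\deux[T](n,d)$ with $\dmin(C\deux[T])\ge d$; as the $\psi_i$ are homogeneous, Proposition~\ref{prop_concat} gives $\dmin(\phi(C)\deux[t])\ge\dmin(C\deux[T])\ge d$, while $\phi(C)\subset(\Fq)^{nm}$ is a $[nm,kr]$ code since $\phi$ is $\F_q$-linear injective; hence $a_q\deux[t](nm,d)\ge kr=r\,a_{q^r}\deux[T](n,d)$. For (ii), apply (i) with $d=\lfloor m\delta n\rfloor$ and set $N=nm$, so that $d=\lfloor\delta N\rfloor$; dividing by $N$ and passing to the $\limsup$ over those $N$ divisible by $m$ gives $\alpha_q\deux[t](\delta)\ge\tfrac{r}{m}\alpha_{q^r}\deux[T](m\delta)$ for $0\le\delta\le1/m$. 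Finally, since the exponentiated $\alpha$-functions are non-increasing in the exponent (Lemma~\ref{at_monotone}), $\tau(q^r)\ge T$ means exactly that $\alpha_{q^r}\deux[T](\delta_0)>0$ for some $\delta_0>0$; feeding $\delta=\delta_0/m$ into (ii) yields $\alpha_q\deux[t](\delta_0/m)\ge\tfrac{r}{m}\alpha_{q^r}\deux[T](\delta_0)>0$, hence $\tau(q)\ge t$.
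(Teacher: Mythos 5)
Your overall plan is the right one, and it matches the paper's: produce homogeneous symmetric $\Fq$-multilinear maps of degree at most~$T$, package them into a $\Psi$ with a symmetric algorithm of length $m$ whose $\phi$ and $\omega$ are injective, then invoke Proposition~\ref{prop_concat} for~(i), with~(ii) and~(iii) following formally. The two asymptotic steps at the end are also correct. The difficulty you flag is exactly the right one to worry about, but the fix you sketch would not close it.

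Here is the problem. Once the $\mu_l^{\tens t}$ form an $\Fq$-basis of $\Sym^t_{\Fq}((\F_{q^r})^\vee)$, the bottom-left composite $*\circ(\phi,\dots,\phi)$ factors through an isomorphism $\Phi:S^t_{\Fq}\F_{q^r}\overset{\sim}{\longto}(\Fq)^m$, and $\omega=\Psi\circ\Phi^{-1}$ where $\Psi:S^t_{\Fq}\F_{q^r}\longto(\F_{q^r})^h$ is the map induced by $(\psi_1,\dots,\psi_h)$. So $\omega$ is injective if and only if $\Psi$ is, which is a property of $W$ alone: genericity of the $\mu_l$ is irrelevant (changing the $\mu_l$ only changes $\Phi$, which is an isomorphism either way). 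What you actually need is that your $W$ separates points of $S^t_{\Fq}\F_{q^r}$. That does hold, but by an argument you do not make: if $\psi_J(\xi)=0$ then $\psi_{J\boxplus j}(\xi)=\psi_J(\xi)^{q^j}=0$ for all $j$; so vanishing of all $\psi_J$ with $D_J\le T$ against $\xi$ propagates, via Theorem~\ref{borne_equidistr} (every Frobenius orbit in $\cR$ has a representative of degree at most $T$), to vanishing of \emph{all} $\psi_J$, which forces $\xi=0$.

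The paper avoids this separation argument entirely. Instead of building $\Psi$ from a low-degree span $W$ inside $M$, it takes the universal map $\Psi:(\F_{q^r})^t\longto\prod_{I\in\cS}\F_{q^{r_I}}$ of Theorem~\ref{th_poly_descr_StF}, one component $S_I$ per Frobenius orbit with values in the corresponding fixed subfield $\F_{q^{r_I}}$. That theorem says the induced map $S^t_{\Fq}\F_{q^r}\longto\prod_{I\in\cS}\F_{q^{r_I}}$ is an isomorphism (both sides have $\Fq$-dimension $\sum_{I\in\cS}r_I=m$), so $\omega=\Psi\circ\Phi^{-1}$ is an isomorphism with no further argument. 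The degree bound $D_I\le T$ is then Corollary~\ref{borne_degSI}, which rests on the same Theorem~\ref{borne_equidistr} your argument would also need. In short: same structural input, but taking $\Psi$ with the ``right'' codomain makes the injectivity bookkeeping disappear.
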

\begin{proof}
Since $t\leq q$, any symmetric $t$-multilinear map admits a symmetric algorithm by Theorem~\ref{th_criterion}.
Equivalently, for any $\Fq$-vector space $V$, the space of symmetric tensors $\Sym^t_{\Fq}(V)$ is spanned by elementary symmetric tensors.
We apply this with $V=(\F_{q^r})^\vee$, so we can find $m$ linear forms $\phi_1,\dots,\phi_m\in(\F_{q^r})^\vee$ such that $\phi_1^{\tens t},\dots,\phi_m^{\tens t}$
span $\Sym^t_{\Fq}((\F_{q^r})^\vee)=(S^t_{\Fq}\F_{q^r})^\vee$. Note that this implies that $\phi_1,\dots,\phi_m$ span $(\F_{q^r})^\vee$,
so $\phi=(\phi_1,\dots,\phi_m):\F_{q^r}\longto(\Fq)^m$ is injective, and also that $\Phi=(\phi_1^{\tens t},\dots,\phi_m^{\tens t}):(\F_{q^r})^t\longto(\Fq)^m$
induces an isomorphism $S^t_{\Fq}\F_{q^r}\simeq(\Fq)^m$ of $\Fq$-vector spaces.

However, by Theorem~\ref{th_poly_descr_StF} we also have an isomorphism $S^t_{\Fq}\F_{q^r}\simeq\prod_{I\in\cS}\F_{q^{r_I}}$ (for some $r_I|r$),
induced by a symmetric $t$-multilinear map $\Psi:(\F_{q^r})^t\longto\prod_{I\in\cS}\F_{q^{r_I}}$, so all this fits in a commutative diagram:
\begin{center}
\begin{tikzpicture}
\node (NW) at (0,0) {$(\F_{q^r})^t$};
\node (NE) at (3,0) {$\prod_{I\in\cS}\F_{q^{r_I}}$};
\node (SW) at (0,-1.2) {$((\Fq)^m)^t$};
\node (SE) at (3,-1.2) {$(\Fq)^m$.};
\node (undeplus) at (4.8,0) {$\subset(\F_{q^r})^{\abs{\cS}}$};
\draw[->,font=\scriptsize,>=angle 90] (NW) edge node[above] {$\Psi$} (NE);
\draw[->,font=\scriptsize,>=angle 90] (SW) edge node[above] {$*$} (SE);
\draw[->,font=\scriptsize,>=angle 90] (NW) edge node[left] {$(\phi$,...,$\phi)$} (SW);
\draw[->,font=\scriptsize,>=angle 90] (SE) edge node[right] {$\simeq$} (NE);
\draw[->,font=\scriptsize,>=angle 90] (NW) edge node[above] {$\Phi$} (SE);
\end{tikzpicture}
\end{center}
The components of $\Psi$ are \emph{homogeneous} polynomials $S_I$, which can be chosen of degree $D_I\leq T$ by Corollary~\ref{borne_degSI}.
Now we only have to apply Proposition~\ref{prop_concat} to get~\itemref{bornes_concat_a}, from which \itemref{bornes_concat_alpha} and \itemref{bornes_concat_tau} follow.
\end{proof}


\begin{example}
For $t=2$, the bounds in~\itemref{bornes_concat_a} and~\itemref{bornes_concat_alpha} give
\beq
a_q\deux(r(r+1)n/2,d)\geq ra_{q^r}\deux[q^{\lfloor r/2\rfloor}+1](n,d)
\eeq
and
\beq
\alpha_q\deux(\delta)\geq \frac{2}{r+1}\alpha_{q^r}\deux[q^{\lfloor r/2\rfloor}+1](r(r+1)\delta/2)
\eeq
for any prime power $q$ and for any $r$.

For $t=3$, they give
\beq
a_q\deux[3](r(r+1)(r+2)n/6,d)\geq ra_{q^r}\deux[q^{\lfloor 2r/3\rfloor}+q^{\lfloor r/3\rfloor}+1](n,d)
\eeq
and
\beq
\alpha_q\deux[3](\delta)\geq \frac{6}{(r+1)(r+2)}\alpha_{q^r}\deux[q^{\lfloor 2r/3\rfloor}+q^{\lfloor r/3\rfloor}+1](r(r+1)(r+2)\delta/6)
\eeq
for any $q\geq3$ and for any $r$. Note that the proof does not apply for $q=2$, since it requires $t\leq q$.
\end{example}

\entry
\label{borne_inf_alpha2}
Let $q$ be a prime power and let $t\leq q$ be an integer.
Then for any integer $r$, combining Corollary~\ref{cor_AG} with Theorem~\ref{bornes_concat}\itemref{bornes_concat_alpha}
we find
\beq
\alpha_q\deux[t](\delta)\geq\frac{r}{m}\left(\frac{1-m\delta}{T}-\frac{1}{A(q^r)}\right)
\eeq
where $m=\binom{r+t-1}{t}$ and $T=q^{\left\lfloor\frac{(t-1)r}{t}\right\rfloor}+\cdots+q^{\left\lfloor\frac{r}{t}\right\rfloor}+1$.

For this to be nontrivial we need $T<A(q^r)$. Since $A(q^r)\leq q^{r/2}-1$, this can happen only for $t=2$ and $r$ odd, and it turns out we have indeed $T<A(q^r)$ in this case:
this follows from \ref{A(q)}\itemref{A(q)_Ihara} if $q$ is a square, and from \ref{A(q)}\itemref{A(q)_GSBB} else.
As a consequence we see
\beq
\tau(q)\geq2
\eeq
for all $q$. In particular for $q=2$ and $r=9$ we find $m=45$, $T=17$, and $A(2^9)\geq\frac{465}{23}$, so~\cite{agis}
\beq
\alpha_2^{\langle 2\rangle}(\delta)\geq\frac{74}{39525}-\frac{9}{17}\,\delta\,\approx\, 0.001872-0.5294\,\delta.
\eeq




\section{Some applications}
\label{sect_appli}

\vspace{.5\baselineskip}
Product of codes being such a natural operation, it is no wonder it has already been used, since a long time,
implicitely or explicitely, in numerous applications.
Our aim here is to quickly survey the most significant of these applications, without entering too much into the historical details (for which the reader can refer to the literature),
but rather focusing on where the various bounds, structural results, and geometric interpretations presented in this text can be brought into play.

\mysubsection{Multilinear algorithms}
\entry
\label{def_multilin_algo}
Let $V_1,\dots,V_t$ and $W$ be finite-dimensional $\F$-vector spaces,
and let
\beq
\Phi:V_1\times\cdots\times V_t\longto W
\eeq
be a $t$-multilinear map.
A multilinear algorithm of length $n$ for $\Phi$ is a collection of $t+1$ linear
maps $\phi_1:V_1\longto\F^n$, $\dots$, $\phi_t:V_t\longto\F^n$ and $\omega:\F^n\longto W$,
such that the following diagram commutes:
\beq
\begin{CD}
V_1\times\cdots\times V_t @>{\Phi}>> W \\
@V{(\phi_1,\dots,\phi_t)}VV @AA{\omega}A \\
\F^n\times\cdots\times\F^n @>{*}>> \F^n
\end{CD}
\eeq
or equivalently, such that
\beq
\Phi(v_1,\dots,v_t)=\omega(\phi_1(v_1)*\cdots*\phi_t(v_t))
\eeq
for all $v_1\in V_1,\dots,v_t\in V_t$.

If we let $\epsilon_1,\dots,\epsilon_n$ be the canonical basis of $\F^n$
and $\pi_1,\dots,\pi_n$ the canonical projections $\F^n\to\F$,
then setting $w_j=\omega(\epsilon_j)\in W$ and $l_{i,j}=\pi_j\circ\phi_i\in V_i^\vee$, the last formula can also be written
\beq
\Phi(v_1,\dots,v_t)=\sum_{1\leq j\leq n}\left(\prod_{1\leq i\leq t}l_{i,j}(v_i)\right)w_j.
\eeq
Said otherwise, $\Phi$ can be viewed as a tensor in $V_1^\vee\tens\cdots\tens V_t^\vee\tens W$, and a multilinear algorithm
of length $n$ corresponds to a decomposition
\beq
\Phi=\sum_{1\leq j\leq n} l_{1,j}\tens\cdots\tens l_{t,j}\tens w_j
\eeq
as a sum of $n$ elementary tensors.
In turn, since elementary tensors are essentially the image of the Segre map in $V_1^\vee\tens\cdots\tens V_t^\vee\tens W$,
all this can be viewed geometrically in a way similar to \ref{Segre}.

\entry
\label{codes_multilin_algo}
More precisely, consider the linear codes
\beq
C_i=\phi_i(V_i)\;\subset\F^n
\eeq
and
\beq
C'=\omega^T(W^\vee)\;\subset(\F^n)^\vee=\F^n
\eeq
(where $\omega^T$ is the transpose of $\omega$). Let also $\overline{V_i}=V_i/\ker(\phi_i)$, and $\overline{W}=\im(\omega)$.
Note that $\Phi$ and the $\phi_i$ pass to the quotient, and as such they define a $t$-multilinear map $\overline{\Phi}:\overline{V_1}\times\cdots\times\overline{V_t}\longto\overline{W}$
as well as a multilinear algorithm of length $n$ for it.
So, after possibly replacing $\Phi$ with $\overline{\Phi}$, we can suppose the $\phi_i$ and $\omega^T$ are
injective, hence give identifications $C_i\simeq V_i$ and $C'\simeq W^\vee$.
Also we can suppose the $C_i$ and $C'$ all have full support (otherwise some coordinates are not ``used'' in the algorithm, and can be discarded).
Then $\Phi$ defines a point $P_\Phi$ in the projective space $\PP=\PP(C_1\tens\cdots\tens C_t\tens C')$, and the multilinear algorithm of length $n$ for $\Phi$
defines $n$ points in the Segre subvariety in $\PP$ whose linear span contains $P_\Phi$.
Now, untying all the definitions, we see these $n$ points are precisely those in the projective set of points $\Pi_{C_1*\cdots*C_t*C'}$
constructed in \ref{Segre}.

\entry
When $V_1=\cdots=V_t=V$ and $\Phi$ is a symmetric multilinear map, the algorithm is said symmetric if $\phi_1=\cdots=\phi_t=\phi$.
A symmetric algorithm of length $n$ for $\Phi$ corresponds to a decomposition of the associated tensor as a sum of $n$ elementary symmetric tensors
in $\Sym^t(V^\vee)\tens W$. In turn, the $\Sym^t(V^\vee)$ part of this tensor space is essentially the image of a Veronese map,
and links with the constructions in \ref{Veronese} could be given as above.

When $\F$ is finite it is not always true that a symmetric multilinear map admits a symmetric algorithm (counterexamples can be given as soon as $t>\abs{\F}$),
but Theorem~\ref{th_criterion} in Appendix~\ref{sect_criterion} provides a necessary and sufficient criterion for this to occur.

\entry
\label{appl_multilin_algo}
From a more concrete point of view, the multilinear algorithm for $\Phi$ can be interpreted as follows:
each $v_i\in V_i$ is splitted into $n$ local shares in $\Fq$ using the fixed map $\phi_i$, the shares are multiplied locally,
and the results are combined using $\omega$ to recover the final value $\Phi(v_1,\dots,v_t)$.
This is of interest in at least two contexts:
\begin{itemize}
\item
In algebraic complexity theory one is interested in having $n$ as small as possible,
in order to minimize the number of $t$-variable multiplications in $\Fq$ needed to compute $\Phi$.
This is relevant for applications in which the cost of a fixed linear operation is negligible compared to the cost of a $t$-variable multiplication.
We thus define
\beq
\mu(\Phi)
\eeq
the multilinear complexity of $\Phi$, as the smallest possible length of a multilinear algorithm for $\Phi$;
and when $\Phi$ is symmetric,
\beq
\mu^{\mathrm{sym}}(\Phi)
\eeq
the symmetric multilinear complexity of $\Phi$, as the smallest possible length of a symmetric multilinear algorithm for $\Phi$ (provided such an algorithm exists).
These are rank functions (in the sense of~\ref{def_rank}) on the corresponding spaces of multilinear maps.

There is a very broad literature on this subject, for various classes of multilinear maps $\Phi$.
We mention \cite{BD} for first pointing out the link between these questions and coding theory, and \cite{ChCh} for studying this link much further, in particular bringing AG codes into play. 
For more recent results and other points of view still close to the one presented here, we refer the reader to \cite{BR}\cite{BDEZ}\cite{CGLM}\cite{ChCh+}\cite{SL84}, 
and to the references therein for a more thorough historical coverage.
\item
One can also view this process as a very naive instance of multi-party computation, in which the local shares are given to $n$ remote users, who ``collectively'' compute $\Phi$.
Now this scheme has to be modified because it is too weak for most practical applications, in which it is customary to impose various security requirements.
For example, some shares could be altered by noise, or even by malicious users, to which the computation should remain robust.
Also, these malicious users should not be able to determine neither the entries nor the final value of the computation by putting their shares in common.
Of special importance is the case where $\Phi$ is multiplication in $\F$ (or in an extension field), since addition and multiplication are the basic gates in arithmetic circuits, that allow to represent arbitrary computable functions.

A more precise formalization of these problems, as well as some important initial constructions, can be found in \cite{BGW}\cite{CCD}\cite{CDM}.
For the more mathematically minded reader, especially if interested in the use of AG codes, a nice point of entry to the literature could be \cite{CC} and then \cite{CCX}.
\end{itemize}

Since the vectors in $\F^n$ involved in the computation are codewords in the $C_i$ or in their product $C_1*\cdots*C_t$, all the questions above
are linked to the possible parameters of these codes.
For example it is easily shown:

\begin{proposition}
\label{fin_multilin_algo}
If the given multilinear algorithm for $\Phi$ has length $n>\dim(C_1*\cdots*C_t)$, then one can puncture coordinates to deduce a shorter multilinear algorithm of length $\dim(C_1*\cdots*C_t)$.

Moreover, if the original algorithm is symmetric, then so is the punctured algorithm.
\end{proposition}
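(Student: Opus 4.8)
The plan is to puncture the algorithm at the complement of an information set of the product code. Write $D=C_1*\cdots*C_t\subset\F^n$ and $d=\dim(D)$, so $n>d$ by hypothesis. The first thing to record is that every ``intermediate'' vector $\phi_1(v_1)*\cdots*\phi_t(v_t)$ lies in $D$ (it is one of the product codewords spanning $C_1*\cdots*C_t$); consequently the value $\Phi(v_1,\dots,v_t)=\omega\bigl(\phi_1(v_1)*\cdots*\phi_t(v_t)\bigr)$ depends only on the restriction $\omega|_D$, hence only on a faithful image of $D$.

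Next I would pick $S\subset[n]$ with $\abs{S}=d$ such that the coordinate projection restricts to a linear isomorphism $\pi_S\colon D\overset{\simeq}{\longto}\F^S$. Such an $S$ exists by elementary linear algebra: taking a generator matrix of $D$, which has rank $d$, let $S$ index $d$ of its linearly independent columns; the corresponding $d\times d$ submatrix is then invertible, so $\ker(\pi_S|_D)=D_{[n]\moins S}=0$, and as $\abs{S}=d=\dim\F^S$ this restriction is bijective.

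Then I would set $\phi_i'=\pi_S\circ\phi_i\colon V_i\longto\F^S$ for $1\leq i\leq t$, and let $\omega'\colon\F^S\longto W$ be the composite of $(\pi_S|_D)^{-1}$, the inclusion $D\subset\F^n$, and $\omega$ (a well-defined linear map). Since $\pi_S$ respects componentwise multiplication, $\pi_S(x*y)=\pi_S(x)*\pi_S(y)$, one has $\phi_1'(v_1)*\cdots*\phi_t'(v_t)=\pi_S\bigl(\phi_1(v_1)*\cdots*\phi_t(v_t)\bigr)$; applying $\omega'$ and using $\phi_1(v_1)*\cdots*\phi_t(v_t)\in D$ recovers $\Phi(v_1,\dots,v_t)$. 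Thus $(\phi_1',\dots,\phi_t',\omega')$ is a multilinear algorithm for $\Phi$ of length $\abs{S}=d=\dim(C_1*\cdots*C_t)$, obtained from the given one by puncturing at $[n]\moins S$.

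For the symmetric case: if $\phi_1=\cdots=\phi_t=\phi$ then $\phi_1'=\cdots=\phi_t'=\pi_S\circ\phi$, so the punctured algorithm is again symmetric --- the construction of $S$ and of $\omega'$ never uses symmetry. I do not expect a genuine obstacle here; the single point to be careful about is that $S$ must be chosen as an information set of the product code $D$ itself (not just of $\F^n$), since it is exactly the invertibility of $\pi_S|_D$ that makes $\omega'$ available and guarantees that nothing is lost under puncturing. Everything else is routine.
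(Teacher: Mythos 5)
Your proof is correct and follows exactly the paper's approach: choose an information set $S$ for the product code $C_1*\cdots*C_t$, compose the $\phi_i$ with $\pi_S$, and compose $\omega$ with the inverse of $\pi_S$ restricted to the product code. The paper's proof is essentially the same, just stated more tersely.
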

\begin{proof}
Let $S\subset[n]$ be an information set for $C_1*\cdots*C_t$, and
let $\sigma:\F^S\overset{\sim}{\longto}C_1*\cdots*C_t$ be the inverse of the natural projection $\pi_S$ (on $C_1*\cdots*C_t$).
Then $\pi_S\circ\phi_1,\dots,\pi_S\circ\phi_t$ and $\omega\circ\sigma$ define a multilinear algorithm of length $\abs{S}$ for $\Phi$.
\end{proof}

In this way $\mu(\Phi)$ (and if relevant, $\mu^{\mathrm{sym}}(\Phi)$) can be expressed as the dimension of a product code.

Likewise, in the multi-party computation scenario,
one is interested in constructing codes $C$ having high rate (for efficiency),
and such that $C\deux$ has high minimum distance (for resilience),
and also $C^\perp$ has high minimum distance (for privacy).
The reader can consult~\cite{CCX} for recent advances in this direction.

\mysubsection{Construction of lattices from codes}
\entry
Barnes and Sloane's Construction~D, first introduced in~\cite{BS}, and Forney's code formula from~\cite{Forney88-1}, are two closely related ways to construct lattices from binary linear codes.
Up to some details, they can be described as follows. 

Consider the lifting
\beq
\epsilon:\F_2\overset{\sim}{\longto}\{0,1\}\subset\Z.
\eeq
As in~\ref{debut_concat} we extend $\epsilon$ coordinatewise to $\epsilon:(\F_2)^n\longto\Z^n$.
Then given a chain of binary linear codes
\beq
\cC:\;C_0\subset C_1\subset\cdots\subset C_{a-1}\subset C_a=(\F_2)^n
\eeq
we construct a subset
\beq
\Lambda_{\cC}=\epsilon(C_0)+2\epsilon(C_1)+\cdots+2^{a-1}\epsilon(C_{a-1})+2^a\Z^n\;\subset\Z^n.
\eeq
It turns out that $\Lambda_{\cC}$ need not be a lattice in general, a fact that was sometimes overlooked in the literature.
One can show:

\begin{proposition}
With these notations, $\Lambda_{\cC}$ is a lattice if and only if the codes $C_i$ satisfy
\beq
C_i\deux\subset C_{i+1}.
\eeq
\end{proposition}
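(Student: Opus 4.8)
The plan is to reduce everything to the elementary \emph{carry identity} for the bit lift $\epsilon$: for $x,y\in(\F_2)^n$ one has, in $\Z^n$,
$\epsilon(x)+\epsilon(y)=\epsilon(x+y)+2\,\epsilon(x*y)$,
where $x+y$ and $x*y$ denote addition and componentwise product in $(\F_2)^n$ (check it coordinatewise: $0+0=0$, $0+1=1$, $1+1=0+2$). Since $\Lambda_{\cC}$ contains $0$ and is stable under adding $2^a\Z^n$ — hence is a union of cosets of $2^a\Z^n$ in $\Z^n$ — it is a lattice if and only if it is closed under addition: a subset of $\Z^n$ that contains $0$, is closed under addition, and is saturated for $2^a\Z^n$ is automatically a subgroup (its image in $(\Z/2^a\Z)^n$ is a finite submonoid, hence a subgroup), and it has full rank since it contains $2^a\Z^n$. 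So the statement to prove becomes: $\Lambda_{\cC}+\Lambda_{\cC}\subset\Lambda_{\cC}$ if and only if $C_i\deux\subset C_{i+1}$ for $0\le i\le a-1$ (the case $i=a-1$ being automatic, as $C_a=(\F_2)^n$).

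For the ``if'' direction I would induct on $a$, using $\Lambda_{\cC}=\epsilon(C_0)+2\,\Lambda_{\cC'}$, where $\cC'$ is the truncated chain $C_1\subset\cdots\subset C_a=(\F_2)^n$ (the base case, the one-term chain $(\F_2)^n$, gives $\Lambda_{\cC}=\Z^n$). The chain $\cC'$ inherits the hypothesis, so by induction $\Lambda_{\cC'}$ is a group. For $\lambda=\epsilon(c_0)+2\mu$ and $\lambda'=\epsilon(c_0')+2\mu'$ with $c_0,c_0'\in C_0$ and $\mu,\mu'\in\Lambda_{\cC'}$, the carry identity gives
$\lambda+\lambda'=\epsilon(c_0+c_0')+2\bigl(\epsilon(c_0*c_0')+\mu+\mu'\bigr)$.
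Now $c_0+c_0'\in C_0$, while $c_0*c_0'\in C_0\deux\subset C_1$, so $\epsilon(c_0*c_0')\in\epsilon(C_1)\subset\Lambda_{\cC'}$; as $\Lambda_{\cC'}$ is closed under addition, $\epsilon(c_0*c_0')+\mu+\mu'\in\Lambda_{\cC'}$, whence $\lambda+\lambda'\in\epsilon(C_0)+2\Lambda_{\cC'}=\Lambda_{\cC}$.

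For the ``only if'' direction the crux is a \emph{layer extraction lemma}: for $0\le m\le a-1$ and $w\in(\F_2)^n$, one has $2^m\epsilon(w)\in\Lambda_{\cC}$ if and only if $w\in C_m$. The forward implication I would get by writing $2^m\epsilon(w)=\sum_{j=0}^{a-1}2^j\epsilon(c_j)+2^a z$ with $c_j\in C_j$: reducing modulo $2^m$ and comparing base-$2$ digits in each coordinate forces $c_0=\cdots=c_{m-1}=0$, and then reducing modulo $2^{m+1}$ gives $w=c_m\in C_m$. Granting this, suppose $\Lambda_{\cC}$ is a lattice and fix $i\le a-2$ and $c,c'\in C_i$. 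By the lemma $2^i\epsilon(c),2^i\epsilon(c')\in\Lambda_{\cC}$, so their sum $2^i\epsilon(c+c')+2^{i+1}\epsilon(c*c')\in\Lambda_{\cC}$; subtracting $2^i\epsilon(c+c')\in\Lambda_{\cC}$ leaves $2^{i+1}\epsilon(c*c')\in\Lambda_{\cC}$, and the lemma (with $m=i+1$) gives $c*c'\in C_{i+1}$. Since $C_{i+1}$ is $\F_2$-linear and $C_i\deux$ is spanned by the elements $c*c'$, we conclude $C_i\deux\subset C_{i+1}$.

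The only delicate point is the bookkeeping in the layer extraction lemma: one must run the digit argument on a fixed representation of the given element of $\Lambda_{\cC}$ and observe that its conclusion is representation-independent. Everything else is a direct manipulation of the carry identity.
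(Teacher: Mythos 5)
Your proof is correct and rests on the same key fact as the paper's — the carry identity $\epsilon(u)+\epsilon(v)=\epsilon(u+v)+2\,\epsilon(u*v)$ — which the paper simply states and then declares that the result ``follows.'' You supply the scaffolding the paper leaves implicit (the reduction to closedness under addition, the induction on the length of the chain for the ``if'' direction, and the layer extraction lemma for the ``only if'' direction), and all of it checks out.
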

\begin{proof}
This follows from the relation
\beq
\epsilon(u)+\epsilon(v)=\epsilon(u+v)+2\epsilon(u*v)\;\in\Z^n
\eeq
which holds for any $u,v\in(\F_2)^n$.
\end{proof}

This observation was first made in Kositwattanarerk and Oggier's paper~\cite{KO}, where examples and a more careful analysis of the connection between the constructions in~\cite{BS} and~\cite{Forney88-1} are also given.
Roughly at the same time and independently, it was rediscovered by Boutros and Z\'emor, from whom the author learned it.

A feature of this Construction D is that, if $\Lambda_{\cC}$ is a lattice, then its parameters (volume, distance) can be estimated from those of the $C_i$.
One motivation for its introduction was to reformulate a construction of the Barnes-Wall lattices. In this case the $C_i$ are essentially Reed-Muller codes, and the condition $C_i\deux\subset C_{i+1}$ is satisfied, as noted in Example~\ref{exAG}.

\entry
What precedes can be generalized to codes over larger alphabets.
Let $p$ be a prime number, choose an arbitrary set of representatives $\cR$ for $\Z$ modulo $p$, and consider the lifting
\beq
\epsilon:\F_p\overset{\sim}{\longto}\cR\subset\Z.
\eeq
Then given a chain of binary linear codes
\beq
\cC:\;C_0\subset C_1\subset\cdots\subset C_{a-1}\subset C_a=(\F_p)^n
\eeq
we construct a subset
\beq
\Lambda_{\cC}=\epsilon(C_0)+p\epsilon(C_1)+\cdots+p^{a-1}\epsilon(C_{a-1})+p^a\Z^n\;\subset\Z^n.
\eeq
Again, $\Lambda_{\cC}$ need not be a lattice in general.
To give a criterion for this, one can introduce carry operations
\beq
\kappa_j:\F_p\times\F_p\longto\F_p
\eeq
for $1\leq j\leq a-1$, such that for each $x,y\in\F_p$ we have
\beq
\epsilon(x)+\epsilon(y)=\epsilon(x+y)+p\epsilon(\kappa_1(x,y))+\cdots+p^{a-1}\epsilon(\kappa_{a-1}(x,y))\mod p^a
\eeq
in $\Z$.

\begin{proposition}
With these notations, $\Lambda_\cC$ is a lattice if and only if for any $i,j$ we have
\beq
\kappa_j(C_i,C_i)\subset C_{i+j}
\eeq
(where $\kappa_j(C_i,C_i)$ is defined according to~\ref{debut_concat}).
\end{proposition}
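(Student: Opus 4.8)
The plan is to reduce everything modulo $p^a$ and then induct on the chain length $a$, using the evident recursion $\Lambda_\cC=\epsilon(C_0)+p\,\Lambda_{\cC'}$, where $\cC'$ denotes the shorter chain $C_1\subset\cdots\subset C_a=(\F_p)^n$ (a direct check gives $p\,\Lambda_{\cC'}=\Lambda_\cC\cap p\Z^n$). First I would make three reductions. (1) Since $p^a\Z^n\subset\Lambda_\cC$, the set $\Lambda_\cC$ is a union of cosets of $p^a\Z^n$, hence is a subgroup of $\Z^n$ (equivalently a full-rank discrete subgroup, i.e. a lattice) if and only if its image $\bar\Lambda_\cC\subset(\Z/p^a\Z)^n$ is a subgroup; and as $\bar\Lambda_\cC$ is a finite set containing $0$ and a finite submonoid of a finite group is a subgroup, it is enough to show $\bar\Lambda_\cC$ is closed under addition. (2) Since $\epsilon$ is a section of the reduction $\Z\to\F_p$, its reduction mod $p$ is a group isomorphism $\bar\epsilon:\F_p\to\Z/p\Z$, and iterating, the base-$p$ expansion map $((\F_p)^n)^a\to(\Z/p^a\Z)^n$ sending $(x_0,\dots,x_{a-1})$ to $\sum_m p^m\bar\epsilon(x_m)$ is a bijection carrying $C_0\times\cdots\times C_{a-1}$ onto $\bar\Lambda_\cC$; consequently an element of $(\Z/p^a\Z)^n$ lies in $\bar\Lambda_\cC$ precisely when each of its $p$-ary digits at position $m$ lies in $C_m$. (3) Under this identification, the defining relation of the $\kappa_j$ says exactly that addition in $(\Z/p^a\Z)^n$ is schoolbook addition with carries: adding digits $c,c'$ at position $i$ contributes $c+c'$ at position $i$ and $\kappa_j(c,c')$ at position $i+j$.

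With this in place I would induct on $a$. For $a=1$, $\bar\Lambda_\cC=\bar\epsilon(C_0)$ is a subgroup (as $\bar\epsilon$ is a group isomorphism), $\Lambda_\cC$ is always a lattice, and the right-hand condition is vacuous. For the inductive step: if $\Lambda_\cC$ is a lattice then so is $\Lambda_{\cC'}=p^{-1}(\Lambda_\cC\cap p\Z^n)$, so the induction hypothesis yields $\kappa_j(C_i,C_i)\subset C_{i+j}$ for $i\geq 1$ (the carry operations of $\cC'$ being $\kappa_1,\dots,\kappa_{a-2}$); and for $c,c'\in C_0$ the element $\epsilon(c)+\epsilon(c')\in\Lambda_\cC$ equals, by the carry relation, $\epsilon(c+c')+\sum_{j\geq 1}p^j\epsilon(\kappa_j(c,c'))$, which is already a $p$-ary digit expansion, so its position-$j$ digit $\kappa_j(c,c')$ lies in $C_j$ — giving the cases $i=0$. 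Conversely, if all inclusions $\kappa_j(C_i,C_i)\subset C_{i+j}$ hold, the cases $i\geq 1$ give (by induction) that $\Lambda_{\cC'}$ is a lattice, and then for $v=\epsilon(c_0)+p\lambda'$, $v'=\epsilon(c_0')+p\mu'$ in $\Lambda_\cC$ (with $c_0,c_0'\in C_0$, $\lambda',\mu'\in\Lambda_{\cC'}$) the carry relation gives $\epsilon(c_0)+\epsilon(c_0')=\epsilon(c_0+c_0')+p\delta$ with $\delta=\sum_{j\geq1}p^{j-1}\epsilon(\kappa_j(c_0,c_0'))$; the cases $i=0$ show $\delta\in\Lambda_{\cC'}$, whence $v+v'\equiv\epsilon(c_0+c_0')+p(\delta+\lambda'+\mu')\in\epsilon(C_0)+p\,\Lambda_{\cC'}=\Lambda_\cC$, so $\bar\Lambda_\cC$ is closed under addition and $\Lambda_\cC$ is a lattice.

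The step I expect to be the real obstacle is the ``if'' direction carried out naively: adding two elements of $\Lambda_\cC$ generates carries that must themselves be re-normalized, and since each $\kappa_j$ is only a binary operation, a given position can receive three or more contributions at once, which a direct argument would have to track through ``carries of carries.'' The recursion $\Lambda_\cC=\epsilon(C_0)+p\,\Lambda_{\cC'}$ is precisely what sidesteps this: after peeling off position $0$, the re-normalization of all remaining carries is absorbed into the group law of $\Lambda_{\cC'}$, which is a lattice by the inductive hypothesis. Everything else is routine bookkeeping — the reductions between $\Z/p^a\Z$ and $\Z/p^{a-1}\Z$ and well-definedness of ``the tail divided by $p$'', the matching of indices between $\cC$ and $\cC'$ (with the convention $C_m=(\F_p)^n$ for $m\geq a$, so that inclusions with $i+j\geq a$ are automatic), and the observation that, with the notational convention of~\ref{debut_concat}, ``$\kappa_j(C_i,C_i)\subset C_{i+j}$'' just means ``$\kappa_j(c,c')\in C_{i+j}$ for all $c,c'\in C_i$'' since $C_{i+j}$ is a linear subspace.
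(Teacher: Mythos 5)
Your proof is correct and rests on the same core idea the paper invokes — the carry relation $\epsilon(x)+\epsilon(y)=\epsilon(x+y)+\sum_{j\geq1}p^j\epsilon(\kappa_j(x,y))$ — but the paper's proof is a one-line appeal to that relation (``Same as above''), whereas you have filled in the bookkeeping it suppresses. Your key organizational move, the recursion $\Lambda_\cC=\epsilon(C_0)+p\,\Lambda_{\cC'}$ with $p\,\Lambda_{\cC'}=\Lambda_\cC\cap p\Z^n$ and the induction on $a$, is exactly what makes the ``carries of carries'' issue in the if-direction go through cleanly, and you are right that a na\"ive direct verification would otherwise stumble there; you have also correctly noted that the carry operations of the shortened chain $\cC'$ agree with $\kappa_1,\dots,\kappa_{a-2}$ (by uniqueness of the $\cR$-ary expansion mod $p^{a-1}$), which justifies applying the induction hypothesis. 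The only implicit assumption worth flagging is $\epsilon(0)=0$ (so that $0\in\Lambda_\cC$ and the reduction to closure under addition applies); this holds for both liftings the paper actually uses and is clearly intended, but the phrase ``arbitrary set of representatives'' in the paper does not literally force it.
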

\begin{proof}
Same as above.
\end{proof}

The usefulness of this criterion depends on our ability to control the $\kappa_j(C_i,C_i)$, which in turn depends on the choice of the lifting $\epsilon$, or equivalently, of the set of representatives $\cR$.

It turns out that there are at least two natural choices for this.

\entry
The first choice is to take $\cR=\{0,1,\dots,p-1\}$. We call the associated $\epsilon$ the ``naive'' lifting.
Then one has $\kappa_1=\kappa$ where
\beq
\kappa(x,y)=
\begin{cases}
1 & \textrm{if $x+y\geq p$}\\
0 & \textrm{else}
\end{cases}
\eeq
while
\beq
\kappa_j=0\quad\textrm{for $j>1$.}
\eeq

One drawback of this choice is that the expression for $\kappa$ does not appear to have much algebraic structure, except for the cocycle relation
\beq
\kappa(x,y)+\kappa(x+y,z)=\kappa(x,y+z)+\kappa(y,z).
\eeq
Anyway the fact that the higher $\kappa_j$ are $0$ should help in the computations.

\entry
Another choice is to take as $\cR$ a set of $p-1$-th roots of unity in $\Z$ modulo $p^a$, plus $0$.
We call the associated $\epsilon$ the multiplicative or Teichm\"uller lifting.
The $\kappa_j$ are then essentially given by the addition formulae for Witt vectors \cite[\S II.6]{SerreCL}\cite[Lect.~26]{Mumford},
more precisely this allows to express $\kappa_j$ as a symmetric homogeneous polynomial of degree $p^j$.
However since $\kappa_j$ is defined on $\F_p\times\F_p$, sometimes this expression can be simplified.

\begin{example*}
For $p=3$ we can take $\cR=\{0,1,-1\}$ as multiplicative representatives for any $a$.
The first carry operation is given by
\beq
\kappa_1(x,y)=-xy(x+y).
\eeq
Then an expression for $\kappa_2$ is $\kappa_2(x,y)=-xy(x+y)(x-y)^6$, however over $\F_3\times\F_3$ this is always $0$.
In fact the same holds for all the higher $\kappa_j$: as in the case $p=2$ we can take
\beq
\kappa_j=0\quad\textrm{for $j>1$.}
\eeq
\end{example*}

\vspace{.2\baselineskip}

\begin{corollary}
\label{cor_Teich}
When $\epsilon$ is the Teichm\"uller lifting,
a \emph{sufficient} condition for $\Lambda_\cC$ to be a lattice is that the codes $C_i$ satisfy
\beq
C_i\deux[p]\subset C_{i+1}.
\eeq
\end{corollary}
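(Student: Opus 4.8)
The plan is to combine the lattice criterion of the preceding Proposition with the homogeneity of the carry polynomials attached to the Teichm\"uller lifting. Recall that for this choice of $\epsilon$ the carry operation $\kappa_j$ is given, via the addition law for Witt vectors \cite[\S II.6]{SerreCL}\cite[Lect.~26]{Mumford}, by an explicit \emph{symmetric} polynomial $S_j(x,y)$ that is \emph{homogeneous} of degree $p^j$. The first step is therefore to observe that $\kappa_j(C_i,C_i)\subset C_i\deux[p^j]$: writing $S_j$ as an $\F_p$-linear combination of monomials $x^ay^b$ with $a+b=p^j$ and evaluating it componentwise on two codewords $u,v\in C_i$ turns each such monomial into an elementary product $u^{*a}*v^{*b}$ of $p^j$ codewords of $C_i$, hence into an element of $C_i\deux[p^j]$; passing to the linear span gives the inclusion. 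This is exactly the reasoning used in the proof of Proposition~\ref{prop_concat} in the homogeneous case.

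The second step is an induction on $j$ showing that the hypothesis $C_i\deux[p]\subset C_{i+1}$, valid for every index $i$, forces $C_i\deux[p^j]\subset C_{i+j}$. Indeed, using the power laws $(C\deux[t])\deux[t']=C\deux[tt']$ together with the fact that the semiring of codes of length $n$ is ordered under inclusion (so $C\subset C'$ implies $C\deux[m]\subset C'\deux[m]$), one has $C_i\deux[p^{j}]=(C_i\deux[p])\deux[p^{j-1}]\subset C_{i+1}\deux[p^{j-1}]$, and the inductive hypothesis applied at index $i+1$ gives $C_{i+1}\deux[p^{j-1}]\subset C_{i+j}$. Combining the two steps yields $\kappa_j(C_i,C_i)\subset C_i\deux[p^j]\subset C_{i+j}$ for all $i,j$, which is precisely the condition appearing in the Proposition just above; hence $\Lambda_\cC$ is a lattice, and the corollary follows.

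The main (and essentially only) subtlety is in the first step: one must use the homogeneous polynomial representative $S_j$ of degree $p^j$ supplied by the Witt formulae, rather than $\kappa_j$ merely as a function $\F_p\times\F_p\longto\F_p$ --- as a function it could in principle be represented by a polynomial of smaller degree, and the inclusion into $C_i\deux[p^j]$ would then not obviously hold. Since this homogeneous representative is available, no difficulty remains. A harmless bookkeeping remark is that at the top of the chain one reads $C_m=(\F_p)^n$ for $m\geq a$, so every inclusion involving such a $C_m$ is automatic and the hypothesis is a genuine constraint only at the lower indices.
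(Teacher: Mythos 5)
Your proof is correct and follows exactly the route the paper takes: use the homogeneity of the degree-$p^j$ Witt-vector polynomial representing $\kappa_j$ to get $\kappa_j(C_i,C_i)\subset C_i\deux[p^j]$, then chain the hypothesis $C_i\deux[p]\subset C_{i+1}$ through the ordered-semiring power laws to get $C_i\deux[p^j]\subset C_{i+j}$. The paper compresses this into a single sentence, but the two steps you spell out — including the remark that one must use the homogeneous degree-$p^j$ representative rather than the reduced function $\F_p\times\F_p\to\F_p$ — are precisely what is tacitly used there.
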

\begin{proof}
Indeed, since $\kappa_j$ can be expressed as a symmetric homogeneous polynomial of degree $p^j$,
we then have $\kappa_j(C_i,C_i)\subset C_i\deux[p^j]\subset C_{i+j}$.
\end{proof}

Natural candidates for a family of codes satisfying this condition is to take the $C_i$ evaluation codes, \eg (generalized) Reed-Muller codes, as in the binary case.
It remains to be investigated whether these lead to new examples of good lattices.
Also, note that we worked over $\Z$ for simplicity, but most of the discussion remains valid over the ring of integers of an algebraic number field,
possibly allowing further improvements. 
All this will be eventually considered in a forthcoming paper.

\mysubsection{Oblivious transfer}
\entry
\label{OT}
In an oblivious transfer (OT) protocol, Alice has two secrets $s_0,s_1\in\{0,1\}^n$, and Bob has a selection bit $b$. At the end of the protocol, Bob should get $s_b$ but no other information, and Alice should get no information on $b$.
In the Cr\'epeau-Kilian protocol, Alice and Bob achieve this through communication over a noisy channel \cite{CK}.
As a first step, emulating a coset coding scheme over a wiretap channel, they construct an almost-OT protocol,
in which Alice can cheat to learn $s$ with a certain positive probability.

Then, from this almost-OT protocol, they construct a true OT protocol.
More precisely, Alice chooses a $N\times n$ random matrix $A_0$ such that (in row vector convention)
\beq
1_{[N]}\cdot A_0=s_0,
\eeq
and she sets
\beq
A_1=A_0+(1_{[N]})^T\cdot (s_0+s_1).
\eeq
%
Bob has a selection bit $b$, and he chooses a random selection vector $v=(b_1,\dots,b_N)\in\{0,1\}^N$ such that
\beq
v\cdot (1_{[N]})^T=b.
\eeq
Using the almost-OT protocol $N$ times, for each $i$ he then learns the $i$-th row of $A_{b_i}$.
Putting theses rows together in a $N\times n$ matrix $B$, he finally finds
\beq
1_{[N]}\cdot B=v\cdot A_1+(1_{[N]}-v)\cdot A_0=s_b.
\eeq
One can then show that for $N\approx n^2$, Alice cannot cheat without Bob noticing it.

\entry
\label{OT+}
A slight drawback of this protocol is that the number~$N$ of channel uses grows quadratically in the size~$n$ of the secret,
so the overall communication rate tends to~$0$.
A first solution was proposed in~\cite{IKOPSW}. Their construction combines several sub-protocols, one of which is based on the results of~\cite{CC} already mentioned
in the discussion about multilinear algorithms and multi-party computation.

Another construction is proposed in~\cite{OZ}.
Quite interestingly it also makes an essential use of product of codes, while staying very close in spirit to the original Cr\'epeau-Kilian protocol.
The key idea is to replace the vector $1_{[N]}$ above, which is the generator matrix of a repetition code,
by the generator matrix $G$ of a code $C$ of fixed rate $R>0$ (so the secrets $s_0$ and $s_1$ also become matrices).
For Bob the reconstruction step is only slightly more complicated: if he is interested in $s_0$, he has to choose a random selection vector
\beq
b\in (C\deux)^\perp.
\eeq
One can then show that Alice cannot cheat as soon as $(C\deux)^\perp$ has dual distance at least $\delta n$ for some fixed $\delta>0$.
Note that the dual distance of $(C\deux)^\perp$ is just $\dmin(C\deux)$, and the linear span in the definition of $C\deux$ is relevant here
since the distance appears through a duality argument.

This raised the question of the existence of asymptotically good binary linear codes with asymptotically good squares, as discussed in section \ref{sect_pure_bounds} above,
to which a positive answer was finally given in \cite{agis}.

\mysubsection{Decoding algorithms}
\entry
\label{error-correcting_pairs}
There are several applications of products of codes to the decoding problem.
The first and probably the most famous of them is through the notion of error-correcting pairs~\cite{Kot}\cite{P92}.
If $C$ is a linear code of length $n$, a $t$-error-correcting pair for $C$ is a pair of codes $(A,B)$ of length $n$ such that:
\begin{enumerate}[(i)]
\item $A*B\;\perp\; C$
\item $\dim(A)>t$
\item $\dmin(A)>n-\dmin(C)$
\item $\ddual(B)>t$.
\end{enumerate}
(In \cite{P92} the product $A*B$ is defined without taking the linear span, but this is equivalent here since we are interested in the orthogonal).
Given such a pair, $C$ then admits a decoding algorithm that corrects $t$ errors with complexity $O(n^3)$.
This might be viewed as a way to reformulate most of the classical decoding algorithms for algebraic codes.

It is then natural to investigate the existence of error-correcting pairs for given $C$ and $t$. Results are known for many classes of codes,
and the study deeply involves properties of the $*$ product, such as those presented in this text.
For cyclic codes, a key result is the Roos bound, which is essentially the following:

\begin{proposition}[\cite{Roos}\cite{P96}]
\label{Roos_bound}
Let $A,B,C\subset\F^n$ be linear codes such that:
\begin{itemize}
\item $A*B\;\perp\; C$
\item $A$ has full support
\item $\dim(A)+\dmin(A)+\ddual(B)\geq n+3$.
\end{itemize}
Then
\beq
\dmin(C)\geq\dim(A)+\ddual(B)-1.
\eeq 
\end{proposition}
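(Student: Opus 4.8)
The plan is to bound, from below, the Hamming weight of an arbitrary nonzero codeword $c\in C$ by $\dim(A)+\ddual(B)-1$; since $c$ is arbitrary this gives the claimed bound on $\dmin(C)$. Throughout I write $J=\Supp(c)$, $a=\dim(A)$, $d=\dmin(A)$, $e=\ddual(B)$, and $j=\abs{J}$, and I keep in mind the hypothesis $a+d+e\geq n+3$ together with the fact that $A$ has full support.

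First I would extract a single numerical inequality from the orthogonality $A*B\perp C$. For every $x\in A$ and $y\in B$ one has $\langle c*x,y\rangle=\langle c,x*y\rangle=0$, so $c*A\subset B^\perp$; and since $c$ is nowhere zero on $J$ and vanishes off $J$, the code $c*A$ is supported on $J$, hence $c*A\subset(B^\perp)_J$. Two dimension counts turn this into a numerical statement. On one hand, the multiplication map $x\mapsto c*x$ from $A$ has kernel $A_{[n]\moins J}$, so $\dim(c*A)=a-\dim(A_{[n]\moins J})=\dim(\pi_J(A))$. On the other hand, a word supported on $J$ lies in $B^\perp$ precisely when its image in $\F^J$ is orthogonal to $\pi_J(B)$, so $(B^\perp)_J$ is identified with $(\pi_J(B))^{\perp}$ inside $\F^J$ and $\dim((B^\perp)_J)=j-\dim(\pi_J(B))$. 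Combining, one gets the key inequality
\beq
\dim(\pi_J(A))+\dim(\pi_J(B))\leq j.
\eeq

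Next I would supply lower bounds for the two left-hand dimensions. From the characterization of the dual distance in~\ref{def_ddual}, applied to $B$ (and, when $j\geq e-1$, after restricting to a subset of $J$ of size $e-1$), one gets $\dim(\pi_J(B))\geq\min(j,e-1)$. For $\dim(\pi_J(A))$ there are two inputs: since $A$ has full support and $J\neq\emptyset$, already $\dim(\pi_J(A))\geq1$; and more quantitatively, $A_{[n]\moins J}$ is a subcode of $A$, hence has minimum distance $\geq d$, with support of size $\leq n-j$, so the ordinary Singleton bound of~\ref{3p_Singleton} gives $\dim(A_{[n]\moins J})\leq n-j-d+1$ whenever $n-j-d+1\geq 0$, i.e. $\dim(\pi_J(A))\geq a-(n-j-d+1)$ in that range (and $\dim(\pi_J(A))=a$ once $j\geq n-d+1$, since then $A_{[n]\moins J}=0$).

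Finally I would run a short case analysis on $j$ against the key inequality. If $j\leq e-1$ then $\dim(\pi_J(B))=j$, forcing $\dim(\pi_J(A))\leq 0$, which contradicts full support of $A$; so $j\geq e$ and $\dim(\pi_J(B))\geq e-1$. If moreover $j\geq n-d+1$ then $\dim(\pi_J(A))=a$, and the key inequality gives $j\geq a+e-1$, as desired. In the remaining range $e\leq j\leq n-d$ the Singleton estimate plugged into the key inequality yields $(a+d-1-n+j)+(e-1)\leq j$, i.e. $a+d+e\leq n+2$, contradicting the hypothesis. Hence $w(c)=j\geq a+e-1$ for every nonzero $c\in C$, which is the assertion. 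I expect the only delicate point to be the bookkeeping of this last step — keeping track of the ranges of $j$ in which the Singleton estimate for $\dim(A_{[n]\moins J})$ is applicable, and checking that ``$A$ has full support'' is exactly what rules out the small-$j$ regime; the substance of the argument is just the inclusion $c*A\subset(B^\perp)_J$ together with the two dimension counts.
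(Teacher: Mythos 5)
Your proof is correct, and it takes a genuinely different route from the paper's. The paper works at the level of the product code rather than individual codewords: it observes $C\subset(A*B)^\perp$, so it suffices to show $\ddual(A*B)\geq\dim(A)+\ddual(B)-1$, which by Lemma~\ref{def_ddual} means showing $\dim(\pi_S(A*B))=s$ for every $S$ of size $s=\dim(A)+\ddual(B)-2$. It then checks that the hypothesis $\dim(A)+\dmin(A)+\ddual(B)\geq n+3$ forces $\abs{[n]\moins S}<\dmin(A)$ (so $\pi_S$ is injective on $A$), that $\ddual(\pi_S(B))\geq\ddual(B)$, and invokes Proposition~\ref{dim_et_ddual} for $\pi_S(A),\pi_S(B)$ in $\F^S$. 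You instead fix a nonzero $c\in C$ with support $J$, exploit the adjunction $\langle c*x,y\rangle=\langle c,x*y\rangle$ (Proposition~\ref{c*_autoadjoint}) to get $c*A\subset(B^\perp)_J$, and extract the pivotal inequality $\dim(\pi_J(A))+\dim(\pi_J(B))\leq\abs{J}$; you then bound each summand from below via the dual-distance characterization for $B$ and the classical Singleton bound applied to $A_{[n]\moins J}$, and a short case analysis in $j=\abs{J}$ rules out $j\leq n-\dmin(A)$ and closes the remaining case. What the paper's route buys is concision: Proposition~\ref{dim_et_ddual} does the heavy lifting in one step, and the stronger conclusion $\ddual(A*B)\geq\dim(A)+\ddual(B)-1$ is made explicit. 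What yours buys is that it is self-contained modulo Singleton and the definition of $\ddual$ --- you in effect re-derive the needed special case of Proposition~\ref{dim_et_ddual} inline, and the hypothesis $\dim(A)+\dmin(A)+\ddual(B)\geq n+3$ appears very concretely as the quantity that kills the middle range of $j$. Note that your argument only uses $c\in(A*B)^\perp$, not $c\in C$ specifically, so it too proves the sharper statement $\ddual(A*B)\geq\dim(A)+\ddual(B)-1$.
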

\begin{proof}
Since $C\subset(A*B)^\perp$, it suffices to show $\ddual(A*B)\geq\dim(A)+\ddual(B)-1$.
Then by Lemma~\ref{def_ddual}, setting
\beq
s=\dim(A)+\ddual(B)-2,
\eeq
it suffices
to show
\beq
\dim(\pi_S(A*B))=s
\eeq
for all $S\subset[n]$ of size $\abs{S}=s$.

Now for any such $S$ we have $\abs{[n]\moins S}=n+2-\dim(A)-\ddual(B)<\dmin(A)$,
so the projection $\pi_S:A\longto\pi_S(A)$ is injective, that is, $\dim(\pi_S(A))=\dim(A)$.
On the other hand, by Lemma~\ref{proprietes_ddual}\itemref{proprietes_ddual_projection}, we have $\ddual(\pi_S(B))\geq\ddual(B)$.
We can then conclude with Proposition~\ref{dim_et_ddual} applied to $\pi_S(A)$ and $\pi_S(B)$.
\end{proof}
Various improvements as well as generalizations of Proposition~\ref{Roos_bound} are given in~\cite{vLW}\cite{DP}, all of which can also be re-proved along these lines.

\entry
\label{power_decoding}
As another example of application of the product $*$ to the decoding problem,
we can cite the technique of so-called power syndrome decoding for Reed-Solomon codes \cite{SSB}.
If $c$ is codeword of a Reed-Solomon code $C$, then for any integer $i$, the (componentwise) power $c^i$ is also a codeword of a Reed-Solomon code of higher degree.
Now let $x$ be a received word, with error $e=x-c$. Then $x^i-c^i$ has support included in that of $e$.
Arranging the $x^i$ together, we thus get a virtual received word for an interleaved Reed-Solomon code, with a burst error.
Special decoding algorithms exist for this situation, and using them one eventually expects to get an improved decoding algorithm for the original $C$.
A more detailed analysis shows this works when $C$ has sufficiently low rate, allowing to decode it beyond half the minimum distance.

\mysubsection{Analysis of McEliece-type cryptosystems}
\entry
\label{appl_McEliece}
McEliece-type cryptosystems \cite{McEliece} rely on the fact that decoding a general linear code is a NP-hard problem \cite{BMT}.
First, Alice chooses a particular linear code $C$ with generator matrix $G$, for which an efficient decoding algorithm (up to a certain number $t$ of errors) is known,
and she sets $G'=SGP$ where $S$ is a randomly chosen invertible matrix and $P$ a random permutation matrix.
Her secret key is then the triple $(G,S,P)$, while her public key is essentially $G'$ (plus the number $t$).

Typically, $C$ is chosen among a class of codes with a strong algebraic structure, which is used for decoding.
However, multiplying $G$ by $S$ and $P$ allows to conceal this algebraic structure and make $G'$ look like the generator matrix of a general linear code $C'$.
A possible attack against such a scheme uses the fact that study of products $C_1*\cdots*C_t$ allows to find hidden algebraic relationships between subcodes $C_i$ of $C'$,
and ultimately, to uncover the algebraic structure of $C'$, from which a decoding algorithm could be designed.
This strategy was carried out successfully against certain variants of the McEliece cryptosystem, \eg when $C$ is a (generalized) Reed-Solomon code \cite{Wieschebrink}\cite{CGGOT}.

\entry
\label{McEliece_subfield}
However, the original McEliece cryptosystem remains unbroken.
There, $C$ is a binary Goppa code, which is constructed as a subfield subcode of an algebraic code defined over a larger field.
As D.~Augot pointed out to the author, one key difficulty in the analysis comes from the fact that it is not yet well understood how subfield subcodes behave under the $*$~product.

\appendix
\refstepcounter{section}
\section*{Appendix \thesection: A criterion for symmetric tensor decomposition}
\label{sect_criterion}

\mysubsection{Frobenius symmetric maps}

\entry
First we recall definitions from~\ref{def_multilin_algo}.
Let $\F$ be a field, let $V,W$ be finite-dimensional $\F$-vector spaces, and let
\beq
\Phi:V^t\longto W
\eeq
be a symmetric $t$-multilinear map.
A symmetric multilinear algorithm of length $n$ for $\Phi$ is a pair of linear maps $\phi:V\longto\F^n$ and $\omega:\F^n\longto W$,
such that the following diagram commutes:
\beq
\begin{CD}
V^t @>{\Phi}>> W \\
@V{(\phi,\dots,\phi)}VV @AA{\omega}A \\
(\F^n)^t @>{*}>> \F^n
\end{CD}
\eeq
or equivalently, such that
\beq
\Phi(v_1,\dots,v_t)=\omega(\phi_1(v_1)*\cdots*\phi_t(v_t))
\eeq
for all $v_1,\dots,v_t\in V$.

In turn, using the natural identification $\Sym^t(V;W)=\Sym^t(V^\vee)\tens W\subset(V^\vee)^{\tens t}\tens W$
to view $\Phi$ as an element in this tensor space, this corresponds to a decomposition
\beq
\Phi=\sum_{1\leq j\leq n} l_j^{\tens t}\tens w_j
\eeq
as a sum of $n$ elementary symmetric tensors.
Here, ``symmetry'' refers to the action of $\mathfrak{S}_t$ by permutation on the $t$ copies of $V^\vee$ in $(V^\vee)^{\tens t}\tens W$,
and we call elementary the symmetric tensors of the form $l^{\tens t}\tens w$ for $l\in V^\vee$, $w\in W$.

To show the equivalence, write $w_j=\omega(\epsilon_j)\in W$ and $l_j=\pi_j\circ\phi\in V^\vee$,
where $\epsilon_1,\dots,\epsilon_n$ and $\pi_1,\dots,\pi_n$ are the canonical bases of $\F^n$ and $(\F^n)^\vee$ respectively.

\entry
When $\F=\Fq$ is a finite field,
it turns out that not all symmetric multilinear maps admit a symmetric algorithm, or equivalently, not all symmetric tensors can be decomposed as a sum of elementary symmetric tensors.
There are at least two ways to see that.

The first is by a dimension argument: setting $r=\dim(V)$ and $s=\dim(W)$, we have
\beq
\dim\Sym^t(V;W)=\binom{r+t-1}{t}s
\eeq
which goes to infinity as $t\to\infty$, while
\beq
\dim\langle l^{\tens t}\,;\;l\in V^\vee\rangle\tens W\leq\frac{q^r-1}{q-1}s
\eeq
remains bounded. So, for $t$ big enough, $\langle l^{\tens t}\,;\;l\in V^\vee\rangle\tens W$
cannot be all of $\Sym^t(V;W)$, as claimed. However this proof is nonconstructive, because given an element in $\Sym^t(V;W)$,
it does not provide a practical way to check whether or not this element admits a symmetric algorithm.

The second is to show that the $l^{\tens t}$ all satisfy certain algebraic identities. So, by linearity,
if an element in $\Sym^t(V;W)$ does not satisfy these identities, then it can not admit a symmetric algorithm.
One such identity will come from the Frobenius property, $x^q=x$ for all $x\in\Fq$.
At first sight this gives a necessary condition for the existence of a symmetric algorithm.
However, by elaborating on this Frobenius property, we will show how to turn it into a necessary and sufficient
condition.

\entry
Symmetric tensor decomposition in characteristic $0$ has been extensively studied; see \eg \cite{CGLM} for a survey of recent results.
Over finite fields, perhaps the earliest appearance of the notion of symmetric bilinear algorithm was in~\cite{SL84}.

In this context, an important problem is the determination of
\beq
\mu_q^{\mathrm{sym}}(k)
\eeq
the symmetric bilinear complexity of $\F_{q^k}$ over $\Fq$, defined as the smallest possible length of a symmetric bilinear algorithm for the multiplication map $\F_{q^k}\times\F_{q^k}\longto\F_{q^k}$ (seen as a $\Fq$-bilinear map).
One can also consider
\beq
\mu_q(k)
\eeq
the (general) bilinear complexity of $\F_{q^k}$ over $\Fq$, defined similarly but without the symmetry condition.
A survey of results up to 2005 can be found in~\cite{BR}.
Quite strangely, although most authors gave constructions of symmetric algorithms, they only stated their results for the weaker complexity $\mu_q(k)$
(this is perhaps because another central topic in algebraic complexity theory is that of matrix multiplication, which is noncommutative).

At some point the development of the theory faced problems due to the fact that construction of symmetric algorithms from interpolation on a curve required a careful analysis of the $2$-torsion class group of the curve~\cite{CCX},
that was previously overlooked; in particular, some of the bounds cited in~\cite{BR} (especially the one from~\cite{STV})
had to be revised.
Finally the situation was clarified by the author in~\cite{ChCh+}; among the contributions of this work we can cite:
\begin{itemize}
\item emphasis is put on the distinction between symmetric and general bilinear complexity, rediscovering some results of~\cite{SL84}
\item it is shown that nonsymmetric bilinear algorithms are much easier to construct, since the $2$-torsion obstruction does not apply to them
\item for symmetric algorithms, a new construction is given (the idea of which originates from~\cite{21sep}) that bypasses the $2$-torsion obstruction,
repairing most of the broken bounds except perhaps for very small $q$.\footnote{For very small $q$, the effect of $2$-torsion on the bounds cannot be entirely discarded, but the methods of~\cite{CCX} allow to make it smaller, leading to the best results in this case.}
\end{itemize}

However, although this work settled a certain number of problems for bilinear algorithms, in particular showing that symmetric bilinear complexity is always well-defined,
at the same time it raised the question of the existence of symmetric algorithms for symmetric $t$-multilinear maps, $t\geq3$.

\begin{prop-def}
\label{def_F_red}
Let $V,W$ be $\Fq$-vector spaces and
\beq
f:V^t\longto W
\eeq
a symmetric $t$-multilinear map, for an integer $t\geq q$.
Then, for $1\leq i\leq t-q+1$, the map
\beq
\widetilde{f}^{(i)}:V^{t-q+1}\longto W
\eeq
defined by
\beq
\widetilde{f}^{(i)}(v_1,\dots,v_{t-q+1})=f(v_1,\dots,v_{i-1},\underset{\textrm{$q$ times}}{\underbrace{v_i,\dots,v_i}},v_{i+1},\dots,v_{t-q+1})
\eeq
for $v_1,\dots,v_{t-q+1}\in V$, is $t-q+1$-multilinear.

We call this $\widetilde{f}^{(i)}$ the $i$-th \emph{Frobenius reduced} of $f$.
\end{prop-def}
\begin{proof}
This follows from these two facts: (1) all elements $\lambda\in\Fq$ satisfy $\lambda^q=\lambda$, and
(2) the binomial coefficients $\binom{q}{j}$ are zero in $\Fq$ for $0<j<q$.
\end{proof}

When $i$ is not specified, we let $\widetilde{f}=\widetilde{f}^{(1)}$ be ``the'' Frobenius
reduced of $f$.

\begin{definition}
\label{def_F_sym}
Let $V,W$ be $\Fq$-vector spaces and
\beq
f:V^t\longto W
\eeq
a symmetric $t$-multilinear map.
We say that $f$ satisfies the \emph{Frobenius exchange condition},
or that $f$ is \emph{Frobenius symmetric} if, either:
\begin{itemize}
\item $t\leq q$, or
\item $t\geq q+1$ and $\widetilde{f}^{(1)}=\widetilde{f}^{(2)}$, that is,
\beq
f(\underset{\textrm{$q$ times}}{\underbrace{u,\dots,u}},v,z_1,\dots,z_{t-q-1})=f(u,\underset{\textrm{$q$ times}}{\underbrace{v,\dots,v}},z_1,\dots,z_{t-q-1})
\eeq
for all $u,v,z_1,\dots,z_{t-q-1}\in V$.
\end{itemize}
\end{definition}

We let $\Sym^t_{Frob}(V;W)\subset\Sym^t(V;W)$ be the subspace of Frobenius symmetric $t$-multilinear maps from $V$ to $W$ 
(in case the base field is not clear from the context, we will use a more precise notation such
as $\Sym^t_{\Fq,Frob}(V;W)$).

\entry
If $f:V^t\longto W$ is a symmetric $t$-multilinear map, for $t\geq q+1$, then in general its (first) Frobenius reduced $\widetilde{f}$
need not be symmetric. However:
\begin{proposition*}
Let $f\in\Sym^t(V;W)$ be a symmetric $t$-multilinear map, for an integer $t\geq q+1$. Then these two conditions are equivalent:
\begin{enumerate}[(i)]
\item $f$ satisfies the Frobenius exchange condition, that is, it is Frobenius symmetric;
\item its Frobenius reduced $\widetilde{f}$ is symmetric.
\end{enumerate}
Moreover, suppose these conditions are satisfied. Then:
\begin{itemize}
\item all the Frobenius reduced of $f$ are equal:
$\widetilde{f}^{(i)}=\widetilde{f}$ for $1\leq i\leq t-q+1$
\item $\widetilde{f}$ also satisfies the Frobenius exchange condition, that is, $\widetilde{f}$ also is Frobenius symmetric.
\end{itemize}
\end{proposition*}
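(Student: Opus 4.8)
The plan is first to establish the equivalence of \emph{(i)} and \emph{(ii)}, and then to read off the two supplementary assertions. Throughout I will use, without further comment, that $f$ is invariant under an arbitrary permutation of its $t$ arguments. I also record at the outset the following reduction: $\widetilde f$ is automatically symmetric in its last $t-q$ arguments, since these come from single slots of $f$; hence to prove $\widetilde f$ symmetric it suffices to prove it invariant under the transposition of its first two arguments, because such a transposition together with the full symmetric group on the last $t-q$ arguments generate $\fS_{t-q+1}$.

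For (i)$\,\Rightarrow\,$(ii), I would simply run the chain
\[
\widetilde f(u,v,z_1,\dots,z_{t-q-1})=f(\underbrace{u,\dots,u}_{q},v,z_1,\dots,z_{t-q-1})=f(u,\underbrace{v,\dots,v}_{q},z_1,\dots,z_{t-q-1})=\widetilde f(v,u,z_1,\dots,z_{t-q-1}),
\]
where the middle equality is the Frobenius exchange condition and the last one uses the symmetry of $f$ to bring the block of $q$ copies of $v$ to the front; by the reduction above this yields the symmetry of $\widetilde f$. For (ii)$\,\Rightarrow\,$(i) I would read the very same chain backwards: starting from $f(\underbrace{u,\dots,u}_{q},v,\mathbf z)=\widetilde f(u,v,\mathbf z)$, apply the symmetry of $\widetilde f$ to swap its first two arguments, then the symmetry of $f$, arriving at $f(u,\underbrace{v,\dots,v}_{q},\mathbf z)$, which is exactly the exchange condition.

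Now assume (i) and (ii). For the equality $\widetilde f^{(i)}=\widetilde f$: I would write $\widetilde f^{(i)}(v_1,\dots,v_{t-q+1})=f(v_1,\dots,v_{i-1},\underbrace{v_i,\dots,v_i}_{q},v_{i+1},\dots)$, move the block of $q$ copies of $v_i$ to the front by symmetry of $f$, recognize the result as $\widetilde f(v_i,v_1,\dots,v_{i-1},v_{i+1},\dots)$, and conclude with the symmetry of $\widetilde f$. For the remaining point, that $\widetilde f$ is itself Frobenius symmetric: if the arity $t-q+1$ of $\widetilde f$ is at most $q$ there is nothing to check, so assume $t\geq 2q$ and compare the two Frobenius reduced maps of $\widetilde f$. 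Expanding the definitions gives $\widetilde{\widetilde f}^{(1)}(w_1,w_2,\mathbf z)=f(\underbrace{w_1,\dots,w_1}_{2q-1},w_2,\mathbf z)$ and $\widetilde{\widetilde f}^{(2)}(w_1,w_2,\mathbf z)=f(\underbrace{w_1,\dots,w_1}_{q},\underbrace{w_2,\dots,w_2}_{q},\mathbf z)$ with $\mathbf z$ of length $t-2q$. The key manoeuvre is to split the $2q-1$ equal first arguments as a block of $q$ and a block of $q-1$: by symmetry of $f$ one rewrites $\widetilde{\widetilde f}^{(1)}(w_1,w_2,\mathbf z)=f(\underbrace{w_1,\dots,w_1}_{q},w_2,\underbrace{w_1,\dots,w_1}_{q-1},\mathbf z)$, applies the Frobenius exchange condition of $f$ with $u=w_1$, $v=w_2$ and the remaining $t-q-1$ slots filled by $(\underbrace{w_1,\dots,w_1}_{q-1},\mathbf z)$, obtaining $f(w_1,\underbrace{w_2,\dots,w_2}_{q},\underbrace{w_1,\dots,w_1}_{q-1},\mathbf z)$, and finally reorders this, again by symmetry of $f$, to $\widetilde{\widetilde f}^{(2)}(w_1,w_2,\mathbf z)$.

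The only nonroutine point is this last computation: one must find the right way to cut the block of $2q-1$ repeated arguments so that a single application of the exchange condition bridges $\widetilde{\widetilde f}^{(1)}$ and $\widetilde{\widetilde f}^{(2)}$. Everything else is bookkeeping with the symmetry of $f$ together with the already-established equivalence of (i) and (ii).
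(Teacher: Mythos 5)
Your proof is correct and is exactly the ``direct computation from the definitions'' that the paper leaves to the reader: the reduction to the transposition of the first two arguments, the chain establishing (i)$\Leftrightarrow$(ii) and its reverse reading, the argument for $\widetilde f^{(i)}=\widetilde f$, and the block-splitting of $2q-1 = q + (q-1)$ that lets a single application of the exchange condition of $f$ bridge $\widetilde{\widetilde f}^{(1)}$ and $\widetilde{\widetilde f}^{(2)}$ are all sound. Since the paper gives no details, there is no divergence of method to report.
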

\begin{proof}
Direct computation from the definitions.
\end{proof}

So, as a summary, for $t\leq q$ we have $\Sym^t_{Frob}(V;W)=\Sym^t(V;W)$ by definition, and for $t\geq q+1$ we have
\beq
f\in\Sym^t_{Frob}(V;W)\quad\equivaut\quad\widetilde{f}\in\Sym^{t-q+1}_{Frob}(V;W).
\eeq

Now we can state our main result:

\begin{theorem}
\label{th_criterion}
Let $V,W$ be $\Fq$-vector spaces of finite dimension,
and $f\in\Sym^t(V;W)$ a symmetric multilinear map.
Then $f$ admits a symmetric multilinear algorithm if and only if $f$ is Frobenius symmetric.

In particular when $t\leq q$ (for example, when $t=2$) this holds automatically.
\end{theorem}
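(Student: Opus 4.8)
The plan is to prove both implications, the substantial one being that a Frobenius symmetric $f$ decomposes as a sum of elementary symmetric tensors. For the easy implication: if $f=\sum_j l_j^{\tens t}\tens w_j$ with $l_j\in V^\vee$ and $w_j\in W$, then for any $v\in V$ we have $l_j(v)\in\Fq$, hence $l_j(v)^q=l_j(v)$; so each summand $l_j^{\tens t}\tens w_j$, viewed as a symmetric $t$-multilinear map, has all of its Frobenius reduceds (Proposition-definition~\ref{def_F_red}) equal to the map $l_j^{\tens(t-q+1)}\tens w_j$, and Frobenius symmetry of $f$ follows by $\Fq$-linearity. Writing $P_t(V;W)$ for the span of the elementary symmetric tensors, this gives $P_t(V;W)\subset\Sym^t_{Frob}(V;W)$, and what remains is the reverse inclusion. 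Since the Frobenius exchange condition of Definition~\ref{def_F_sym} is $\Fq$-linear in $f$ and $\Sym^t(V;W)=\Sym^t(V^\vee)\tens W$, we get $\Sym^t_{Frob}(V;W)=\Sym^t_{Frob}(V^\vee)\tens W$ and likewise $P_t(V;W)=P_t(V^\vee)\tens W$; so it suffices to treat $W=\Fq$ and to show $\Sym^t_{Frob}(V^\vee)=P_t(V^\vee)$.

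I would then induct on $t$. For the base case $t\le q$, where $\Sym^t_{Frob}=\Sym^t$ by definition, I would show that $\Sym^t(V^\vee)$ is spanned by $t$-th powers of linear forms by polarization over $\Fq$: fixing a basis $x_1,\dots,x_r$ of $V^\vee$, it is enough to reach each orbit sum $O_{\vec m}$ (indexed by $\vec m\in\N^r$ with $\lvert\vec m\rvert=t$), and expanding $\bigl(\sum_i\lambda_i x_i\bigr)^{\tens t}=\sum_{\lvert\vec a\rvert=t}\lambda^{\vec a}O_{\vec a}$ and summing this against the character $\vec\lambda\mapsto\lambda^{-\vec m}$ over $(\Fq^\times)^r$ isolates $O_{\vec m}$ up to sign; the only subtlety is at the boundary $t=q$, where $\lambda^q=\lambda$ merges a few terms, but the merged contributions are themselves $t$-th powers $x_i^{\tens t}$, so a secondary induction on the number of distinct $x_i$ occurring in $\vec m$ closes this case.

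For the inductive step $t>q$, I would pass to the Frobenius reduction. By the Proposition stated just before the theorem, $\widetilde f=\widetilde f^{(1)}$ is symmetric and again Frobenius symmetric, hence $\widetilde f\in\Sym^{t-q+1}_{Frob}(V^\vee)=P_{t-q+1}(V^\vee)$ by the inductive hypothesis (note $t-q+1<t$). On the other hand $\widetilde{(\cdot)}$ sends $P_t(V^\vee)$ onto $P_{t-q+1}(V^\vee)$, since $\widetilde{l^{\tens t}}=l^{\tens(t-q+1)}$; choosing $g\in P_t(V^\vee)$ with $\widetilde g=\widetilde f$ and putting $h=f-g$, we obtain $h\in\Sym^t_{Frob}(V^\vee)$ with $\widetilde h=0$. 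Thus $\Sym^t_{Frob}(V^\vee)=P_t(V^\vee)+K_t$ where $K_t=\{h\in\Sym^t_{Frob}(V^\vee):\widetilde h=0\}$, and the theorem reduces to the single inclusion $K_t\subset P_t(V^\vee)$.

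This last inclusion is where I expect the real work to be. Concretely, an element of $K_t$ is a symmetric $t$-multilinear map vanishing as soon as $q$ of its arguments coincide, and it must be written as a sum of $t$-th powers of linear forms; note that no individual $t$-th power lies in $K_t$, so genuine cancellation is needed. The plan is to exploit $\lambda^q=\lambda$ directly: for linear forms $l,l'$, expanding $(l+\lambda l')^{\tens t}=\sum_{j=0}^t\lambda^j\tau_j$ and summing over $\lambda\in\Fq$ against the characters of $\Fq^\times$ produces, inside $P_t$, not the individual pieces $\tau_j$ but their sums over residue classes of $j$ modulo $q-1$; after subtracting off the pure powers $l^{\tens t},l'^{\tens t}\in P_t$, these "folded" combinations are precisely the kind of relations that span $K_t$, and iterating over several linear forms while tracking which orbit sums are hit should give $K_t\subset P_t(V^\vee)$. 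The delicate point will be the bookkeeping of this folding against the low-multiplicity degeneracies peculiar to characteristic $p$ — i.e.\ making sure every element of $K_t$ is actually reached and that the residue structure accounts for all relations among the functions $l\mapsto l(v)^t$. Finally, the assertion that $t\le q$ suffices unconditionally is immediate from the base case, as then there is no Frobenius condition to impose.
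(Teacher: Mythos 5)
Your proposal takes a genuinely different route from the paper's, and its skeleton is sound as far as it goes: the easy ``only if'' direction, the reduction to $W=\Fq$, and the inductive scheme via Frobenius reduction --- lift $\widetilde f\in P_{t-q+1}(V^\vee)$ to $g\in P_t(V^\vee)$ with $\widetilde g=\widetilde f$, conclude $\Sym^t_{Frob}(V^\vee)=P_t(V^\vee)+K_t$, and so reduce the theorem to the single inclusion $K_t\subset P_t(V^\vee)$ --- are all correct, and the reduction itself is a nice observation. But you then stop at exactly the step where the theorem actually lives. The inclusion $K_t\subset P_t(V^\vee)$ is not proved: what you offer is a plan (``folding'' the expansion of $(l+\lambda l')^{\tens t}$ over residue classes mod $q-1$), you do not show that the elements you construct span $K_t$, and you acknowledge the ``bookkeeping'' is deferred. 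That is a genuine gap, not a detail. There is also a concrete slip in your base case $t\le q$: summing against $\vec\lambda\mapsto\lambda^{-\vec m}$ over $(\Fq^\times)^r$ does not isolate $O_{\vec m}$ but yields the sum of all $O_{\vec a}$ with $a_i\equiv m_i\pmod{q-1}$ componentwise and $\lvert\vec a\rvert=t$; at $t=q$ the extra summands are not only the pure powers $x_i^{\tens t}$ (for $\vec m=(q,0,\dots,0)$ one also picks up $O_{(1,q-1,0,\dots,0)}$ and its analogues), so the secondary induction as stated does not quite close, though it can likely be repaired.

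For comparison, the paper's proof avoids both the induction and the boundary cases by dualizing once and for all in $S^tV$: it identifies $\Sym^t_{Frob}(V^\vee)^\perp$ with the degree-$t$ part of the homogeneous ideal $\fJ$ generated by the $u^qv-uv^q$ (this is just unwinding Definition~\ref{def_F_sym}), identifies $\langle l^{\tens t}\,;\;l\in V^\vee\rangle^\perp$ with the homogeneous degree-$t$ polynomials vanishing on $\PP^{n-1}(\Fq)$, and then invokes the classical Lemma~\ref{ideal_tout} stating that these two ideals coincide. All the hard content is thus delegated to a known fact about finite projective space, and the result follows in one stroke for every $t$. If you want to keep your Frobenius-reduction structure, the cleanest way to close your gap is probably to dualize $K_t\subset P_t(V^\vee)$ as well, but at that point you would essentially be re-deriving Lemma~\ref{ideal_tout}; so your route, while self-contained in spirit, is unlikely to end up easier than the lemma the paper cites.
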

\begin{proof}
Choose vectors $w_i$ forming a basis of $W$, and let $w_i^\vee$ be the dual basis.
Then $f$ admits a symmetric multilinear algorithm if and only if all the $w_i^\vee\circ f$
admit a symmetric multilinear algorithm, and $f$ is Frobenius symmetric if and only if
all the $w_i^\vee\circ f$ are Frobenius symmetric.
As a consequence, it suffices to prove the Theorem when $W=\Fq$, that is, when $f\in\Sym^t(V^\vee)$
is a symmetric multilinear form.

By definition, a symmetric multilinear form $f\in\Sym^t(V^\vee)$ admits a symmetric multilinear
algorithm if and only if it belongs to the subspace spanned by elementary symmetric
multilinear forms $l^{\tens t}$, for $l\in V^\vee$.
Thus, setting $\Sym^t_{Frob}(V^\vee)=\Sym^t_{Frob}(V,\Fq)$, we have to show the equality
\beq
\Sym^t_{Frob}(V^\vee)=\langle l^{\tens t}\,;\;l\in V^\vee\rangle
\eeq
of subspaces of $\Sym^t(V^\vee)$.

For this, we proceed by duality.
Given a subspace $V\subset\Sym^t(V^\vee)$, we let $V^\perp\subset S^tV$ be its orthogonal,
under the natural duality between $\Sym^t(V^\vee)$ and the $t$-th symmetric power $S^tV$. 
So we have to show the equality
\beq
\Sym^t_{Frob}(V^\vee)^\perp=\langle l^{\tens t}\,;\;l\in V^\vee\rangle^\perp
\eeq
of subspaces of $S^tV$.

By definition~\ref{def_F_sym}, a symmetric multilinear form $f\in\Sym^t(V^\vee)$ is Frobenius symmetric
if and only if it is orthogonal to all elements of the
form $(u^qv-uv^q)z_1\cdots z_{t-q-1}\in S^tV$.
Using biduality, this means we have
\beq
\Sym^t_{Frob}(V^\vee)^\perp=\fJ_t
\eeq
where $\fJ\subset S^{\cdot}V$ is the homogeneous ideal spanned by the $(u^qv-uv^q)\in S^{q+1}V$, for $u,v\in V$.

Now let $X_1,\dots,X_n$ denote a basis of $V$, so formally we can identify the symmetric algebra
of $V$ with the polynomial algebra in the $X_i$, as graded $\Fq$-algebras:
\beq
S^{\cdot}V=\Fq[X_1,\dots,X_n].
\eeq
With this identification, $\fJ$ then becomes the homogeneous ideal generated
by the polynomials $X_i^qX_j-X_iX_j^q$, for $1\leq i,j\leq n$.

Also this choice of a basis $X_1,\dots,X_n$ for $V$ gives an identification $V^\vee=(\Fq)^n$:
a linear form $l\in V^\vee$ corresponds to the $n$-tuple $(x_1,\dots,x_n)\in(\Fq)^n$
where $x_i=l(X_i)$. With this identification, the value of $l^{\tens t}\in\Sym^t(V^\vee)$
at an element $P(X_1,\dots,X_n)\in S^tV$ is precisely $P(x_1,\dots,x_n)$.
Thus $\langle l^{\tens t}\,;\;l\in V^\vee\rangle^\perp$ is the space of homogeneous
polynomials of degree $t$ that vanish at all points in $(\Fq)^n$, or equivalently,
that vanish at all points in the projective space $\PP^{n-1}(\Fq)$.

The conclusion then follows from Lemma~\ref{ideal_tout} below.
\end{proof}

\begin{lemma}
\label{ideal_tout}
The homogenous ideal in $\Fq[X_1,\dots,X_n]$ of the finite projective algebraic set $\PP^{n-1}(\Fq)$
is the homogeneous ideal generated by the $X_i^qX_j-X_iX_j^q$, for $1\leq i,j\leq n$.
\end{lemma}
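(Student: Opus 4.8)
Write $S=\Fq[X_1,\dots,X_n]$, let $J\subset S$ be the homogeneous ideal generated by the $g_{ij}=X_i^qX_j-X_iX_j^q$ ($1\le i,j\le n$), and let $I\subset S$ be the homogeneous ideal of $\PP^{n-1}(\Fq)$. The plan is to prove the two inclusions $J\subset I$ and $I\subset J$ separately, the first being immediate and the second being a normal-form argument modulo $J$.

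For $J\subset I$: since $\lambda^q=\lambda$ for all $\lambda\in\Fq$, each $g_{ij}$ vanishes on $\Fq^n$, and being homogeneous of positive degree it therefore defines the zero form on $\PP^{n-1}(\Fq)$, so $g_{ij}\in I$. For the converse, since $I$ is homogeneous it suffices to show that a homogeneous $f\in I$ of degree $t\ge 1$ lies in $J$ (the case $t=0$ is trivial); note ``$f\in I$'' here just means $f(x)=0$ for every $x\in\Fq^n$.

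The engine of the argument is the congruence $X_i^qX_j\equiv X_iX_j^q\pmod J$, which gives, for any monomial $X^a$ with $a_i\ge q$ and $a_j\ge 1$ (with $i\ne j$), the ``move''
\[
X^a\equiv X^{a-(q-1)(e_i-e_j)}\pmod J ,
\]
which preserves the total degree and the support of $a$ and changes no exponent modulo $q-1$. First I would check that iterating such moves brings every degree-$t$ monomial $X^a$ to a canonical one: if $a$ is supported on a single variable it already equals $X_i^t$; otherwise, setting $i_0=\min\Supp(a)$ and letting $\rho(i)\in\{1,\dots,q-1\}$ denote the residue of $a_i$ modulo $q-1$ for $i\in\Supp(a)$, one can legally push all the ``excess'' onto $X_{i_0}$ — each exponent being moved stays $\ge q$ until it hits its residue, and the exponent of $X_{i_0}$ only grows — arriving at $X^{\widehat a}$ with $\widehat a_i=\rho(i)$ for $i\ne i_0$ and $\widehat a_{i_0}=t-\sum_{i\ne i_0}\rho(i)\ge\rho(i_0)\ge 1$. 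The key point is that this $X^{\widehat a}$ depends only on the pair $(\Supp(a),\rho)$. Consequently $f\equiv g\pmod J$ for some $\Fq$-linear combination $g$ of canonical monomials, and since $J\subset I$ we still have $g\in I$, i.e.\ $g$ vanishes on $\Fq^n$.

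The last step is to evaluate. Using $\lambda^q=\lambda$ again, a canonical monomial $X^{\widehat a}$ reduces, \emph{as a function on $\Fq^n$}, to the reduced monomial $X^b$ with $b_i=\rho(i)$ for $i\in\Supp(a)$ and $b_i=0$ otherwise (the excess piled onto $X_{i_0}$ is a multiple of $q-1$, so it drops back to $\rho(i_0)$). Distinct pairs $(\Supp(a),\rho)$ give distinct $b\in\{0,\dots,q-1\}^n$, hence distinct reduced monomials, and the reduced monomials are a basis of the space of $\Fq$-valued functions on $\Fq^n$, so in particular linearly independent. Thus $g|_{\Fq^n}=0$ forces every coefficient of $g$ to vanish, so $g=0$ and $f\in J$. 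I expect the main obstacle to be the bookkeeping in the reduction to canonical form — verifying the moves are legal, that the procedure terminates, and that its output depends only on $(\Supp(a),\rho)$ — together with checking that distinct canonical monomials remain distinct after evaluation; these are elementary but must be done with care. (Equivalently one can run this as a Hilbert-function comparison: the moves bound $\dim_{\Fq}(S/J)_t$ above by the number of such pairs $(\Supp,\rho)$ with $\sum\rho\le t$ and $\sum\rho\equiv t\pmod{q-1}$, while the evaluation map identifies $\dim_{\Fq}(S/I)_t$ with that same number, and $J\subset I$ then yields $J=I$.)
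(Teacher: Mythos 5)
Your proof is correct, and it takes a genuinely different route from the one sketched in the paper. The paper's proof is an induction on $n$, decomposing $\PP^{n}(\Fq)$ as the disjoint union of the affine piece $(\Fq)^n$ and the hyperplane at infinity $\PP^{n-1}(\Fq)$, and simultaneously establishing the affine variant that $(X_1^q-X_1,\dots,X_n^q-X_n)$ is the full vanishing ideal of $(\Fq)^n$. You instead give a single direct normal-form argument: the rewriting move $X^a\equiv X^{a-(q-1)(e_i-e_j)}\pmod J$ reduces every degree-$t$ monomial to an explicit canonical representative parametrized by $(\Supp(a),\rho)$, and then the evaluation map sends distinct canonical monomials to distinct reduced monomials on $\Fq^n$, which are linearly independent --- so any homogeneous $f\in I$ is congruent mod $J$ to a linear combination of canonical monomials that must in fact be zero. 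The reasoning holds up: the moves are legal at every step (the target exponent at $i_0$ only grows from its initial value $\ge 1$, the source exponent stays $\ge q$ until it reaches $\rho(i)\in\{1,\dots,q-1\}$ and in particular never hits $0$), the canonical form is given by an explicit formula so its well-definedness is immediate without needing confluence of the rewriting, and the passage from canonical monomials to reduced monomials under evaluation is injective because $\Supp$ and $\rho$ are recoverable from $b$. Both proofs ultimately rest on the same basic fact (that reduced monomials form a basis of functions on $\Fq^n$), but yours avoids the inductive descent through the hyperplane stratification and gives explicit bases of $(S/J)_t$ and $(S/I)_t$ in the same stroke, which is slightly more informative; the paper's inductive route is shorter to state and naturally produces the affine statement as a byproduct. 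Your Hilbert-function reformulation at the end is also valid and is essentially the counting form of the same argument.
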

This is a well-known fact, and a nice exercise, from elementary algebraic geometry.
One possible proof is by induction on $n$, in which one writes $\PP^{n}(\Fq)=(\Fq)^n\cup\PP^{n-1}(\Fq)$
and one shows at the same time the affine variant,
that the set of polynomials in $\Fq[X_1,\dots,X_n]$ vanishing at all points
in $(\Fq)^n$ is the ideal generated by the $X_i^q-X_i$, for $1\leq i\leq n$.
(See also \cite{MR} for more.)

\begin{definition}
\label{Frobenius_symmetric_algebra}
The \emph{Frobenius symmetric algebra} of a $\Fq$-vector space $V$ is
\beq
\SFrob V=S^\cdot V/(u^qv-uv^q)_{u,v\in V}
\eeq
the homogeneous quotient algebra of the symmetric algebra of $V$ by its graded ideal generated by elements of the form $u^qv-uv^q$.
\end{definition}
Also, by the $t$-th \emph{Frobenius symmetric power} of $V$ we mean the $t$-th graded part $\SFrob[t]V$ of this quotient algebra.
It comes equipped with a canonical Frobenius symmetric $t$-multilinear map $V^t\longto \SFrob[t]V$,
with the universal property that, given another $\Fq$-vector space $W$, then any Frobenius symmetric $t$-multilinear map $V^t\longto W$ uniquely factorizes through it.
Thus we get a natural identification
\beq
\Sym^t_{Frob}(V,W)=(\SFrob[t]V)^\vee\tens W
\eeq
of $\Fq$-vector spaces, and in particular
\beq
\Sym^t_{Frob}(V^\vee)=(\SFrob[t]V)^\vee.
\eeq

\entry
\label{PRM}
If $V$ has dimension $n$, with basis $X_1,\dots,X_n$, we have an identification $\SFrob V=\Fq[X_1,\dots,X_n]/(X_i^qX_j-X_iX_j^q)_{i,j}$,
hence by Lemma~\ref{ideal_tout}
\beq
\SFrob[t]V\simeq PRM_q(t,n-1)
\eeq
where the \emph{projective Reed-Muller code} $PRM_q(t,n-1)\subset(\Fq)^{\frac{q^n-1}{q-1}}$ is defined as the image of the map
\beq
\Fq[X_1,\dots,X_n]_t\longto(\Fq)^{\frac{q^n-1}{q-1}}
\eeq 
that evaluates homogeneous polynomials of degree $t$ at (a set of representatives of) all points in $\PP^{n-1}(\Fq)$.
In particular we have
\beq
\dim\SFrob[t]V=\dim PRM_q(t,n-1),
\eeq
which is computed in \cite{Sorensen} (and also by another mehtod in \cite{MR}).

Last, note that we have
\beq
PRM_q(t,n-1)=PRM_q(1,n-1)\deux[t]
\eeq
where $PRM_q(1,n-1)$ is the $[\frac{q^n-1}{q-1},n,q^{n-1}]$ simplex code.
So we see that the sequence of dimensions $\dim\SFrob[t]V$ coincides with the Hilbert sequence of this simplex code, as defined in~\ref{def_sequences}
(and also, as an illustration of Proposition~\ref{eq_def}, with that of the projective algebraic set $\PP^{n-1}(\Fq)$).

\entry
\label{ex_m_m'}
That Frobenius symmetry is a necessary condition in Theorem~\ref{th_criterion} was already known,
and is in fact easy to show: indeed, an elementary symmetric multilinear map $l^{\tens t}\tens w$,
for $l\in V^\vee$, $w\in W$, is obviously Frobenius symmetric, so any linear combination of such
maps should also be. The author learned it from I.~Cascudo, who provided the example of the map
\beq
\begin{array}{cccc}
m: & \F_4\times\F_4\times\F_4 & \longto & \F_4 \\
& (x,y,z) & \mapsto & xyz
\end{array}
\eeq
which is trilinear symmetric over $\F_2$, but does not admit a symmetric trilinear algorithm, since one can
find $x,y\in\F_4$ with $x^2y\neq xy^2$.

So the new part in Theorem~\ref{th_criterion} is that Frobenius symmetry is also a sufficient condition.
For example the map
\beq
\begin{array}{cccc}
m': & \F_4\times\F_4\times\F_4 & \longto & \F_4 \\
& (x,y,z) & \mapsto & x^2yz+xy^2z+xyz^2
\end{array}
\eeq
is trilinear symmetric over $\F_2$, and it satisfies $m'(x,x,y)=m'(x,y,y)$ for all $x,y\in\F_4$,
so it has to admit a symmetric trilinear algorithm. And indeed, one can check
\beq
\begin{split}
m'(x,y,z)=\tr(x)\tr(y)\tr(z)+\alpha^2\tr(&\alpha x)\tr(\alpha y)\tr(\alpha z)\\
&+\alpha\tr(\alpha^2 x)\tr(\alpha^2 y)\tr(\alpha^2 z)
\end{split}
\eeq
where $\F_4=\F_2[\alpha]/(\alpha^2+\alpha+1)$ and $\tr$ is the trace from $\F_4$ to $\F_2$.

\entry
Theorem~\ref{th_criterion} allows us to retrieve (and perhaps give a slightly more conceptual proof of)
a result of N.~Bshouty \cite[Th.~5]{Bshouty}:
given a finite dimensional commutative $\Fq$-algebra $\cA$, then the $t$-wise multiplication map
\beq
\begin{array}{cccc}
m^t: & \cA^t & \longto & \cA \\
& (a_1,\dots,a_t) & \mapsto & a_1\cdots a_t
\end{array}
\eeq
admits a symmetric multilinear algorithm if and only if, either:
\begin{itemize}
\item $t\leq q$, or
\item $t\geq q+1$ and all elements $a\in\cA$ satisfy $a^q=a$.
\end{itemize}
Indeed, this is easily seen to be equivalent to $m^t$ satisfying the Frobenius exchange condition~\ref{def_F_sym}.

So, for $t\leq q$, a symmetric algorithm for $m^t$ exists for any algebra $\cA$.
On the other hand, for $t>q$, it turns out that Bshouty's condition is very restrictive.
An example of algebra in which $m^t$ obviously admits a symmetric multilinear algorithm is $(\Fq)^n$
(one can take the trivial algorithm of length $n$, which is moreover easily seen minimal).
We show that, in fact, it is the only possibility, leading to the following ``all-or-nothing'' result:
\begin{proposition*}
If all elements $a$ in a commutative $\Fq$-algebra $\cA$ of dimension $n$ satisfy $a^q=a$,
then we have an isomorphism $\cA\simeq(\Fq)^n$ of $\Fq$-algebras.

As a consequence, the $t$-wise multiplication map $m^t$ in a commutative $\Fq$-algebra $\cA$ of dimension $n$
admits a symmetric multilinear algorithm if and only if, either:
\begin{itemize}
\item $t\leq q$, or
\item $t\geq q+1$ and $\cA\simeq(\Fq)^n$.
\end{itemize}
\end{proposition*}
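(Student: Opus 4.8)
The plan is to establish the structural isomorphism first, and then read off the statement about $m^t$ from Theorem~\ref{th_criterion}. Throughout, $\cA$ is finite since $\dim_{\Fq}\cA=n<\infty$, and I assume $\cA$ is unital (as is implicit in $m^t$).

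\textbf{Step 1 (reducedness).} First I would observe that the hypothesis forces $\cA$ to be reduced: if $a^k=0$, then iterating $a=a^q$ gives $a=a^{q^m}$ for all $m$, and choosing $q^m\geq k$ yields $a=0$. Moreover, for any $a\in\cA$ the subalgebra $\Fq[a]$ is isomorphic to $\Fq[x]/(g_a)$ with $g_a\mid x^q-x=\prod_{\lambda\in\Fq}(x-\lambda)$, so $\Fq[a]$ is a product of copies of $\Fq$, and in particular is spanned over $\Fq$ by idempotents.

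\textbf{Step 2 (decomposition).} A finite (hence Artinian) commutative ring is a finite product of local Artinian rings, and a reduced local Artinian ring is a field; so $\cA\simeq\prod_{i=1}^r\F_{q^{d_i}}$ with $\sum_i d_i=n$. (If one prefers a self-contained argument: by Step 1, $\cA$ is generated as an $\Fq$-algebra by the finitely many idempotents arising from the various $\Fq[a_j]$ for a generating set $a_1,\dots,a_m$; these generate a finite Boolean subalgebra whose atoms $e_1,\dots,e_r$ are orthogonal idempotents summing to $1$, giving $\cA=\bigoplus_i\cA e_i$, and each $\cA e_i$ is generated over $\Fq$ by idempotents that, by atomicity, equal $0$ or $e_i$, hence $\cA e_i=\Fq e_i$.) Now the hypothesis $a^q=a$ passes to each factor $\F_{q^{d_i}}$, where $x^q-x$ has at most $q$ roots; thus $q^{d_i}\leq q$, i.e. $d_i=1$. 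Comparing dimensions gives $r=n$, so $\cA\simeq(\Fq)^n$ as claimed.

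\textbf{Step 3 (the consequence for $m^t$).} By Theorem~\ref{th_criterion}, $m^t$ admits a symmetric multilinear algorithm if and only if it is Frobenius symmetric in the sense of Definition~\ref{def_F_sym}. For $t\leq q$ this is automatic. For $t\geq q+1$, the Frobenius exchange condition for $m^t$ is exactly $a^q b\,c_1\cdots c_{t-q-1}=a\,b^q\,c_1\cdots c_{t-q-1}$ for all arguments in $\cA$; specializing $b=c_1=\dots=c_{t-q-1}=1$ yields $a^q=a$ for all $a$, and conversely $a^q=a$ for all $a$ trivially implies the condition. Hence for $t\geq q+1$, $m^t$ admits a symmetric multilinear algorithm iff every $a\in\cA$ satisfies $a^q=a$, which by Steps 1--2 is equivalent to $\cA\simeq(\Fq)^n$. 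This also recovers the cited criterion of Bshouty.

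\textbf{Main obstacle.} There is no deep difficulty here; the plan is essentially bookkeeping. The one point needing a little care is the first direction in Step 3 — extracting the plain Frobenius identity $a^q=a$ from the many-variable exchange condition by setting the extra arguments to $1$ — and, should one want to avoid invoking the structure theory of Artinian rings, verifying that the Boolean algebra generated by the finitely many idempotents from the subalgebras $\Fq[a_j]$ is indeed finite and that $\cA$ restricted to one of its atoms collapses to $\Fq$.
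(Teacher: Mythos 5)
Your argument is correct and follows essentially the same route as the paper: use $a^{q^e}=a$ to rule out nilpotents, invoke the structure theory of finite reduced commutative rings to write $\cA\simeq\prod_i\F_{q^{d_i}}$, and then use $a^q=a$ once more to force $d_i=1$. The paper states the consequence about $m^t$ more briefly, relying on the sentence preceding the proposition ("this is easily seen to be equivalent to $m^t$ satisfying the Frobenius exchange condition"), which is precisely what you spell out in your Step 3 by specializing $v=z_1=\cdots=z_{t-q-1}=1$; your optional Boolean-algebra argument in Step 2 is a self-contained variant of the structure theorem the paper cites implicitly, but the underlying idea is the same.
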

\begin{proof}
Let $\cA$ be a commutative $\Fq$-algebra of dimension $n$ in which all elements $a$ satisfy $a^q=a$.
Then $a^{q^e}=a$ for arbitrarily large $e$, implying that $\cA$ has no (nonzero) nilpotent element.
This means the finite commutative $\Fq$-algebra $\cA$ is \emph{reduced}, and as such, it can be written as a product
of extension fields
\beq
\cA=\prod_i\F_{q^{r_i}}.
\eeq
Now using the condition $a^q=a$ once again, we see all $r_i=1$.
\end{proof}
(This proof, shorter than the author's original one, was inspired by a remark from I.~Cascudo.)

\mysubsection{Trisymmetric and normalized multiplication algorithms}
\entry
We conclude with a last application of Theorem~\ref{th_criterion}.
Let $q$ be a prime power, and $k\geq1$ an integer. For any $a\in\F_{q^k}$ we define a linear form $t_a:\F_{q^k}\longto\Fq$
by setting $t_a(x)=\tr(ax)$, where $\tr$ is the trace from $\F_{q^k}$ to $\F_q$.
It is well known that the map $a\mapsto t_a$
induces an isomorphism of $\Fq$-vector spaces $\F_{q^k}\simeq(\F_{q^k})^\vee$.

Now let $m:\F_{q^k}\times\F_{q^k}\longto\F_{q^k}$ be the multiplication map in $\F_{q^k}$, viewed as a tensor $m\in\Sym^2_{\Fq}((\F_{q^k})^\vee)\tens_{\Fq}\F_{q^k}$.
Then by construction, $\mu_q^{\mathrm{sym}}(k)$ is the rank of $m$
(in the sense of~\ref{def_rank}) with respect to the set of elementary symmetric tensors, that is, tensors of the form
\beq
t_{a}^{\tens 2}\tens b
\eeq
for $a,b\in\F_{q^k}$. By definition, ``symmetry'' here refers only to the first two factors of the tensors.
However, one could try to strengthen this notion as follows:

\begin{definition*}
The \emph{elementary trisymmetric tensors} in $\Sym^2_{\Fq}((\F_{q^k})^\vee)\tens_{\Fq}\F_{q^k}$ are those of the form
\beq
t_{a}^{\tens 2}\tens a
\eeq
for $a\in\F_{q^k}$.
We then define
\beq
\mu_q^{\mathrm{tri}}(k)
\eeq
the \emph{trisymmetric bilinear complexity} of $\F_{q^k}$ over $\Fq$,
as the rank of $m$ with respect to these.

Equivalently, $\mu_q^{\mathrm{tri}}(k)$ is the smallest possible length of a \emph{trisymmetric bilinear algorithm} for multiplication in $\F_{q^k}$ over $\Fq$,
that is, of a decomposition of $m$ as a linear combination of elementary trisymmetric tensors.
(Conveniently, if no such algorithm exists, we set $\mu_q^{\mathrm{tri}}(k)=\infty$.)
\end{definition*}
An equivalent definition can already be found in \cite{SL84}.
Now we determine the values of $q$ and $k$ for which $\mu_q^{\textrm{tri}}(k)$ is finite:

\begin{proposition}
\label{existence_tri}
A trisymmetric bilinear algorithm for multiplication in $\F_{q^k}$ over $\Fq$ exists for all $q$ and $k$, except precisely for $q=2$, $k\geq3$. 
\end{proposition}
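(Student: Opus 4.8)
The plan is to translate the existence of a trisymmetric bilinear algorithm into the Frobenius-symmetry criterion of Theorem~\ref{th_criterion}, and then to decide directly when the relevant tensor is Frobenius symmetric. Write $V=\F_{q^k}$, regarded as an $\Fq$-vector space, and use the nondegenerate trace bilinear form to identify $V$ with $V^\vee$ via $a\mapsto t_a$. Applying this identification to the last tensor factor, the multiplication tensor $m\in\Sym^2_{\Fq}(V^\vee)\tens_{\Fq}V$ is carried to the trilinear form $T\colon V^3\longto\Fq$, $T(x,y,z)=\tr_{\F_{q^k}/\Fq}(xyz)$, since $m(x,y)=xy$ pairs with $z$ to give $t_{xy}(z)=\tr(xyz)$. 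In particular $T$ is \emph{fully} symmetric, hence lies in $\Sym^3(V^\vee)$; and an elementary trisymmetric tensor $t_a^{\tens 2}\tens a$ is carried to $t_a^{\tens 3}$, so that, as $a$ ranges over $V$, these span exactly $\langle l^{\tens 3}\,;\;l\in V^\vee\rangle\subset\Sym^3(V^\vee)$ (because $a\mapsto t_a$ is an isomorphism $V\simeq V^\vee$). Thus $\mu_q^{\mathrm{tri}}(k)<\infty$ if and only if $T\in\langle l^{\tens 3}\,;\;l\in V^\vee\rangle$.

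Next I would invoke Theorem~\ref{th_criterion} with $t=3$: by the equality $\Sym^t_{Frob}(V^\vee)=\langle l^{\tens t}\,;\;l\in V^\vee\rangle$ established in its proof, a trisymmetric algorithm exists precisely when $T$ is Frobenius symmetric in the sense of Definition~\ref{def_F_sym}. When $q\geq3$ we have $t=3\leq q$, so the Frobenius exchange condition is vacuous and a trisymmetric algorithm exists for every $k$.

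It then remains to treat $q=2$, where the condition reads $T(u,u,v)=T(u,v,v)$ for all $u,v\in\F_{2^k}$, i.e. $\tr(u^2v)=\tr(uv^2)$. Using $\tr(w)=\tr(w^2)$ repeatedly one rewrites $\tr(uv^2)=\tr\bigl((uv^2)^{2^{k-1}}\bigr)=\tr(u^{2^{k-1}}v)$, so the condition becomes $\tr\bigl((u^2-u^{2^{k-1}})v\bigr)=0$ for all $u,v$; by nondegeneracy of the trace form this is equivalent to $u^2=u^{2^{k-1}}$ for all $u\in\F_{2^k}$, that is, to $2\equiv 2^{k-1}\pmod{2^k-1}$. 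Since $0<2^{k-1}-2<2^k-1$ for $k\geq3$ while $2^{k-1}-2\equiv 0\pmod{2^k-1}$ for $k\in\{1,2\}$, the form $T$ is Frobenius symmetric exactly when $k\leq2$. Combining the two cases, a trisymmetric bilinear algorithm for multiplication in $\F_{q^k}$ over $\Fq$ exists precisely when $q\geq3$, or $q=2$ and $k\leq2$ — equivalently, for all $q,k$ except $q=2$, $k\geq3$, which is the statement of Proposition~\ref{existence_tri}.

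The only delicate point is the first step: one must carefully track the three trace-form identifications to see that $m$ is sent to the genuinely symmetric trilinear form $\tr(xyz)$ and that the elementary trisymmetric tensors exhaust all cubes $l^{\tens3}$, so that Theorem~\ref{th_criterion} applies verbatim; after that, everything reduces to the elementary congruence computation above, and the $q\geq3$ half is immediate.
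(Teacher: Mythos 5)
Your proposal is correct and follows essentially the same route as the paper: pass via the trace identification to the symmetric trilinear form $T(x,y,z)=\tr(xyz)$, apply Theorem~\ref{th_criterion} to dispose of the case $q\geq3$ at once, and for $q=2$ reduce the Frobenius exchange condition to a pointwise power identity using $\tr(a)=\tr(a^2)$ and nondegeneracy of the trace form. The only (cosmetic) difference is in the last manipulation — the paper applies Frobenius once to get $\tr(x^4y^2)=\tr(xy^2)$ and concludes $x^4=x$, while you apply it $k-1$ times to get $\tr(u^{2^{k-1}}v)=\tr(u^2v)$ and conclude $u^2=u^{2^{k-1}}$; these two conditions are literally equivalent (square the latter), and both give $k\in\{1,2\}$.
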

\begin{proof}
Applying the trace, we see that a trisymmetric bilinear multiplication algorithm of the form
\beq
m=\sum_{1\leq j\leq n} \lambda_j\,t_{a_j}^{\tens 2}\tens a_j\;\in\Sym^2_{\Fq}((\F_{q^k})^\vee)\tens_{\Fq}\F_{q^k}
\eeq
for $\lambda_j\in\Fq$, $a_j\in\F_{q^k}$, corresponds to a symmetric trilinear algorithm
\beq
T=\sum_{1\leq j\leq n} \lambda_j\,t_{a_j}^{\tens 3}\;\in\Sym^3_{\Fq}((\F_{q^k})^\vee)
\eeq
for the symmetric trilinear form
\beq
\begin{array}{cccc}
T: & \F_{q^k}\times\F_{q^k}\times\F_{q^k} & \longto & \Fq \\
& (x,y,z) & \mapsto & \tr(xyz).
\end{array}
\eeq
By Theorem~\ref{th_criterion}, a symmetric trilinear form always admits a symmetric trilinear algorithm when $q\geq3$.
Now, suppose $q=2$. Then $T$ satisfies the Frobenius exchange condition if only if $\tr(x^2y)=\tr(xy^2)$ for all $x,y\in\F_{2^k}$.
On the other hand we have $\tr(a)=\tr(a^2)$ for all $a$, so in particular $\tr(x^2y)=\tr(x^4y^2)$, hence $T$ satisfies the Frobenius exchange condition if only if
\beq
\tr(x^4y^2)=\tr(xy^2)
\eeq
for all $x,y\in\F_{2^k}$. But this means $t_{x^4}=t_x$, or equivalently $x^4=x$, for all $x$. We conclude since this holds if and only if $k=1$ or $2$.
\end{proof}

\entry
\label{mu2tri2}
In case $q=k=2$, one can check that the trisymmetric bilinear complexity of $\F_4$ over $\F_2$ is $3$, so
\beq
\mu_2(2)=\mu_2^{\mathrm{sym}}(2)=\mu_2^{\mathrm{tri}}(2)=3.
\eeq
Indeed, setting $\F_4=\F_2[\alpha]/(\alpha^2\!+\!\alpha\!+\!1)$, a symmetric trilinear algorithm for $T$ is given by the tensor decomposition
\beq
T=t_1^{\tens 3}+t_\alpha^{\tens 3}+t_{\alpha^2}^{\tens 3}
\eeq
in $\Sym^3_{\F_2}((\F_4)^\vee)$. In fact $T$ is the only element of rank $3$ in $\Sym^3_{\F_2}((\F_4)^\vee)$.

This formula could be compared with that for the symmetric algorithm for $m'$ in~\ref{ex_m_m'}, which can be rewritten
\beq
m'\,=\,t_1^{\tens 3}\tens 1\,+\,t_\alpha^{\tens 3}\tens\alpha^2\,+\,t_{\alpha^2}^{\tens 3}\tens\alpha
\eeq
in $\Sym^3_{\F_2}((\F_4)^\vee)\tens_{\F_2}\F_4$.

It also motivates the following:

\begin{definition}
\label{def_normalized}
A \emph{normalized} trisymmetric bilinear algorithm of length $n$ for multiplication in $\F_{q^k}$ over $\Fq$ is a decomposition
of the multiplication tensor $m$ as a \emph{sum} of $n$ elementary trisymmetric tensors
\beq
m=\sum_{1\leq j\leq n} t_{a_j}^{\tens 2}\tens a_j
\eeq
in $\Sym^2_{\Fq}((\F_{q^k})^\vee)\tens_{\Fq}\F_{q^k}$, or equivalently, of the trace trilinear form $T$ as a \emph{sum} of $n$ cubes
\beq
T=\sum_{1\leq j\leq n} t_{a_j}^{\tens 3}
\eeq
in $\Sym^3_{\Fq}((\F_{q^k})^\vee)$. We then define
\beq
\mu_q^{\mathrm{nrm}}(k)
\eeq
the \emph{normalized} trisymmetric bilinear complexity of $\F_{q^k}$ over $\Fq$, as the smallest possible length of such a normalized algorithm
(Conveniently, if no such algorithm exists, we set $\mu_q^{\mathrm{nrm}}(k)=\infty$.)
\end{definition}

The new restriction here is we require the decomposition to be a sum, not a mere linear combination
(and as a consequence, this time $\mu_q^{\mathrm{nrm}}(k)$ cannot be interpreted in terms of a rank function).
This is somehow reminiscent of the distinction between orthogonal and self-dual bases in a nondegenerate quadratic space.
In fact, suppose instead of the trace trilinear form $T(x,y,z)=\tr(xyz)$ of $\F_{q^k}$ over $\Fq$, we're interested in the much more manageable trace bilinear form $B(x,y)=\tr(xy)$.
Since $B$ is nondegenerate, it has symmetric complexity $k$.
Moreover, symmetric algorithms for $B$ correspond to orthogonal bases of $\F_{q^k}$ over $\Fq$, although not necessarily self-dual.
See \eg \cite{SL80}\cite{JMV} for more on this topic and related questions.

\entry
\label{inegalites_tri_nrm}
The various notions of bilinear complexity defined so far can be compared.
Obviously (or for the first three, as a consequence of Lemma~\ref{fonctorialite_rang}) we always have
\beq
\mu_q(k)\leq\mu_q^{\mathrm{sym}}(k)\leq\mu_q^{\mathrm{tri}}(k)\leq\mu_q^{\mathrm{nrm}}(k).
\eeq

In the other direction, by \cite[Th.~1]{SL84} or \cite[Lemma~1.6]{ChCh+} we have
\beq
\mu_q^{\mathrm{sym}}(k)\leq2\,\mu_q(k)\qquad\textrm{for $\car(\Fq)\neq2$,}
\eeq
and by \cite[Th.~2]{SL84}
\beq
\mu_q^{\mathrm{tri}}(k)\leq4\,\mu_q^{\mathrm{sym}}(k)\qquad\textrm{for $q\neq2$, $\car(\Fq)\neq3$.}
\eeq

Now we want an upper bound on $\mu_q^{\mathrm{nrm}}(k)$. This can be stated in a greater generality.
Let $V$ be a $\Fq$-vector space, and let $F\in\Sym^t(V^\vee)$ be a symmetric $t$-multilinear form
with a $t$-symmetric algorithm of length $n$
\beq
F=\sum_{1\leq j\leq n} \lambda_j\,l_j^{\tens t},
\eeq
for $\lambda_j\in\Fq$, $l_j\in V^\vee$.
Now suppose each $\lambda_j$ can be written as a sum of $g$ $t$-th powers in $\Fq$
\beq
\lambda_j=\xi_{j,1}^t+\cdots+\xi_{j,g}^t.
\eeq
Then we have
\beq
F=\sum_{1\leq j\leq n}((\xi_{j,1}l_j)^{\tens t}+\cdots+(\xi_{j,g}l_j)^{\tens t})
\eeq
so $F$ is a sum of $gn$ $t$-th powers in $\Sym^t(V^\vee)$.

In particular, if $g=g(t,q)$ is the smallest integer such that any element in $\Fq$ is a sum of $g$ $t$-th powers in $\Fq$ 
(if no such integer exists we set $g=\infty$), we find
\beq
\mu_q^{\mathrm{nrm}}(k)\leq g(3,q)\mu_q^{\mathrm{tri}}(k).
\eeq

Determination of $g(t,q)$ is an instance of Waring's problem
(note that determination of $\mu_q^{\mathrm{nrm}}(k)$, that is, of the shortest decomposition of $T$ as a sum of cubes in $\Sym^3_{\Fq}((\F_{q^k})^\vee)$, also is!).
For $t=3$ the answer is well known (see \eg \cite{Singh}):

\begin{lemma}
\label{Waring}
For $q\neq2,4,7$, we have
\beq
g(3,q)=2
\eeq
\ie any element in $\Fq$ is a sum of two cubes (and this is optimal).
The exceptions are: $g(3,2)=1$, $g(3,4)=\infty$, and $g(3,7)=3$.
\end{lemma}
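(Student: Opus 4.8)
This is a standard Waring-type computation over finite fields; the plan is to split into two cases according to whether $3$ divides $q-1$. If $3\nmid q-1$ --- that is, $q\equiv2\pmod3$ or $q$ is a power of $3$ --- then $\gcd(3,q-1)=1$, so $x\mapsto x^3$ is a bijection of $\Fq$; every element is then itself a cube, hence $g(3,q)=1$ and \emph{a fortiori} every element is a sum of two cubes. This takes care of $q=2$ (where $g(3,2)=1$), and more generally shows that, strictly speaking, the displayed equality $g(3,q)=2$ should be read as ``$g(3,q)\leq2$, with equality precisely when $3\mid q-1$''.

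From now on assume $3\mid q-1$. Then the nonzero cubes form a subgroup $C\subset\Fq^\times$ of index $3$, so not every element of $\Fq$ is a cube and $g(3,q)\geq2$; it remains to prove $g(3,q)\leq2$ when $q\neq4,7$, \ie that every $a\in\Fq$ is a sum of two cubes. The case $a=0$ is trivial, so fix $a\neq0$. Since $3\mid q-1$ forces $\car(\Fq)\neq3$, the affine plane curve $X_a\colon x^3+y^3=a$ is smooth (the partial derivatives $3x^2,3y^2$ vanish simultaneously only at the origin, which is not on $X_a$), so its projective closure $\overline{X}_a\subset\PP^2$ is a smooth plane cubic, \ie a curve of genus $1$ --- in fact an elliptic curve over $\Fq$, as it carries the rational point $(1:-1:0)$. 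The Hasse--Weil bound then gives $\abs{\#\overline{X}_a(\Fq)-(q+1)}\leq2\sqrt q$. The points of $\overline{X}_a$ at infinity are the $(w:1:0)$ with $w^3=-1$; since $-1=(-1)^3$ is a cube and $\car(\Fq)\neq3$, there are exactly $\gcd(3,q-1)=3$ of them, all $\Fq$-rational and distinct. So, writing $N(a)$ for the number of affine solutions of $x^3+y^3=a$, we obtain $\abs{N(a)-(q-2)}\leq2\sqrt q$, whence $N(a)\geq q-2-2\sqrt q$. The function $q\mapsto q-2-2\sqrt q$ is nondecreasing for $q\geq1$ and already exceeds $1$ at $q=13$ (where it equals $11-2\sqrt{13}$); since the least prime power $\equiv1\pmod3$ larger than $7$ is $13$, every $q\equiv1\pmod3$ with $q\notin\{4,7\}$ satisfies $N(a)\geq1$. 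Hence every element of $\Fq$ is a sum of two cubes in all those cases, and $g(3,q)=2$.

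The two remaining prime powers are handled by direct inspection. In $\F_4$ every nonzero $x$ satisfies $x^3=1$, so the cubes are exactly $0$ and $1$; any sum of cubes thus lies in the subfield $\F_2\subset\F_4$, which misses both elements of $\F_4\moins\F_2$, so these are not finite sums of cubes at all and $g(3,4)=\infty$. In $\F_7$ the cubes are $\{0,1,6\}$ (since $\{1^3,\dots,6^3\}=\{1,6\}$), so the sums of two cubes form $\{0,1,2,5,6\}$, missing $3$ and $4$; but $3=1+1+1$ and $4=6+6+6$, so every element of $\F_7$ is a sum of at most three cubes and $g(3,7)=3$.

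The delicate point, and the main obstacle, is pinning down the exact exceptional set: for the only two prime powers $\equiv1\pmod3$ below $13$, namely $q=4$ and $q=7$, the Weil estimate is useless, and in any case it is blind to the peculiarity of $\F_4$ (where the cubes generate only a proper subfield), so the hand computations at $q=4,7$ cannot be avoided. If one prefers to dispense with elliptic curves, the same bound follows from a Jacobi-sum expansion: for a multiplicative character $\chi$ of order $3$ (with the principal character taken to be $1$ at every point) one has $N(a)=\sum_{i,j=0}^{2}\sum_{u\in\Fq}\chi^i(u)\chi^j(a-u)$, in which the term $i=j=0$ equals $q$, the terms with exactly one index zero vanish for $a\neq0$, the two terms $(i,j)=(1,2),(2,1)$ each equal $-1$, and the two terms $(1,1),(2,2)$ are Jacobi sums of absolute value $\sqrt q$ by Weil's theorem --- once more yielding $N(a)\geq q-2-2\sqrt q$. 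Either way, the sole non-elementary ingredient is Weil's bound $\abs{J(\chi,\chi)}=\sqrt q$, equivalently the Hasse bound for a genus-one curve.
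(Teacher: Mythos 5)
Your proof is correct, and it is worth noting that the paper itself gives no proof of this lemma --- it simply cites Singh's paper. Your argument is self-contained and fills that gap: for $3\nmid q-1$, cubing is a bijection; for $3\mid q-1$, the curve $x^3+y^3=a$ is a smooth plane cubic with the rational point $(1:-1:0)$ and exactly three rational points at infinity, so the Hasse bound $\lvert N(a)-(q-2)\rvert\leq 2\sqrt q$ forces an affine point once $q\geq 13$, and the only prime powers $\equiv1\pmod3$ below $13$ are precisely the two exceptions $q=4,7$, which you verify by hand. The Jacobi-sum variant you sketch is an equally valid alternative and gives the same inequality; either route rests on exactly one non-elementary fact (Weil's bound for a Jacobi sum, equivalently Hasse for genus one).

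You also caught a genuine imprecision in the statement as printed: for $q\equiv0,2\pmod 3$ other than $q=2$ --- e.g.\ $q=3,5,8,9,11$ --- cubing is surjective, so $g(3,q)=1$, not $2$. The assertion ``$g(3,q)=2$ for $q\neq 2,4,7$'' should really read ``$g(3,q)\leq 2$, with equality iff $3\mid q-1$.'' For the application in the surrounding text (the inequality $\mu_q^{\mathrm{nrm}}(k)\leq g(3,q)\,\mu_q^{\mathrm{tri}}(k)$) only the upper bound $g(3,q)\leq 2$ is used, so the conclusions drawn are unaffected, but your corrected formulation is the accurate one.
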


In $\F_4=\F_2[\alpha]/(\alpha^2\!+\!\alpha\!+\!1)$, note that every nonzero $x$ satisfy $x^3=1$. As a consequence, neither $\alpha$ nor $\alpha^2$ can be written as a sum of cubes,
and $g(3,4)=\infty$ as asserted.

\begin{proposition}
A normalized trisymmetric multiplication algorithm for $\F_{q^k}$ over $\Fq$ exists for all $q$ and $k$, except precisely for $q=2$, $k\geq3$ and for $q=4$, $k\geq2$.
More precisely, we have:
\begin{enumerate}[(i),itemsep=1ex]
\item $\mu_q^{\mathrm{nrm}}(1)=1\quad$ for all $q$,
\item $\mu_2^{\mathrm{nrm}}(2)=3$,
\item $\mu_2^{\mathrm{nrm}}(k)=\infty\quad$ for $k\geq3$,
\item $\mu_4^{\mathrm{nrm}}(k)=\infty\quad$ for $k\geq2$,
\item $\mu_7^{\mathrm{nrm}}(k)\leq 3\,\mu_7^{\mathrm{tri}}(k)\quad$ for $k\geq2$, 
\item $\mu_q^{\mathrm{nrm}}(k)\leq 2\,\mu_q^{\mathrm{tri}}(k)\quad$ for $q\neq2,4,7$ and $k\geq2$.
\end{enumerate}
\end{proposition}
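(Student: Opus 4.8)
The plan is to deduce all six items from three facts already established: the inequalities $\mu_q(k)\leq\mu_q^{\mathrm{sym}}(k)\leq\mu_q^{\mathrm{tri}}(k)\leq\mu_q^{\mathrm{nrm}}(k)$ together with the bound $\mu_q^{\mathrm{nrm}}(k)\leq g(3,q)\,\mu_q^{\mathrm{tri}}(k)$ from~\ref{inegalites_tri_nrm}; the classification of when a trisymmetric algorithm exists, namely Proposition~\ref{existence_tri}; and the Waring values $g(3,7)=3$, $g(3,q)=2$ for $q\neq2,4,7$, $g(3,4)=\infty$ from Lemma~\ref{Waring}. Throughout I use the identification (via $a\mapsto t_a$) of a normalized trisymmetric algorithm of length $n$ with a decomposition $T=\sum_{j=1}^n l_j^{\tens 3}$ of the trace trilinear form $T(x,y,z)=\tr(xyz)$ as an honest \emph{sum} of $n$ cubes of linear forms $l_j\in(\F_{q^k})^\vee$. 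Once the individual items are in hand, the ``except precisely'' clause is just bookkeeping: a normalized algorithm exists for $k=1$ by~(i), for $(q,k)=(2,2)$ by~(ii), and for $q\notin\{2,4\}$ and any $k$ by~(v)--(vi) since then $\mu_q^{\mathrm{tri}}(k)$ is finite; and it fails exactly in the ranges $q=2,k\geq3$ (item~(iii)) and $q=4,k\geq2$ (item~(iv)).

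Items (i), (ii), (iii), (v), (vi) are quick. For (i): when $k=1$ the form is $(x,y,z)\mapsto xyz=t_1^{\tens 3}$, a single cube, and one cube is clearly needed, so $\mu_q^{\mathrm{nrm}}(1)=1$. For (ii): the decomposition $T=t_1^{\tens 3}+t_\alpha^{\tens 3}+t_{\alpha^2}^{\tens 3}$ recorded in~\ref{mu2tri2} is already a sum of cubes, giving $\mu_2^{\mathrm{nrm}}(2)\leq3$, while $\mu_2^{\mathrm{nrm}}(2)\geq\mu_2(2)=3$. For (iii): a normalized algorithm is in particular a trisymmetric one, and Proposition~\ref{existence_tri} says none of the latter exists for $q=2$, $k\geq3$. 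For (v) and (vi): Proposition~\ref{existence_tri} gives $\mu_q^{\mathrm{tri}}(k)<\infty$ for $q\neq2$, so the bound $\mu_q^{\mathrm{nrm}}(k)\leq g(3,q)\,\mu_q^{\mathrm{tri}}(k)$ with $g(3,7)=3$, respectively $g(3,q)=2$ for $q\neq2,4,7$, yields exactly the claimed inequalities.

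The substantial point, and the one I expect to need real work, is item (iv): showing $\mu_4^{\mathrm{nrm}}(k)=\infty$ for $k\geq2$, in contrast to $\mu_4^{\mathrm{tri}}(k)<\infty$. The obstruction is the arithmetic of $\F_4$: cubes in $\F_4$ are $\{0,1\}$, hence sums of cubes in $\F_4$ lie in $\F_2$. Consequently, evaluating any putative decomposition $T=\sum_j l_j^{\tens 3}$ on the diagonal forces $\tr_{\F_{4^k}/\F_4}(x^3)=T(x,x,x)=\sum_j l_j(x)^3\in\F_2$ for every $x\in\F_{4^k}$. So it suffices to produce, for each $k\geq2$, an $x$ with $\tr_{\F_{4^k}/\F_4}(x^3)\notin\F_2$. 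Since $u\in\F_4$ lies in $\F_2$ precisely when $u^2=u$, and $\tr(w)^2=\tr(w^2)$, this is equivalent to showing that the polynomial $\tr_{\F_{4^k}/\F_4}(x^3)+\tr_{\F_{4^k}/\F_4}(x^6)$, which (after reducing exponents modulo $4^k-1$) has degree $<4^k$, is not the zero polynomial — for then it cannot vanish on all of $\F_{4^k}$.

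To finish (iv) I would argue that the $4$-cyclotomic cosets of $3$ and of $6$ modulo $4^k-1$ are each of full size $k$, contain no internal repetitions, and are disjoint from one another; the disjointness reduces to the elementary impossibility of $4^k-1\mid 3(4^j-2)$ for $0\leq j<k$ (a one-line size estimate, using $k\geq2$), and the other two points are of the same flavour. Then $\tr_{\F_{4^k}/\F_4}(x^3)+\tr_{\F_{4^k}/\F_4}(x^6)$ is a sum of $2k$ distinct monomials with unit coefficients, hence nonzero, and we are done; alternatively one reaches the same conclusion from a Weil/Gauss-sum bound on $\sum_{x}\psi(\tr(x^3))$, which already fails to attain $4^k$ once $k\geq2$. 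This diagonal-evaluation obstruction together with the non-vanishing of that polynomial is the only genuinely new ingredient; everything else is assembly of~\ref{inegalites_tri_nrm}, Proposition~\ref{existence_tri}, and Lemma~\ref{Waring}.
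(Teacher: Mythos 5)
Your proof is correct. Items (i), (ii), (iii), (v), (vi) proceed exactly as in the paper: (i) is immediate, (ii) combines the explicit cube decomposition in \ref{mu2tri2} with $\mu_2(2)=3$, (iii) follows from Proposition~\ref{existence_tri}, and (v)--(vi) follow from the bound $\mu_q^{\mathrm{nrm}}(k)\leq g(3,q)\,\mu_q^{\mathrm{tri}}(k)$ of \ref{inegalites_tri_nrm} together with the Waring values of Lemma~\ref{Waring}.

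Item (iv) is where your route genuinely diverges from the paper's, though both proofs open the same way: evaluating a hypothetical normalized decomposition of $T$ on the diagonal forces $\tr(x^3)=\sum_j\tr(a_jx)^3$ to be a sum of cubes in $\F_4$, hence an element of $\F_2$, for every $x\in\F_{4^k}$; the contradiction then comes from exhibiting some $x$ with $\tr(x^3)\notin\F_2$. You prove this last fact directly, showing the Artin--Schreier test polynomial $\tr(x^3)+\tr(x^6)$ is not identically zero by checking that the exponents $3\cdot4^j$ and $6\cdot4^j$, reduced modulo $4^k-1$, yield $2k$ distinct monomials once $k\geq2$ (your size estimate $4^k-1\nmid 3(4^j-2)$ is the key point, and it is correct). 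The paper instead uses a shorter ``double Waring'' argument that avoids the cyclotomic bookkeeping entirely: pick $\alpha\in\F_4\moins\F_2$, use surjectivity of $\tr$ to write $\alpha=\tr(z)$ for some $z\in\F_{4^k}$, then apply Lemma~\ref{Waring} a second time --- now in the \emph{extension} field $\F_{4^k}$ rather than the base field --- to write $z=x^3+y^3$ (legitimate since $4^k\neq2,4,7$ for $k\geq2$), so that $\alpha=\tr(x^3)+\tr(y^3)$ would be a sum of cubes in $\F_4$, contradicting $g(3,4)=\infty$. Your version is more explicit about the obstruction and stands on its own; the paper's reuses the Waring input already in play and is shorter. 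Both are valid, and your remark that a Weil/Gauss-sum estimate would also work is accurate, if heavier than needed.
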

\begin{proof}
Item (i) is obvious, (ii) comes from \ref{mu2tri2}, and (iii)(v)(vi) from Proposition~\ref{existence_tri} joint with the discussion in~\ref{inegalites_tri_nrm} and Lemma~\ref{Waring}.

Now to prove (iv) we have to show that, for $k\geq2$, there is no normalized multiplication algorithm for $\F_{4^k}$ over $\F_4$.
We proceed by contradiction, so we suppose we have a decomposition
\beq
T=\sum_{1\leq j\leq n} t_{a_j}^{\tens 3}
\eeq
which implies, for any $x\in\F_{4^k}$,
\beq
\tr(x^3)=\sum_{1\leq j\leq n}\tr(a_jx)^3.
\eeq
Now let $\alpha\in\F_4$ with $\alpha^2=\alpha+1$. The trace function $\tr:\F_{4^k}\longto\F_4$ is surjective, so $\alpha=\tr(z)$ for some $z\in\F_{4^k}$.
Moreover, by Lemma~\ref{Waring}, we can write $z=x^3+y^3$ as a sum of two cubes in $\F_{4^k}$.
So we conclude that
\beq
\alpha=\tr(x^3+y^3)=\sum_{1\leq j\leq n}(\tr(a_jx)^3+\tr(a_jy)^3)
\eeq
is a sum of cubes in $\F_4$, in contradiction with Lemma~\ref{Waring} and the discussion following it.
\end{proof}

The more constraints we put on the structure of the algorithms, the smaller the set of such algorithms is, and hopefully the lower the complexity of search in this set should be.
This makes one wonder whether an adaptation of the methods of~\cite{BDEZ} could allow one to succesfully compute the exact values of $\mu_q^{\mathrm{tri}}(k)$ and $\mu_q^{\mathrm{nrm}}(k)$ for a not-so-small range of $k$,
and find the corresponding optimal algorithms.

\refstepcounter{section}
\section*{Appendix \thesection: On symmetric multilinearized polynomials}
\label{sect_multilinearized}

\entry
\label{actions_A}
Let $\cA$ be the algebra of all functions from $(\F_{q^r})^t$ to $\F_{q^r}$.
It is easily checked that any such function can be represented as a polynomial function, and moreover, since elements of $\F_{q^r}$ satisfy $x^{q^r}=x$,
we have a natural identification
\beq
\cA=\F_{q^r}[x_1,\dots,x_t]/(x_1^{q^r}-x_1,\dots,x_t^{q^r}-x_t).
\eeq
Often we will identify an element $f\in\cA$ with its (unique) representative of minimum degree in $\F_{q^r}[x_1,\dots,x_t]$. Likewise we identify $\Z/r\Z$ with $\{0,1,\dots,r-1\}$.

The symmetric group $\fS_t$ acts linearly on $\cA$, by permutation of the variables:
for $\sigma\in\fS_t$, $f\in\cA$, and $(u_1,\dots,u_t)\in(\F_{q^r})^t$,
we set
\beq
({}^\sigma\!f)(u_1,\dots,u_t)=f(u_{\sigma(1)},\dots,u_{\sigma(t)}).
\eeq
Also, the Frobenius $f\mapsto f^q$ defines an automorphism of $\cA$ over $\Fq$, of order $r$, hence an action of $\Z/r\Z$ on $\cA$,
where $j\in\Z/r\Z$ acts as $f\mapsto f^{q^j}$.

These two actions commute, so $\cA$ is equipped with an action of
\beq
G=\mathfrak{S}_t\times\Z/r\Z,
\eeq
the invariants of which are the symmetric functions on $(\F_{q^r})^t$ with values in $\Fq$.

Note that the action of $\mathfrak{S}_t$ is linear over $\F_{q^r}$, while the action of $\Z/r\Z$ (hence that of $G$)
is only linear over $\Fq$.

\begin{definition}
The $t$-multilinearized polynomials with coefficients in $\F_{q^r}$ over $\Fq$ are the polynomials of the form
\beq
\sum_{0\leq i_1,\dots,i_t\leq r-1} a_{i_1,\dots,i_r}x_1^{q^{i_1}}\cdots x_t^{q^{i_t}}
\eeq
with $a_{i_1,\dots,i_r}\in\F_{q^r}$.
\end{definition}

These $t$-multilinearized polynomials form a $\F_{q^r}$-linear subspace
\beq
\cB\subset\cA,
\eeq
of dimension $r^t$ over $\F_{q^r}$ (hence also of dimension $r^{t+1}$ over $\Fq$).

It is easily checked that $\cB$ coincides precisely with the space of functions from $(\F_{q^r})^t$ to $\F_{q^r}$ that are $t$-multilinear over $\Fq$.

For $t=1$, one retrieves the notion of linearized polynomials, which is an important tool in the theory of finite fields and in coding theory. 
For $t=2$, bilinearized polynomials have also been introduced to solve various problems in bilinear algebra,
as illustrated in~\cite{SL84} and~\cite{agis}.
Our aim here is to extend some of the results from~\cite{agis} to arbitrary $t$.

More precisely, in Theorem~\ref{th_poly_descr_StF} we construct a family of \emph{homogeneous symmetric} $t$-multilinearized polynomials
\beq
S_I:(\F_{q^r})^t\longto\F_{q^{r_I}}
\eeq
(the index $I$ ranges in a certain set $\cS$ and is essentially the multidegree of $S_I$)
taking values in an intermediate field $\F_{q^{r_I}}$, and satisfying the following \emph{universal} property:
for any $\Fq$-vector space $V$, and for any map
\beq
F:(\F_{q^r})^t\longto V
\eeq
symmetric $t$-multilinear over $\Fq$, there is a unique family of $\Fq$-linear maps
\beq
f_I: \F_{q^{r_I}}\longto V
\eeq
such that
\beq
F=\sum_{I\in\cS}f_I\circ S_I.
\eeq
Then in Theorem~\ref{borne_equidistr} (or more precisely in Corollary~\ref{borne_degSI}) we give an upper bound on the degrees of the $S_I$.

\mysubsection{Polynomial description of symmetric powers of an extension field}

\vspace{.5\baselineskip}

First we state (without proof) two easy results on group actions.

\begin{lemma}
\label{base_invariants}
Let $\Gamma$ be a finite group acting on a finite set $\cP$,
let $\cR\subset\cP$ be a set of representatives for the action, and for $I\in\cR$ let $o(I)\subset\cP$ be its orbit. 
Suppose also $\Gamma$ acts linearly on a vector space $V$ with basis $(b_I)_{I\in\cP}$,
such that
\beq
\gamma\cdot b_I=b_{\gamma I}
\eeq
for all $\gamma\in\Gamma$, $I\in\cP$.
Now for $I\in\cR$, set
\beq
s_I=\sum_{J\in o(I)}b_J.
\eeq
Then, the subspace of invariants $V^\Gamma$ admits the $(s_I)_{I\in\cR}$ as a basis.
\end{lemma}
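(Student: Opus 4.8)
The plan is to verify the three defining properties of a basis directly: that each $s_I$ lies in $V^\Gamma$, that the family $(s_I)_{I\in\cR}$ is linearly independent, and that it spans $V^\Gamma$. Throughout I would use the one structural fact available, namely that $\cP$ is the disjoint union of the orbits $o(I)$ for $I\in\cR$, and that $\Gamma$ permutes each orbit among itself.

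First I would check invariance. Fix $I\in\cR$ and $\gamma\in\Gamma$. Since $\gamma$ acts on $\cP$ by a permutation that preserves the orbit $o(I)$, we get
\beq
\gamma\cdot s_I=\sum_{J\in o(I)}\gamma\cdot b_J=\sum_{J\in o(I)}b_{\gamma J}=\sum_{J'\in o(I)}b_{J'}=s_I,
\eeq
so $s_I\in V^\Gamma$ for every $I\in\cR$.

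Next, linear independence is immediate from disjointness of orbits: if $I\neq I'$ in $\cR$ then $o(I)\cap o(I')=\emptyset$, so the basis vectors occurring (each with coefficient $1$) in $s_I$ are disjoint from those occurring in $s_{I'}$. Hence a combination $\sum_{I\in\cR}\lambda_I s_I$, expanded in the basis $(b_I)_{I\in\cP}$, has coefficient $\lambda_I$ on each $b_J$ with $J\in o(I)$, and can vanish only if all $\lambda_I=0$. Finally, for spanning I would take an arbitrary $v=\sum_{I\in\cP}c_I b_I\in V^\Gamma$ and compare coefficients after applying $\gamma$: since $\gamma\cdot v=\sum_{I\in\cP}c_I b_{\gamma I}=\sum_{I\in\cP}c_{\gamma^{-1}I}b_I$, invariance forces $c_{\gamma^{-1}I}=c_I$ for all $\gamma\in\Gamma$ and all $I\in\cP$; that is, $I\mapsto c_I$ is constant on each $\Gamma$-orbit. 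Writing $c_I$ for the common value on $o(I)$ when $I\in\cR$, we conclude $v=\sum_{I\in\cR}c_I s_I$.

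There is no real obstacle here; the argument is entirely formal. The only points requiring a bit of care are the index bookkeeping in the coefficient comparison of the spanning step (keeping straight that it is $\gamma^{-1}$, not $\gamma$, that appears on the coefficients), and making sure the decomposition of $\cP$ into the disjoint orbits $o(I)$, $I\in\cR$, is invoked cleanly in each of the three steps. Combining the three steps gives that $(s_I)_{I\in\cR}$ is a basis of $V^\Gamma$, as claimed.
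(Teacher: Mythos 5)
Your proof is correct, and since the paper states this lemma without proof (introducing it as one of ``two easy results on group actions''), there is no paper argument to compare against; your three-step verification (invariance of each $s_I$, linear independence from disjointness of orbits, spanning via the observation that invariance forces the coefficient function $I\mapsto c_I$ to be constant on each orbit) is exactly the standard argument the author presumably had in mind. The one point you flagged, that the reindexing produces $c_{\gamma^{-1}I}$ rather than $c_{\gamma I}$, is handled correctly and is indeed the only place where care is needed.
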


\begin{lemma}
\label{action_produit}
Let the finite cyclic group $\Z/r\Z$ act on a finite set $\cR$, and let $\cS\subset\cR$ be a set of representatives for the action.
For each $I\in\cS$, let $r_I=\abs{o(I)}$ be the size of its orbit, so $r_I|r$ and $r_I\Z/r\Z\subset\Z/r\Z$ is the stabilizer subgroup of $I$.

Now set
\beq
\cP=\Z/r\Z\times\cR
\eeq
and let $\Z/r\Z$ act on this product set, on the first factor, by translation, and on the second factor, by the action we started with.
Then the action of $\Z/r\Z$ on $\cP$ is free, and it admits
\beq
\cT=\{(i,I)\,;\;0\leq i\leq r_I-1,\,I\in\cS\}\;\subset\cP
\eeq
as a set of representatives. Moreover we have
\beq
\abs{\cT}=\sum_{I\in\cS}r_I=\abs{\cR}.
\eeq
\end{lemma}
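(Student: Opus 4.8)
The plan is to verify, in order, the three assertions of the lemma: that the diagonal action of $\Z/r\Z$ on $\cP=\Z/r\Z\times\cR$ is free, that $\cT$ is a complete set of representatives, and that $\abs{\cT}=\sum_{I\in\cS}r_I=\abs{\cR}$. Freeness is immediate: if $j\in\Z/r\Z$ fixes some $(i,I)\in\cP$, then inspecting the first coordinate gives $i+j=i$ in $\Z/r\Z$, hence $j=0$; so every point has trivial stabilizer. Consequently each orbit has exactly $r$ elements, and the number of orbits equals $\abs{\cP}/r=(r\cdot\abs{\cR})/r=\abs{\cR}$.

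Next I would dispose of the cardinality identity and reduce the main point to a uniqueness statement. Since $\cS$ is a set of representatives for $\Z/r\Z$ acting on $\cR$, the orbits $o(I)$ for $I\in\cS$ partition $\cR$, and $\abs{o(I)}=r_I$ by definition, so $\sum_{I\in\cS}r_I=\abs{\cR}$; on the other hand $\cT=\{(i,I)\,;\,0\le i\le r_I-1,\ I\in\cS\}$ plainly has $\sum_{I\in\cS}r_I$ elements. In particular $\abs{\cT}$ already equals the number of orbits computed above, so in order to conclude that $\cT$ is a transversal it suffices to show that any two elements of $\cT$ lying in a common $\Z/r\Z$-orbit coincide.

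So suppose $(i,I),(i',I')\in\cT$ and $j\cdot(i,I)=(i',I')$ for some $j\in\Z/r\Z$. Projecting onto the second factor gives $jI=I'$, so $I$ and $I'$ lie in the same $\Z/r\Z$-orbit of $\cR$; as $\cS$ is a set of representatives, $I=I'$. Then $j$ stabilizes $I$, and by orbit--stabilizer the stabilizer of $I$ is the unique subgroup of $\Z/r\Z$ of index $r_I$, namely $r_I\Z/r\Z$; thus $j$ is represented by an integer divisible by $r_I$. Projecting onto the first factor gives $i+j=i'$ in $\Z/r\Z$; reducing this relation modulo $r_I$ (legitimate since $r_I\mid r$) and using $j\equiv 0\pmod{r_I}$ yields $i\equiv i'\pmod{r_I}$, which together with $0\le i,i'\le r_I-1$ forces $i=i'$ and hence $j=0$. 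Therefore distinct elements of $\cT$ lie in distinct orbits, and combined with the cardinality count this shows that $\cT$ is a complete set of representatives.

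I do not anticipate a genuine obstacle: the statement is a bookkeeping exercise about a diagonal action of a finite cyclic group on a product, and everything follows from orbit--stabilizer together with the freeness of the translation action on the first factor. The only step demanding a little care is the identification of the stabilizer of $I\in\cS$ with $r_I\Z/r\Z$ and the passage from $i+j=i'$ in $\Z/r\Z$ to $i=i'$, which is handled by reducing modulo the divisor $r_I$ of $r$.
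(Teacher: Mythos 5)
Your proof is correct. Note that the paper itself states this lemma without proof (it is explicitly flagged as one of ``two easy results on group actions'' left to the reader), so there is no written argument in the source to compare against; your argument fills that gap cleanly. The two pivots you identify are exactly the right ones: freeness of the translation action on the first factor gives $j=0$ from $i+j=i$, and the orbit--stabilizer identification of the stabilizer of $I\in\cS$ with $r_I\Z/r\Z$ lets you reduce $i+j=i'$ modulo $r_I$ to force $i=i'$. Combining the injectivity of the map $\cT\to\cP/(\Z/r\Z)$ with the cardinality count $\abs{\cT}=\sum_{I\in\cS}r_I=\abs{\cR}=\abs{\cP}/r$ then yields surjectivity for free, which is an efficient way to avoid exhibiting an explicit representative in each orbit.
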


Note that there are other possible choices for a set of representatives, for instance a more obvious one would be $\{0\}\times\cR$.
However, $\cT$ is the choice that will make the proof of Theorem~\ref{th_poly_descr_StF} below work.

\entry
\label{action_St}
To each $I=(i_1,\dots,i_t)\in(\Z/r\Z)^t$ we associate a monomial 
\beq
M_I=x^{q^I}=x_1^{q^{i_1}}\cdots x_t^{q^{i_t}},
\eeq
of degree
\beq
D_I=q^{i_1}+\cdots+q^{i_t}
\eeq
so $t\leq D_I\leq tq^{r-1}$. These form a basis of $\cB$ over $\F_{q^r}$.

As we saw in \ref{actions_A}, the symmetric group $\mathfrak{S}_t$ acts on $\cA$, and we let it act also on $(\Z/r\Z)^t$ by permutation of coordinates.
The map $I\mapsto M_I$ is compatible with these actions, in that
\beq
{}^\sigma M_I=M_{\sigma(I)}
\eeq
for $\sigma\in\mathfrak{S}_t$, so $\cB$ is stable under $\mathfrak{S}_t$.

Let $\cR\subset(\Z/r\Z)^t$ be the set of nonincreasing $t$-tuples of elements of $\Z/r\Z\simeq\{0,1,\dots,r-1\}$.
It has cardinality
\beq
\abs{\cR}=\binom{r+t-1}{t}.
\eeq
Clearly $\cR$ is a set of representatives for the action of $\mathfrak{S}_t$ on $(\Z/r\Z)^t$, so we have a bijection
\beq
\begin{array}{ccc}
\cR & \overset{\sim}{\longto} & (\Z/r\Z)^t/\mathfrak{S}_t\\
I & \mapsto & o(I)
\end{array}
\eeq
where $o(I)\subset(\Z/r\Z)^t$ is the orbit of $I$ under $\mathfrak{S}_t$.
Now for $I\in\cR$ we set
\beq
S_I=\sum_{J\in o(I)}M_J,
\eeq
so $S_I$ is a symmetric homogeneous polynomial of degree $D_I$ in $t$ variables over $\F_{q^r}$, which is also symmetric $t$-multilinear over $\Fq$.
The number $\abs{o(I)}$ of monomials in $S_I$ is a divisor of $t!$ (it can be a strict divisor if $I$ has repeated elements).

Then, by Lemma~\ref{base_invariants}, the subspace of invariants
\beq
\cB^{\mathfrak{S}_t}=\Sym^t_{\Fq}(\F_{q^r};\F_{q^r}) 
\eeq
admits the $(S_I)_{I\in\cR}$ as a basis over $\F_{q^r}$.


\entry
\label{action_Frob}
As we saw in \ref{actions_A}, the cyclic group $\Z/r\Z$ acts on $\cA$ by Frobenius,
and we let it act also on $(\Z/r\Z)^t$ diagonally by translation, that is, we let $j\in\Z/r\Z$ act as $I=(i_1,\dots,i_t)\mapsto I+j=(i_1+j,\dots,i_t+j)$
where addition is modulo $r$.
The map $I\mapsto M_I$ is compatible with these actions, in that
\beq
(M_I)^{q^j}=M_{I+j}
\eeq
so $\cB$ is stable under $\Z/r\Z$.

This diagonal action of $\Z/r\Z$ on $(\Z/r\Z)^t$ commutes with that of $\mathfrak{S}_t$, so it defines an action of $\Z/r\Z$ on $\cR\simeq(\Z/r\Z)^t/\mathfrak{S}_t$,
which can be written as
\beq
\begin{array}{ccc}
\Z/r\Z\times\cR & \longto & \cR\\
(j,I) & \mapsto & I\boxplus j 
\end{array}
\eeq
where $I\boxplus j$ is the representative of $I+j$ in $\cR$.
More precisely, $I+j$ need not be nonincreasing since addition is modulo $r$, but there is a (cyclic) permutation that puts it back in nonincreasing order, the result of which is $I\boxplus j$.
(Example: $r=10$, $t=5$, $\,I=(8,7,4,2,2)$, $\,I+3=(1,0,7,5,5)$, $\,I\boxplus 3=(7,5,5,1,0)$.)

By construction we then have
\beq
S_I^{q^j}=S_{I\boxplus j}
\eeq
in $\cB^{\mathfrak{S}_t}$.

Choosing a set of representatives
\beq
\cS\subset\cR
\eeq
for this action $\boxplus$ of $\Z/r\Z$ on $\cR$,
we note that
\beq
\cS\simeq\cR/(\Z/r\Z)\simeq((\Z/r\Z)^t/\mathfrak{S}_t)/(\Z/r\Z)\simeq(\Z/r\Z)^t/G, 
\eeq
so $\cS$ is also a set of representatives for the action of $G$ on $(\Z/r\Z)^t$.

\entry
\label{th_poly_descr_StF}
For each $I\in\cS$, we let $r_I$ be the size of its orbit under $\Z/r\Z$ in $\cR$, so $r_I|r$ and $r_I\Z/r\Z\subset\Z/r\Z$ is the stabilizer subgroup of $I$.
We then have $S_I^{q^{r_I}}=S_I$, so in fact $S_I$ defines a map
\beq
S_I:(\F_{q^r})^t\longto\F_{q^{r_I}}
\eeq
whose image lies in the subfield $\F_{q^{r_I}}\subset\F_{q^r}$.

\begin{theorem*}
With these notations, the map
\beq
\Psi:(\F_{q^r})^t\longto\prod_{I\in\cS}\F_{q^{r_I}}
\eeq
whose components are the $S_I$ for $I\in\cS$, is symmetric $t$-multilinear over $\Fq$,
and moreover it is \emph{universal} for this property.

In particular, it induces an isomorphism
\beq
S_{\Fq}^t\F_{q^r}\simeq\prod_{I\in\cS}\F_{q^{r_I}}
\eeq
of $\Fq$-vector spaces.
\end{theorem*}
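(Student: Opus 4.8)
The plan is to combine the two group-theoretic lemmas established just above with the symmetric multilinearity of the construction, then deduce the dimension count that pins down the isomorphism type. First I would verify that $\Psi$ is symmetric $t$-multilinear over $\Fq$: each component $S_I$ is a $t$-multilinearized polynomial (a sum of monomials $M_J$ with $J\in o(I)$, each of which is $\Fq$-multilinear since $x\mapsto x^{q^i}$ is $\Fq$-linear), and $S_I$ is symmetric because $o(I)$ is an $\mathfrak{S}_t$-orbit, so ${}^\sigma S_I=S_I$; this is essentially a restatement of what was observed in~\ref{action_St}.

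Next comes the universality, which I would prove in two stages. Stage one: show that the $S_I$, $I\in\cR$, form an $\F_{q^r}$-basis of $\cB^{\mathfrak{S}_t}=\Sym^t_{\Fq}(\F_{q^r};\F_{q^r})$ — this is exactly Lemma~\ref{base_invariants} applied to $\Gamma=\mathfrak{S}_t$ acting on $\cP=(\Z/r\Z)^t$ with $V=\cB$ and $b_I=M_I$, using the compatibility ${}^\sigma M_I=M_{\sigma(I)}$. Stage two: invoke the $\Z/r\Z$-action. A general symmetric $t$-multilinear map $F:(\F_{q^r})^t\longto V$ with $V$ a $\Fq$-vector space is, after choosing a $\Fq$-basis of $V$, a tuple of $\Fq$-valued symmetric $t$-multilinear forms, hence lies in $\Sym^t_{\Fq}(\F_{q^r};\F_q)$-valued data; but $\Sym^t_{\Fq}(\F_{q^r};\Fq)=(\cB^{\mathfrak{S}_t})^{\Z/r\Z}$, the $\Fq$-points, so applying Lemma~\ref{base_invariants} a second time to the residual $\Z/r\Z$-action on the index set $\cR$ with basis $(S_I)_{I\in\cR}$ shows that $\Sym^t_{\Fq}(\F_{q^r};\Fq)$ has, over $\Fq$, the ``trace-like'' invariants $\sum_{j=0}^{r_I-1}S_{I\boxplus j}\lambda^{q^j}$ as building blocks; equivalently, composing $S_I$ with the various $\Fq$-linear forms $\F_{q^{r_I}}\longto\Fq$ and letting $I$ range over $\cS$ yields an $\Fq$-basis of $\Sym^t_{\Fq}(\F_{q^r};\Fq)$. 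The cleanest way to package this is: the $\Fq$-linear map $\bigoplus_{I\in\cS}\Hom_{\Fq}(\F_{q^{r_I}},V)\longto\Sym^t_{\Fq}(\F_{q^r};V)$ sending $(f_I)$ to $\sum_I f_I\circ S_I$ is an isomorphism. Injectivity follows because if $\sum_I f_I\circ S_I=0$ then, post-composing with linear forms on $V$ and using the $\Fq$-linear independence of the functions obtained from Lemma~\ref{base_invariants} (now as $\Z/r\Z$-orbit sums over the basis $(S_I)_{I\in\cR}$), all $f_I$ vanish; surjectivity is a dimension count, or can be read off directly from the second application of Lemma~\ref{base_invariants}.

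For the final ``in particular'' clause, I would use the universal property with $V$ ranging over all $\Fq$-vector spaces to identify $\Psi$ with the universal symmetric $t$-multilinear map, i.e.\ with the canonical map $(\F_{q^r})^t\longto S^t_{\Fq}\F_{q^r}$, whence $S^t_{\Fq}\F_{q^r}\simeq\prod_{I\in\cS}\F_{q^{r_I}}$ as $\Fq$-vector spaces. As a sanity check on dimensions: $\dim_{\Fq}\prod_{I\in\cS}\F_{q^{r_I}}=\sum_{I\in\cS}r_I=\abs{\cR}=\binom{r+t-1}{t}=\dim_{\Fq}S^t_{\Fq}\F_{q^r}$, where the middle equality is precisely Lemma~\ref{action_produit} (with the roles of $\cR$ there played by our $\cR$) and the last is the standard formula for the dimension of a symmetric power. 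I expect the main obstacle to be bookkeeping in stage two of universality — correctly tracking how the residual $\Z/r\Z$-action on $\cR$ interacts with the field-of-definition drop from $\F_{q^r}$ to $\F_{q^{r_I}}$, and checking that the particular set of representatives $\cT$ singled out in Lemma~\ref{action_produit} is what makes the independence argument go through; everything else is formal manipulation with the two lemmas and the universal property of symmetric powers.
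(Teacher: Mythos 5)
Your overall strategy is the same as the paper's: express $\Sym^t_{\Fq}((\F_{q^r})^\vee)$ as the $G$-invariants of $\cB$, first taking $\mathfrak{S}_t$-invariants (stage one, which you do correctly and matches the paper) and then taking $\Z/r\Z$-invariants (stage two). But stage two, as written, has a genuine gap that you flag but do not close, and it is not mere bookkeeping.

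The problem is that Lemma~\ref{base_invariants} requires a \emph{$\Fq$-basis} $(b_I)_{I\in\cP}$ of the vector space satisfying $\gamma\cdot b_I=b_{\gamma I}$, and the Frobenius action on $\cB^{\mathfrak{S}_t}$ is only $\Fq$-linear, not $\F_{q^r}$-linear. The collection $(S_I)_{I\in\cR}$ is a basis of $\cB^{\mathfrak{S}_t}$ over $\F_{q^r}$, not over $\Fq$, so you cannot ``apply Lemma~\ref{base_invariants} a second time... with basis $(S_I)_{I\in\cR}$'' to the Frobenius action. Your expression $\sum_j S_{I\boxplus j}\lambda^{q^j}$ hints at the right idea, but as stated it is not an instance of the lemma, and you would still need to pick the scalars $\lambda$ so that you get a genuine $\Fq$-basis stable under Frobenius. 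This is where the actual content of the proof lies. The paper resolves it by choosing a \emph{normal} basis $\gamma,\gamma^q,\dots,\gamma^{q^{r-1}}$ of $\F_{q^r}$ over $\Fq$, so that $(\gamma^{q^i}S_I)_{(i,I)\in\Z/r\Z\times\cR}$ is an $\Fq$-basis of $\cB^{\mathfrak{S}_t}$ and $(\gamma^{q^i}S_I)^{q^j}=\gamma^{q^{i+j}}S_{I\boxplus j}$, i.e.\ it is permuted exactly as the product set $\cP=\Z/r\Z\times\cR$ described in Lemma~\ref{action_produit}. Applying Lemma~\ref{base_invariants} to \emph{this} basis, with the set of representatives $\cT=\{(i,I)\,;\;0\leq i\leq r_I-1,\,I\in\cS\}$ from Lemma~\ref{action_produit}, produces the invariants $F_{i,I}=\sum_{j\in\Z/r\Z}(\gamma^{q^i}S_I)^{q^j}$; grouping the $j$ by residue mod $r_I$ and using the invariance of $S_I$ under $r_I\Z/r\Z$ is what lets these be rewritten as $\phi_{i,I}\circ S_I$ with $\phi_{i,I}$ a trace form on $\F_{q^{r_I}}$, and it is precisely the choice of $\cT$ (not the more obvious $\{0\}\times\cR$) that makes the $\phi_{i,I}$, for fixed $I$, run through a $\Fq$-basis of $(\F_{q^{r_I}})^\vee$. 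So the normal-basis trick and the tailored set of representatives $\cT$ are the two ingredients you need to make ``post-composing with linear forms on $V$ and using $\Fq$-linear independence'' an actual argument rather than an assertion.
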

\begin{proof}
We have to show that, for a certain basis $B$ of the dual $\Fq$-vector space $(\prod_{I\in\cS}\F_{q^{r_I}})^\vee$,
the $(b\circ\Psi)_{b\in B}$ form a basis of $(S_{\Fq}^t\F_{q^r})^\vee=\Sym^t_{\Fq}((\F_{q^r})^\vee)$ over $\Fq$.
For this we would like, ultimately, to apply Lemma~\ref{base_invariants} to the action of $\Z/r\Z$ on $\cB^{\mathfrak{S}_t}$. Indeed, as already noted $\cB^{\mathfrak{S}_t}=\Sym^t_{\Fq}(\F_{q^r};\F_{q^r})$, so its subspace of invariants under Frobenius is
\beq
\cB^G=(\cB^{\mathfrak{S}_t})^{\Z/r\Z}=\Sym^t_{\Fq}(\F_{q^r};\Fq)=\Sym^t_{\Fq}((\F_{q^r})^\vee).
\eeq 
Now, since the action of $\Z/r\Z$ is only $\Fq$-linear, we need a basis of $\cB^{\mathfrak{S}_t}$ over $\Fq$, stable under the action.

First choose a $\gamma\in\F_{q^r}$ such that
\beq
\gamma,\gamma^q,\dots,\gamma^{q^{r-1}}
\eeq
is a (normal) basis of $\F_{q^r}$ over $\Fq$.
Then, given $I\in\cS$, we set
\beq
\beta_{i,I}=\sum_{j\in\Z/r\Z,\,j\equiv i\bmod r_I}\gamma^{q^j}
\eeq
for $0\leq i\leq r_I-1$, which happen to form a basis of $\F_{q^{r_I}}$ over $\Fq$.
This is easily checked directly, but can also be viewed as a consequence of Lemma~\ref{base_invariants} (with $\Gamma=r_I\Z/r\Z$, $\cP=\Z/r\Z$, $V=\F_{q^r}$, $V^\Gamma=\F_{q^{r_I}}$).

In~\ref{action_St} we saw that the $(S_I)_{I\in\cR}$ form a basis of $\cB^{\mathfrak{S}_t}$ over $\F_{q^r}$.
It then follows that the
\beq
(\gamma^{q^i}S_I)_{i\in\Z/r\Z,I\in\cR}
\eeq
form a basis of $\cB^{\mathfrak{S}_t}$ over $\Fq$.
This basis is stable under the action of $\Z/r\Z$ on $\cB^{\mathfrak{S}_t}$ by Frobenius, more precisely we have
\beq
(\gamma^{q^i}S_I)^{q^j}=\gamma^{q^{i+j}}S_{I\boxplus j}.
\eeq
This means, our basis is indexed by
\beq
\cP=\Z/r\Z\times\cR,
\eeq
and $\Z/r\Z$ acts on this product set, on the first factor, by translation, and on the second factor, by the action $\boxplus$.
Let
\beq
\cT=\{(i,I)\,;\;0\leq i\leq r_I-1,\,I\in\cS\}\;\subset\cP
\eeq
be the set of representatives given by Lemma~\ref{action_produit}.

Now we can apply Lemma~\ref{base_invariants}, which gives that the
\beq
F_{i,I}=\sum_{j\in\Z/rZ}(\gamma^{q^i}S_I)^{q^j},
\eeq
for $(i,I)\in\cT$, form a basis of $(\cB^{\mathfrak{S}_t})^{\Z/r\Z}=\Sym^t_{\Fq}((\F_{q^r})^\vee)$ over $\Fq$.
Using the invariance of $S_I$ under $r_I\Z/r\Z$ and grouping together the $j$ according to their class modulo $r_I$, these can also be written
\beq
F_{i,I}=\sum_{j\in\Z/r_IZ}(\beta_{i,I}S_I)^{q^j}=\phi_{i,I}\circ S_I=\phi_{i,I}\circ\pi_I\circ\Psi
\eeq
where
\beq
\begin{array}{cccc}
\phi_{i,I}: & \F_{q^{r_I}} & \longto & \Fq\\
& x & \mapsto & \tr_{\F_{q^{r_I}}/\Fq}(\beta_{i,I}x)
\end{array}
\eeq
is the trace linear form deduced from $\beta_{i,I}$, and
\beq
\pi_I:\prod_{I\in\cS}\F_{q^{r_I}}\surj\F_{q^{r_I}} 
\eeq
is projection on the $I$-th factor.

Now, for fixed $I$, the $\beta_{i,I}$ form a basis of $\F_{q^{r_I}}$ over $\Fq$, so the $\phi_{i,I}$ form a basis of $(\F_{q^{r_I}})^\vee$ over $\Fq$.
Hence as $i$ and $I$ vary, the $\phi_{i,I}\circ\pi_I$ form a basis of $(\prod_{I\in\cS}\F_{q^{r_I}})^\vee$ over $\Fq$.
This is the basis $B$ we were looking for at the beginning of the proof.
\end{proof}

As a double check, Lemma~\ref{action_produit} also gives directly
\beq
\dim_{\Fq}\prod_{I\in\cS}\F_{q^{r_I}}=\sum_{I\in\cS}r_I=\abs{\cR}=\binom{r+t-1}{t}=\dim_{\Fq}S_{\Fq}^t\F_{q^r}.
\eeq
Also we note that Burnside's lemma allows us to compute $\abs{\cS}=\abs{\cR/(\Z/r\Z)}=\frac{1}{r}\sum_{d|\gcd(r,t)}\Eulerphi(d)\binom{(r+t)/d-1}{t/d}$,
although this will not be needed in the sequel.

\mysubsection{Equidistributed beads on a necklace}

\entry
Recall from \ref{action_St}-\ref{action_Frob} we are interested in the set $\cR=\cR_{r,t}\subset(\Z/r\Z)^t$ of nonincreasing $t$-tuples of elements of $\Z/r\Z\simeq\{0,1,\dots,r-1\}$,
of cardinality $\abs{\cR_{r,t}}=\binom{r+t-1}{t}$, modulo the action $\boxplus$ of $\Z/r\Z$, inherited from the diagonal action of $\Z/r\Z$ on $(\Z/r\Z)^t$ by translation.

There are several ways to interpret this object. For instance, we can also view it as the set of multisets of cardinality $t$ of elements of $\Z/r\Z$,
or as the set of vectors in $\N^r$ that sum to $t$ (identify a multiset with its characteristic vector), with the natural action of $\Z/r\Z$ by cyclic permutation.
So, in a sense, the quotient set $\cR_{r,t}/(\Z/r\Z)$ describes all the possible arrangements of $r$ beads with weight in $\N$ into a circular necklace of total weight $t$.

We introduce a particular element $I_{r,t}\in\cR_{r,t}$, which corresponds to the weight being equidistributed on the necklace:
\beq
\begin{split}
I_{r,t}&=\left(\left\lfloor\frac{(t-1)r}{t}\right\rfloor,\,\left\lfloor\frac{(t-2)r}{t}\right\rfloor,\,\dots,\left\lfloor\frac{r}{t}\right\rfloor,\,0\right)\\
&=\left(r-\left\lceil\frac{r}{t}\right\rceil,\,r-\left\lceil\frac{2r}{t}\right\rceil,\,\dots,r-\left\lceil\frac{(t-1)r}{t}\right\rceil,\,0\right)
\end{split}
\eeq

Equip $\cR_{r,t}$ with the lexicographic order, so for $I=(i_1,\dots,i_t)$ and $J=(j_1,\dots,j_t)$ in $\cR_{r,t}$,
we set $I<J$ if and only if there exists an index $a$ such that $i_b=j_b$ for all $b<a$, and $i_a<j_a$.


\begin{theorem}
\label{borne_equidistr}
Each orbit in $\cR_{r,t}/(\Z/r\Z)$ admits a representative $I\leq I_{r,t}$.
\end{theorem}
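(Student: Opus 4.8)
The plan is to reinterpret $\cR_{r,t}$ as the set of multisets of size $t$ in $\Z/r\Z$ (a nonincreasing $t$-tuple corresponds to its underlying multiset), under which the action $\boxplus$ becomes the rotation action $M\mapsto M+j$. Fixing an orbit and choosing a multiset $M$ in it, the goal is then to exhibit a rotation $M'$ of $M$ whose associated element of $\cR_{r,t}$ is $\leq I_{r,t}$ for the lexicographic order.

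First I would record that, as a multiset, $I_{r,t}=\{\lfloor jr/t\rfloor:0\le j\le t-1\}$, and that a short floor/ceiling manipulation gives
\beq
|I_{r,t}\cap\{0,1,\dots,m-1\}|=\left\lceil\tfrac{mt}{r}\right\rceil\qquad\text{for }0\le m\le r.
\eeq
Next, an elementary observation: if a multiset $M'\subset\Z/r\Z$ of size $t$ has at least as many elements as $I_{r,t}$ in every initial segment $\{0,\dots,m-1\}$, then the nonincreasing tuple attached to $M'$ is \emph{pointwise} $\leq$ that of $I_{r,t}$, hence a fortiori lexicographically $\leq$. So it suffices to find a rotation $M'$ of $M$ with $|M'\cap\{0,\dots,m-1\}|\ge\lceil mt/r\rceil$ for all $m$.

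To produce $M'$ I would argue by minimization. Let $\mu:\Z/r\Z\to\Z_{\ge0}$ be the multiplicity function of $M$, extend it $r$-periodically to $\Z$, and set $F(x)=\sum_{0\le p<x}\mu(p)$, extended to all of $\Z$ by the telescoping relation $F(x)-F(y)=\sum_{y\le p<x}\mu(p)$. Then $F(x+r)=F(x)+t$, so $h(x):=F(x)-xt/r$ is $r$-periodic. Pick $c$ at which $h$ attains its minimum and let $M'$ be $M$ rotated so that position $c$ lands at $0$; then $|M'\cap\{0,\dots,m-1\}|=F(c+m)-F(c)$. Since $h(c+m)\ge h(c)$ for every $m$, we get $F(c+m)-F(c)\ge mt/r$, and as the left-hand side is an integer this forces $F(c+m)-F(c)\ge\lceil mt/r\rceil$; taking $m=1$ also yields $\mu(c)\ge1$, so $M'$ really does have a bead at $0$. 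This is exactly the inequality needed, and the proof is finished.

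I do not expect a genuine obstacle here: everything is elementary. The three points needing a little care are the periodicity bookkeeping for $F$ and $h$, the closed form for $|I_{r,t}\cap\{0,\dots,m-1\}|$, and the step from dominating initial-segment counts to a pointwise inequality of sorted tuples. The argument is uniform in $t$, so the regime $t\ge r$ — where every multiset has repeated entries and $\lceil mt/r\rceil$ can exceed $1$ — is handled with no extra effort.
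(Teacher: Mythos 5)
Your proof is correct, and it takes a genuinely different and more elementary route than the paper's.

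The paper argues via the gap sequence $g(J)$: it disposes of the cases where some gap exceeds $\lceil r/t\rceil$ (Lemma~\ref{gros_trou}) or falls below $\lfloor r/t\rfloor$ (Lemma~\ref{petit_trou}), and handles the remaining ``balanced'' case by passing to a derived sequence $\partial(J)\in\cR_{t,u}$, where $r=tQ-u$, and inducting on~$t$. You instead prove the stronger statement that some rotation $M'$ of $M$ is \emph{coordinate-wise} $\leq I_{r,t}$. Your engine is a cycle-lemma-style minimization: since $h(x)=F(x)-xt/r$ is $r$-periodic, a minimizer $c$ gives $F(c+m)-F(c)\geq mt/r$ for all $m\geq0$, hence (by integrality) $\geq\lceil mt/r\rceil$, which you correctly identify as the initial-segment count of $I_{r,t}$ for $0\leq m\leq r$. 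The translation from this prefix-domination to the coordinate-wise inequality of sorted tuples is also handled correctly (if $a_k>b_k$, taking $m=a_k$ contradicts $N_a(m)\geq N_b(m)$). The argument is self-contained, requires no induction and no case split, is uniform in $t$ and $r$, and yields the coordinate-wise refinement, which is strictly stronger than the lexicographic conclusion. The only point that deserves an explicit line in a write-up is the verification that $\abs{\{\,j:0\leq j\leq t-1,\ \lfloor jr/t\rfloor<m\,\}}=\lceil mt/r\rceil$ for $0\leq m\leq r$; this reduces to counting integers in $[0,mt/r)$, which is $\lceil mt/r\rceil$, together with the observation that $mt/r\leq t$ in this range so the constraint $j\leq t-1$ is not binding. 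This is exactly the kind of floor/ceiling bookkeeping the paper carries out at greater length inside the two gap lemmas.
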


Example: $r=10$, $t=7$, $\,I_{10,7}=(8,7,5,4,2,1,0)$.
Let $J=(9,8,7,6,4,3,1)$. Then in the orbit of $J$ we can find $J\boxplus4=(8,7,5,3,2,1,0)<I_{10,7}$,
and also $J\boxplus7=(8,6,5,4,3,1,0)<I_{10,7}$ (so there is not unicity in the Theorem).


Note that the Theorem is stated for multisets, but then, it applies \emph{a fortiori} to ordinary sets.
So it gives that, for any subset $J\subset\Z/r\Z$ of cardinality $\abs{J}=t$, there is a translate $I$ of $J$ in $\Z/r\Z$
whose largest elements are $\lfloor\frac{(t-1)r}{t}\rfloor,\dots,\lfloor\frac{(t-a+1)r}{t}\rfloor$,
but then the next one (if applicable) is smaller than $\lfloor\frac{(t-a)r}{t}\rfloor$.
Moreover, here we used the lexicographic order, but the very same method of proof
gives a similar result for the antilexicographic order: there is also a translate $I$ of $J$ in $\Z/r\Z$
whose smallest elements are $0,\lfloor\frac{r}{t}\rfloor,\dots,\lfloor\frac{(a-1)r}{t}\rfloor$,
but then the next one (if applicable) is smaller than $\lfloor\frac{ar}{t}\rfloor$.

The proof of the Theorem will require several intermediary results:

\begin{definition}
We say $J=(j_1,\dots,j_t)\in\cR_{r,t}$ is reduced if $j_t=0$.
We let $\cR_{r,t}^0\subset\cR_{r,t}$ be the set of reduced elements.
\end{definition}

For instance, $I_{r,t}\in\cR_{r,t}^0$ is reduced. Note also that forgetting the last coordinate gives $\cR_{r,t}^0\simeq\cR_{r,t-1}$, so $\abs{\cR_{r,t}^0}=\binom{r+t-2}{t-1}$.


\entry
Let
\beq
\cG_{r,t}\subset\N_{>0}\times\N^{t-1}
\eeq
be the set of $t$-tuples of integers $(g_1,\dots,g_t)$ with $g_1>0$ and sum
\beq
g_1+\cdots+g_t=r
\eeq
so $\abs{\cG_{r,t}}=\binom{r+t-2}{t-1}$.
Equip $\cG_{r,t}$ with the lexicographic order.

Given $J=(j_1,\dots,j_t)\in\cR_{r,t}^0$, so $j_t=0$, we define its $i$-th gap, $1\leq i\leq t$, as
\beq
g_i(J)=
\begin{cases}
r-j_1 & \textrm{for $i=1$}\\
j_{i-1}-j_i & \textrm{for $2\leq i\leq t$}
\end{cases}
\eeq
so in particular $g_t(J)=j_{t-1}$, and we let its gap sequence be
\beq
g(J)=(g_1(J),\dots,g_t(J))\in\cG_{r,t}.
\eeq
Then:

\begin{lemma}
\label{gap_bij}
This map $g:\cR_{r,t}^0\longto\cG_{r,t}$ is an order-reversing bijection.
\end{lemma}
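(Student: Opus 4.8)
The plan is to write down an explicit candidate inverse for $g$ and then to compare the two lexicographic orders one coordinate at a time; there is no deep content here, only careful bookkeeping, so I would keep the argument short.

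First I would check that $g$ actually lands in $\cG_{r,t}$. If $J=(j_1,\dots,j_t)\in\cR_{r,t}^0$ then $r-1\geq j_1\geq\cdots\geq j_t=0$, so $g_1(J)=r-j_1\geq 1$, $g_i(J)=j_{i-1}-j_i\geq 0$ for $2\leq i\leq t$, and the sum telescopes,
\[
g_1(J)+\cdots+g_t(J)=(r-j_1)+\sum_{i=2}^t(j_{i-1}-j_i)=r-j_t=r,
\]
so $g(J)\in\cG_{r,t}$. Then I would produce the inverse: given $(g_1,\dots,g_t)\in\cG_{r,t}$, set $j_i=r-(g_1+\cdots+g_i)$ for $1\leq i\leq t$. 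Since $g_1\geq 1$ we get $j_1\leq r-1$; since $j_{i-1}-j_i=g_i\geq 0$ the tuple is nonincreasing; and $j_t=r-(g_1+\cdots+g_t)=0$, so $(j_1,\dots,j_t)\in\cR_{r,t}^0$. One checks by the same telescoping identity that this assignment is a two-sided inverse of $g$, hence $g$ is a bijection.

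It remains to see $g$ is order-reversing. Suppose $J=(j_1,\dots,j_t)<J'=(j'_1,\dots,j'_t)$ in $\cR_{r,t}^0$ with first difference at index $a$, so $j_b=j'_b$ for $b<a$ and $j_a<j'_a$. Then $g_i(J)=g_i(J')$ for all $i<a$ (for $i=1$, when $a\geq 2$, this uses $j_1=j'_1$; for $2\leq i\leq a-1$ it uses $j_{i-1}=j'_{i-1}$ and $j_i=j'_i$), while at index $a$ one has $g_a(J)>g_a(J')$: if $a=1$ this is $r-j_1>r-j'_1$, and if $a\geq 2$ it is $j_{a-1}-j_a>j'_{a-1}-j'_a=j_{a-1}-j'_a$, in either case because $j_a<j'_a$. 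Hence $g(J)>g(J')$ lexicographically. Because $<$ on $\cR_{r,t}^0$ and the lexicographic order on $\cG_{r,t}$ are total orders and $g$ is a bijection, this one-sided implication upgrades automatically to the equivalence $J<J'\iff g(J)>g(J')$, which is the assertion.

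I do not expect a genuine obstacle. The only place demanding attention is the boundary index $a=1$ in the last paragraph, where $g_1$ is given by the formula $r-j_1$ rather than by $j_{i-1}-j_i$, so that case has to be treated separately from $a\geq 2$; everything else is routine telescoping.
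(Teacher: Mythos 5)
Your proof is correct and follows the same approach as the paper: write down the explicit inverse $j_i = r - (g_1 + \cdots + g_i)$, verify by telescoping that it is a two-sided inverse landing in $\cR_{r,t}^0$, and then check the order-reversal coordinatewise. The paper simply asserts this is "clearly order-reversing," and your verification (including the upgrade from a one-sided implication to an equivalence via totality and bijectivity) supplies exactly the missing details.
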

\begin{proof}
Indeed, to $(g_1,\dots,g_t)\in\cG_{r,t}$, the inverse map associates the $t$-uple $(j_1,\dots,j_t)\in\cR_{r,t}^0$ given by
\beq
j_i=r-(g_1+\cdots+g_i),
\eeq
which is clearly order-reversing.
\end{proof}

\entry
We let $\Z/t\Z$ act on $\N^t$ by cyclic permutation. More precisely, we let $\sigma_0$ be the identity on $\N^t$,
and for $g=(g_1,\dots,g_t)\in\N^t$ and $1\leq a\leq t-1$ we set
\beq
\sigma_a(g)=(g_{a+1},g_{a+2},\dots,g_t,g_1,g_2,\dots,g_a).
\eeq
This action ``almost'' preserves $\cG_{r,t}$: more precisely, for $g\in\cG_{r,t}$, we have $\sigma_a(g)\in\cG_{r,t}$ if and only if $g_{a+1}>0$.

\entry
\label{cyclic_gaps}
Let $J=(j_1,\dots,j_t)\in\cR_{r,t}^0$
and let $j>0$ be such that $j=j_a$ for an index $a$; if there are several choices for such an $a$, choose it maximum, so $j_a>j_{a+1}$, hence $g_{a+1}>0$.
Then:
\begin{lemma*}
With these notations, we have
\beq
g(J\boxplus(r-j_a))=\sigma_a(g(J)).
\eeq
\end{lemma*}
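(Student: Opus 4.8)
The plan is to prove this by directly unwinding the two operations involved, $\boxplus$ and $\sigma_a$, in two steps: first make the tuple $J\boxplus(r-j_a)$ completely explicit, then read off its gap sequence and recognize the cyclic shift.

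First I would describe $J\boxplus(r-j_a)$ coordinatewise. Write $s=r-j_a$ and add $s$ to every coordinate of $J$ modulo $r$. Since $J$ is nonincreasing with $0=j_t\le j_i\le r-1$, and since $j_a=j>0$ while $j_{a+1}<j_a$ (maximality of $a$ among indices with value $j_a$), the coordinate $j_i+s$ reduces modulo $r$ to $j_i-j_a\in\{0,\dots,j_1-j_a\}$ when $i\le a$, and to $j_i+r-j_a\in\{r-j_a,\dots,r-1\}$ when $i>a$. Because $j_1\le r-1$ forces $j_1-j_a<r-j_a$, these two value ranges are disjoint, so the nonincreasing reordering just places the block coming from the indices $i>a$ (already nonincreasing, as $j_{a+1}\ge\cdots\ge j_t$) first, followed by the block coming from $i\le a$ (already nonincreasing, as $j_1\ge\cdots\ge j_a$). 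Hence
\beq
J\boxplus(r-j_a)=(j_{a+1}+r-j_a,\ \dots,\ j_t+r-j_a,\ j_1-j_a,\ \dots,\ j_{a-1}-j_a,\ 0),
\eeq
which in particular lies again in $\cR^0_{r,t}$, its last entry being $j_a-j_a=0$.

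Then I would compute $g$ of this tuple. Writing $K=J\boxplus(r-j_a)=(k_1,\dots,k_t)$ with $k_i=j_{a+i}+r-j_a$ for $1\le i\le t-a$ and $k_{t-a+m}=j_m-j_a$ for $1\le m\le a$, telescoping the differences gives $g_i(K)=g_{a+i}(J)$ for $1\le i\le t-a$ (the case $i=1$ being $g_1(K)=r-k_1=j_a-j_{a+1}$), then $g_{t-a+1}(K)=k_{t-a}-k_{t-a+1}=j_t+r-j_1=r-j_1=g_1(J)$ using $j_t=0$, and finally $g_{t-a+m}(K)=j_{m-1}-j_m=g_m(J)$ for $2\le m\le a$. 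Reassembling, $g(J\boxplus(r-j_a))=(g_{a+1}(J),\dots,g_t(J),g_1(J),\dots,g_a(J))$, which is exactly $\sigma_a(g(J))$ by the definition of $\sigma_a$; note this indeed lands in $\cG_{r,t}$ because its first entry $g_{a+1}(J)=j_a-j_{a+1}$ is positive, as observed in the statement. The computation is routine telescoping throughout; the only point that needs a little care is the claim in the first step that the reordering of $J+s\bmod r$ splits into exactly the two contiguous blocks above with no interleaving, and this is precisely where the hypotheses $0\le j_i\le r-1$ and (via maximality of $a$) $j_a>0$, $j_{a+1}<j_a$ are used.
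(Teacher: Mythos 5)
Your proof is correct and follows exactly the route the paper takes: the paper's proof consists of displaying the explicit coordinatewise form of $J\boxplus(r-j_a)$ and declaring the rest ``clear,'' while you have simply filled in the two implicit steps (the block-separation argument justifying that explicit form, and the telescoping that reads off the gap sequence as $\sigma_a(g(J))$). No new idea or different decomposition is involved, just a welcome expansion of the paper's one-line proof.
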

\begin{proof}
Clear, from
\beq
\begin{split}
J\boxplus(r-j_a)=(j_{a+1}+r-j_a,\;j_{a+2}+r-j_a,\;\dots,&\;j_{t-1}+r-j_a,\;r-j_a,\\
&\!\!\!j_1-j_a,\;j_2-j_a,\;\dots,\;j_{a-1}-j_a,0).
\end{split}
\eeq
\end{proof}

\begin{lemma}[``large gap'']
\label{gros_trou}
Let $J=(j_1,\dots,j_t)\in\cR_{r,t}^0$ have gap sequence $g(J)=(g_1,\dots,g_t)$.
Suppose there is an index $a$ such that $g_a>\lceil\frac{r}{t}\rceil$.
Then there is $j\in\Z/r\Z$ such that $J\boxplus j<I_{r,t}$.
\end{lemma}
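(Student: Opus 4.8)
\textbf{Plan of proof for Lemma~\ref{gros_trou}.}

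The strategy is to use the cyclic gap action from Lemma~\ref{cyclic_gaps} to rotate the ``large gap'' $g_a$ into the first position of the gap sequence, and then to bound the resulting element of $\cR_{r,t}^0$ lexicographically against $I_{r,t}$. First I would apply Lemma~\ref{cyclic_gaps} (after the obvious reduction to the reduced representative, using that every orbit meets $\cR_{r,t}^0$): there is $j\in\Z/r\Z$ so that $J'=J\boxplus j\in\cR_{r,t}^0$ has gap sequence $\sigma_{a-1}(g(J))=(g_a,g_{a+1},\dots,g_t,g_1,\dots,g_{a-1})$, i.e.\ the large gap $g_1(J')=g_a>\lceil r/t\rceil$ now comes first. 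Via the order-reversing bijection of Lemma~\ref{gap_bij}, the first coordinate of $J'$ is $j_1(J')=r-g_a<r-\lceil r/t\rceil=\lfloor\frac{(t-1)r}{t}\rfloor$, which is precisely the first coordinate of $I_{r,t}$.

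Next I would compare $J'$ with $I_{r,t}$ in the lexicographic order: since $j_1(J')<r-\lceil r/t\rceil$ is \emph{strictly} smaller than the first coordinate of $I_{r,t}$, the lexicographic comparison is decided at the very first coordinate, giving $J'<I_{r,t}$ outright. Thus the desired $j$ is the one produced by Lemma~\ref{cyclic_gaps} (possibly composed with the shift that brought $J$ to its reduced representative). The only points needing a line of care are: (a) that $g_{a}>0$ so the rotation $\sigma_{a-1}$ genuinely lands in $\cG_{r,t}$ --- this is immediate since $g_a>\lceil r/t\rceil\geq1$; and (b) that one may indeed choose the index in Lemma~\ref{cyclic_gaps} so as to realize the cyclic shift by $\sigma_{a-1}$, which follows by taking, in the notation there, $j_a$ to be the entry of $J$ preceding the gap $g_a$ (maximal such if repeated), exactly as in that lemma's hypothesis.

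I do not expect a serious obstacle here; the content is entirely in having set up the gap-sequence formalism (Lemmas~\ref{gap_bij} and~\ref{cyclic_gaps}) and in the numerical identity $r-\lceil r/t\rceil=\lfloor\frac{(t-1)r}{t}\rfloor$, which is the second displayed form of $I_{r,t}$. If anything is delicate it is bookkeeping the index conventions (whether the large gap sits in position $a$ with $1\le a\le t$, and whether the ambient rotation is $\sigma_{a-1}$ or $\sigma_a$ given that gaps are indexed starting at $1$ while the cyclic action $\sigma_a$ in~\ref{cyclic_gaps} shifts by $a$); once that is pinned down the inequality is forced at the first coordinate. This lemma then feeds the proof of Theorem~\ref{borne_equidistr} by disposing of the case where some gap exceeds the average $\lceil r/t\rceil$, leaving only the ``balanced'' case to be treated separately.
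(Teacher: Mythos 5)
Your proof is correct and is essentially the same as the paper's: both rotate the large gap into position one using Lemma~\ref{cyclic_gaps} (replacing $J$ by $J\boxplus(r-j_{a-1})$ when $a>1$), then observe that the new first coordinate $r-g_a$ is strictly less than $r-\lceil r/t\rceil=\lfloor(t-1)r/t\rfloor$, the first coordinate of $I_{r,t}$, so the lexicographic comparison is decided immediately. One small remark: the parenthetical about ``reduction to the reduced representative'' is superfluous here, since the hypothesis already places $J$ in $\cR_{r,t}^0$.
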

\begin{proof}
Thanks to Lemma~\ref{cyclic_gaps}, after possibly replacing $J$ with $J\boxplus(r-j_{a-1})$ if $a>1$, we can suppose $a=1$.
Then
\beq
j_1=r-g_1\;<\;r-\left\lceil\frac{r}{t}\right\rceil
\eeq
so $J<I_{r,t}$.
\end{proof}

\begin{lemma}[``small gap'']
\label{petit_trou}
Let $J=(j_1,\dots,j_t)\in\cR_{r,t}^0$ have gap sequence $g(J)=(g_1,\dots,g_t)$.
Suppose there is an index $a$ such that $g_a<\lfloor\frac{r}{t}\rfloor$.
Then there is $j\in\Z/r\Z$ such that $J\boxplus j\leq I_{r,t}$.
\end{lemma}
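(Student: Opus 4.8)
The plan is to pass to gap sequences, reduce the statement to an inequality on partial sums, and then invoke a cycle-lemma argument (which, as it turns out, needs neither the hypothesis on the small gap nor Lemma~\ref{gros_trou}).

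First I would reformulate. By Lemma~\ref{gap_bij}, $g$ is an order-reversing bijection $\cR_{r,t}^0\xrightarrow{\sim}\cG_{r,t}$, and by Lemma~\ref{cyclic_gaps} the gap sequences of the reduced elements lying in the $\boxplus$-orbit of $J$ are exactly the cyclic rotations of $g(J)=(g_1,\dots,g_t)$ that begin with a positive entry; moreover, if $g'=(g'_1,\dots,g'_t)$ is such a rotation with partial sums $P_m=g'_1+\cdots+g'_m$, then $g^{-1}(g')=(r-P_1,\dots,r-P_t)$. Since the partial sums of $g(I_{r,t})$ are the numbers $\lceil mr/t\rceil$, one has $(I_{r,t})_m=r-\lceil mr/t\rceil=\lfloor(t-m)r/t\rfloor$ for $1\leq m\leq t$ (using $\lceil x\rceil+\lfloor -x\rfloor=0$). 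Hence it suffices to produce a cyclic rotation $g'$ of $g(J)$ whose partial sums satisfy $P_m\geq\lceil mr/t\rceil$ for all $1\leq m\leq t$: such a $g'$ automatically has $P_1\geq\lceil r/t\rceil\geq1$, so it begins with a positive entry and is therefore the gap sequence of an honest translate $J'=J\boxplus j$ for a suitable $j$, and then $J'=(r-P_1,\dots,r-P_t)$ satisfies $J'\leq I_{r,t}$ coordinatewise, hence a fortiori $J'\leq I_{r,t}$ for the lexicographic order.

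Second, the heart of the matter, I would obtain such a rotation from the balanced cycle lemma. Put $h_i=t\,g_i-r\in\Z$, so that $\sum_{i=1}^{t}h_i=tr-tr=0$, and set $S_m=h_1+\cdots+h_m$ with $S_0=0$ (so also $S_t=0$). Choosing an index $a\in\{0,\dots,t-1\}$ at which $S_a$ is minimal and taking $g'=\sigma_a(g(J))$, the usual telescoping shows that every partial sum of the rotated $h$-sequence is $\geq0$; since the $m$-th such partial sum equals $tP_m-mr$ and $P_m$ is an integer, this is precisely the inequality $P_m\geq\lceil mr/t\rceil$ needed above. The one point requiring care---the ``obstacle'', such as it is---is purely bookkeeping: one must check that the combinatorial rotation $\sigma_a(g(J))$ really is the gap sequence of a $\boxplus$-translate of $J$, which is exactly the role of the ``begins with a positive entry'' condition and is automatic here because $P_1\geq1$, and one must confirm the identity $r-\lceil mr/t\rceil=\lfloor(t-m)r/t\rfloor=(I_{r,t})_m$; both are immediate. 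Finally, I would note that this argument, carried out for an arbitrary $J$, in fact reproves Theorem~\ref{borne_equidistr} directly; if one prefers to follow the route this section is setting up, one can instead split on whether $J$ additionally has a large gap and handle that subcase by Lemma~\ref{gros_trou}, but the cycle-lemma proof is shorter and uniform.
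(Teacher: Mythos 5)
Your proof is correct, and it takes a genuinely different route from the paper's. Where the paper chooses $a$ to minimize the \emph{gap value} $g_a$, rotates to put that gap last, and then proves the bound by contradiction via a carefully constructed sequence of indices $a_1<\dots<a_k$ summing to $r<r$, you instead choose $a$ to minimize the \emph{partial sum} $S_a=\sum_{i\le a}(tg_i-r)$ and invoke the classical cycle lemma (Dvoretzky--Motzkin/Raney). The key inequality $g'_1=P'_1\ge\lceil r/t\rceil\ge1$ both guarantees that $\sigma_a(g(J))$ lies in $\cG_{r,t}$ and, via Lemma~\ref{cyclic_gaps}, that it is the gap sequence of an actual $\boxplus$-translate of $J$; your bookkeeping there is exactly right. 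Your approach has several advantages: it does not use the small-gap hypothesis at all, so it proves Theorem~\ref{borne_equidistr} directly, bypassing the dichotomy of Lemmas~\ref{gros_trou}--\ref{petit_trou} and the entire induction on $t$ via the derived-sequence map $\partial$; it is shorter and uniform in $J$; and it actually yields the stronger \emph{coordinatewise} inequality $J'\le I_{r,t}$, not merely the lexicographic one asserted in Theorem~\ref{borne_equidistr}. What the paper's route buys in exchange is the auxiliary combinatorial structure---the order-preserving injection $\partial:\cR_{r,t}^{\mathrm{bal}}\to\cR_{t,u}$ and the compatibility $\partial(I_{r,t})=I_{t,u}$---which may be of independent interest, but is not needed if one only wants the bound of Corollary~\ref{borne_degSI}.
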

\begin{proof}
Choose $a$ such that $g_a$ is minimum; if there are several choices for such an $a$, choose it maximum, so $g_{a+1}>g_a$ if $a<t$.
Then, thanks to Lemma~\ref{cyclic_gaps}, after possibly replacing $J$ with $J\boxplus(r-j_a)$ if $a<t$, we can suppose $a=t$, so $j_{t-1}=g_t<\lfloor\frac{r}{t}\rfloor$,
hence, since these are integers,
\beq
j_{t-1}\leq\left\lfloor\frac{r}{t}\right\rfloor-1\leq\frac{r}{t}-1.
\eeq

Now we proceed by contradiction: suppose $J\boxplus j>I_{r,t}$ for all $j$.
We will construct a sequence of indices $1\leq a_1<a_2<\dots<a_k<t$ with the following properties:
\begin{enumerate}[(i)]
\item $r-j_{a_1}<\frac{a_1r}{t}$
\item $j_{a_i}-j_{a_{i+1}}<\frac{(a_{i+1}-a_i)r}{t}\quad$ for $2\leq i\leq k-1$
\item $j_{a_k}<\frac{(t-a_k)r}{t}$.
\end{enumerate}
Summing all these inequalities we find $r<r$, a contradiction.

The sequence is constructed as follows.
To start with, we have $J>I_{r,t}$ so there is an index $a<t$ such that $j_a>r-\lceil\frac{ar}{t}\rceil$.
If there are several choices for such an $a$, choose it maximum
(which implies $j_a>j_{a+1}$), and call it $a_1$.
Then $r-j_{a_1}<\lceil\frac{a_1r}{t}\rceil$ and moreover $r-j_{a_1}$ is an integer, so $r-j_{a_1}<\frac{a_1r}{t}$.

Now suppose we have already constructed $1\leq a_1<a_2<\dots<a_\ell<t$ satisfying (i) and (ii), and with $j_{a_\ell}>j_{a_\ell+1}$.
If $a_\ell=t-1$, then we are done: (iii) is more than satisfied (with $k=\ell$) since $j_{t-1}\leq\frac{r}{t}-1<\frac{r}{t}$.

If $a_\ell<t-1$, we use the fact that $J\boxplus(r-j_{a_\ell})>I_{r,t}$, so for some $b\geq1$, these sequences coincide on the first $b-1$ positions, while the $b$-th coefficient
of $J\boxplus(r-j_{a_\ell})$ (whose expression is given in the proof of Lemma~\ref{cyclic_gaps}) is larger than that of $I_{r,t}$. We distinguish two cases.

First case: $b<t-a_\ell$. Then $j_{a_\ell+b}+r-j_{a_\ell}>r-\lceil\frac{br}{t}\rceil$, with the left-hand side an integer, hence in fact $j_{a_\ell+b}+r-j_{a_\ell}>r-\frac{br}{t}$.
Thus there exists an index $a$ (namely here $a=a_\ell+b$ works) with $a_\ell<a<t$ and $j_{a_\ell}-j_a<\frac{(a_\ell-a)r}{t}$.
If there are several choices for such an $a$, choose it maximum
(which implies $j_a>j_{a+1}$), and call it $a_{\ell+1}$.

Second case: $b\geq t-a_\ell$, so the $(t-1-a_\ell)$-th coefficient of $J\boxplus(r-j_{a_\ell})$ is equal to that of $I_{r,t}$, 
that is, $j_{t-1}+r-j_{a_\ell}=r-\lceil\frac{(t-1-a_\ell)r}{t}\rceil$, or $j_{a_\ell}=j_{t-1}+\lceil\frac{(t-1-a_\ell)r}{t}\rceil<j_{t-1}+\frac{(t-1-a_\ell)r}{t}+1$.
But then we use $j_{t-1}\leq\frac{r}{t}-1$ to conclude that (iii) is satisfied with $k=\ell$.
%
%
\end{proof}

\begin{definition}
Let $J=(j_1,\dots,j_t)\in\cR_{r,t}^0$ have gap sequence $g(J)=(g_1,\dots,g_t)$.
We say $J$ is \emph{balanced} if $\abs{g_a-\frac{r}{t}}<1$ for all $a$ (\ie $g_a=\lfloor\frac{r}{t}\rfloor$ or $\lceil\frac{r}{t}\rceil$).
We let
\beq
\cR_{r,t}^\textrm{bal}\subset\cR_{r,t}^0
\eeq
be the set of such balanced sequences.
\end{definition}

\entry
\label{derived_sequence}
Suppose $t{\not|}\,r$, set $Q=\lceil\frac{r}{t}\rceil$, so $Q-1=\lfloor\frac{r}{t}\rfloor$, and write
\beq
r=tQ-u
\eeq
with $0<u<t$, which can also be written
\beq
r=u\left\lfloor\frac{r}{t}\right\rfloor+(t-u)\left\lceil\frac{r}{t}\right\rceil.
\eeq
Let $J=(j_1,\dots,j_t)\in\cR_{r,t}^\textrm{bal}$ be a balanced sequence, with gaps $g(J)=(g_1,\dots,g_t)$.
Since all $g_a=\lfloor\frac{r}{t}\rfloor$ or $\lceil\frac{r}{t}\rceil$ and they must sum to $r$, we deduce exactly $u$ of them are equal to $\lfloor\frac{r}{t}\rfloor$,
and the other $t-u$ are equal to $\lceil\frac{r}{t}\rceil$. So let $a_1<a_2<\dots<a_u$ be those indices $a$ with $g_a=\lfloor\frac{r}{t}\rfloor$, and then for $1\leq i\leq u$
set $b_i=r-a_i$. Note $1\leq a_i\leq t$ so $b_i\in\{0,\dots,t-1\}\simeq\Z/t\Z$, and the $b_i$ form a decreasing sequence, hence they define an element of $\cR_{t,u}$.
\begin{definition*}
With these notations, we call
\beq
\partial(J)=(b_1,\dots,b_u)
\eeq
the derived sequence of $J$.
\end{definition*}

\begin{proposition}
\label{derive_croit}
This map
\beq
\partial:\cR_{r,t}^\textrm{bal}\longto\cR_{t,u}
\eeq
is injective and order-preserving.
%
\end{proposition}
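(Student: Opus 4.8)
The plan is to show injectivity and order-preservation of $\partial$ by comparing balanced sequences through their underlying sets of "small-gap" positions. First I would set up notation: fix $r,t$ with $t\nmid r$, write $r=tQ-u$ with $Q=\lceil r/t\rceil$ and $0<u<t$, and recall that a balanced sequence $J=(j_1,\dots,j_t)\in\cR_{r,t}^{\mathrm{bal}}$ is determined by the set $A(J)=\{a_1<\dots<a_u\}\subset\{1,\dots,t\}$ of indices $a$ with $g_a(J)=\lfloor r/t\rfloor$, since the other $t-u$ gaps are all $\lceil r/t\rceil$. Indeed, by Lemma~\ref{gap_bij} the gap sequence determines $J$, and for balanced $J$ the gap sequence is determined by which coordinates equal the smaller value; so $J\mapsto A(J)$ is a bijection from $\cR_{r,t}^{\mathrm{bal}}$ onto the $u$-element subsets of $\{1,\dots,t\}$, and $\partial(J)=(r-a_1,\dots,r-a_u)$ is just the image of this set under the order-reversing bijection $a\mapsto r-a$ of $\{1,\dots,t\}$ with $\{0,\dots,t-1\}\simeq\Z/t\Z$, written as a decreasing tuple, i.e. an element of $\cR_{t,u}$. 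Injectivity is then immediate: if $\partial(J)=\partial(J')$ then $A(J)=A(J')$, hence $g(J)=g(J')$, hence $J=J'$.

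For order-preservation I would argue directly from the reconstruction formula $j_i=r-(g_1+\dots+g_i)$ of Lemma~\ref{gap_bij}. Suppose $J<J'$ in the lexicographic order on $\cR_{r,t}^0$, say $j_b=j'_b$ for $b<a$ and $j_a<j'_a$. Since $j_i$ is recovered by subtracting partial sums of gaps, $j_b=j'_b$ for $b<a$ forces $g_b(J)=g_b(J')$ for $b\le a-1$, and $j_a<j'_a$ forces $g_a(J)>g_a(J')$; as both gaps lie in $\{\lfloor r/t\rfloor,\lceil r/t\rceil\}$, this means $g_a(J)=\lceil r/t\rceil$ and $g_a(J')=\lfloor r/t\rfloor$. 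Thus the two index sets $A(J),A(J')$ agree on $\{1,\dots,a-1\}$, and $a\notin A(J)$ while $a\in A(J')$. Now I translate this into the lexicographic comparison on $\cR_{t,u}$: the derived sequence lists the elements of $r-A(\cdot)=\{r-a'\,;\,a'\in A(\cdot)\}$ in decreasing order, equivalently the elements of $A(\cdot)$ in increasing order, each replaced by $r-a'$. Since $A(J)$ and $A(J')$ have the same initial segment below $a$, their derived sequences share a common prefix (the images $r-a'$ of the common elements $a'<a$, which are the largest elements of $r-A(\cdot)$); at the first place they differ, $A(J)$ contributes some element $>a$ while $A(J')$ contributes $a$, so after applying $a'\mapsto r-a'$ the entry coming from $J$ is $r-(\text{something}>a)<r-a$, the entry coming from $J'$. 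Hence $\partial(J)<\partial(J')$ lexicographically.

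The one point requiring care — and the main potential obstacle — is handling the boundary cases cleanly: whether a coordinate index $a=t$ or $a=1$ behaves as expected (recall $g_1=r-j_1$ and $g_t=j_{t-1}$, with $j_t=0$ fixed for reduced sequences), and making sure the common-prefix argument for the derived sequences correctly accounts for elements of $A$ that are $\ge a$ but still precede later differences. I would resolve this by working entirely with the finite index sets $A(J)\subset\{1,\dots,t\}$ and the explicit bijections of Lemma~\ref{gap_bij}, rather than with the sequences directly, so that every comparison reduces to a statement about initial segments of subsets of $\{1,\dots,t\}$ under the order-reversing relabeling $a\mapsto r-a$; once phrased that way, both injectivity and monotonicity are combinatorially transparent.
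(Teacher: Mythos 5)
Your proof takes essentially the same route as the paper's: factor $\partial$ through the order-reversing gap-sequence bijection $J\mapsto g(J)$ (equivalently, $J\mapsto A(J)$) of Lemma~\ref{gap_bij}, followed by the order-reversing relabeling of small-gap indices, and conclude by composing two order-reversing maps. One correction worth flagging: the relabeling is $a\mapsto t-a$, not $a\mapsto r-a$. The paper's Definition~\ref{derived_sequence} contains a typo --- $b_i=r-a_i$ should read $b_i=t-a_i$, which is what the paper's own proofs of Propositions~\ref{derive_croit} and~\ref{derive_I} actually use, and which is consistent with the remark there that $1\leq a_i\leq t$ gives $b_i\in\{0,\dots,t-1\}$. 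You inherited the typo; with $r-a$ literally, the values $b_i$ need not lie in $\{0,\dots,t-1\}$ and $\partial(J)$ would not be an element of $\cR_{t,u}$. Also, $J\mapsto A(J)$ is a bijection onto only those $u$-element subsets $A\subset\{1,\dots,t\}$ compatible with $g_1>0$ (so $1\notin A$ when $t>r$), not onto all $u$-subsets, though this over-statement does not affect your injectivity or monotonicity arguments, which only use that the map is well-defined and injective.
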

\begin{proof}
The map $\partial$ factorizes as
\beq
\cR_{r,t}^\textrm{bal}\xrightarrow{\,g_{|\cR_{r,t}^\textrm{bal}}\,}\cG_{r,t}^\textrm{bal}\xrightarrow{\;\;\phantom{{}_|}b\phantom{{}_|}\;\;}\cR_{t,u}.
\eeq
Here $\cG_{r,t}^\textrm{bal}=g(\cR_{r,t}^\textrm{bal})\subset\cG_{r,t}$ is the set of $t$-tuples of integers $(g_1,\dots,g_t)\in\N^t$
among which $u$ of them are equal to $\lfloor\frac{r}{t}\rfloor$,
and the other $t-u$ are equal to $\lceil\frac{r}{t}\rceil$, and with $g_1>0$.
Note that, if $t<r$, the last condition vanishes since $g_1\geq\lfloor\frac{r}{t}\rfloor>0$ automatically, while if $t>r$, then $g_1>0$ means $g_1=\lceil\frac{r}{t}\rceil$.
Then by Lemma~\ref{gap_bij}, the restriction $g_{|\cR_{r,t}^\textrm{bal}}$ is an order-reversing bijection.

The second map is $b(g_1,\dots,g_t)=(b_1,\dots,b_u)$, where $b_i=t-a_i$ and $a_1<a_2<\dots<a_u$ are those indices $a$ with $g_a=\lfloor\frac{r}{t}\rfloor$.
This map is clearly injective.

Now suppose $(g_1,\dots,g_t)<(g'_1,\dots,g'_t)$, so there is some index $k$ with $g_i=g'_i$ for $i<k$, and $g_k=\lfloor\frac{r}{t}\rfloor<g'_k=\lceil\frac{r}{t}\rceil$.
So $k=a_v<a'_v$ for some $v$, while $a_w=a'_w$ for $w<v$. This means $(a_1,\dots,a_u)<(a'_1,\dots,a'_u)$, or equivalently, $(b_1,\dots,b_u)>(b'_1,\dots,b'_u)$.
Hence the map $b$ is also order-reversing.

We conclude since composing two order-reversing maps gives an order-preserving map.
\end{proof}

\begin{proposition}
\label{derive_I}
Suppose $t{\not|}\,r$, and write $r=tQ-u$ with $Q=\lceil\frac{r}{t}\rceil$, $0<u<t$. Then $I_{r,t}\in\cR_{r,t}^\textrm{bal}$ and
\beq
\partial(I_{r,t})=I_{t,u}.
\eeq
\end{proposition}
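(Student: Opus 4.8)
The plan is to expand $I_{r,t}$ in its ``second form'' $j_a = r - \lceil ar/t\rceil$ (valid for $1 \le a \le t$, since $r-\lceil r\rceil = 0$ supplies the last coordinate), and to feed the hypothesis $r = tQ - u$ into every rounding. First I would write $ar/t = aQ - au/t$, hence $\lceil ar/t\rceil = aQ - \lfloor au/t\rfloor$, so that
$$j_a = (t-a)Q - u + \lfloor au/t\rfloor, \qquad 1 \le a \le t$$
(consistent with $j_t = 0$ since $\lfloor u\rfloor = u$). Taking consecutive differences gives $g_1(I_{r,t}) = r - j_1 = Q$ and, for $2 \le a \le t$,
$$g_a(I_{r,t}) = j_{a-1} - j_a = Q - \delta_a, \qquad \delta_a = \lfloor au/t\rfloor - \lfloor (a-1)u/t\rfloor ;$$
with the convention $\lfloor 0\rfloor = 0$ this formula also covers $a = 1$, where $\delta_1 = 0$. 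Since $0 < u/t < 1$, each $\delta_a \in \{0,1\}$, so every gap of $I_{r,t}$ equals $\lceil r/t\rceil$ or $\lfloor r/t\rfloor$ and $g_1 = Q \ge 1 > 0$; this is exactly the statement $I_{r,t} \in \cR_{r,t}^{\textrm{bal}}$.

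Next I would locate the small gaps. The telescoping identity $\sum_{a=1}^t \delta_a = \lfloor u\rfloor = u$ shows that exactly $u$ indices satisfy $\delta_a = 1$, i.e. $g_a(I_{r,t}) = \lfloor r/t\rfloor$ (as is forced anyway by $\sum_a g_a = r = tQ - u$). Now $\delta_a = 1$ iff the half-open interval $\bigl((a-1)u/t,\, au/t\bigr]$ contains an integer $m$, i.e. $(a-1)u < mt \le au$, i.e. $a - 1 < mt/u \le a$; a short case check (according to whether $mt/u$ is an integer) shows this is equivalent to $a = \lceil mt/u\rceil$. As $a$ ranges over $\{1,\dots,t\}$ the admissible $m$ ranges over $\{1,\dots,u\}$, and $m \mapsto \lceil mt/u\rceil$ is strictly increasing because $t/u > 1$; hence the small-gap indices, in increasing order, are $a_i = \lceil it/u\rceil$ for $1 \le i \le u$.

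Finally I would read off the derived sequence: $b_i = t - a_i = t - \lceil it/u\rceil = \lfloor t - it/u\rfloor = \lfloor (u-i)t/u\rfloor$ for $1 \le i \le u$, using $n - \lceil x\rceil = \lfloor n - x\rfloor$. By the first form of $I_{t,u}$, its $i$-th coordinate is precisely $\lfloor (u-i)t/u\rfloor$ (the last one, $i = u$, being $0$), so $\partial(I_{r,t}) = (b_1,\dots,b_u) = I_{t,u}$, as claimed. I expect the only delicate point to be the exact characterisation $\{a : \delta_a = 1\} = \{\lceil it/u\rceil : 1 \le i \le u\}$, where one must handle the half-open interval and the borderline case where $mt/u$ is an integer carefully; everything else is bookkeeping with the single identity $r = tQ - u$, together with the elementary facts $n - \lceil x\rceil = \lfloor n - x\rfloor$ and that two reals differing by more than $1$ have ceilings differing by at least $1$.
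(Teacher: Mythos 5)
Your proof is correct and follows essentially the same route as the paper's: expand the ceilings using $ar/t = aQ - au/t$ to get the two cases $g_a \in \{Q, Q-1\}$, identify the small-gap indices via the interval $((a-1)u/t, au/t]$ as exactly $a = \lceil mt/u\rceil$ for $m = 1, \dots, u$, and then read off $b_m = t - a_m$ against the coordinates of $I_{t,u}$. You merely make explicit a few steps the paper leaves implicit (the telescoping sum giving $u$ small gaps, the strict monotonicity of $m \mapsto \lceil mt/u\rceil$, and the conversion $t - \lceil mt/u\rceil = \lfloor (u-m)t/u\rfloor$), which is sound bookkeeping rather than a different argument.
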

\begin{proof}
We have $g(I_{r,t})=(g_1,\dots,g_t)$ with
\beq
g_a=\left\lceil\frac{ar}{t}\right\rceil-\left\lceil\frac{(a-1)r}{t}\right\rceil.
\eeq
However, writing $\frac{ar}{t}=aQ-\frac{au}{t}$ we find $\lceil\frac{ar}{t}\rceil=aQ-\lfloor\frac{au}{t}\rfloor$. Then, since $\frac{u}{t}<1$, there are only two possibilities:
\begin{itemize}
\item either $\lfloor\frac{au}{t}\rfloor=\lfloor\frac{(a-1)u}{t}\rfloor$, in which case $g_a=Q=\lceil\frac{r}{t}\rceil$,
\item or $\lfloor\frac{au}{t}\rfloor=\lfloor\frac{(a-1)u}{t}\rfloor+1$, and $g_a=Q-1=\lfloor\frac{r}{t}\rfloor$.
\end{itemize}
We are interested in the second case. It happens precisely when
\beq
\frac{(a-1)u}{t}<m\leq\frac{au}{t}
\eeq
for a certain integer $m$ (namely $m=\lfloor\frac{au}{t}\rfloor$), that is when $a-1<\frac{mt}{u}\leq a$, or
\beq
a=a_m=\left\lceil\frac{mt}{u}\right\rceil.
\eeq
Setting $b_m=t-a_m$, we find
\beq
\partial(I_{r,t})=(b_1,\dots,b_u)=I_{t,u}
\eeq
as claimed.
\end{proof}

\begin{lemma}
\label{bal_cyclic}
Suppose $t{\not|}\,r$. Let $J=(j_1,\dots,j_t)\in\cR_{r,t}^\textrm{bal}$, and let $v$ be an index with $j_v>j_{v+1}$.
Then
\beq
\partial(J\boxplus(r-j_v))=\partial(J)\boxplus(t-v)
\eeq
in $\cR_{t,u}$.
\end{lemma}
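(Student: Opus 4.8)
The plan is to trace the positions of the ``short'' gaps --- those equal to $\lfloor r/t\rfloor$ --- through the operation $J\mapsto J\boxplus(r-j_v)$. First I would check that $J':=J\boxplus(r-j_v)$ again lies in $\cR_{r,t}^{\textrm{bal}}$, so that $\partial(J')$ is defined. It is reduced because adding $r-j_v$ to every coordinate sends the value $j_v$ to $0$ modulo $r$, so after reordering the last coordinate of $J'$ is $0$; and it is balanced because the hypothesis $j_v>j_{v+1}$ lets us apply Lemma~\ref{cyclic_gaps} with $a=v$, giving $g(J')=\sigma_v(g(J))$, and a cyclic rotation preserves the multiset of gaps, so $g(J')$ still has exactly $u$ entries equal to $\lfloor r/t\rfloor$ and $t-u$ equal to $\lceil r/t\rceil$.

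The core of the argument is then a bookkeeping computation. Write $g(J)=(g_1,\dots,g_t)$ and let $a_1<\dots<a_u$ be the positions with $g_{a_i}=\lfloor r/t\rfloor$, so that $\partial(J)=(t-a_1,\dots,t-a_u)$. The left rotation $\sigma_v$ carries the entry in position $a$ to position $a-v$ if $a>v$ and to position $a-v+t$ if $a\le v$. Hence, letting $m$ be the largest index with $a_m\le v$ (with the evident conventions if $m=0$ or $m=u$), the short-gap positions of $J'$ listed in increasing order are $(a_{m+1}-v,\dots,a_u-v,\;a_1-v+t,\dots,a_m-v+t)$, and applying the reflection $a\mapsto t-a$ gives $\partial(J')=(t-a_{m+1}+v,\dots,t-a_u+v,\;v-a_1,\dots,v-a_m)$. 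On the other hand, computing the relevant translate of $\partial(J)$ by the recipe defining $\boxplus$ on $\cR_{t,u}$ --- add the shift to each coordinate modulo $t$, then reorder --- splits the coordinates $t-a_i$ according to whether that sum reaches $t$, i.e.\ according to whether $a_i\le v$ or $a_i>v$, which is exactly the same dichotomy; matching the two tuples term by term then gives the claimed equality.

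I expect the only real difficulty to be clerical: one must simultaneously keep track of the wraparound in $\sigma_v$ (which splits the $a_i$ at the threshold $v$), the order-reversing reflection $a\mapsto t-a$ hidden in the definition of $\partial$, and the reordering implicit in $\boxplus$ on $\cR_{t,u}$. I would therefore pin down all these conventions on a small numerical instance first --- say $r=10$, $t=7$, $J=I_{10,7}$, $v=1$ --- after which the matching step is a one-line case split. It is also worth remarking that the way this lemma enters the proof of Theorem~\ref{borne_equidistr}, all one needs is that, as $v$ ranges over the admissible positions (which, when $t<r$, is every index in $\{1,\dots,t-1\}$, since then all gaps are positive), the resulting translates of $\partial(J)$ exhaust $\Z/t\Z$; combined with Propositions~\ref{derive_croit} and~\ref{derive_I} this is what lets one descend the equidistribution statement from $(t,u)$ back to $(r,t)$.
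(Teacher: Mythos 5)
Your computation of $\partial(J\boxplus(r-j_v))$ via Lemma~\ref{cyclic_gaps} is correct, and it is exactly the computation that the paper's one-line proof alludes to. However, the final matching step contains an arithmetic slip --- one which, as it happens, exposes a sign error in the printed statement itself.

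You assert that adding the shift $t-v$ to $t-a_i$ modulo $t$ splits the indices ``according to whether $a_i\le v$ or $a_i>v$.'' But $(t-a_i)+(t-v)\ge t$ is equivalent to $a_i+v\le t$, not to $a_i\le v$; these conditions are genuinely different. The dichotomy $a_i\le v$ versus $a_i>v$ --- which is what arises from $\sigma_v$ in your (correct) description of the short-gap positions of $J'$ --- corresponds instead to the shift $v$: one has $(t-a_i)+v\ge t$ if and only if $a_i\le v$. What your computation actually establishes is therefore
\beq
\partial(J\boxplus(r-j_v))=\partial(J)\boxplus v,
\eeq
not $\partial(J)\boxplus(t-v)$. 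The very numerical instance you propose would have caught this: take $r=10$, $t=7$, $J=I_{10,7}=(8,7,5,4,2,1,0)$, $v=1$, so $u=4$. Then $J\boxplus(r-j_1)=J\boxplus 2=(9,7,6,4,3,2,0)$ has gap sequence $(1,2,1,2,1,1,2)$, short-gap positions $\{1,3,5,6\}$, hence $\partial(J\boxplus 2)=(6,4,2,1)$; while $\partial(J)=(5,3,1,0)$, $\partial(J)\boxplus 6=(6,4,2,0)$, but $\partial(J)\boxplus 1=(6,4,2,1)$.

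So the statement as printed should read $\partial(J)\boxplus v$ rather than $\partial(J)\boxplus(t-v)$ --- and relatedly, the printed formula $b_i=r-a_i$ in the definition of $\partial$ should read $b_i=t-a_i$, as is in fact used in the proofs of Propositions~\ref{derive_croit} and~\ref{derive_I}. Making this substitution consistently, in particular choosing $v$ in the proof of Theorem~\ref{borne_equidistr} so that $\partial(J)\boxplus v\le I_{t,u}$, leaves that argument intact. Your strategy and bookkeeping are right up to the last line; what is missing is the observation that what you proved differs from what was claimed, which the numerical check you explicitly postpone would have revealed.
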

\begin{proof}
Consequence of Lemma~\ref{cyclic_gaps} and the definition of $\partial$.
\end{proof}

\begin{proof}[Proof of Theorem~\ref{borne_equidistr}]
We proceed by induction on $t$. The result is clear for $t=1$. Suppose it holds for all $t'<t$.

Let $J=(j_1,\dots,j_t)\in\cR_{r,t}$. After possibly replacing $J$ with $J\boxplus(r-j_t)=J-j_t$, we can suppose $J\in\cR_{r,t}^0$.
If $J$ has a gap larger than $\lceil\frac{r}{t}\rceil$ we conclude with Lemma~\ref{gros_trou};
and if $J$ has a gap smaller than $\lfloor\frac{r}{t}\rfloor$ we conclude with Lemma~\ref{petit_trou}.
So the only case remaining is $J\in\cR_{r,t}^\textrm{bal}$.
If $t|r$, this means all gaps are equal to $\frac{r}{t}$, and then $J=I_{r,t}$.
Now suppose $t{\not|}\,r$, set $Q=\lceil\frac{r}{t}\rceil$, and write $r=tQ-u$ with $0<u<t$.
Then $\partial(J)\in\cR_{t,u}$ and we can apply the induction hypothesis to find $v$ such that
\beq
\partial(J)\boxplus(t-v)\leq I_{t,u}.
\eeq
In particular the first coefficient of $\partial(J)\boxplus(t-v)$ is at most $t-\lceil\frac{t}{u}\rceil<t-1$. This means
that the first gap of $\partial(J)\boxplus(t-v)$, or equivalently the $(v+1)$-th gap of $J$,
is not equal to $\lfloor\frac{r}{t}\rfloor$, so it is $\lceil\frac{r}{t}\rceil\geq 1$,
which means in turn $j_v>j_{v+1}$.
We can then apply Lemma~\ref{bal_cyclic} to the left-hand side of our last inequality, and Proposition~\ref{derive_I} to the right-hand side,
to restate it as
\beq
\partial(J\boxplus(r-j_v))\leq\partial(I_{r,t}).
\eeq
But $\partial$ is order-preserving by Proposition~\ref{derive_croit}, so
\beq
J\boxplus(r-j_v)\leq I_{r,t}
\eeq
which finishes the proof.
\end{proof}

\entry
\label{borne_degSI}
Theorem~\ref{borne_equidistr} can be used to make Theorem~\ref{th_poly_descr_StF} more precise.
Recall there we are working in an extension $\F_{q^r}$ of a finite field $\Fq$, and we have constructed a universal family
of symmetric homogeneous $t$-multilinearized polynomials
\beq
S_I:(\F_{q^r})^t\longto\F_{q^{r_I}}
\eeq
for $I$ ranging in a set $\cS$ of representatives of $\cR_{r,t}$ modulo $\Z/r\Z$.
\begin{corollary*}
Suppose $t\leq q$, and set
\beq
T=q^{\left\lfloor\frac{(t-1)r}{t}\right\rfloor}+q^{\left\lfloor\frac{(t-2)r}{t}\right\rfloor}+\cdots+q^{\left\lfloor\frac{r}{t}\right\rfloor}+1.
\eeq
Then the set $\cS$ can be chosen so that the polynomials $S_I$ all have
\beq
\deg(S_I)\leq T.
\eeq
\end{corollary*}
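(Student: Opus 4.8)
The first thing to observe is that $\deg(S_I)$ is completely determined by the multidegree of $I$: all the monomials $M_J=x^{q^J}$ with $J\in o(I)$ are distinct but share the common degree $D_J=q^{j_1}+\cdots+q^{j_t}$, which depends only on the multiset $\{j_1,\dots,j_t\}$; hence $\deg(S_I)=D_I$, where for the nonincreasing representative $I=(i_1,\dots,i_t)\in\cR_{r,t}$ we set $D_I=q^{i_1}+\cdots+q^{i_t}$. A glance at the definition of $I_{r,t}$ then shows that the bound we are after is nothing but $T=D_{I_{r,t}}$. So the corollary amounts to choosing the transversal $\cS$ of $\cR_{r,t}$ modulo $\boxplus$ so that every representative $I\in\cS$ satisfies $D_I\leq D_{I_{r,t}}$.

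The second step is to bring in Theorem~\ref{borne_equidistr}, which tells us that every $\boxplus$-orbit in $\cR_{r,t}$ meets the lexicographic lower set $\{I:I\leq I_{r,t}\}$. I would simply pick, in each orbit, one element $I$ with $I\leq I_{r,t}$; this produces a legitimate set of representatives $\cS$ (the integers $r_I$ of Theorem~\ref{th_poly_descr_StF}, being orbit sizes, do not depend on this choice), and the whole problem is reduced to the implication $I\leq I_{r,t}\implique D_I\leq T$ for $I\in\cR_{r,t}$.

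Proving that implication is the only real computation, and it is where the hypothesis $t\leq q$ is used. Write $I_{r,t}=(m_1,\dots,m_t)$ with $m_k=\lfloor(t-k)r/t\rfloor$, so $T=\sum_k q^{m_k}$. If $I=I_{r,t}$ there is nothing to do; otherwise let $a$ be the least index with $i_a<m_a$, so $i_b=m_b$ for $b<a$, and $m_a\geq1$ because $i_a\geq0$. Since $I$ is nonincreasing, $i_b\leq i_a\leq m_a-1$ for all $b\geq a$, whence, using $t-a+1\leq t\leq q$,
\beq
\sum_{b\geq a}q^{i_b}\;\leq\;(t-a+1)\,q^{m_a-1}\;\leq\;q\cdot q^{m_a-1}\;=\;q^{m_a}\;\leq\;\sum_{b\geq a}q^{m_b}.
\eeq
Adding $\sum_{b<a}q^{i_b}=\sum_{b<a}q^{m_b}$ to both ends yields $D_I\leq D_{I_{r,t}}=T$. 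Combining the three steps gives the corollary.

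The main (and essentially only) obstacle is isolating the correct combinatorial statement — which is precisely Theorem~\ref{borne_equidistr} — so that the degree estimate reduces to an elementary inequality; once that is in place, the argument above is short. The remaining care is just in checking degenerate cases, such as $a=t$ or $t\mid r$ (where $I_{r,t}$ has all gaps equal), none of which causes difficulty.
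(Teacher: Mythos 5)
Your proof is correct and takes essentially the same route as the paper: identify $\deg(S_I)=D_I$ and $T=D_{I_{r,t}}$, invoke Theorem~\ref{borne_equidistr} to pick lex-minimal representatives, and use $t\leq q$ to deduce $D_I\leq T$ from $I\leq I_{r,t}$. The only difference is cosmetic — where the paper asserts the equivalence $D_I\leq D_{I'}\Leftrightarrow I\leq I'$ as ``easily seen,'' you spell out the one implication actually needed with the explicit estimate $\sum_{b\geq a}q^{i_b}\leq(t-a+1)q^{m_a-1}\leq q^{m_a}$.
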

\begin{proof}
Recall for $I\in\cR_{r,t}$ we defined $D_I=q^{i_1}+\cdots+q^{i_t}=\deg(M_I)=\deg(S_I)$.
In particular we have $T=D_{I_{r,t}}$.
However, since $t\leq q$, it is easily seen that $D_I\leq D_{I'}$ if and only if $I\leq I'$ for the lexicographic order.
To construct $\cS$, in each orbit of $\cR_{r,t}/(\Z/r\Z)$ we choose the representative $I$ that is minimum for the lexicographic order,
so $I\leq I_{r,t}$ by Theorem~\ref{borne_equidistr}, and we conclude.
\end{proof}

(Note that for $t>q$, it can be that $D_I>D_{I'}$ although $I<I'$. This happens for instance, for $t=3$, $q=2$, $I=(2,2,2)$, $I'=(3,0,0)$.
In such a case, instead of choosing in each orbit the representative $I$ minimum for the lexicographic order,
perhaps it is better to choose the one that gives the smallest $D_I$.)

\begin{example}
For $t=2$, Theorem~\ref{borne_equidistr} specializes to the results of~\cite{agis}. Here we have
\beq
\cS=\{(0,0),(1,0),(2,0),\dots,(\lfloor r/2\rfloor,0)\}
\eeq
and the $S_I$ for $I\in\cS$ are the maps $m_0(x,y)=xy$ and $m_i(x,y)=x^{q^i}y+xy^{q^i}$ for $1\leq i\leq\lfloor r/2\rfloor$. 
The maximum degree is $q^{\lfloor r/2\rfloor}+1$, reaching the bound in Corollary~\ref{borne_degSI}.

If $r$ is odd, then
\beq
(\F_{q^r})^{\frac{r+1}{2}}
\eeq
can be seen as a $\Fq$-vector space of dimension $\frac{r(r+1)}{2}$.
On the other hand, if $r$ is even, by abuse of notation we set
\beq
(\F_{q^r})^{\frac{r+1}{2}}=(\F_{q^r})^{r/2}\times\F_{q^{r/2}}
\eeq
and again this can be seen as a $\Fq$-vector space of dimension $\frac{r(r+1)}{2}$; note then that $m_{r/2}$ takes values in $\F_{q^{r/2}}$.
In any case, the map $\Psi=(m_0,m_1,\dots,m_{\lfloor r/2\rfloor})$ induces an isomorphism
\beq
S_{\Fq}^2\F_{q^r}\simeq(\F_{q^r})^{\frac{r+1}{2}} 
\eeq
of $\Fq$-vector spaces.

This can be viewed as a symmetric variant of the isomorphism
\beq
\F_{q^r}\tens_{\Fq}\F_{q^r}\simeq(\F_{q^r})^r
\eeq
induced by the maps $(x,y)\mapsto x^{q^i}y$ for $i\in\Z/r\Z$,
although the latter has the additional property that it is in fact an isomorphism of $\Fq$-algebras.
\end{example}

\begin{example}
For $q=2$, $r=2$, $t=3$, we can take
\beq
\cS=\{(0,0,0),(1,0,0)\}
\eeq
with associated maps $m(x,y,z)=xyz$ and $m'(x,y,z)=x^2yz+xy^2z+xyz^2$ on $(\F_4)^3$. Then $(m,m')$ induces an isomorphism
\beq
S_{\F_2}^3\F_4\simeq(\F_4)^2 
\eeq
of $\F_2$-vector spaces, of maximum degree $4$.

For $q=2$, $r=3$, $t=3$, we can take
\beq
\cS=\{(0,0,0),(1,0,0),(1,1,0),(2,1,0)\}
\eeq
with associated maps $\psi_1(x,y,z)=xyz$, $\psi_2(x,y,z)=x^2yz+xy^2z+xyz^2$, $\psi_3(x,y,z)=x^2y^2z+x^2yz^2+xy^2z^2$,
and $\psi_4(x,y,z)=x^4y^2z+x^4yz^2+x^2y^4z+x^2yz^4+xy^4z^2+xy^2z^4$ on $(\F_8)^3$.
Note that $\psi_4$ is invariant under Frobenius, so it takes values in $\F_2$, and then $(\psi_1,\psi_2,\psi_3,\psi_4)$ induces an isomorphism
\beq
S_{\F_2}^3\F_8\simeq(\F_8)^3\times\F_2 
\eeq
of $\F_2$-vector spaces, of maximum degree $7$.
\end{example}

\refstepcounter{section}
\section*{Appendix \thesection: Review of open questions}
\label{sect_open_quesions}

Here is a subjective selection of (hopefully) interesting open problems related to the topic of products of codes, some of which were already mentioned in the text.

\entry
Study arithmetic in the $+,*,\subset\,$-ordered semiring of linear codes of length $n$ over $\Fq$ (either for $n$ fixed, or for $n\to\infty$).

One could devise an almost infinite number of questions, but perhaps the most natural ones to start with concern the distribution of squares \cite{agis}:
\begin{itemize}
\item Among all linear codes of length $n$ over $\Fq$, how many of them are squares?
\item What is the maximum number of square roots a code can admit?
\item If a code is not a square, what are the largest squares it contains?
or the smallest squares it is contained in?
\end{itemize}
This leads to some approximation problems. Consider the metric $\dist(C,C')=\dim(C+C')-\dim(C\cap C')$. Then:
\begin{itemize}
\item How far can a code be from the set of squares?
\end{itemize}

\entry
These questions can also be made more algorithmic:
\begin{itemize}
\item Is there an efficient algorithm to decide if a code if a square?
\item If so, is there an efficient algorithm to compute one of its square roots? or to compute all of them?
\end{itemize}

\entry
In~\ref{open_pb_sym} we pointed out three open problems concerning symmetries and automorphisms of powers of codes:
\begin{itemize}
\item Compare the $\Aut(C\deux[t])$ as $t$ varies, where $\Aut(C)\subset\Aut(\F^n)$ is the group of monomial transformations preserving $C$;
\eg for all $C$ and $t$, does it hold $\Aut(C\deux[t])\,\widehat{\subset}\,\Aut(C\deux[t+1])$?
(From \ref{aut_divise} and \ref{aut_stable} we know $\Aut(C\deux[t])\,\widehat{\subset}\,\Aut(C\deux[t'])$ only for $t|t'$ or $t'\geq r(C)$.)
\item Instead of $\Aut(C)$, compare the $\Aut^{\textrm{in}}(C\deux[t])$ as $t$ varies, where $\Aut^{\textrm{in}}(C)$ is the group of invertible linear endomorphisms of $C$
(seen as an abstract vector space) that preserve the Hamming metric.
\item Do the same thing for semilinear automorphisms.
\end{itemize}

\entry
Study how change of field operations interact with product of codes.
Here is what is known:

\begin{table}[h]
\vspace{-.2\baselineskip}
\begin{tabular}{|c|c|}
\hline
extension of scalars & obvious (\ref{compatibilites_extension}\eqref{compatibilites_extension_+*}, \ref{parametres_extension})\\
\hline
concatenation & partial results, highly dependent on the inner code (\ref{prop_concat})\\
\hline
trace code & not yet understood\\
\hline
subfield subcode & not yet understood\\
\hline
\end{tabular}
\vspace{-.8\baselineskip}
\end{table}

As noted in~\ref{appl_McEliece}, results on products of subfield subcodes would be useful in the analysis of the original McEliece cryptosystem.
Since the trace code operation is somehow dual to the subfield subcode operation, one could guess both to be equally difficult to understand.
With some abuse of language one can consider trace codes as a very specific class of concatenated codes (with a noninjective symbol mapping).
This raises the question of whether some of the techniques introduced for the study of products of concatenated codes
(\eg polynomial representation of $\Fq$-multilinear maps) could also be applied to trace codes.

\entry
Improve the bounds on the possible joint parameters of a family of codes and their product,
or of a code and its powers, given in section \ref{sect_pure_bounds}.
In particular:
\begin{itemize}
\item Beside the Singleton bound, try to generalize the other classical bounds of coding theory in the context of products of codes.
\item Fill the large gap between \ref{cor_Singleton} and \ref{borne_inf_alpha2}:
\beq
0.001872-0.5294\,\delta\leq\alpha_2^{\langle 2\rangle}(\delta)\leq 0.5-0.5\,\delta.
\eeq
\item
Is $\tau(2)>2$? That means, does there exist an asymptotically good family of binary linear codes $C$ whose cubes $C\deux[3]$ also form an asymptotically good family?
\item
Is $\tau(2)$ infinite?
That means, for any $t$ (arbitrarily large), does there exist an asymptotically good family of binary linear codes $C$ whose $t$-th powers $C\deux[t]$ also form an asymptotically good family?
\end{itemize}

\entry
Does there exist an asymptotically good family of binary linear codes $C$, whose squares $C\deux$, and also whose dual codes $C^\perp$, form an asymptotically good family?

(Motivation comes from the theory of multi-party computation, as mentioned after~\ref{fin_multilin_algo}.)

If instead of binary codes, one is interested in $q$-ary codes, then AG codes are easily seen to provide a positive answer, at least for $q$ large.
When $q$ becomes smaller, AG codes still work, except perhaps for $q=2,3,4,5,7,11,13$:
this is shown in~\cite{CCX}, using a careful analysis of the $2$-torsion in the class group of a certain tower of curves.

However, for $q=2$, curves simply do not have enough points, so there is no hope that bare AG codes work in this case.
Probably one should combine AG codes with another tool.
Note that concatenation as in \cite{agis} does not work, since it destroys the dual distance.

\entry
Our bound \ref{bornes_concat} on $t$-th powers of concatenated codes, for $t\leq q$, relied on Appendix~\ref{sect_multilinearized},
in which we gave a polynomial description of the symmetric power $S_{\Fq}^t\F_{q^r}$, using homogeneous $t$-multilinearized polynomials of controlled degree.

To extend this bound for $t>q$, it would be nice to have a similar result also for the Frobenius symmetric power $S_{Frob,\Fq}^t\F_{q^r}$, as defined in~\ref{Frobenius_symmetric_algebra}.
That means: construct a universal family of Frobenius symmetric $t$-multilinearized polynomials, of controlled degree.

Such a construction seems highly unlikely if one keeps the homogeneity condition; so we might drop this condition, to allow more flexibility.
Indeed, usually this will not be a problem for applications. For instance,
Proposition~\ref{prop_concat} can deal with non-homogeneous polynomials,
provided the external code $C$ contains the all-$1$ word $1_{[n]}$.
It is often so in practice, \eg when $C$ is an AG code.

\entry
In~\ref{longueurs} we defined $n_i=\dim \langle x\in C^{\perp}\,;\;w(x)\leq i\rangle^{\perp}$,
and noted that $n_0$ is the length of $C$, while $n_1$ is its support length, and $n_2$ its projective length. Is there such a nice interpretation for the subsequent values $n_i$, $i\geq3$?
Or conversely, is there another ``natural'' sequence of which $n_0,n_1,n_2$ are the first terms?


\begin{thebibliography}{1}

\bibitem{BR}
S.~Ballet \& R.~Rolland.
``On the bilinear complexity of the multiplication in finite fields.''
In: Y.~Aubry \& G.~Lachaud, Eds.
\emph{Arithmetic, Geometry and Coding Theory (AGCT 2003).}
S\'em. Congr. Vol.~11,
Soci\'et\'e Math\'ematique de France, 2005,
pp.~179-188.

\bibitem{BDEZ}
R.~Barbulescu, J.~Detrey, N.~Estibals \& P.~Zimmermann.
``Finding optimal formulae for bilinear maps.''
Intern. Workshop on Arithmetics of Finite Fields (WAIFI 2012),
Bochum, Germany, July 16-19, 2012.
Online version: \url{http://hal.inria.fr/hal-00640165}

\bibitem{BS}
E.~Barnes \& N.~Sloane.
``New lattice packings of spheres.''
Canad. J.~Math.~\textbf{35} (1983) 117-130.

\bibitem{BGW}
M.~Ben-Or, S.~Goldwasser \& A.~Wigderson.
``Completeness theorems for non-cryptographic fault-tolerant distributed computation.''
Proc. 20th Ann. ACM Symp. on Theory of Computing (STOC '88), 1988, pp.~1-10. 

\bibitem{BMT}
E.~Berlekamp, R.~McEliece \& H. van Tilborg.
``On the inherent intractability of certain coding problems.''
IEEE Trans. Inform. Theory~\textbf{24} (1978) 203-207.

\bibitem{BD}
R.~Brockett \& D.~Dobkin.
``On the optimal evaluation of a set of bilinear forms.''
Proc. 5th Ann. ACM Symp. on Theory of Computing (STOC '73), 1973, pp.~88-95. 

\bibitem{Bshouty}
N.~Bshouty.
``Multilinear complexity is equivalent to optimal tester size.''
Electron. Colloq. Comput. Complexity, Report No.~TR13-011 (2013).



\bibitem{CCX}
I.~Cascudo, R.~Cramer \& C.~Xing.
``The torsion-limit for algebraic function fields and its application to arithmetic secret sharing.''
In: P.~Rogaway, Ed.
\emph{Advances in cryptology --- CRYPTO 2011.}
Lecture Notes in Comp. Science Vol.~6841,
Springer-Verlag, Berlin, 2011, pp.~685-705.

\bibitem{CCD}
D.~Chaum, C.~Cr\'epeau \& I.~Damg\r{a}rd.
``Multiparty unconditionally secure protocols.''
Proc. 20th Ann. ACM Symp. on Theory of Computing (STOC '88), 1988, pp.~11-19. 

\bibitem{CC}
H.~Chen \& R.~Cramer.
``Algebraic geometric secret sharing schemes and secure multi-party computations over small fields.''
In: C.~Dwork, Ed.
\emph{Advances in cryptology --- CRYPTO 2006.}
Lecture Notes in Comp. Science Vol.~4117,
Springer-Verlag, Berlin, 2006, pp.~521-536.

\bibitem{ChCh}
D.~Chudnovsky \& G.~Chudnovsky.
``Algebraic complexities and algebraic curves over finite fields.''
J. Complexity~\textbf{4} (1988) 285-316.

\bibitem{CL}
G.~Cohen and A.~Lempel,
``Linear intersecting codes.''
Discr. Math.~\textbf{56} (1985) 35-43.

\bibitem{CGLM}
P.~Comon, G.~Golub, L.-H.~Lim \& B.~Mourrain.
``Symmetric tensors and symmetric tensor rank.''
SIAM J. Matrix Anal. Appl.~\textbf{30} (2008) 1254-1279.


\bibitem{CGGOT}
A.~Couvreur, P.~Gaborit, V.~Gauthier, A.~Otmani \& J.-P.~Tillich.
``Distinguisher-based attacks on public-key cryptosystems using Reed-Solomon codes.''
Presented at WCC 2013, to appear in Des. Codes Crypto.
Preprint: \url{http://arxiv.org/abs/1307.6458}

\bibitem{CDM}
R.~Cramer, I.~Damg\r{a}rd \& U.~Maurer.
``General secure multi-party computation from any linear secret-sharing scheme.''
In: B.~Preneel, Ed.
\emph{Advances in cryptology --- EUROCRYPT 2000.}
Lecture Notes in Comp. Science Vol.~1807,
Springer-Verlag, Berlin, 2000, pp.~316-334.

\bibitem{CK}
C.~Cr\'epeau \& J.~Kilian.
``Achieving oblivious transfer using weakened security assumptions.''
Proc. 29th IEEE Symp. on Found. of Computer Sci. (FOCS '88), 1988, pp.~42-52.

\bibitem{DV}
V.~Drinfeld \& S.~Vladut.
``Number of points of an algebraic curve.''
Funct. Anal.~\textbf{17} (1983) 53-54.

\bibitem{DP}
I.~Duursma \& R.~Pellikaan.
``A symmetric Roos bound for linear codes.''
J.~Combin. Theory Ser.~A \textbf{113} (2006) 1677-1688. 

\bibitem{Eisenbud}
D.~Eisenbud.
\emph{The geometry of syzygies. A second course in commutative algebra and algebraic geometry.}
Graduate Texts in Math. Vol.~229,
Springer-Verlag, New York, 2005.

\bibitem{EP}
D.~Eisenbud \& S.~Popescu.
``Gale duality and free resolutions of ideals of points.''
Invent. Math.~\textbf{136} (1999) 419-449.

\bibitem{Forney88-1}
G.~Forney.
``Coset codes I: Introduction and geometrical classification.''
IEEE Trans. Inform. Theory~\textbf{34} (1988) 1123-1151.

\bibitem{GSBB}
A.~Garcia, H.~Stichtenoth, A.~Bassa \& P.~Beelen.
``Towers of function fields over non-prime finite fields.''
To appear.
Preprint: \url{http://arxiv.org/abs/1202.5922}

\bibitem{EGA2}
A.~Grothendieck \& J.~Dieudonn\'e.
``\'El\'ements de g\'eom\'etrie alg\'ebrique. II. \'Etude globale \'el\'ementaire de quelques classes de morphismes.''
Inst. Hautes \'Etudes Sci. Publ. Math.~\textbf{8} (1961).

\bibitem{Hartshorne}
R.~Hartshorne.
\emph{Algebraic geometry.}
Graduate Texts in Math. Vol.~52,
Springer-Verlag, New York-Heidelberg, 1977.

\bibitem{Ihara}
Y.~Ihara.
``Some remarks on the number of rational points of algebraic curves over finite fields.''
\emph{J.~Fac. Sci. Univ. Tokyo Sect.~IA Math.}~\textbf{28} (1981) 721-724.

\bibitem{IKOPSW}
Y.~Ishai, E.~Kushilevitz, R.~Ostrovsky, M.~Prabhakaran, A.~Sahai \& J.~Wullschleger.
``Constant-rate oblivious transfer from noisy channels.''
In: P.~Rogaway, Ed.
\emph{Advances in cryptology --- CRYPTO 2011.}
Lecture Notes in Comp. Science Vol.~6841,
Springer-Verlag, Berlin, 2011, pp.~667-684.


\bibitem{JMV}
D.~Jungnickel, A.~Menezes \& S.~Vanstone
``On the number of self-dual bases of $GF(q^m)$ over $GF(q)$.''
Proc. AMS \textbf{109} (1990) 23-29.

\bibitem{KO}
W.~Kositwattanarerk \& F.~Oggier. 
``On construction D and related constructions of lattices from linear codes.''
Presented at WCC 2013, to appear in Des. Codes Crypto.
Preprint: \url{http://arxiv.org/abs/1308.6175}

\bibitem{Kot}
R.~K\"otter.
``A unified description of an error locating procedure for linear codes.''
Proc. Int. Workshop on Algebraic and Comb. Coding Theory, Voneshta Voda, Bulgaria, June 22-28, 1992.

\bibitem{Laurent}
M.~Laurent.
``Hauteur de matrices d'interpolation.''
In: \emph{Approximations diophantiennes et nombres transcendants (Luminy, 1990)}, de Gruyter, Berlin, 1992, pp.~215-238.


\bibitem{vLW}
J.~van Lint \& R.~Wilson.
``On the minimum distance of cyclic codes.''
IEEE Trans. Inform. Theory~\textbf{32} (1986) 23-40.

\bibitem{McEliece}
R.~McEliece.
``A public-key system based on algebraic coding theory.''
Deep Space Network Progress Report~\textbf{44} (1978) 114-116.

\bibitem{McWilliams}
F.~J.~McWilliams.
\emph{Combinatorial properties of elementary abelian groups.}
Ph.D. dissertation, Harvard University, Cambridge, Mass., 1962.

\bibitem{MR}
D.-J.~Mercier \& R.~Rolland.
``Polyn\^omes homog\`enes qui s'annulent sur l'espace projectif $\PP^m(\F_q)$.''
J.~Pure Appl. Algebra~\textbf{124} (1998) 227-240.

\bibitem{Mirandola}
D.~Mirandola.
\emph{Schur products of linear codes: a study of parameters.}
Master Thesis (under the supervision of G.~Z\'emor),
Univ. Bordeaux~1 \& Stellenbosch Univ., July 2012.
Online version: \url{http://www.algant.eu/documents/theses/mirandola.pdf}

\bibitem{Mumford}
D.~Mumford.
\emph{Lectures on curves on an algebraic surface.}
Annals of Math. Studies Vol.~59, Princeton University Press, Princeton, N.J., 1966.

\bibitem{Mumford_quad}
D.~Mumford.
``Varieties defined by quadratic equations.''
In: \emph{Questions on Algebraic Varieties (C.I.M.E., III Ciclo, Varenna, 1969)},
Ed. Cremonese, Rome, 1970, pp.~29-100.

\bibitem{OZ}
F.~Oggier \& G.~Z\'emor.
``Coding constructions for efficient oblivious transfer from noisy channels.''
In preparation.

\bibitem{P92}
R.~Pellikaan.
``On decoding by error location and dependent sets of error positions.''
Discrete Math.~\textbf{106/107} (1992) 369-381.

\bibitem{P96}
R.~Pellikaan.
``On the existence of error-correcting pairs.''
J.~Statist.~Plann.~Inference~\textbf{51} (1996) 229-242.

\bibitem{HSSdim0}
H.~Randriambololona.
``Hauteurs des sous-sch\'emas de dimension nulle de l'espace projectif.''
Ann. Inst. Fourier (Grenoble)~\textbf{53} (2003) 2155-2224.

\bibitem{ChCh+}
H.~Randriambololona.
``Bilinear complexity of algebras and the Chudnovsky-Chudnovsky interpolation method.''
J.~Complexity~\textbf{28} (2012) 489-517.

\bibitem{agis}
H.~Randriambololona.
``Asymptotically good binary linear codes with asymptotically good self-intersection spans.''
IEEE Trans. Inform. Theory~\textbf{59} (2013) 3038-3045.

\bibitem{Singleton}
H.~Randriambololona.
``An upper bound of Singleton type for componentwise products of linear codes.''
To appear in IEEE Trans. Inform. Theory.

\bibitem{21sep}
H.~Randriambololona,
``$(2,1)$-separating systems beyond the probabilistic bound.''
To appear in Israel J.~Math.

\bibitem{Roos}
C.~Roos.
``A new lower bound for the minimum distance of a cyclic code.''
IEEE Trans. Inform. Theory~\textbf{29} (1983) 330-332.

\bibitem{SSB}
G.~Schmidt, V.~Sidorenko \& M.~Bossert.
``Syndrome decoding of Reed-Solomon codes beyond half the minimum distance based on shift-register synthesis.''
IEEE Trans. Inform. Theory~\textbf{56} (2010) 5245-5252.

\bibitem{SL80}
G.~Seroussi \& A.~Lempel.
``Factorization of symmetric matrices and trace-orthogonal bases in finite fields.''
SIAM J.~Comput.~\textbf{9} (1980) 758-767.

\bibitem{SL84}
G.~Seroussi \& A.~Lempel.
``On symmetric algorithms for bilinear forms over finite fields.''
J. Algorithms~\textbf{5} (1984) 327-344.

\bibitem{SerreCL}
J.-P.~Serre.
\emph{Corps locaux.}
Actualit\'es Sci. Indust. No.~1296,
Hermann, Paris, 1968.

\bibitem{Serre82}
J.-P.~Serre.
``Nombres de points des courbes alg\'ebriques sur $\Fq$.''
S\'em. th\'eorie des nombres Bordeaux~\textbf{12} (1982/1983).

\bibitem{STV}
I.~Shparlinski, M.~Tsfasman \& S.~Vladut.
``Curves with many points and multiplication in finite fields.''
In:
H.~Stichtenoth \& M.~Tsfasman, Eds.
\emph{Coding theory and algebraic geometry (Luminy, 1991).}
Lecture Notes in Math. Vol.~1518,
Springer-Verlag, Berlin, 1992, pp.~145-169.

\bibitem{Singh}
S.~Singh.
``Analysis of each integer as sum of two cubes in a finite integral domain.''
Indian J. Pure Appl. Math.~\textbf{6} (1975) 29-35.

\bibitem{Slepian}
D.~Slepian.
``Some further theory of group codes.''
Bell Syst. Tech.~J.~\textbf{39} (1960) 1219-1252.

\bibitem{Sorensen}
A.~S{\o}rensen.
``Projective Reed-Muller codes.''
IEEE Trans. Inform. Theory~\textbf{37} (1991) 1567-1576.

\bibitem{Stichtenoth}
H.~Stichtenoth.
\emph{Algebraic function fields and codes.}
Graduate Texts in Math. Vol.~254, Springer-Verlag, Berlin, 2009.

\bibitem{TVbook}
M.~Tsfasman \& S.~Vladut.
\emph{Algebraic-geometric codes.}
Math. and its Appl. (Soviet Series) Vol.~58,
Kluwer Acad. Publishers Group, Dordrecht, 1991.

\bibitem{TVieee}
M.~Tsfasman \& S.~Vladut.
``Geometric approach to higher weights.''
IEEE Trans. Inform. Theory~\textbf{41} (1995) 1564-1588. 

\bibitem{Wei}
V.~Wei.
``Generalized Hamming weights for linear codes.''
IEEE Trans. Inform. Theory~\textbf{37} (1991) 1412-1418. 

\bibitem{Wieschebrink}
C.~Wieschebrink.
``Cryptanalysis of the Niederreiter public key scheme based on GRS subcodes.''
In: N.~Sendrier, Ed.
\emph{Post-quantum cryptography.}
Lecture Notes in Comp. Science Vol.~6061,
Springer-Verlag, Berlin, 2010, pp.~61-72.


\end{thebibliography}
\end{document}